\newtheorem{theo}{Theorem}[section]
\newtheorem{lemma}[theo]{Lemma}
\newtheorem{claim}{Claim}
\newtheorem{prop}[theo]{Proposition}
\newtheorem{cor}[theo]{Corollary}
\theoremstyle{definition}
\newtheorem{defi}[theo]{Definition}
\theoremstyle{plain}
\newtheorem{rem}[theo]{Remark}
\newtheorem{ft}{Fact}
\newenvironment{proofof}[1]{\begin{proof}[Proof of #1]}{\end{proof}}
\newcommand{\R}{\mathbb{R}}
\newcommand{\Ex}{\mathbb{E}}
\newcommand{\distr}{\mathcal{D}}
\newcommand{\indicator}{\mathds{1}}
\newcommand{\negl}{\mathsf{negl}}
\newcommand{\bset}{\{0,1\}}
\newcommand{\lone}[1]{\| #1 \|_1}
\newcommand{\ltwo}[1]{\| #1 \|_2}
\newcommand{\inangle}[1]{\langle #1 \rangle}
\newcommand{\ip}[2]{\inangle{#1,#2}}
\newcommand{\BPPPATH}{\mathsf{BPP_{path}}}
\newcommand{\SZK}{\mathsf{SZK}}
\newcommand{\SZKcc}{\mathsf{SZK}^{\cc}} 
\newcommand{\HVSZK}{\mathsf{HVSZK}}
\newcommand{\NISZK}{\mathsf{NISZK}}
\newcommand{\NISZKcc}{\mathsf{NISZK}^{\cc}} 
\newcommand{\Stwocc}{\Sigma_2^{\cc}} 
\newcommand{\NIPZK}{\mathsf{NIPZK}}
\newcommand{\coNIPZK}{\mathsf{coNIPZK}}
\newcommand{\PZK}{\mathsf{PZK}}
\newcommand{\SRE}{\mathsf{SRE}}
\newcommand{\HVPZK}{\mathsf{HVPZK}}
\newcommand{\coPZK}{\mathsf{coPZK}}
\newcommand{\coHVPZK}{\mathsf{coHVPZK}}
\newcommand{\BQP}{\mathsf{BQP}}
\newcommand{\naCQP}{\mathsf{naCQP}}
\newcommand{\CQP}{\mathsf{CQP}}
\newcommand{\DQP}{\mathsf{DQP}}
\newcommand{\QMA}{\mathsf{QMA}}
\newcommand{\BPP}{\mathsf{BPP}}
\newcommand{\AM}{\mathsf{AM}}
\newcommand{\AMcc}{\mathsf{AM}^{\cc}} 
\newcommand{\coAM}{\mathsf{coAM}}
\newcommand{\coAMcc}{\mathsf{coAM}^{\cc}}
\newcommand{\PTIME}{\mathsf{P}}
\newcommand{\PH}{\mathsf{PH}}
\newcommand{\PP}{\mathsf{PP}}
\newcommand{\PPcc}{\mathsf{PP}^{\cc}} 
\newcommand{\EXP}{\mathsf{EXP}}
\newcommand{\AOPP}{\mathsf{A}_\mathsf{0}\mathsf{PP}}
\newcommand{\NP}{\mathsf{NP}}
\newcommand{\coNP}{\mathsf{coNP}}
\newcommand{\PSPACE}{\mathsf{PSPACE}}
\newcommand{\BPPpath}{\mathsf{BPP}_\mathsf{path}}
\newcommand{\PostBQP}{\mathsf{PostBQP}}
\newcommand{\co}{\mathsf{co}}
\newcommand{\UPP}{\mathsf{UPP}}
\newcommand{\UPPcc}{\mathsf{UPP}^{\cc}} 
\newcommand{\cc}{\mathsf{cc}}
\newcommand{\eps}{\epsilon}
\newcommand{\PPdt}{\mathsf{PP}^{\mathsf{dt}}}
\newcommand{\PZKdt}{\mathsf{PZK}^{\mathsf{dt}}}
\newcommand{\coPZKdt}{\mathsf{coPZK}^{\mathsf{dt}}}
\newcommand{\UPPdt}{\mathsf{UPP}^{\mathsf{dt}}}
\newcommand{\NISZKdt}{\mathsf{NISZK}^{\mathsf{dt}}}
\newcommand{\AMdt}{\mathsf{AM}^{\mathsf{dt}}}
\newcommand{\SZKdt}{\mathsf{SZK}^{\mathsf{dt}}}
\newcommand{\Collision}{\mathsf{Col}}
\newcommand{\PCollision}{\textsf{Collision}}
\newcommand{\poly}{\operatorname*{poly}}
\newcommand{\polylog}{\operatorname*{polylog}}
\newcommand{\Oracle}{\mathcal{O}}
\newcommand{\PTP}{\mathsf{PTP}}
\newcommand{\dega}{\widetilde{\mathrm{deg}}}
\newcommand{\odegap}{\mathrm{deg}^{+}}
\newcommand{\degsh}{\deg_{\pm}}
\newcommand{\Gapmajority}{\mathsf{GapMaj}}
\newcommand{\GapMaj}[3]{\mathsf{GapMaj}_{#2,#3}(#1)}
\newcommand{\GapAND}[3]{\mathsf{GapAND}_{#2,#3}(#1)}
\newcommand{\Gapand}{\mathsf{GapAND}}
\newcommand{\SDU}{\textsf{SDU}}
\newcommand{\EA}{\mathsf{EA}}
\newcommand{\AND}{\mathsf{AND}}
\newcommand{\OR}{\mathsf{OR}}
\newcommand{\cd}[2]{\mathcal{C}_{#1,#2}}
\newcommand{\True}{\textsf{TRUE}}
\newcommand{\False}{\textsf{FALSE}}
\newcommand{\dort}[1]{$#1$-orthogonalizing}
\newcommand{\sympsi}{\tilde{\psi}}
\newcommand{\symh}{\tilde{h}}
\newcommand{\symf}{\tilde{f}}
\newcommand{\symp}{\tilde{p}}
\newcommand{\symg}{\tilde{g}}
\newcommand{\hatvarphi}{\hat{\varphi}}
\newcommand{\hatmu}{\hat{\mu}}
\newcommand{\hatpsi}{\hat{\psi}}
\newcommand{\hatnu}{\hat{\nu}}
\newcommand{\hatrho}{\hat{\rho}}
\newcommand{\hatphi}{\hat{\phi}}
\newcommand{\hatRho}{\hat{P}}
\newcommand{\Rho}{P}
\newcommand{\supp}{\operatorname{supp}}
\newcommand{\lb}{n^{1/4}/\log n}
\renewcommand{\epsilon}{\varepsilon}
\def\ShowAuthNotes{1}
\newcommand{\authnote}[2]{\ \\ \textcolor{red}{\parbox{0.9\linewidth}{[{\footnotesize {\bf #1:} { {#2}}}]}}\newline}
\newcommand{\authnote}[2]{}
\let\svfootnoterule\footnoterule
\renewcommand\footnoterule{\vfill\svfootnoterule}
\begin{document}
%%%%%%%%%%%%%%%%%%%%%%%%%%%%%%%%%%%%%%%%%%%%%%%%%%%%%%%%%%%%% 
% Make title
%\title{On $\SZK$ and $\PP$}
\title{On the Power of Statistical Zero Knowledge}
\author[1]{Adam Bouland}
\author[2]{Lijie Chen}
\author[1]{Dhiraj Holden}
\author[3]{Justin Thaler}
\author[1]{Prashant Nalini Vasudevan}
\affil[1]{CSAIL, Massachusetts Institute of Technology, Cambridge, MA USA}
\affil[2]{IIIS, Tsinghua University, Beijing, China}
\affil[3]{Georgetown University, Washington, DC USA}

\date{}
%%%%%%%%%%%%%%%%%%%%%%%%%%%%%%%%%%%%%%%%%%%%%%%%%%%%%%%%%%%%% 

\clearpage\maketitle
\thispagestyle{empty}

\begin{abstract}
    We examine the power of statistical zero knowledge proofs (captured by the complexity
class $\SZK$) and their variants. 
First, we give the strongest known relativized evidence that $\SZK$ contains hard problems,
by exhibiting an oracle relative to which $\SZK$ (indeed, even $\NISZK$) 
is not contained in the class $\UPP$, containing those problems solvable by randomized algorithms with unbounded error. 
This answers an open question of Watrous from 2002 \cite{AaronsonPersonal}.
Second, we ``lift'' this oracle separation to the setting of communication complexity,
thereby answering a question of G{\"{o}}{\"{o}}s et al. (ICALP 2016).
Third, we give relativized evidence that \emph{perfect} zero knowledge
proofs (captured by the class $\PZK$) are weaker than general zero knowledge proofs. 
Specifically, we exhibit oracles relative to which $\SZK \not \subseteq \PZK$,
$\NISZK \not \subseteq \NIPZK$, and $\PZK \neq \coPZK$. The first of these results
answers a question raised in 1991 by Aiello and H{\aa}stad (Information and Computation), 
and the second answers a question of Lovett and Zhang (2016). 
We also describe additional applications of these results outside of structural complexity.

The technical core of our results is a stronger hardness amplification 
theorem for approximate degree, which roughly says that composing 
the gapped-majority function with any function of high approximate 
degree yields a function with high threshold degree.

\end{abstract}
\addtocounter{page}{-1}
\newpage

% prelimilary
\section{Introduction}
Zero knowledge proof systems, first introduced by Goldwasser, Micali and Rackoff \cite{goldwasser1989knowledge}, have proven central to the study of complexity theory and cryptography. 
Abstractly, a zero knowledge proof is a form of interactive proof in which the verifier can efficiently simulate the honest prover on ``yes" instances. Therefore, the verifier learns nothing other than whether its input is a ``yes" or ``no" instance.

In this work, we study \emph{statistical} zero knowledge proofs systems.
Here, ``efficiently simulate" means that the verifier can, by itself, sample from a distribution which is statistically close to the distribution of the transcript of its interaction with the honest prover\footnote{\emph{Computational} zero-knowledge, in which the zero-knowledge condition is that the verifier can sample from a distribution that is \emph{computationally indistinguishable} from the transcript, 
has also been the subject of intense study. In this work we focus exclusively on statistical zero knowledge.}.
The resulting class of decision problems that have statistical zero knowledge proofs is denoted $\SZK$. 
One can similarly define variants of this class, such as non-interactive statistical zero knowledge (where the proof system is non-interactive, denoted $\NISZK$), or perfect zero knowledge (where the verifier can exactly simulate the honest prover, denoted $\PZK$).
%\edit{The relationship between these variants of zero-knowledge (i.e. is $\PZK$ strictly contained in $\SZK$?) has been open since at least 1991 \cite{AHPZKBPP}.}

Many problems, some of which are not necessarily in $\NP$, have been shown to admit $\SZK$ protocols. These include Graph Non-isomorphism, as well as problems believed to be hard on average, such as Quadratic Residuosity (which is equivalent to the discrete logarithm problem), and the Approximate Shortest Vector and Closest Vector problems in lattices \cite{goldreich1991proofs,goldwasser1989knowledge,goldreich1998limits,peikert2008noninteractive}. 
Although $\SZK$ contains problems believed to be hard, it lies very low in the polynomial hierarchy 
(below $\AM \cap \coAM$), and cannot contain $\NP$-complete
problems unless the polynomial hierarchy collapses \cite{Fortnow87, AH91, bhz87}. 
Owing in part to its unusual property of containing problems believed to be hard but not $\NP$-complete,
$\SZK$ has been the subject of intense interest among complexity theorists and cryptographers.

Despite its importance, many basic questions about the hardness of $\SZK$
and  its variants remain open.
Our results in this work can be understood as grouped into three classes, detailed in each
of the next three subsections. However, we prove these results
via a unified set of techniques.

\subsection{Group 1: Evidence for the Hardness of $\SZK$}
\label{s:results1}
\medskip
\noindent \textbf{Motivation.} Several cryptosystems have been based on the believed hardness of problems in $\SZK$, most notably Quadratic Residuosity and the Approximate Shortest Vector and Closest Vector problems mentioned above.
If one could solve $\SZK$-hard problems efficiently, it would break  these cryptosystems.
Hence, a natural task is to show lower bounds demonstrating that problems in $\SZK$ cannot be solved easily. 
For example, one might want to show that quantum computers or other, more powerful models of computation cannot solve $\SZK$-hard problems efficiently. 
This would provide evidence for the belief that problems in $\SZK$ are computationally hard.

Of course, proving such results unconditionally is very difficult, because $\SZK$ is contained in $\AM \cap \coAM$ \cite{Fortnow87, AH91}, so even proving lower bounds against classical algorithms solving $\SZK$-hard problems would require separating $\mathsf{P}$ from $\NP$.\footnote{Since $\SZK \subseteq \AM \cap \coAM \subseteq \PH$, 
if $\mathsf{P} \neq \SZK$, then $\mathsf{P} \neq \PH$, which in particular implies $\mathsf{P} \neq \NP.$} %\SZK \not \subseteq \mathsf{P}$  %SPACE$ \edit{(as well as $\mathsf{P}$ from $\NP$)}. 
Therefore, a more reasonable goal has been to create oracles relative to which $\SZK$ is not contained in other complexity classes; one can then unconditionally prove that ``black-box" algorithms from other complexity classes cannot break $\SZK$.

\medskip
\noindent \textbf{Additional Context.}  While much progress has been made in this direction (see Section \ref{sec:priorhardness} for details), 
the problem of giving an oracle separation between $\SZK$ and $\PP$ has been open since it was posed by Watrous in 2002 \cite{AaronsonPersonal} and additionally mentioned as an open problem in \cite{aaronson2012impossibility}.
Here, $\PP$ is the set of decision problems decidable in polynomial time by randomized algorithms with unbounded error.
%In other words, $\PP$ algorithms accept elements in the language with probability $>1/2$, and accept elements outside the language with probability $< 1/2$. 
Since a $\PP$ algorithm can flip polynomially many coins in its decision process, the gap between the acceptance probabilities of yes and no instances can be exponentially small. $\PP$ is a very powerful complexity class -- it contains $\NP$ and $\coNP$ (since it is trivially closed under complement) as well as $\BPPpath$. Furthermore, by Toda's theorem \cite{Toda91}, $\mathsf{P}^{\PP}$ contains the entire polynomial hierarchy.
Additionally Aaronson showed $\PP=\PostBQP$, the set of problems decidable by quantum algorithms equipped with postselection (the ability to discard all runs of an experiment which do not achieve an exponentially unlikely outcome). 
As a result, it is difficult to prove lower bounds against $\PP$. 

%As a result, it is difficult to prove lower bounds against $\PP$.

\medskip
\noindent \textbf{Our Results.} We answer Watrous' question by giving an oracle separating $\SZK$ from $\PP$. 
In fact, we prove something significantly stronger: our oracle construction separates $\NISZK$ from $\UPP$.\footnote{$\UPP$ is traditionally defined as an oracle complexity class, in which machines must output the correct answer with probability strictly greater than $1/2$,
and are charged for oracle queries but not for computation time. In this model, the gap between $1/2$ and the probability of outputting the correct answer can be \emph{arbitrarily} (in particular, superexponentially) small.}%\footnote{\edit{Note that the class $\UPP$ is only defined in query and communication complexity, and not in the Turing machine model.}}.

	\begin{theo}
		\label{cor:niszk-pp} \label{mainfcor}
		There exists an oracle $\Oracle$ such that $\NISZK^{\Oracle} \not\subset \UPP^{\Oracle}$.
	\end{theo}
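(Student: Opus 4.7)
The plan is to reduce the theorem to a separation in query complexity and then invoke standard machinery that lifts such a separation to a relativized one. Concretely, it suffices to exhibit a sequence of (promise) Boolean functions $\{F_N\}$ on $N$-bit inputs such that $F_N \in \NISZKdt$ (the decision-tree / query analogue of $\NISZK$) while the $\UPPdt$ query complexity of $F_N$ is $N^{\Omega(1)}$. Standard query-to-oracle lifting then produces an oracle $\Oracle$ for which $\NISZK^{\Oracle} \not\subset \UPP^{\Oracle}$.

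To produce $F_N$, I would start from a ``base'' function $f$ that is known to lie in $\NISZKdt$ and to have large approximate degree $\dega(f)$. The natural candidates are the Statistical-Difference-from-Uniform problem (which is $\NISZK$-complete by Goldreich-Sahai-Vadhan and whose query version can be shown to have polynomially large approximate degree) or the permutation-versus-two-to-one (Collision) problem, whose approximate degree was lower-bounded by Aaronson and Shi. I would then set $F_N := \Gapmajority \circ f$ and apply the paper's hardness amplification theorem, which asserts that the threshold degree $\degsh(F_N)$ is polynomially large in $\dega(f)$. Since threshold degree tightly characterizes $\UPPdt$ complexity, this immediately yields the desired $\UPPdt$ lower bound and hence $F_N \notin \UPPdt$.

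The remaining task is to verify that $F_N = \Gapmajority \circ f$ still lies in $\NISZKdt$. My plan is to exploit the SDU-completeness of $\NISZK$: given an NISZK protocol for $f$ on each coordinate block, one can produce, per block, a samplable distribution whose statistical distance from uniform encodes $f$'s Boolean output on that block. A gapped-majority aggregation of these per-block distributions (via mixing with appropriate weights and suitable padding) yields a single SDU instance whose YES/NO decision coincides with $\Gapmajority \circ f$. The key calculation is to verify that the gap promise in $\Gapmajority$ is translated into a constant (or inverse-polynomial) gap in the aggregate statistical distance, landing inside the SDU promise and thereby giving an NISZK protocol for $F_N$.

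The main obstacle is precisely this upper-bound half: unlike $\SZK$, the class $\NISZK$ is not known to be closed under arbitrary compositions, so the aggregation must be tailored to match \emph{exactly} the composed function $\Gapmajority \circ f$ for which the hardness amplification theorem supplies a threshold-degree lower bound. In particular, a naive approach that aggregates $k$ independent SDU instances does not obviously give the right gap behavior, and I would expect to need a careful reweighting (or an ``entropy difference'' variant of SDU) to make the promise come out correctly. Once both halves are in place---$F_N \in \NISZKdt$ from the aggregation argument, and $F_N \notin \UPPdt$ from the threshold-degree bound---the query separation is complete, and the standard lift produces the oracle $\Oracle$ with $\NISZK^{\Oracle} \not\subset \UPP^{\Oracle}$.
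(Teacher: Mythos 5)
Your proposal follows essentially the same route as the paper: compose the Collision problem with $\Gapmajority$, lower-bound the threshold degree of the composition via the hardness-amplification theorem to get the $\UPPdt$ lower bound, show membership in $\NISZKdt$ by a direct reduction to the $\NISZK$-complete problem SDU (mixing the per-block distributions, which is exactly the paper's Theorem~\ref{theo:GCOL-EASY}), and lift to an oracle by standard diagonalization. The only detail worth flagging is that the paper needs the $k$-to-$1$ Collision variant with $k \approx \log n$ together with Kutin's $\eps$-dependent approximate-degree bound (Theorem~\ref{theo:kutin-lowb}), because the amplification theorem requires $\dega_{\eps_2}(f)$ to be large for $\eps_2$ near $1/2$ and the $\Gapmajority$ gap parameter is $1 - \Theta(1/\log n)$ rather than a constant; the basic Aaronson--Shi bound at constant error is not directly sufficient.
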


\subsection{Group 2: Limitations on the Power of Perfect Zero Knowledge}
\label{sec:pzkintro}
\medskip
\noindent \textbf{Motivation.}  
Much progress has been made on understanding the relationship between
natural variants of $\SZK$ \cite{okamoto1996relationships,goldreich1999can,Fischlin2002,Malka2015,LovettZhang16}.
For example, it is known that $\SZK=\co\SZK$ \cite{okamoto1996relationships}, and if $\NISZK=\co\NISZK$ then $\SZK=\NISZK=\co\NISZK$ \cite{goldreich1999can}. 
Additionally Lovett and Zhang~\cite{LovettZhang16} recently gave an oracle separation between $\NISZK$ and $\co\NISZK$ as well as $\SZK$ and $\NISZK$.
However, many questions remain open, especially regarding the power
of \emph{perfect} zero-knowledge proof systems. 
%It is unknown whether or not $\PZK=\co\PZK$, or if either $\NISZK \subseteq\HVPZK$ or vice versa.

%such as Okamoto's well-known result that
%$\SZK$ is closed under complement \cite{okamoto1996relationships}, which holds relative to any oracle. 
%However, many questions remain open, especially about the power
%of \emph{perfect} zero-knowledge proof systems. 

Many important $\SZK$ protocols, such as the ones for Graph Non-Isomorphism and 
Quadratic Nonresiduosity, are in fact $\PZK$ protocols. This illustrates the power of perfect zero knowledge. 
In this work, we are primarily concerned with studying the \emph{limitations} of perfect
zero knowledge. We are particularly interested in four questions:
Does $\SZK = \PZK$? What about their non-interactive variants, $\NISZK$ and $\NIPZK$? 
Is $\PZK$ closed under complement, the way that $\SZK$ is? What about $\NIPZK$?
Answering any of these questions in the negative would require showing $\mathsf{P} \neq \NP$,\footnote{$\mathsf{P} = \NP$ implies $\mathsf{P} = \PH$, and therefore $\SZK = \PTIME$.}
so it is natural to try to exhibit oracles relative to which $\SZK \neq \PZK$, $\NISZK \neq \NIPZK$, $\PZK \neq \coPZK$,
and $\NIPZK \neq \coNIPZK$.

%The question of whether $\SZK=\PZK$ has been open since the classes were defined, and in particular has been asked by Goldwasser \cite{GoldwasserVideo}. 
%The question of whether $\SZK=\PZK$ has been open since the classes were defined, and in particular has been asked by Goldwasser \cite{GoldwasserVideo}. 
%Giving further evidence of its power, 
\medskip \noindent \textbf{Additional Context.} In 1991, Aiello and H{\aa}stad \cite{AHPZKBPP} gave evidence that $\PZK$ contains
hard problems by creating an oracle relative to which $\PZK$
is not contained in $\BPP$. On the other hand, they also
gave an oracle that they \emph{conjectured} separates $\SZK$ from $\PZK$ (but were unable to prove this). 
Exhibiting such an oracle requires a technique that can tell the difference 
between zero simulation error ($\PZK$) and 
simulation to inverse exponential error ($\SZK$), and prior to our work, no such technique was known.
% been open since their work.
{The question of whether $\SZK=\PZK$ has been asked by Goldwasser \cite{GoldwasserVideo} as well.}
The analogous question for the non-interactive classes $\NISZK$ and $\NIPZK$ is also
well motivated, and was explicitly asked in recent work of Lovett and Zhang \cite{LovettZhang16}.

Determining whether variants of $\SZK$ satisfy the same closure properties as $\SZK$ is
natural as well: indeed, a main result of Lovett and Zhang \cite{LovettZhang16} 
is an oracle relative to which $\NISZK\neq\co\NISZK$. % \edit{\sout{\cite{LovettZhang16}}}.
%The question of whether zero-knowledge classes are closed under complement 

\medskip \noindent \textbf{Our Results.} We give oracles separating $\SZK$ from $\PZK$, $\NISZK$ from $\NIPZK$,
$\PZK$ from $\coPZK$, and $\NIPZK$ from $\coNIPZK$. %\footnote{We prove related results for a variety
%of variants of $\PZK$ and $\NIPZK$, but for brevity we omit a description of these results in this introduction.} 
The first two results answer the aforementioned questions raised by Aiello and H{\aa}stad {\cite{AHPZKBPP}}
(though our oracle is different from the candidate proposed by Aiello and H{\aa}stad),
and Lovett and Zhang {\cite{LovettZhang16}}. Along the way, we show that $\PZK$ is contained in $\PP$ in a relativizing manner --
this is in sharp contrast to $\SZK$ (see Theorem \ref{mainfcor}). 

\begin{theo}
For any oracle $\Oracle$, $\PZK^{\Oracle} \subseteq \PP^{\Oracle}$. In addition,
there exist oracles $\Oracle_1$ and $\Oracle_2$ such that $\SZK^{\Oracle_1} \not\subseteq \PZK^{\Oracle_1}$,
$\NISZK^{\Oracle_1} \not\subseteq \NIPZK^{\Oracle_1}$, $\PZK^{\Oracle_2} \not\subseteq \coPZK^{\Oracle_2}$, and 
$\NIPZK^{\Oracle_2} \not\subseteq \coNIPZK^{\Oracle_2}$.
 \label{thm:pzklimit}
\end{theo}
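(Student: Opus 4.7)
The plan decomposes into three distinct subgoals matching the structure of the theorem.

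First, I would prove the relativizing containment $\PZK^\Oracle \subseteq \PP^\Oracle$. The defining feature of perfect zero knowledge is that on YES instances the simulator's output distribution \emph{exactly} equals the verifier's view. Starting from a relativized complete problem for $\PZK$---in the spirit of the Sahai--Vadhan $\SD$-complete characterization of $\SZK$, but with a zero-distance promise on the equal side---the acceptance condition reduces to checking exact equality between two quantities computable by polynomial-size oracle circuits, together with a standard soundness guarantee on NO instances. Since $\PP$ is precisely the class capable of unbounded-error sign determination of polynomially-computable differences, a direct $\PP^\Oracle$ simulation goes through. The argument relativizes cleanly because $\Oracle$ enters only through the circuit evaluations, which $\PP^\Oracle$ can perform in-line.

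Second, the separations $\SZK \not\subseteq \PZK$ and $\NISZK \not\subseteq \NIPZK$ then follow immediately by composition. Let $\Oracle_1$ be the oracle of Theorem \ref{mainfcor}, so that $\NISZK^{\Oracle_1} \not\subseteq \UPP^{\Oracle_1}$. Combining with the containment from the first step, $\NIPZK^{\Oracle_1} \subseteq \PZK^{\Oracle_1} \subseteq \PP^{\Oracle_1} \subseteq \UPP^{\Oracle_1}$, so $\Oracle_1$ already witnesses $\NISZK^{\Oracle_1} \not\subseteq \NIPZK^{\Oracle_1}$, and hence (via $\NISZK \subseteq \SZK$ and $\NIPZK \subseteq \PZK$) also $\SZK^{\Oracle_1} \not\subseteq \PZK^{\Oracle_1}$. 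No new work is required beyond the containment above.

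Third, the separations $\PZK \not\subseteq \coPZK$ and $\NIPZK \not\subseteq \coNIPZK$ are fundamentally harder. The containment $\PZK \subseteq \PP$ is useless here: $\PP$ is closed under complement, so ruling out the complement via $\PP$-hardness is impossible. Instead, I would design an asymmetric oracle promise problem whose YES instances carry an exact distributional structure (for example, two oracle-defined distributions that are literally identical) while NO instances admit only a distance-bounded structure. The positive direction would be an explicit $\NIPZK$ protocol exploiting the exactness on YES. The negative direction---showing the complement problem is not in $\NIPZK$---is the principal obstacle: it demands a query lower bound against $\NIPZK$ itself, strictly stronger than what $\PP$-hardness can provide. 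My plan here is to invoke the paper's new hardness amplification theorem for approximate degree, composing a suitable high-approximate-degree core with $\Gapmajority$ to obtain a function with high threshold degree, and then to upgrade this polynomial lower bound into a genuine $\NIPZK$ query lower bound by a structural analysis of $\NIPZK$ protocols that exploits the one-sided nature of the asymmetric promise.
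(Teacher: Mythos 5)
There is a genuine gap in two of the three components, though the middle piece is exactly right.

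For the containment $\PZK^\Oracle \subseteq \PP^\Oracle$, you propose routing through a $\PZK$-complete problem in the style of the Sahai--Vadhan $\SD$ characterization, with a zero-distance promise on YES instances. The difficulty is that no such clean complete problem for $\PZK$ is known: the Sahai--Vadhan reduction relies essentially on polarization, which inherently introduces statistical error and thus cannot produce a zero-error characterization. The paper instead works directly from the $\HVPZK$ simulator definition: it identifies three verifiable conditions characterizing $x \in L$ (the simulated randomness $R_V(x)$ is uniform and the simulated verifier messages are consistent; the simulated prover's messages depend only on the transcript so far and not on $R_V$; the simulated transcript accepts with probability $\geq 3/4$), and shows each is decidable in $\PP$. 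The enabling ingredient is a $\PP$ test for exact equality of two samplable distributions via estimating $\|p-q\|_2^2$ with two samples (a degree-two collision statistic). Your instinct that ``$\PP$ is precisely the class capable of unbounded-error sign determination'' is aligned with this, but the reduction to a single equality check must be replaced by a direct protocol-by-protocol simulation analysis.

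For the complement separations, the gap is more substantial. You argue that since $\PP$ is closed under complement, $\PP$-hardness is ``useless'' and one must prove a direct query lower bound against $\NIPZK$ via structural analysis, and you propose using the $\Gapmajority$ amplifier. Both of these miss the paper's actual mechanism. The paper \emph{does} use the $\PP$ upper bound, but at a composed level: it shows (i) $\HVPZK^{\mathsf{dt}}$ is closed under composition with $\Gapand$ (not $\Gapmajority$), (ii) $\PTP_n \in \PZK^{\mathsf{dt}}$, and (iii) $\overline{\PTP}_n$ has high positive one-sided approximate degree, so by the $\Gapand$ amplification theorem $\degsh(\Gapand(\overline{\PTP}_n))$ is large and hence $\Gapand(\overline{\PTP}_n) \notin \PP^{\mathsf{dt}}$. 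Now, if $\PZK$ were closed under complement relative to all oracles, $\overline{\PTP}_n$ would be in $\PZK^{\mathsf{dt}}$, and composing with $\Gapand$ would yield an $\HVPZK^{\mathsf{dt}}$ function outside $\PP^{\mathsf{dt}}$, contradicting $\HVPZK \subseteq \PP$. The asymmetry of $\Gapand$ is crucial: $\Gapand$ composition preserves membership in $\PZK^{\mathsf{dt}}$ (one can verify a YES instance by checking a few random copies), whereas the corresponding $\OR$-type composition would not, and $\Gapmajority$ does not pair with one-sided approximate degree in the required way. So the complement symmetry of $\PP$ is not an obstacle; it is precisely the lever, once the $\Gapand$ closure is in hand. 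Your proposed alternative of proving a bare $\NIPZK$ query lower bound by structural analysis of $\NIPZK$ protocols would be a considerably harder task, and no such technique appears in the paper.
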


A summary of known relationships between complexity classes 
in the vicinity of $\SZK$, including the new results established in this work, is provided in Figure \ref{fig:complexdiagram}.
\begin{figure}
	\label{fig:inc-graph}
	\centering
\hspace*{-7em} %center the figure manually
	\begin{tikzpicture}[->,>=stealth',shorten >=1pt,auto,
	semithick,scale = 2.0]
	\tikzstyle{every state}=[draw=none,text=black,rectangle]
	\tikzstyle{contain old}=[thick]
	\tikzstyle{contain new}=[thick,color = red]
	\tikzstyle{equivalent old}=[thick,<->]
	\tikzstyle{separation old}=[thick,dashed]
	\tikzstyle{separation new}=[thick,dashed,color = red]
	\node [state] (SZK) at (-3, 2.66) {$\SZK$};
	\node [state] (NISZK) at (-3, 0.66) {$\NISZK$};
	\node [state] (HVPZK) at (-1, 2.66) {$\HVPZK$};
	\node [state] (NIPZK) at (-1, -0) {$\NIPZK$};
	\node [state] (coNIPZK) at (0.5, -0) {$\coNIPZK$};
	\node [state] (PP) at (-1, 4) {$\PP$};
	\node [state] (UPP) at (-1, 6) {$\UPP$};
	\node [state] (coHVPZK) at (0.5, 2.66) {$\coHVPZK$};
	\node [state] (PZK) at (-1, 1.33) {$\PZK$};
	\node [state] (coPZK) at (0.5, 1.33) {$\coPZK$};
	\node [state] (AMcapcoAM) at (-3, 4) {$\AM \cap \coAM$};
	\node [state] (AM) at (-3, 5) {$\AM$};
	\node [state] (PostBQP) at (0.5, 4) {$\PostBQP$};
	\node [state] (PH) at (-3, 6) {$\PH$};
	
	\draw[contain old] (NISZK) to [out=120,in=240] (SZK);
	\draw[contain old] (NIPZK) -- (PZK);
	\draw[contain old] (PZK) -- (HVPZK);
	\draw[contain old] (HVPZK) to [out=195, in=345] (SZK);
	\draw[separation new] (NISZK) to [out=140, in=160] (UPP);
	\draw[contain old] (PP) -- (UPP);
	\draw[contain old] (PostBQP) -- (UPP);
	\draw[contain old] (AM) -- (PH);
	\draw[contain new] (HVPZK) -- (PP);
	%\draw[contain new] (coHVPZK) -- (PP);
	\draw[contain old] (SZK) -- (AMcapcoAM);
	\draw[contain old] (AMcapcoAM) -- (AM);
	\draw[separation new] (HVPZK) to [out=30, in=150] (coHVPZK);
	\draw[separation new] (coHVPZK) to [out=210, in=330] (HVPZK);
	\draw[separation new] (PZK) to [out=30, in=150] (coPZK);
	\draw[separation new] (coPZK) to [out=210, in=330] (PZK);
	\draw[separation new] (NIPZK) to [out=30, in=150] (coNIPZK);
	\draw[separation new] (coNIPZK) to [out=210, in=330] (NIPZK);
	\draw[equivalent old] (PP) -- (PostBQP);
	\draw[separation new] (NISZK) -- (NIPZK);
	\draw[separation new] (SZK) to [out=15, in=165] (HVPZK);
	\draw[separation new] (SZK) -- (PP);
	\draw[separation new] (SZK) -- (PZK);
	\draw[separation old] (AMcapcoAM) -- (PP);
	\draw[separation old] (SZK) to [out=-60,in=60] (NISZK);
	\draw[separation old] (PH) to [out=30,in=150] (UPP);
        % \draw[separation new] (NISZK) to (PZK);
	\end{tikzpicture}
	
	\newcommand{\class}{\mathcal{C}}

	\caption{$\class_1 \to \class_2$ indicates $\class_1$ is contained in $\class_2$ respect to {\em every} oracle, and $\class_1 \dashrightarrow \class_2$ denotes that there is an oracle $\Oracle$ such that $\class_1^{\Oracle} \not\subset \class_2^{\Oracle}$. {\color{red}Red} indicates new results. Certain non-inclusions that are depicted are subsumed by other non-inclusions (e.g., $\NISZK$ not in $\UPP$ subsumes $\SZK$ not in $\PP$). We include some redundant arrows to facilitate comparison of our results to prior work.}
	\label{fig:complexdiagram}
\end{figure}

%%% Local Variables: 
%%% mode: latex
%%% TeX-master: "main"
%%% End: 
%Our result provides a different oracle which exhibits this separation.
%Therefore some have conjectured that in fact $\PZK=\SZK$ \edit{Adam, do you have a reference for this statement? If not, we should delete the sentence}. 
%However, the question of providing 
%To the best of our knowledge, our methods represent the first 
%technique that can tell the difference 
%between zero simulation error ($\PZK$) and 
%simulation to inverse exponential precision ($\SZK$) \cite{sahai2003complete}. 
%Prior to our work, no such technique was known, and as a result it has remained open to separate $\SZK$
%and $\PZK$ even in relativized worlds, since 1991 \cite{AHPZKBPP},

\subsection{Group 3: Communication Complexity}
\noindent \textbf{Motivation and Context.}
Paturi and Simon \cite{paturisimon} introduced the model of \emph{unbounded error communication complexity}, captured by the communication complexity class $\UPPcc$.\footnote{As is standard, given a query model $\mathsf{C}^{\mathsf{dt}}$ (or a communication model $\mathsf{C}^{\cc}$), we define a corresponding complexity class,
also denoted $\mathsf{C}^{\mathsf{dt}}$ (or $\mathsf{C}^{\cc}$), consisting of all problems that have polylogarithmic cost protocols in the model.} In this model, two parties with inputs $(x, y)$ execute a randomized communication protocol, and are only required to output $f(x, y)$ with probability strictly better than random guessing. Unbounded error communication protocols are extremely powerful, owing to this weak success criterion. 
In fact, $\UPPcc$ represents the frontier of our understanding of communication complexity: it is the most powerful communication model against which we know how to prove lower bounds. We direct the interested reader to \cite{landscape} for a thorough overview of communication complexity classes and their known relationships.
		
\medskip \noindent \emph{What Lies Beyond the Frontier?}
In an Arthur-Merlin game, a computationally-unbounded prover (Merlin) attempts to convince a
computationally-bounded verifier (Arthur) of the value of a given Boolean function on a given
input. The communication analogue of Arthur-Merlin games is captured by the communication
complexity class $\AMcc$. 
 
Many works have pointed to $\AMcc$ as one of the simplest communication models against which we do not know how to prove superlogarithmic lower bounds. Works attempting to address this goal include \cite{goositcs, landscape, CCMTV, Lokam01, LS09, PSS14, KP14b, KP14a, klauck11}. 
In fact, there are even simpler communication models against which we do not know how to prove lower bounds: it is known that $\NISZKcc \subseteq \SZKcc \subseteq \AMcc \cap \coAMcc \subseteq \Stwocc$, and we currently cannot prove lower bounds even against $\NISZKcc$.

Despite our inability to prove lower bounds against these classes, prior to our work it was possible that 
$\AMcc$ is actually contained in $\UPPcc$ (which, as described above, is a class against which we \emph{can} prove lower bounds). The prior works that had come closest to ruling this out were as follows.
	
	\begin{itemize}\setlength\itemsep{0.7em}
	\item $\AMcc \cap \coAMcc \not\subseteq \PPcc$. This was established (using a partial function) by Klauck \cite{klauck11}, who proved it by combining Vereschagin's analogous query complexity separation with Sherstov's pattern matrix method \cite{sherstov2011pattern}.
	\item $\Stwocc \not\subseteq \UPPcc$. This result was proved (using a total function) by Razborov and Sherstov \cite{razborovsherstov}. 
\end{itemize}

Based on this state of affairs, G{\"{o}}{\"{o}}s et al. \cite{landscape} explicitly posed the problem of showing that $\AMcc \cap \coAMcc \not\subseteq \UPPcc$.

\medskip
\noindent \textbf{Our Results.} In this work, we do even better than showing that $\AMcc \not \subseteq \UPPcc$. By ``lifting'' our oracle separation of $\NISZK$ and $\UPP$ to the communication setting, we show (using a partial function) that $\NISZKcc \not \subseteq \UPPcc$. Hence, if $\UPPcc$ is taken to represent the frontier of our understanding of communication complexity, our result implies that $\NISZKcc$ (and hence $\AMcc$) is truly beyond the frontier. This also answers the question of 
G{\"{o}}{\"{o}}s et al. \cite{landscape}.

\begin{theo} \label{thm:commintro}
There is a (promise) problem in $\NISZKcc$ that is not in $\UPPcc$.
\end{theo}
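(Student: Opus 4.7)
The plan is to extract from the proof of Theorem~\ref{mainfcor} an explicit partial Boolean function $f\colon \{0,1\}^N \to \{0,1\}$ whose $\NISZKdt$ query complexity is $\polylog(N)$ but whose $\UPPdt$ complexity is $N^{\Omega(1)}$ (equivalently, by the standard characterization of $\UPPdt$ cost as threshold degree up to logarithmic factors, $\degsh(f) = N^{\Omega(1)}$). Such a query-complexity separation is what drives the oracle separation in Theorem~\ref{mainfcor}, so it should be available essentially for free from that proof. The remaining task is to ``lift'' this query-complexity separation to a communication separation between $\NISZKcc$ and $\UPPcc$ via Sherstov's pattern matrix method \cite{sherstov2011pattern}.

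For the $\UPPcc$ lower bound, I would invoke the pattern matrix method in its sign-rank form. Composing $f$ with a suitable small gadget $g\colon X \times Y \to \{0,1\}$ yields a partial function $F$ on $X^N \times Y^N$, and the pattern matrix method shows that the sign-rank of the communication matrix of $F$ is at least $2^{\Omega(\degsh(f))}$. Since $\UPPcc$ is characterized, up to logarithmic additive factors, by the logarithm of sign-rank, this gives $\UPPcc(F) \geq N^{\Omega(1)}$, which is superpolylogarithmic in the input length of $F$.

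For the $\NISZKcc$ upper bound, I would show that Alice and Bob can simulate the $\NISZKdt$ protocol for $f$ at low communication cost. Whenever the verifier or simulator in the $\NISZKdt$ protocol needs to read the $i$-th input bit of $f$, Alice and Bob instead evaluate the gadget $g$ on their respective $i$-th chunks, which they can do deterministically with $O(\log|X \times Y|)$ bits of communication. Since the protocol makes only $\polylog(N)$ queries, the total communication is polylogarithmic in the input length of $F$; completeness, soundness and the non-interactive statistical zero knowledge property of the simulator transfer from the query model to the communication model, placing $F \in \NISZKcc$.

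The step I expect to be the main obstacle is verifying that the zero-knowledge property survives the lift: one must check that the $\NISZKdt$ simulator extracted from Theorem~\ref{mainfcor} really operates by making a polylogarithmic number of bit queries to the input, so that replacing each query with an Alice--Bob evaluation of $g$ preserves the output distribution of the simulator (up to the same inverse-polynomial statistical slack). Assuming a clean query-complexity formulation of Theorem~\ref{mainfcor}, which the authors must already have in hand since it is precisely what yields the oracle separation, this is essentially the standard observation that pattern-matrix composition preserves well-behaved randomized query protocols, and the crux of the proof remains the threshold-degree-to-sign-rank step supplied by the pattern matrix method.
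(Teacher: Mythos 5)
Your plan correctly identifies the two halves of the argument (a $\UPPcc$ lower bound via a sign-rank bound on a pattern matrix, and an $\NISZKcc$ upper bound by simulating a cheap query/verification protocol), and your handling of the upper bound is fine in spirit. However, there is a genuine gap in the lower bound step, and it is not the step you flag as the main obstacle.

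You write that "the pattern matrix method shows that the sign-rank of the communication matrix of $F$ is at least $2^{\Omega(\degsh(f))}$." This is not true as stated: high threshold degree alone does \emph{not} yield a sign-rank lower bound for the pattern matrix. Sherstov's original pattern matrix method converts bounded-error approximate degree into discrepancy/margin-type lower bounds, and the Razborov--Sherstov refinement that does give sign-rank lower bounds (Theorem~\ref{theo:rs-sign-rank-lowb}) requires strictly more than $\degsh(f)$ being large: it requires a \emph{$d$-orthogonalizing distribution for $f$ that is additionally smooth}, i.e., it must place at least $2^{-O(d)}2^{-n}$ mass on all but a $2^{-\Omega(d)}$ fraction of inputs. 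The dual witness produced by the hardness amplification theorem (Theorem~\ref{theo:lift-UPP}) is not smooth, so one cannot simply plug it into the pattern-matrix machinery. Closing this gap is precisely the technical content of Section~\ref{sec:cc} and Appendix~\ref{ccappendix}: the paper adapts Bun and Thaler's smoothing methodology to the $\Gapmajority$-composed dual witness, and in fact has to change the inner function from $\Collision$ to $\PTP$, because smoothing requires the inner function to evaluate to \False\ on almost all inputs (Condition~\eqref{newcdprop2} of Definition~\ref{defcda}), a property $\Collision$ lacks (it is undefined on most inputs) but $\PTP$ satisfies. By contrast, the zero-knowledge lifting you worry about is comparatively routine; the paper handles the $\NISZKcc$ upper bound by reducing the pattern-matrix version of $\Gapmajority(\PTP)$ to the $\NISZK$-complete problem Entropy Approximation rather than by transplanting a query simulator, but either route works once the pattern matrix gadget is fixed (each ``hidden'' input bit depends on $O(1)$ bits of Alice's and Bob's inputs, so a polylog-query verifier lifts to a polylog-communication verifier). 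In short: your outline omits the smoothness requirement, which is the central new technical difficulty in proving the communication separation, and this omission also hides the need to replace $\Collision$ by $\PTP$.
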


\subsection{Other Consequences of Our Results}
In addition to the above oracle and communication separations, our results have a number of applications in other areas of theoretical computer science. For example, our results have implications regarding the power of complexity classes capturing the power of quantum computing with ``more powerful" modified versions of quantum mechanics \cite{ABFL16, aaronsondqp},
imply limitations on the Polarization Lemma of Sahai and Vadhan \cite{sahai2003complete}, yield novel lower bounds for certain forms of property testing algorithms, and imply upper bounds for \emph{streaming interactive proofs} \cite{cty,  CCMTV}. These results are described in detail in Section \ref{sec:consequences}.

\subsection{Overview of Our Techniques}
\subsubsection{Oracle Separation of $\NISZK$ and $\UPP$ (Proof Overview for Theorem \ref{mainfcor})} 
To describe our methods, it is helpful to introduce the notions of approximate degree and threshold degree, both of which are measures of Boolean function complexity that capture
the difficulty of point-wise approximation by low-degree polynomials. The $\eps$-approximate degree of a function $f \colon \{0, 1\}^n \rightarrow \{0, 1\}$, denoted $\dega_{\eps}(f)$, is the least degree of a real polynomial that point-wise approximates $f$ to error $\eps$. The threshold degree of $f$, denoted $\degsh(f)$, 
is the least degree of a real polynomial that agrees in sign with $f$ at all points. It is easy to see that threshold degree is equivalent to the limit of the approximate degree as the error parameter $\eps$ approaches 1/2 from below. 

A recent and growing line of work has addressed a variety of open problems in complexity theory by establishing various forms of hardness amplification for approximate degree. Roughly speaking, these results show how to take a function $f$ which is hard to approximate by degree $d$ polynomials to error $\eps=1/3$, and turn $f$ into a related function $F$ that is hard to approximate by degree $d$
polynomials even when $\eps$ is very close to 1/2. In most of these works, $F$ is obtained from $f$ by block-composing $f$ with a ``hardness-amplifying function'' $g$. We denote such a block-composition by $g(f)$.

The technical core of our result lies in establishing a new form of hardness amplification for approximate degree. Specifically, let $g$ be the partial function $\Gapmajority_n \colon \{0,1\}^n \rightarrow \{0,1\}$ (throughout this introduction, whenever necessary, we use subscripts after function names to clarify the number of variables on which the function is defined). Here $\Gapmajority$ is the gapped majority function, defined,  for some $1\geq \delta>0.5$, to be 1 if $\geq\delta$ fraction of its inputs are 1, to be 0 if $\geq\delta$ fraction of its inputs are 0, and to be undefined otherwise (in this introduction, we will ignore the precise choice of $\delta$ that we use in our formal results).\footnote{We clarify that if $f$ is a partial function then $\Gapmajority_n(f)$ is technically not a composition of functions, since for some inputs $x=(x_1, \dots, x_n)$ on which $\Gapmajority_n(f)$ is defined, there may be values of $i$ for which $x_i$ {is} outside of the domain of $f$. See Section \ref{sec:gapmajdef} for further discussion of this point.}

\begin{theo}(Informal) \label{thm:great}
Let $f \colon \{0,1\}^M \rightarrow \{0,1\}$. Suppose that $\dega_{1/3}(f) \geq d$. Define $F \colon \{0,1\}^{n \cdot M} \rightarrow \{0,1\}$  via $F=\Gapmajority_n(f).$ Then $\degsh(F) = \Omega(\min(d, n))$.
\end{theo}

In our main application of Theorem \ref{thm:great}, we apply the theorem to a well-known (partial) function $f=\Collision_M$ called the Collision problem. This function is known to have approximate degree $\tilde{\Omega}(M^{1/3})$, so Theorem \ref{thm:great} implies that $F:=\Gapmajority_{M^{1/3}}( \Collision_M)$ has threshold degree $\tilde{\Omega}(M^{1/3})$. Standard results then imply that the $\UPP$ \emph{query complexity} of $F$ is $\tilde{\Omega}(M^{1/3})$ as well. That is, $F \not\in \UPPdt$. 

\begin{cor}[Informal]
Let $m=M^{4/3}$, and define $F \colon \{0,1\}^m \rightarrow \{0,1\}$ via $F:=\Gapmajority_{M^{1/3}}( \Collision_M)$. Then $\UPPdt(F) = \tilde{\Omega}(m^{1/4})$. 
\end{cor}

Moreover, as we show later, $\Gapmajority_{M^{1/3}}(\Collision_M)$ is in $\NISZKdt$. Hence, we obtain a separation between $\NISZKdt$ and $\UPPdt$. The desired oracle separating $\NISZK$ from $\UPP$ follows via
standard methods.

\medskip \noindent \textbf{Comparison of Theorem \ref{thm:great} to Prior Work.}
The hardness amplification result from prior work that is most closely related to Theorem \ref{thm:great} is due to Sherstov \cite{sherstov2014breaking}. Sherstov's result makes use of a notion known as (positive) one-sided approximate degree \cite{sherstov2014breaking, bun2015hardness}. Positive one-sided approximate degree is a measure that is intermediate between approximate degree and threshold degree -- the positive one-sided approximate degree of $f$, denoted $\odegap_{\eps}(f)$, is always at most as large as the approximate degree of $f$ but can be much smaller, and it is always at least as large as the threshold degree of $f$  but can be much larger (see Section \ref{sec:test} for a formal definition of positive one-sided approximate degree).\footnote{The notion of positive one-sided approximate degree treats inputs in $f^{-1}(1)$ and $f^{-1}(0)$ asymmetrically. There is an analogous notion called negative one-sided approximate degree that reverses the roles of $f^{-1}(1)$ and $f^{-1}(0)$ \cite{thalericalp, varun}. Our use of the positive vs. negative terminology follows prior work \cite{thalericalp, varun} -- other prior works \cite{sherstov2014breaking, bun2015hardness}
only used negative one-sided approximate degree, and referred to this complexity measure without
qualification as one-sided approximate degree.
In this paper, we exclusively use the notion of positive one-sided approximate degree.} %; this choice is made to streamline our proof of an oracle separation between $\PZK$ and $\coPZK$ (see Appendix \ref{sec:copzk} for details).}

\begin{theo}[Sherstov] \label{thm:sherstov}
Let $f \colon \{0,1\}^M \rightarrow \{0,1\}$. Suppose that $\odegap_{1/3}(f) \geq d$. Define $F \colon \{0,1\}^{n \cdot M} \rightarrow \{0,1\}$  via $F=\AND_n(f).$ Then $\degsh(F) = \Omega(\min(d, n))$.\footnote{Sherstov stated his result for $\OR_n(f)$ under the assumption that $f$ has large \emph{negative} one-sided approximate degree. Our statement of Theorem \ref{thm:sherstov} is the equivalent result under the assumption that $f$ has large positive one-sided approximate degree.} 
\end{theo}

There are two differences between Theorems \ref{thm:great} and \ref{thm:sherstov}. The first is that the hardness-amplifier in Theorem \ref{thm:great} is $\Gapmajority$, while in Theorem \ref{thm:sherstov} it is $\AND$. $\Gapmajority$ is a ``simpler'' function than $\AND$ in the following sense: block-composing $f$ with $\Gapmajority$ preserves membership in complexity classes such as $\NISZKdt$ and $\SZKdt$; this is not the case for $\AND$, as $\AND$ itself is not in $\SZKdt$. This property is essential for us to obtain threshold degree lower
bounds even for functions that are in $\NISZKdt$.

%Another sense in which $\Gapmajority$ is simpler than $\OR$ is that the former has constant randomized query complexity, while the latter has polynomial (even quantum) query complexity. However, we will not exploit this property in our applications.

The second difference is that Theorem \ref{thm:great} holds under the assumption that $\dega_{1/3}(f) \geq d$, while Theorem \ref{thm:sherstov} makes the stronger assumption that $\odegap_{\eps}(f) \geq d$. 
While we do not exploit this second difference in our applications, ours is the first form of hardness amplification that works for approximate degree rather than one-sided approximate degree. This property has already been exploited in subsequent work \cite{btlatest}. %\footnote{Throughout this discussion, when the distinction between negative and positive one-sided approximate degree is not critical, we omit the positive/negative qualifier.}, and we are optimistic that this added generality will find applications in future work.

\medskip \noindent \textbf{Proof Sketch for Theorem \ref{thm:great}.} 
A \emph{dual polynomial} is a dual solution to an appropriate
linear program capturing the threshold degree of any function. 
Specifically, for a (partial) function $f$ defined on a subset of $\{0, 1\}^n$, a dual polynomial witnessing the fact that $\dega_{\eps}(f) \geq d$
is a function $\psi \colon \{0,1\}^n \rightarrow \R$ that satisfies the following three properties. 

\begin{itemize}
\item[(a)] $\psi$ is uncorrelated with all polynomials $p$ of total degree at most $d$. That is, for any $p \colon \{0,1\}^n \rightarrow \R$ such that $\deg(p) \leq d$, it holds that $\sum_{x \in \{0,1\}^n} \psi(x) \cdot p(x) = 0$. We refer to this property by saying that $\psi$ has \emph{pure high degree} $d$.
\item[(b)] $\psi$ has $\ell_1$ norm equal to 1, i.e., $\sum_{x \in \{0,1\}^n} |\psi(x)|=1$.
\item[(c)] $\psi$ has correlation at least $\eps$ with $f$. That is, if $D$ denotes the domain on which $f$ is defined, then $\sum_{x \in D} \psi(x) \cdot f(x) - \sum_{x \in \{0,1\}^n \setminus D} |\psi(x)| > \eps$.
\end{itemize}

It is not hard to see that a dual witness for the fact that $\degsh(f)\geq d$ is a function $\psi$ satisfying Properties (a) and (b) above,
that additionally is \emph{perfectly} correlated with $f$. That is, $\psi$ additionally satisfies
\begin{equation} \label{eq:orthdist}\sum_{x \in D} \psi(x) \cdot f(x) - \sum_{x \in \{0,1\}^n \setminus D} |\psi(x)| = 1.\end{equation} In this case, $\psi \cdot f$ is non-negative,
 and is referred to as an \emph{orthogonalizing distribution} for $f$.
 
 \label{sec:proofoverview}
 
 We prove Theorem \ref{thm:great} by constructing an explicit orthogonalizing distribution for $\Gapmajority_n(f)$.
Specifically, we show how to take a dual polynomial witnessing the fact that $\dega_{1/3}(f) \geq d$,
and turn it into an orthogonalizing distribution witnessing the fact that $\degsh(F) = \Omega(\min(d, n))$.
 
 Our construction of an orthogonalizing distribution for 
$\Gapmajority_n( f )$ is inspired by and reminiscent of Sherstov's construction of an orthogonalizing distribution for $\AND_n( f)$ \cite{sherstov2014breaking}, which in turn builds on a dual polynomial for $\AND_n(f)$ constructed by Bun and Thaler \cite{bun2015hardness}. In more detail,
Bun and Thaler constructed a dual polynomial $\psi_{BT}$ of pure high degree $d$ that had correlation
$1-2^{-n}$ with $\AND_n( f)$. Sherstov's dual witness
was defined as $\psi_{BT} + \psi_{corr}$, where $\psi_{corr}$ is an \emph{error-correction term} that also has pure high degree $\Omega(d)$. The purpose of $\psi_{corr}$
is to
``zero-out'' $\psi_{BT}$ at all points where $\psi_{BT}$ differs in sign from $f$,
without affecting the sign of $\psi_{BT}$ on any other inputs. 

Naively, one might hope that $\psi_{BT} + \psi_{corr}$ is also a dual witness to the fact 
that $\degsh(\Gapmajority_n( f))$ is large. Unfortunately, this is not the case, as it does
not satisfy Equation \eqref{eq:orthdist} with respect to $\Gapmajority_n ( f)$.
It is helpful to think of this failure as stemming from two issues. 
First, $\psi_{BT} + \psi_{corr}$ places non-zero weight
on many inputs on which $\Gapmajority_n ( f)$ is undefined (i.e., on inputs
for which fewer than $\delta n$ copies of $f$ evalauate to 1 and fewer than $\delta n$ copies of $f$ evaluate to 0). 
Second, there are inputs on which $\Gapmajority_n(f)$
is defined, yet $\AND_d(f)$ does not agree with $\Gapmajority_n(f)$.

To address both of these issues, we add a \emph{different} error-correction term $\psi_{corr}'$ of pure high degree $\tilde{\Omega}(\min(n, d))$ to $\psi_{BT}$. Our correction term does not just zero out the value of $\psi_{BT}$ on inputs on which it disagrees in sign with $\AND_n( f)$, but also zeros it out on inputs for which $\Gapmajority_n( f)$ is undefined, and 
on inputs on which $\AND_n( f)$ does not agree with $\Gapmajority_n ( f)$. 

Moreover, we show that adding $\psi_{corr}'$ does not affect the sign of $\psi_{BT}$ on other inputs -- achieving this requires some new ideas in both the definition $\psi_{corr}'$ and its analysis. 
Putting everything together, we obtain a dual witness  $\psi_{BT} + \psi'_{corr}$ showing that $\degsh(\Gapmajority_n ( f)) = \Omega(\min(n, d))$.

\subsubsection{Limitations on the Power of Perfect Zero Knowledge (Proof Overview For Theorem \ref{thm:pzklimit})}
We begin the proof of Theorem \ref{thm:pzklimit}
by showing that $\HVPZK$ (\emph{honest verifier} perfect zero knowledge) is contained in $\PP$ in a relativizing manner (see Section \ref{sec:oracleseps}).
Since the inclusions $\PP \subseteq \UPP$, $\NIPZK \subseteq \HVPZK$, $\PZK \subseteq \HVPZK$, and $\NISZK \subseteq \SZK$ hold with respect to any oracle, 
this means that our oracle separating $\NISZK$ from $\UPP$ (Theorem \ref{mainfcor}) also separates $\SZK$ from $\PZK$
and $\NISZK$ from $\NIPZK$.

We then turn to showing that $\PZK$ and $\NIPZK$ are not closed under complement with respect to some oracle. Since the proofs
are similar, we focus on the case of $\PZK$ in this overview.

Since both $\PZK$ and $\co\PZK$ are contained in $\PP$ with respect to any oracle, our oracle separation of $\NISZK$ from $\PP$ (Theorem \ref{mainfcor}) does not imply an oracle relative to which $\PZK\neq\co\PZK$.
{Instead, to obtain this result we prove a new amplification theorem for one-sided approximate degree.}
{Using similar techniques as Theorem \ref{thm:great}, we show that if  $f$ has high positive one-sided approximate degree, then block-composing $f$ with the gapped AND function yields a function with high threshold degree. }
Here $\Gapand$ is partial function that outputs $1$ if all inputs are $1$, outputs $0$ if at least a $\delta$ fraction of inputs are $0$, and is undefined otherwise.
\begin{theo}(Informal) \label{thm:gapandinformal}
Let $f \colon \{0,1\}^M \rightarrow \{0,1\}$. Suppose that $\odegap_{1/3}(f) \geq d$. Then $\degsh(\Gapand_n(f)) \newline= \Omega(\min(d, n))$.
\end{theo}

We then show that (a) $\PZKdt$ is closed under composition with $\Gapand$ and (b) there is a function $f$ in $\PZKdt$ whose \emph{complement} $\bar{f}$ has high-one sided positive one-sided approximate degree.  
If $\PZKdt$ were closed under complement, then $\bar{f}$ would be in $\PZKdt$. 
By amplifying the hardness of $\bar{f}$ using Theorem \ref{thm:gapandinformal}, we obtain 
 a problem that is still in $\PZKdt$ (this holds by property (a)) yet outside of $\PPdt$ (this holds by property (b), together with Theorem \ref{thm:gapandinformal}). This is easily seen to contradict the fact $\PZK$ is in $\PP$ relative to all oracles. Hence, $\bar{f}$ is a function in $\coPZKdt$ that is not in $\PZKdt$, and standard techniques translate
 this fact into an oracle separating $\coPZK$ from $\PZK$. %Since this is not possible, we instead find a function in $\PZK$ whose \emph{complement} has high-one sided positive one-sided approximate degree. So if $\co\PZK=\PZK$ for all oracles, we obtain a function in $\PZK$ of high positive one-sided approximate degree, which is a contradiction. } 
We provide details of these results in Section \ref{sec:oracleseps}. % and Appendix \ref{sec:others2}.

\subsubsection{Lifting to Communication Complexity: Proof Overview For Theorem \ref{thm:commintro}}
 To extend our separation between $\NISZK$ and $\UPP$ to the world of communication complexity,
we build on recently developed methods of Bun and Thaler \cite{bt16}, who themselves used and generalized the breakthrough work
of Razborov and Sherstov \cite{razborovsherstov}. 
Razborov and Sherstov showed that if $F$ has high threshold degree and this is witnessed by an orthogonalizing distribution that satisfies an additional smoothness condition, then $F$ can be transformed into a related function 
$F'$ that has high $\UPPcc$ complexity (specifically, $F'$ is obtained from $F$ via the \emph{pattern matrix method} introduced in \cite{sherstov2011pattern}). So in order to turn $\Gapmajority( \Collision)$ into a function with high $\UPPcc$ complexity, it is enough to give a \emph{smooth} orthogonalizing distribution for $F$. 

Bun and Thaler \cite{bt16} showed how to take the dual witness Sherstov constructed for $\OR (f)$ in the proof of Theorem \ref{thm:sherstov} and smooth it out, assuming the inner function $f$ satisfies some modest additional conditions. Fortunately, a variant of $\Collision$ called the \textsf{Permutation  Testing Problem} ($\PTP$ for short) satisfies these additional conditions, and
since our construction of an orthogonalizing distribution for $\Gapmajority (\PTP)$  is reminiscent of Sherstov's orthogonalizing distribution for $\OR(f)$,
we are able to modify the methods of Bun and Thaler to smooth out our dual witness for $\Gapmajority( \PTP)$. Although there are many technical details to work through, adopting the methodology of Bun and Thaler to our setting does not require substantially new ideas, and we do not consider it to be a major technical contribution of this work. Nonetheless, it does require the careful management of various subtleties arising from our use of promise problems as opposed to total Boolean functions, and our final communication lower bound inherits many of the advantages of our
Theorem \ref{thm:great} relative to prior work (such as applying to functions
with high approximate degree rather than high one-sided approximate degree).

\subsection{Other Works Giving Evidence for the Hardness of $\SZK$}
\label{sec:priorhardness}
As mentioned in Section \ref{sec:pzkintro}, Aiello and H{\aa}stad showed that $\PZK$ (and also $\SZK$) is not contained in $\BPP$ relative to some oracle \cite{AHPZKBPP}.
Agrawal et al. later used similar techniques to show that $\SZK$ is not contained in the class $\SRE$ (which can be viewed as a natural generalization of $\BPP$) relative to some oracle \cite{AIKP15}.
Aaronson \cite{aaronson2002quantum} gave an oracle relative to which $\SZK$ is not contained in $\BQP$ -- and therefore quantum computers cannot break $\SZK$-hard cryptosystems in a black-box manner. 
Building on that work, Aaronson \cite{aaronson2012impossibility} later gave oracle separations against the class $\QMA$ (a quantum analogue of $\NP)$ and the class $\AOPP$ (a class intermediate between $\QMA$ and $\PP$).
Therefore even quantum proofs cannot certify $\SZK$ in a black-box manner\footnote{{Note, however, that oracle separations do not necessarily imply the analogous separations in the ``real world" -- see \cite{Barak01} and \cite{CCGHHRR94} for instances in which the situation in the presence of oracles is far from the situation in the real world.}}.

Until recently, the lower bound most closely related to our oracle separation of $\NISZK$ and $\UPP$ (cf. Theorem \ref{mainfcor}) was Vereschagin's result from 1995, which gave an oracle relative to which $\AM \cap \coAM$ is not contained in $\PP$ \cite{Vereshchagin95}. 
Our result is an improvement on Vereschagin's because the inclusions
$\NISZK \subseteq \SZK \subseteq \AM \cap \coAM$ can be proved in a relativizing manner (cf. Figure \ref{fig:complexdiagram}).
It also generalizes Aaronson's oracle separation between $\SZK$ and $\AOPP$ \cite{aaronson2012impossibility}. 

Vereschagin  \cite{Vereshchagin95} also reports that Beigel claimed a simple proof of the existence of a function $f$ that is in the query complexity class $\AMdt$, but is not in the query complexity class $\UPPdt$. Our result improves 
on Beigel's in two regards. First, since $\NISZKdt \subseteq \AMdt$, separating $\NISZKdt$ from $\UPPdt$ is more difficult than separating $\AMdt$ from $\UPPdt$. Second, Beigel 
only claimed a superlogarithmic lower bound on the $\UPPdt$ query complexity of $f$, while we give a polynomial lower bound.

Theorem \ref{mainfcor} also improves on very recent work of Chen \cite{lijiepszk, lijiepszkqszk}, which gave a query separation between the classes $\mathsf{P}^\SZK$ and $\PP$.

\section{Technical Preliminaries}
\label{sec:prelim}

\subsection{Complexity Classes}
\label{sec:comp-classes}

\paragraph{Notation.} In an interactive proof $(P,V)$ where $P$ is the prover and $V$ is the verifier, we denote by $(P,V)(x)$ the random variable corresponding to the transcript of the protocol on input $x$. For distributions $D_0$ and $D_1$, $||D_0 - D_1||$ denotes the Total Variational Distance between them $\left(||D_0- D_1|| = \frac{1}{2}|D_0 - D_1|_1 \right)$.

\begin{defi}[Honest Verifier Statistical Zero Knowledge]
    \label{def:szkclass}
    A promise problem $L = (L_Y, L_N)$ is in $\HVSZK$ if there exists a tuple of Turing machines $(P, V, S)$, where the \emph{verifier} $V$ and \emph{simulator} $S$ run in probabilistic polynomial time, satisfying the following:
    \begin{itemize}
      \item $(P,V)$ is an interactive proof for $L$ with negligible completeness and soundness errors.
      \item For any $x\in L_Y$,
        \begin{align*}
          \|S(x) - (P,V)(x)\| \leq \negl(|x|)
        \end{align*}
    \end{itemize}
\end{defi}

\label{sec:niszkdef}
\begin{defi}[Non-Interactive SZK]
    \label{def:niszkclass}
    A promise problem $L = (L_Y, L_N)$ is in $\NISZK$ if there exist a tuple of Turing machines $(P, V, S)$, where the \emph{verifier} $V$ and \emph{simulator} $S$ run in probabilistic polynomial time, satisfying the following:
    \begin{itemize}
      \item $(P,V)$ is an interactive proof for $L$ with negligible completeness and soundness errors, with the following additional conditions:
        \begin{enumerate}
          \item $P$ and $V$ both have access to a long enough common random string.
          \item The interactive proof consists of a single message from $P$ to $V$.
        \end{enumerate}
      \item For any $x\in L_Y$,
        \begin{align*}
          \|S(x) - (P,V)(x)\| \leq \negl(|x|)
        \end{align*}
    \end{itemize}
\end{defi}

The definitions of these classes in the presence of an oracle are the same, except that $P$, $V$, and $S$ all have access to the oracle.

It is easy to see that $\NISZK$ is contained in $\HVSZK$, even in the presence of oracles. The class $\SZK$ is defined to be almost the same as $\HVSZK$, except for the stipulation that for any verifier (even one that deviates from the prescribed protocol), there is a simulator that simulates the prover's interaction with that verifier. It was shown in \cite{GSV98} that $\SZK$ is equal to $\HVSZK$, and their proof continues to hold in the presence of any oracle. For this reason, $\NISZK$ is also contained in $\SZK$ in the presence of any oracle, and we shall be implicitly making use of this fact at several points where we state corollaries for $\SZK$ instead of $\HVSZK$.

\subsection{Approximate Degree, Threshold Degree, and Their Dual Characterizations}
\label{sec:test}
	We first recall the definitions of approximate degree, positive one-sided approximate degree, and threshold degree for partial functions.
	
	\begin{defi} \label{maindef}
		Let $D \subseteq \{0, 1\}^M$, and let $f$ be a function mapping $D$ to $\{0,1\}$.
		\begin{itemize}
		\item The \emph{approximate degree} of $f$ with approximation constant $0\le\epsilon<1/2$, denoted $\dega_\epsilon(f)$, is the least degree of a real polynomial $p \colon \{0, 1\}^M \to \R$ such that $|p(x)-f(x)| \le \epsilon$ when $x \in D$, and $|p(x)| \le 1+\epsilon$ for all $x\not\in D$. We refer to such a $p$ as an \emph{approximating polynomial} for $f$. We use $\dega(f)$ to denote $\dega_{1/3}(f)$.
		
		\item The \emph{threshold degree} of $f$, denoted $\degsh(f)$, is the least degree of a real polynomial $p$ such that $p(x) > 0 $ when $f(x) = 1$, and $p(x) < 0$ when $f(x) = 0$.
		
		\item The \emph{postive one-sided approximate degree} of $f$ with approximation constant $0\le\epsilon<1/2$, denoted $\odegap_\epsilon(f)$, is the least degree of a real polynomial $p$ such that $|p(x) - 1| \le \epsilon$ for all $x \in f^{-1}(1)$, and $p(x) \le \epsilon$ when $x \in f^{-1}(0)$. We refer to such a $p$ as a \emph{positive one-sided approximating polynomial} for $f$. We use $\odegap(f)$ to denote $\odegap_{1/3}(f)$.		
%		The definition of negative one-sided approximate degree is analogous, except that it is required that $|p(x)| \le \epsilon$  for $x \in f^{-1}(0)$ and $p(x) \geq 1-\eps$ for $x \in f^{-1}(1)$.
		\end{itemize}
	\end{defi}
	
\medskip \noindent \textbf{Remark}.
We highlight the following subtlety in Definition \ref{maindef}: an approximating polynomial for a partial function $f$ is required to be bounded in absolute value even outside of the domain $D$ on which $f$ is defined, yet this is not required of a one-sided approximating polynomial for $f$. The reason we choose to require an approximating polynomial to be bounded outside of $D$ is to ensure that the $\Collision$ function (defined later in Section \ref{sec:collision}) has large approximate degree. 
	
	\medskip
	
	There are clean dual characterizations for each of the three quantities defined in Definition \ref{maindef}. We state these characterizations without proof, and direct the interested reader to \cite{sherstov2015power,sherstov2014breaking,bun2015dual} for details.
	
	For a function $\psi \colon \{0, 1\}^M \to \R$, define the $\ell_1$ norm of $\psi$ by $\|\psi\|_1 = \sum_{x \in \{0,1\}^M} |\psi(x)|$. 
	If the support of a function $\psi \colon \{0,1\}^M \to \R$ is (a subset of) a set $D \subseteq \{0, 1\}^M$, we will write $\psi \colon D \to \R$. 	
	For
	functions $f, \psi \colon D \to \R$, denote their inner product by $\langle f, \psi \rangle := \sum_{x \in D} f(x) \psi(x)$. 
	We say that a function $\psi \colon \{0, 1\}^M \to \R$ has \emph{pure high degree $d$}
	if $\psi$ is uncorrelated with any polynomial $p \colon \{0, 1\}^M \to \R$ of total degree at most $d$, i.e., if
	$\langle \psi, p \rangle = 0$. 
	
	\begin{theo}\label{theo:dual-dega}
		
		Let $f:D \to \{0,1\}$ with $D \subseteq \{0,1\}^M$ be a partial function and $\epsilon$ be a real number in $[0,1/2)$. $\dega_\epsilon(f) > d$ if and only if there is a real function $\psi: \{0,1\}^M \to \R$ such that:
		\begin{enumerate}
			\item (Pure high degree): $\psi$ has pure high degree of $d$.
			\item (Unit $\ell_1$-norm): $\|\psi\|_{1} =1$. 
			\item (Correlation): $\sum_{x \in D} \psi(x) f(x) - \sum_{x\not\in D} |\psi(x)| > \epsilon$.
		\end{enumerate}
	\end{theo}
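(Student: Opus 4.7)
The plan is to derive the stated characterization from LP duality applied to the linear program that defines $\dega_\epsilon(f)$. The forward (``only-if'') direction is the harder use of duality, since it requires producing a dual witness; the reverse direction is just a short weak-duality calculation that serves as a sanity check.

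I would start with the easy direction. Suppose $\psi \colon \{0,1\}^M \to \R$ satisfies the three listed properties, and, toward contradiction, let $p$ be a polynomial of degree at most $d$ with $|p(x)-f(x)|\le \epsilon$ for $x\in D$ and $|p(x)|\le 1+\epsilon$ for $x \notin D$. Pure high degree gives $\langle \psi,p\rangle = 0$, so
\begin{equation*}
\sum_{x\in D}\psi(x)f(x) \;=\; -\sum_{x\in D}\psi(x)(p(x)-f(x)) \;-\; \sum_{x\notin D}\psi(x)p(x).
\end{equation*}
Bounding the right-hand side using the assumed inequalities on $p$ and the unit $\ell_1$ normalization $\|\psi\|_1=1$ yields
\begin{equation*}
\sum_{x\in D}\psi(x)f(x) \;\le\; \epsilon \sum_{x\in D}|\psi(x)| \;+\; (1+\epsilon)\sum_{x\notin D}|\psi(x)| \;=\; \epsilon + \sum_{x\notin D}|\psi(x)|,
\end{equation*}
contradicting the correlation property. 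This shows $\dega_\epsilon(f)>d$.

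For the other direction I would set up the primal LP explicitly. Its variables are the coefficients $c_S$ of a polynomial $p(x)=\sum_{|S|\le d}c_S\chi_S(x)$ in (say) the parity basis, together with a scalar $\epsilon'$; the constraints are $p(x)-f(x)\le \epsilon'$ and $f(x)-p(x)\le \epsilon'$ for $x\in D$, and $p(x)\le 1+\epsilon'$ and $-p(x)\le 1+\epsilon'$ for $x\notin D$; the objective is $\min \epsilon'$. By hypothesis the optimal value exceeds $\epsilon$. I would introduce dual variables $\alpha^{\pm}_x\ge 0$ for $x\in D$ and $\beta^{\pm}_x\ge 0$ for $x\notin D$ and take the standard LP dual. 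Stationarity in each coefficient $c_S$ forces $\sum_{x\in D}(\alpha^+_x-\alpha^-_x)\chi_S(x)+\sum_{x\notin D}(\beta^+_x-\beta^-_x)\chi_S(x)=0$ for every $|S|\le d$; stationarity in $\epsilon'$ forces $\sum_x(\alpha^+_x+\alpha^-_x)+\sum_{x\notin D}(\beta^+_x+\beta^-_x)=1$. Defining $\psi(x)=\alpha^+_x-\alpha^-_x$ on $D$ and $\psi(x)=\beta^+_x-\beta^-_x$ off $D$ produces a function of pure high degree $d$ with $\|\psi\|_1\le 1$ (WLOG equal to $1$ by working with the minimal $\pm$ decomposition), and the dual objective, which strong duality equates with the primal optimum exceeding $\epsilon$, becomes precisely $\sum_{x\in D}\psi(x)f(x)-\sum_{x\notin D}|\psi(x)|>\epsilon$.

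The main obstacle is really a bookkeeping one: keeping track of the four families of inequality constraints (two for $x\in D$, two for $x\notin D$) so that the signs collapse cleanly into the single expression $\sum_{x\in D}\psi(x)f(x)-\sum_{x\notin D}|\psi(x)|$ after normalization. There is no conceptual difficulty since the LP is finite-dimensional and feasible (e.g., $p\equiv 0$ with $\epsilon'$ large), so Slater's condition and hence strong duality hold; the only care needed is to ensure that the ``WLOG $\|\psi\|_1=1$'' step is valid, which one can justify by either scaling the dual witness or by observing that the optimal $\alpha^{\pm},\beta^{\pm}$ satisfy $\alpha^+_x\alpha^-_x=0$ and $\beta^+_x\beta^-_x=0$ at optimality so that $|\psi(x)|$ genuinely equals $\alpha^+_x+\alpha^-_x$ or $\beta^+_x+\beta^-_x$.
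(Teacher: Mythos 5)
Your proof is correct, and it is the standard LP-duality derivation that the references the paper cites for this theorem (\cite{sherstov2015power,sherstov2014breaking,bun2015dual}) use; the paper itself states the result without proof and defers to those works. The only item to re-check is the sign in the dual objective---with your convention $\psi(x)=\alpha^+_x-\alpha^-_x$ on $D$, the stationarity-reduced Lagrangian yields $-\sum_{x\in D}\psi(x)f(x)-\sum_{x\notin D}|\psi(x)|$ as the quantity that strong duality makes exceed $\epsilon$, so you should instead set $\psi(x)=\alpha^-_x-\alpha^+_x$ on $D$ (or equivalently negate $\psi$ at the end, which preserves pure high degree and unit $\ell_1$-norm)---but this is exactly the sign bookkeeping you flagged, and your two justifications for the normalization $\|\psi\|_1=1$ (rescaling, using $\epsilon\ge 0$ so the strict inequality survives scaling up; or complementary slackness, using $\epsilon^*>\epsilon\ge 0$ so neither pair of opposing constraints can both be tight) are both valid.
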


	\begin{theo}\label{theo:dual-degsh}
		
		Let $f:D \to \{0,1\}$ with $D \subseteq \{0,1\}^M$ be a partial function. $\degsh(f) > d$ if and only if there is a real function $\psi: D \to \R$ such that:
		\begin{enumerate}
			\item (Pure high degree): $\psi$ has pure high degree of $d$.
			\item (Sign Agreement): $\psi(x) \ge 0$ when $f(x) = 1$, and $\psi(x) \le 0$ when $f(x) = 0$.
			\item (Non-triviality): $\|\psi\|_1 > 0$.
		\end{enumerate}
	\end{theo}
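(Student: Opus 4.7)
The plan is to establish both directions via LP duality, as is standard for dual characterizations of polynomial approximation measures.

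For the easy direction, I will argue by contradiction: suppose a $\psi$ satisfying properties (1)--(3) exists, yet there is a real polynomial $p$ of total degree at most $d$ satisfying $p(x) > 0$ on $f^{-1}(1)$ and $p(x) < 0$ on $f^{-1}(0)$. By sign agreement (property 2), the product $\psi(x) \cdot p(x)$ is non-negative for every $x \in D$, and is strictly positive whenever $\psi(x) \neq 0$. Non-triviality (property 3) guarantees that some such $x$ exists, so $\sum_{x \in D} \psi(x) p(x) > 0$. But pure high degree (property 1), together with the fact that $\psi$ is supported on $D$, gives $\sum_{x \in D} \psi(x) p(x) = \langle \psi, p \rangle = 0$, a contradiction.

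For the harder direction, I will set up threshold degree as a linear program and invoke LP duality (equivalently, Farkas's lemma or a hyperplane separation argument). Specifically, because the sign conditions defining $\degsh(f)$ are strict and the set of candidate polynomials of degree at most $d$ is a cone, the statement $\degsh(f) > d$ is equivalent to the infeasibility of the system: find coefficients of a polynomial $p$ of degree $\le d$ with $p(x) \ge 1$ for $x \in f^{-1}(1)$ and $p(x) \le -1$ for $x \in f^{-1}(0)$. If this primal is infeasible, Farkas's lemma produces nonnegative weights $\lambda \colon D \to \R_{\ge 0}$, not all zero, such that $\sum_{x \in f^{-1}(1)} \lambda(x) \chi_S(x) - \sum_{x \in f^{-1}(0)} \lambda(x) \chi_S(x) = 0$ for every monomial $\chi_S$ of degree at most $d$.

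Given such $\lambda$, I would define $\psi \colon D \to \R$ by $\psi(x) = \lambda(x)$ if $f(x) = 1$ and $\psi(x) = -\lambda(x)$ if $f(x) = 0$. The monomial-wise orthogonality above extends by linearity to all polynomials of degree at most $d$, yielding pure high degree; the construction manifestly satisfies sign agreement; and non-triviality follows since $\lambda$ is not identically zero, so $\|\psi\|_1 = \sum_{x \in D} \lambda(x) > 0$. The main technical care required is in justifying the rescaling from the strict inequalities in the definition of $\degsh$ to the closed inequalities with margin $1$ (so that LP duality can be applied cleanly) and in choosing the correct form of Farkas's lemma for a finite-dimensional system; beyond that, the argument is a direct translation between primal and dual.
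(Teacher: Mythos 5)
The paper does not prove Theorem~\ref{theo:dual-degsh} itself; it states it without proof and cites external references. Your proposal follows exactly the standard LP-duality route used in those references and is correct: the forward direction is the elementary argument that $\langle \psi, p \rangle$ would be strictly positive for any sign-representing $p$ while pure high degree forces it to vanish, and the reverse direction correctly homogenizes the strict sign conditions to a margin-$1$ system over the finite domain $D$, applies Farkas/Gale to extract nonnegative multipliers $\lambda$, and then signs them by $f$ to obtain $\psi$ with the required support, orthogonality, sign-agreement, and non-triviality. The rescaling step you flag is indeed the only point needing care, and it goes through because $D$ is finite (so a sign-representing $p$ has a positive minimum of $|p|$ over $D$).
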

	
	\begin{theo}\label{theo:dual-odega}
		
		Let $f:D \to \{0,1\}$ with $D \subseteq \{0,1\}^M$ be a partial function and $\epsilon$ be a constant in $[0,1/2)$. $\odegap_\epsilon(f) > d$ if and only if there is a real function $\psi: D \to \R$ such that:
		\begin{enumerate}
			\item (Pure high degree): $\psi$ has pure high degree of $d$.
			\item (Unit $\ell_1$-norm): $\|\psi\|_{1} =1$. %\jnote{We need this normalization condition for Lemma \ref{lem:simple} to hold}
			\item (Correlation): $\langle \psi, f \rangle > \epsilon$.
			\item (Negative Sign Agreement): $\psi(x) \le 0$  whenever $f(x) = 0$.
		\end{enumerate}
	\end{theo}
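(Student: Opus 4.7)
The plan is to prove Theorem~\ref{theo:dual-odega} by linear programming duality (equivalently, a hyperplane-separation argument in finite dimensions), following the same template that underlies Theorems~\ref{theo:dual-dega} and~\ref{theo:dual-degsh}. Both directions are essentially routine once the primal LP is set up carefully to reflect the asymmetry inherent in positive one-sided approximation.

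For the easy direction ($\Leftarrow$), assume $\psi$ satisfies the four properties and suppose for contradiction that some $p$ of degree $\le d$ is a positive one-sided $\epsilon$-approximator for $f$. Since $\psi$ has pure high degree $d$, $\langle \psi, p \rangle = 0$. I would split this inner product as a sum over $f^{-1}(1)$ and $f^{-1}(0)$ and bound each term using $|p(x)-1|\le \epsilon$ on $f^{-1}(1)$, $p(x)\le \epsilon$ on $f^{-1}(0)$, and $\psi(x) \le 0$ on $f^{-1}(0)$. A short case analysis (the second case using $\psi(x) \le 0$ crucially, since $p$ need not be bounded below on $f^{-1}(0)$) yields
\[
0 \;=\; \langle \psi, p \rangle \;\ge\; \langle \psi, f \rangle - \epsilon\|\psi\|_1,
\]
contradicting properties~(2) and~(3).

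For the hard direction ($\Rightarrow$), I would view the existence of a degree-$\le d$ one-sided $\epsilon$-approximator as a feasibility LP whose (free) variables are the coefficients of $p$ and whose constraints are $p(x) \ge 1-\epsilon$ and $-p(x) \ge -1-\epsilon$ for $x \in f^{-1}(1)$, together with $-p(x) \ge -\epsilon$ for $x \in f^{-1}(0)$. The hypothesis $\odegap_\epsilon(f) > d$ says this system is infeasible, so Farkas' lemma produces nonnegative multipliers $\alpha_x,\beta_x$ for $x\in f^{-1}(1)$ and $\gamma_x$ for $x\in f^{-1}(0)$ such that the weighted combination of the left-hand sides annihilates every degree-$\le d$ polynomial while the weighted combination of the right-hand sides is strictly positive. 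Defining $\psi(x) := \alpha_x - \beta_x$ on $f^{-1}(1)$ and $\psi(x) := -\gamma_x$ on $f^{-1}(0)$, and first reducing to the case $\alpha_x \beta_x = 0$ (so $\alpha_x = \psi(x)^+$, $\beta_x = \psi(x)^-$, and $\alpha_x+\beta_x = |\psi(x)|$), the first Farkas condition gives pure high degree $d$, the construction forces $\psi \le 0$ on $f^{-1}(0)$, and the second condition rewrites as $\langle \psi, f\rangle - \epsilon\|\psi\|_1 > 0$. A final rescaling to $\|\psi\|_1 = 1$ delivers all four required properties.

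The main obstacle, if one can call it that, is bookkeeping: one has to verify that the asymmetry in the primal (a two-sided constraint on $f^{-1}(1)$ versus a one-sided constraint on $f^{-1}(0)$) translates precisely into the asymmetry in the dual (unconstrained sign of $\psi$ on $f^{-1}(1)$ versus $\psi \le 0$ on $f^{-1}(0)$). In particular, since the primal imposes no constraint on $p(x)$ for $x \notin D$, no dual multiplier is associated with such points, which is exactly why $\psi$ may be taken to have support contained in $D$. This is the one place the argument departs from that of Theorem~\ref{theo:dual-dega}, where the extra boundedness requirement outside $D$ is precisely what introduces the $\sum_{x\notin D} |\psi(x)|$ penalty term into the correlation condition there.
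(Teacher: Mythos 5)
Your proof is correct and uses the standard LP-duality/Farkas argument that the cited references ({\cite{sherstov2015power,sherstov2014breaking,bun2015dual}}) employ; the paper itself states Theorem~\ref{theo:dual-odega} without proof, so there is no in-paper argument to compare against. Both directions check out: in the $\Leftarrow$ direction your split of $\langle\psi,p\rangle$ over $f^{-1}(1)$ and $f^{-1}(0)$ yields $\psi(x)p(x)\ge\psi(x)f(x)-\epsilon|\psi(x)|$ in each case (using $\psi(x)\le0$ and $p(x)\le\epsilon$ in the second), giving $0\ge\langle\psi,f\rangle-\epsilon\|\psi\|_1>0$; in the $\Rightarrow$ direction the Farkas multipliers give the right objects, the reduction to $\alpha_x\beta_x=0$ (so $\alpha_x+\beta_x=|\psi(x)|$) is legitimate since subtracting $\min(\alpha_x,\beta_x)$ from both leaves the annihilation condition unchanged and only increases the strictly-positive quantity by $2\epsilon\min(\alpha_x,\beta_x)\ge0$, and the final rescaling is licit because $\psi\equiv0$ would contradict strict positivity. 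Your closing remark correctly identifies the structural difference from the dual characterization of $\dega_\epsilon$ in Theorem~\ref{theo:dual-dega}: because the primal here imposes no constraint on $p$ outside $D$, the dual variable $\psi$ may be taken supported on $D$, which is exactly why no $\sum_{x\notin D}|\psi(x)|$ penalty appears in the correlation condition.
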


\subsection{$\PPdt$ and $\UPPdt$}

Now we define the two natural analogues of $\PP$ complexity in the query model.

\begin{defi}  Let $f:D \to \{0,1\}$ with $D \subseteq \{0,1\}^M$ be a partial function. Let $\mathcal{T}$ be a randomized decision tree which computes $f$ with a probability better than $1/2$. Let $\alpha$ be the maximum real number such that
	
	$$
	\min_{x \in D} \Pr[\text{$\mathcal{T}$ outputs $f(x)$ on input $x$}] \ge \frac{1}{2} + \alpha. 
	$$
	
	Then we define the $\PP$ query cost of $\mathcal{T}$ for $f$ to be $\PPdt(\mathcal{T};f) = C(\mathcal{T};f) + \log_2(1/\alpha)$, where $C(\mathcal{T};f)$ denotes the maximum number of queries $\mathcal{T}$ incurs on an input in the worst case. We define $\UPPdt(\mathcal{T};f) = C(\mathcal{T};f)$. Observe that $\UPPdt(\mathcal{T};f)$ is the same as $\PPdt(\mathcal{T};f)$, except that the advantage $\alpha$ of the randomized decision tree over random guessing is not incorporated into $\UPPdt(\mathcal{T};f)$. We define $\PPdt(f)$ (respectively, $\UPPdt$) as the minimum of $\PPdt(\mathcal{T};f)$ (respectively, $\UPPdt(\mathcal{T};f)$) over all $\mathcal{T}$ that computes $f$ with a probability better than $1/2$.
\end{defi}

$\PPdt$ is closely related to approximate degree with error very close to $1/2$. We have the following well-known relationship between them.

\begin{lemma}\label{lm:dega-to-ppdt}
	Let $f:D \to \{0,1\}$ with $D \subseteq \{0,1\}^M$ be a partial function. Suppose $\dega_{1/2-2^{-d}}(f) > d$ for some positive integer $d$. Then $\PPdt(f) > d/2$.
\end{lemma}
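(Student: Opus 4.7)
The plan is to prove the contrapositive: assume $\PPdt(f) \leq d/2$ and derive a degree-$d$ polynomial that approximates $f$ to error $1/2 - 2^{-d}$, contradicting the hypothesis $\dega_{1/2 - 2^{-d}}(f) > d$. The main tool is the standard ``decision tree to polynomial'' conversion.

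First, I would invoke the assumption to obtain a randomized decision tree $\mathcal{T}$ computing $f$ with worst-case query count $C := C(\mathcal{T};f)$ and advantage $\alpha$ over $1/2$, satisfying $C + \log_2(1/\alpha) \leq d/2$. This immediately yields $C \leq d/2$ and $\alpha \geq 2^{-d/2}$. Next, I would define $p(x) := \Pr[\mathcal{T} \text{ outputs } 1 \text{ on input } x]$, where the probability is over the internal coins of $\mathcal{T}$. Since $\mathcal{T}$ is a distribution over deterministic decision trees of depth at most $C$, and each such deterministic tree computes a multilinear polynomial of degree at most $C$ in the input bits, $p$ is a real multilinear polynomial of degree at most $C \leq d/2$.

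The correctness guarantee on $\mathcal{T}$ translates directly into an approximation bound: for every $x \in D$ we have $|p(x) - f(x)| \leq 1/2 - \alpha$, and since $p(x)$ is a probability it lies in $[0,1]$ for all $x \in \{0,1\}^M$, so in particular $|p(x)| \leq 1 \leq 1 + (1/2 - \alpha)$ on $\{0,1\}^M \setminus D$. Hence $p$ is a valid $(1/2-\alpha)$-approximating polynomial for $f$ in the sense of Definition \ref{maindef}, which gives $\dega_{1/2 - \alpha}(f) \leq C \leq d/2 \leq d$.

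Finally, I would use monotonicity of approximate degree in the error parameter. Because $\alpha \geq 2^{-d/2} \geq 2^{-d}$, we have $1/2 - \alpha \leq 1/2 - 2^{-d}$, so any $(1/2-\alpha)$-approximation is also a $(1/2 - 2^{-d})$-approximation. Therefore
\[
\dega_{1/2 - 2^{-d}}(f) \;\leq\; \dega_{1/2 - \alpha}(f) \;\leq\; C \;\leq\; d,
\]
contradicting the hypothesis $\dega_{1/2 - 2^{-d}}(f) > d$. There is no real obstacle here: the only place that requires a moment of care is checking that the AM-GM-style splitting of the budget $C + \log_2(1/\alpha) \leq d/2$ simultaneously yields $C \leq d/2$ \emph{and} $\alpha \geq 2^{-d/2}$ (both follow because the two summands are non-negative), and that the resulting error $1/2 - 2^{-d/2}$ is still within the $1/2 - 2^{-d}$ tolerance that the hypothesis rules out.
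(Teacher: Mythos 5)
Your proof is correct and is the standard ``decision tree to polynomial'' argument for this fact. The paper states this lemma without proof, labeling it ``well-known,'' so there is no in-paper proof to compare against; your argument is exactly the one the authors would have in mind. The key observations — that the acceptance probability of a cost-$C$ randomized decision tree is a degree-$C$ multilinear polynomial taking values in $[0,1]$, and that the budget $C + \log_2(1/\alpha) \le d/2$ forces both $C \le d/2$ and $\alpha \ge 2^{-d/2} \ge 2^{-d}$ since both summands are non-negative — are handled correctly. One small remark: your argument in fact establishes the slightly stronger conclusion $\PPdt(f) > d$ (setting the budget to $d$ instead of $d/2$ still yields $C \le d$ and $\alpha \ge 2^{-d}$, hence a degree-$\le d$ polynomial with error $\le 1/2 - 2^{-d}$), so the factor of $2$ in the lemma's statement is slack rather than a binding constraint; this is harmless, since a weaker lemma is all the paper needs.
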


Meanwhile, $\UPPdt$ is exactly characterized by threshold degree.

\begin{lemma}\label{lm:tdeg-to-uppdt}
	Let $f:D \to \{0,1\}$ with $D \subseteq \{0,1\}^M$ be a partial function. Then $\UPPdt(f)  = \degsh(f)$.
\end{lemma}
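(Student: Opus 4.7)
The plan is to prove both directions of the equality $\UPPdt(f) = \degsh(f)$ by exploiting the standard correspondence between randomized query algorithms and multilinear polynomials. This is a well-known characterization (essentially due to Paturi and Simon), and the proof is short in both directions.

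For the upper bound $\UPPdt(f) \leq \degsh(f)$, I would start with an optimal sign-representing polynomial $p$ for $f$ of degree $d = \degsh(f)$. Working in the $\{\pm 1\}$ basis, expand $p(x) = \sum_{|S|\leq d} c_S \chi_S(x)$ where $\chi_S(x) = \prod_{i \in S} x_i$, and set $W = \sum_S |c_S|$. The randomized decision tree samples a monomial $S$ with probability $|c_S|/W$, queries the at most $d$ variables indexed by $S$, and outputs $\mathrm{sign}(c_S) \cdot \chi_S(x)$, interpreted as a $\{0,1\}$ prediction. A one-line computation shows that the expected output equals $p(x)/W$, which by the sign-representation property has the same sign as $2f(x)-1$. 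Hence the tree predicts $f(x)$ with probability strictly greater than $1/2$, using at most $d$ queries, proving the upper bound.

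For the lower bound $\degsh(f) \leq \UPPdt(f)$, I would take an optimal randomized decision tree $\mathcal{T}$ of worst-case query complexity $q = \UPPdt(f)$ that computes $f$ with probability strictly greater than $1/2$. View $\mathcal{T}$ as a distribution over deterministic trees $T$ of depth at most $q$. Each deterministic tree $T$ computes a Boolean function $p_T \colon \{0,1\}^M \to \{0,1\}$ that is a multilinear polynomial of degree at most $q$ (one can write $p_T$ explicitly as a sum over accepting leaves of products of $x_i$ and $(1-x_i)$ along the root-to-leaf path). Then the polynomial
\[
p(x) \;=\; \mathbb{E}_T[p_T(x)] - \tfrac{1}{2} \;=\; \Pr[\mathcal{T}(x)=1] - \tfrac{1}{2}
\]
has degree at most $q$, and by the $\UPP$ acceptance condition satisfies $p(x) > 0$ iff $f(x) = 1$ for all $x \in D$. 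Thus $p$ sign-represents $f$, giving $\degsh(f) \leq q$.

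The only subtlety I anticipate is the handling of the partial-function domain $D$: one must verify that the acceptance-probability polynomial $p$ only needs to sign-represent $f$ on $D$ (which is exactly what Definition of $\degsh$ for partial functions requires), and that on the upper-bound side the tree's guarantee $\Pr[\mathcal{T}(x) = f(x)] > 1/2$ is only required for $x \in D$. Both directions respect this restriction automatically, so no extra work is needed. There is no hard step here; the lemma is essentially a reformulation of each quantity in terms of the other.
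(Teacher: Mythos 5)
Your proof is correct, and both directions are argued cleanly. The paper actually states Lemma \ref{lm:tdeg-to-uppdt} without proof, treating it as a standard fact, so there is no in-paper argument to compare against; your two-directional argument (sampling monomials weighted by coefficient magnitude for the easy direction, taking the acceptance-probability polynomial minus $1/2$ for the hard direction) is exactly the canonical proof that the threshold degree and $\UPP$ query complexity coincide, and your remark about the partial domain $D$ is the right one: both the sign-representation requirement and the $\UPP$ correctness requirement are only imposed on $D$, so nothing extra is needed.
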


\subsection{Gap Majority and Gap AND}
\label{sec:gapmajdef}
In this subsection we introduce a transformation of partial functions which will be used in this paper.

\newcommand{\nyes}{n_{\mathsf{Yes}}}
\newcommand{\nno}{n_{\mathsf{No}}}

\begin{defi}\label{defi:gap-majority}
	Let $f:D \to \{0,1\}$ with $D \subseteq \{0,1\}^M$ be a partial function and $n$ be a positive integer, $0.5< \epsilon \le 1$ be a real number. We define the gap majority version of $f$, denoted by $\GapMaj{f}{n}{\epsilon}$, as follows:
	
	Given an input $x=(x_1,x_2,\dotsc,x_n) \in \{0,1\}^{M\cdot n}$, we define $\nyes(x):= \sum_{i=1}^{n} \indicator_{x_i \in D \wedge f(x_i) = 1}$ and  \newline $\nno(x) := \sum_{i=1}^{n} \indicator_{x_i \in D \wedge f(x_i) = 0}$. 
	Then
	$$
	\GapMaj{f}{n}{\epsilon}(x) = \begin{cases}
	1 \quad& \text{when $\nyes(x) \ge \epsilon\cdot n$}\\
	0 \quad& \text{when $\nno(x) \ge \epsilon\cdot n$}\\
	\text{undefined} \quad& \text{otherwise}\\
	\end{cases}
	$$
	
	Note that even on inputs $x$ for which $\GapMaj{f}{n}{\epsilon}(x)$ is defined, there may be some values of $i$ for which $x_i$ is not in $D$. %Hence, $\GapMaj{f}{n}{\eps}$ is not exactly a composition of partial functions, but it is similar in spirit. 
	For brevity, we will occasionally write $\Gapmajority(f)$ when $n$ and $\eps$ are clear from context.
\end{defi}

We also define the $\Gapand$ function. This is a partial function that agrees with the total function $\AND$ wherever it is defined.

\begin{defi}
	Let $n$ be a positive integer, $0 < \epsilon < 1$ be a constant. We define the Gapped $\AND$ function, $\Gapand_{n,\epsilon} : D \to \{0,1\} $ with $D \subseteq \{0,1\}^n$, as the function that outputs $1$ if all inputs are $1$; outputs $0$ if at least $\epsilon \cdot n$ inputs are $0$; and is undefined otherwise.
	
	For a partial function $f:D \to \{0,1\}$ with $D \subseteq \{0,1\}^M$, we define
	 $\Gapand_{n, \eps}(f)$ to be a true block-composition of partial functions, i.e., 
	 $\Gapand_{n, \eps}(f)(x_1, \dots, x_n)=\Gapand_{n, \eps}(f(x_1), \dots, f(x_n))$ whenever the right hand side of the equality is defined, and $\Gapand_{n, \eps}(f)$ is undefined otherwise.
	 \end{defi}
	
	%we use $\Gapand_{n,\epsilon}(f)$ to denote $\Gapand_{n,\epsilon} \circ f$.
%\end{defi}

\begin{rem}
%	Note that $\Gapand_{n, \eps}(f)$ is a true block-composition of partial functions, in the sense that $(\Gapand_{n, \eps}(f))(x_1, \dots, x_n)=\Gapand_{n, \eps}(f(x_1), \dots, f(x_n))$ whenever the right hand side of the equality is defined, and $\Gapand_{n, \eps}(f)$ is undefined otherwise. Meanwhile, 
Note that $\Gapmajority_{n, \eps}(f)$ is not technically a block-composition of partial functions, since $\Gapmajority_{n, \eps}(f)(x_1, \dots, x_n)$ is defined even on some inputs for which some $f(x_i)$ is not defined.
\end{rem}

\subsection{Problems}
	\label{sec:collision}
	
We now recall the \PCollision\ problem. This problem interprets its input as a function 
$f$ mapping $[n]$ to $[n]$, and the goal is to decide whether the input is a permutation or is $2$-to-$1$, promised that one of them is the case. We need a slightly generalized version, which asks to distinguish between permutations and $k$-to-$1$ functions.

\begin{defi}[\PCollision\ problem]
	Fix an integer $k \geq 2$, and assume for simplicity that $n$ is a power of $2$. The partial function $\Collision_n^k$ is defined on a subset of $\{0, 1\}^{n \log n}$. It interprets its input as specifying
	a function $f \colon [n] \to [n]$ in the natural way, and evaluates to $1$ if $f$ is a permutation, 0 if $f$ is a $k$-to-$1$ function, and is undefined otherwise. 
	%we are given access to a function $f \colon [n] \to [n]$, and want to decide whether $f$ is a permutation or a $k$-to-$1$ function, promised that one of these is the case.
%Since we are interested in boolean inputs, we can encode its input as a string in $\{0,1\}^{n \cdot \log n}$ (assume $n$ is a power of $2$ for convenience), and 
%We use $\Collision_n^k$ to denote the \PCollision\ problem on functions $[n] \to [n]$ (evaluates to $1$ when the function is $k$-to-$1$, and $0$ when the function is bijective). 
When $k$ and $n$ are clear from context, we write $\Collision$ for brevity.
\end{defi}

This problem admits a simple $\SZK$ protocol in which the verifier makes only $\polylog(n)$ queries to the input. Specifically, the verifier executes the following sub-protocol $\polylog(n)$ times: the verifier chooses a random $i \in [n]$, makes a single query to learn $f(i)$, sends $f(i)$ to the prover, and rejects if the prover fails to respond with $i$. It is easy to see that the sub-protocol has perfect completeness, constant soundness error, and is perfect zero knowledge. Because the sub-protocol is repeated $\polylog(n)$ times, the total soundness error is negligible.

In 2002, Aaronson \cite{aaronson2002quantum} proved the first non-constant lower
bound for the $\Collision_n^2$ problem: namely, any bounded-error quantum algorithm to solve
it needs $\Omega(n^{1/5})  $\ queries to $f$. Aaronson and
Shi \cite{aaronson2004quantum} subsequently improved the lower bound to $\Omega(
n^{1/3})$, for functions $f:\left[n\right]  \rightarrow\left[
3n/2\right]  $; then Ambainis \cite{ambainis2005polynomial} and Kutin \cite{kutin2005quantum}
proved the optimal $\Omega(n^{1/3})  $\ lower bound for functions
$f:\left[n\right]  \rightarrow\left[n\right]  $.

%All the above results in fact bound the approximate degree of the \PCollision\ function, we need a slightly generalized version here.
We need a version of the lower bound that makes explicit the dependence on $k$ and $\eps$.

\begin{theo}[Implicit in Kutin~\cite{kutin2005quantum}]\label{theo:kutin-lowb}
	$\dega_{\epsilon}(\Collision_n^k) = \Omega(\sqrt[3]{(1/2-\epsilon)\cdot n/k})$ for any $0<\epsilon<1/2$ and $k | n$.
\end{theo}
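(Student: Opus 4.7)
The plan is to adapt the polynomial-method proof of the $\Omega((n/k)^{1/3})$ collision lower bound due to Aaronson--Shi and Kutin, while carefully tracking the dependence on the approximation error $\epsilon$. First, fix an approximating polynomial $p$ of degree $d = \dega_\epsilon(\Collision_n^k)$. By Definition \ref{maindef}, $p$ satisfies $|p(x) - \Collision_n^k(x)| \le \epsilon$ on YES/NO instances and $|p(x)| \le 1+\epsilon$ on all other encodings (this second property being the reason approximating polynomials for partial functions are required to be bounded outside of the domain). Interpret the input as a function $f \colon [n] \to [n]$ and symmetrize $p$ by averaging over the natural $S_n \times S_n$ action on $f$ (permuting the domain and codomain independently). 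By the standard Aaronson--Shi symmetrization lemma, this averaging yields a univariate polynomial $q(g)$ of degree at most $d$, where $g$ ranges over divisors of $n$ and $q(g)$ is the expectation of $p$ over a uniformly random $g$-to-1 function.

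The three properties of $p$ then transfer to $q$: (a)~$q(1) \ge 1 - \epsilon$, since a $1$-to-$1$ function is a permutation; (b)~$q(k) \le \epsilon$, since a $k$-to-$1$ function is a NO-instance; and (c)~$|q(g)| \le 1 + \epsilon$ for every divisor $g$ of $n$ with $1 \le g \le n$, by the uniform boundedness of $p$ on all encodings. In particular $q(1) - q(k) \ge 1 - 2\epsilon = 2(1/2 - \epsilon)$. I would then invoke the Aaronson--Shi/Kutin polynomial inequality, a refinement of the Markov inequality tailored to polynomials that are bounded at a long sequence of integer points. Applied to $q$, with boundedness range of length $\Theta(n/k)$ (the relevant divisors are essentially $1, 2, \dots, n/k$) and transition gap at least $1 - 2\epsilon$ between the points $g = 1$ and $g = k$, the inequality yields $d^{3} = \Omega\bigl((1/2 - \epsilon)\cdot n/k\bigr)$, from which the claimed bound $d = \Omega\bigl(\sqrt[3]{(1/2-\epsilon)\cdot n/k}\bigr)$ follows.

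The main obstacle is verifying that the factor $(1/2 - \epsilon)$ indeed enters through a cube root rather than linearly or as a square root. Kutin's original argument is written for a constant error parameter, so one must re-inspect his derivation to confirm that the transition gap $1 - 2\epsilon$ appears linearly inside the cube root of the Aaronson--Shi-style inequality; this is natural because the inequality is proved by a Chebyshev-polynomial change-of-variable argument in which the degree, the interval length, and the gap all enter through a single power-of-three relation. The remainder of the proof is routine bookkeeping: verifying that symmetrization genuinely produces a polynomial in $g$ of degree at most $d$ (as opposed to $O(d)$ with a bad constant), and handling the case $d > \sqrt{n/k}$ separately, where the bound is trivially satisfied.
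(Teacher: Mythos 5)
The paper never actually proves Theorem~\ref{theo:kutin-lowb}; it cites the bound as implicit in Kutin~\cite{kutin2005quantum} and points to Bun and Thaler~\cite{bun2015dual} for a dual-polynomial derivation when $k=2$. So there is no in-paper argument to line your sketch up against, and the question is simply whether the sketch closes on its own. Your skeleton --- symmetrize, then apply a Markov/Paturi-type inequality to the symmetrized polynomial --- is the right one, and is what Kutin does. But two steps as you have written them do not go through.

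The first concerns the symmetrization. You average over $S_n\times S_n$ to get a univariate $q(g)$ ``where $g$ ranges over divisors of $n$,'' and then claim $q$ is ``bounded at a long sequence of integer points,'' with ``the relevant divisors essentially $1,2,\dots,n/k$.'' For $f:[n]\to[n]$ that is false: a uniformly random $g$-to-$1$ function exists only for $g\mid n$, and the divisors of $n$ can be very sparse (take $n=2p$ with $p$ a large prime; the only divisors up to $n/2$ are $1,2,p$). Precisely this sparsity is why Aaronson and Shi had to work with $[n]\to[3n/2]$, and why Kutin abandoned the univariate picture in favor of the trivariate symmetrization $Q(m,a,b)$ over functions that are $a$-to-$1$ on a prefix of size $m$ and $b$-to-$1$ on the rest (the paper records this as Lemma~\ref{lm:kutin-sym}). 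Paturi's lemma is then applied twice on two different \emph{lines} through $(m,a,b)$-space, each of which does carry a genuine arithmetic progression of valid points. Your sketch credits Kutin but substitutes the Aaronson--Shi univariate picture that Kutin's paper exists to replace, which is not bookkeeping --- it is the content.

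The second, more substantive, problem is the one you correctly flag as the main obstacle: the exponent on $(1/2-\epsilon)$. Your heuristic (``the degree, interval length, and gap all enter through a single power-of-three relation'' via a Chebyshev change of variables) does not describe the structure of the argument. Kutin's proof uses Paturi's growth inequality twice, and as the paper states that inequality (Theorem~\ref{theo:paturi}) the gap enters \emph{linearly}: $\deg(q)=\Omega\bigl(c\sqrt{(x-a+1)(b-x+1)}\bigr)$. Running Kutin's two-slice case analysis with that lemma --- exactly what the paper does for $\PTP$ in the proof of Lemma~\ref{lm:no-weak-approx} --- produces $d = \Omega\bigl((1/2-\epsilon)\cdot(n/k)^{1/3}\bigr)$, with the gap factor \emph{outside} the cube root. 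For $\epsilon$ close to $1/2$ (which is the parameter regime the paper actually uses in Section~\ref{sec:niszk-pp}) this is weaker than the stated $\Omega\bigl(\sqrt[3]{(1/2-\epsilon)\cdot n/k}\bigr)$. Even if you upgrade the endpoint application of Paturi to a Markov-style $d=\Omega(\sqrt{c\cdot\text{range}})$, the combined exponent becomes $(1/2-\epsilon)^{2/3}$, still short of $(1/2-\epsilon)^{1/3}$. So the cube-root form does not fall out of ``re-inspecting Kutin's derivation''; establishing it as stated requires either a sharper polynomial inequality throughout, or (following the paper's own pointer) working directly from the dual construction of~\cite{bun2015dual} and tracking the achievable correlation as a function of pure high degree. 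This is a genuine gap, not routine bookkeeping to be deferred.
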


See also \cite{bun2015dual} for a direct constructive proof (using Theorem~\ref{theo:dual-dega}) for the above theorem in the case that $k=2$.
	
	%\lnote{move $\PTP_n$ here, I guess it is better to keep it as $1/8$, I will adjust my proof in appendix to make it works for $1/8$.}

    We will also utilize the \textsf{Permutation Testing Problem}, or $\PTP$ for short. This problem, which is closely related to the \PCollision\ problem, was defined in \cite{aaronson2012impossibility}, which also (implicitly) proved a bound on its one-sided approximate degree.
    
    \begin{defi}[$\PTP$]
    	\label{defi:ptp}
    	Given a function $f: [n] \rightarrow [n]$ (represented as a string in $\{0,1\}^{n\log n}$),
    	\begin{enumerate}
    		\item $\PTP_n(f) = 1$ if $f$ is a permutation.
    		\item $\PTP_n(f) = 0$ if $f(i)$ differs from every permutation on at least $n/8$ values of $i$.
		\item $\PTP_n(f)$ is undefined otherwise.
    	\end{enumerate}
    \end{defi}
    
    \begin{theo}[Implicit in \cite{aaronson2012impossibility}]
    	\label{theo:ptp}
    	For any $0 < \epsilon < 1/6$,
    	\begin{align*}
    	\odegap_\epsilon(\overline{\PTP}_n) = \Omega(n^{1/3})
    	\end{align*}
    \end{theo}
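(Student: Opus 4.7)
My plan is to apply the dual characterization of positive one-sided approximate degree from Theorem~\ref{theo:dual-odega}: to prove $\odegap_\epsilon(\overline{\PTP}_n)>d$, it suffices to exhibit a real function $\psi\colon\{0,1\}^{n\log n}\to\R$ of pure high degree $d$, unit $\ell_1$-norm, with $\langle\psi,\overline{\PTP}_n\rangle>\epsilon$ and $\psi(f)\leq 0$ for every permutation $f$. I aim to construct such a $\psi$ with $d=\Omega(n^{1/3})$.

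The first step is a reduction to the Collision problem. A $2$-to-$1$ function $f\colon[n]\to[n]$ has image of size $n/2$ and hence disagrees with every permutation on at least $n/2\ge n/8$ inputs, so $\Collision_n^{-1}(0)\subseteq\overline{\PTP}_n^{-1}(1)$ while $\Collision_n^{-1}(1)=\overline{\PTP}_n^{-1}(0)$. Consequently, restricting any positive one-sided $\epsilon$-approximating polynomial for $\overline{\PTP}_n$ to the $\Collision_n^2$ domain yields one for $\overline{\Collision}_n^2$, and dually, any dual witness for $\odegap_\epsilon(\overline{\Collision}_n^2)$ whose support lies in the $\Collision_n^2$ domain is automatically a dual witness for $\odegap_\epsilon(\overline{\PTP}_n)$. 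Hence it suffices to prove $\odegap_\epsilon(\overline{\Collision}_n^2)=\Omega(n^{1/3})$.

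For the Collision lower bound, I would adapt the Aaronson--Shi--Kutin dual polynomial implicit in Theorem~\ref{theo:kutin-lowb}. Averaging a dual witness under the $S_n\times S_n$ action on the domain and range reduces the construction to producing a univariate signed measure $\mu(k)$ on $k\in\{1,\dots,n\}$ that is orthogonal to every univariate polynomial of degree $<d$ (pure high degree condition) and whose positive mass is concentrated on $k\geq 2$ (the $\overline{\Collision}_n^2=1$ case, since $k=1$ corresponds to permutations). The Jacobi-weighted construction at the heart of Kutin's lower bound delivers such a $\mu$ with $d=\Omega(n^{1/3})$; choosing the overall sign of the kernel makes $\mu(1)\leq 0$, which forces $\psi(f)\leq 0$ on every permutation after undoing the symmetrization.

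The main obstacle I anticipate is enforcing the sign-on-permutations condition (condition~4) and the positive-correlation condition (condition~3) of Theorem~\ref{theo:dual-odega} simultaneously: a naive sign flip of the standard Aaronson--Shi dual witness may satisfy one at the expense of the other. The cleanest fix is to use an explicit Jacobi-weighted construction of the collision dual (for instance via the dual characterization of $\dega$ in the spirit of Bun--Kothari--Thaler), whose sign at $k=1$ is opposite to its sign at $k=2$ by construction---exactly matching conditions~3 and~4. Combining this with the reduction above yields $\odegap_\epsilon(\overline{\PTP}_n)=\Omega(n^{1/3})$ for every $\epsilon\in(0,1/6)$, as claimed.
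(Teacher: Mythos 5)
Your reduction step contains a fatal error: the claim that it ``suffices to prove $\odegap_\epsilon(\overline{\Collision}_n^2)=\Omega(n^{1/3})$'' is not salvageable, because $\odegap_\epsilon(\overline{\Collision}_n^2)$ is in fact $O(\log n)$. To see this, let $C(f)=\sum_{i<j}\indicator_{f(i)=f(j)}$ be the collision-count statistic. Since each indicator $\indicator_{f(i)=f(j)}$ is a polynomial of degree $2\log n$ in the input bits, $C$ has degree $2\log n$. Now set $p(f):=\frac{2C(f)}{n}$. For any permutation $p(f)=0\le\epsilon$, and for any $2$-to-$1$ function $p(f)=\frac{2\cdot(n/2)}{n}=1$, so $|p(f)-1|=0\le\epsilon$. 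This $p$ is a valid positive one-sided $\epsilon$-approximating polynomial for $\overline{\Collision}_n^2$ of degree $2\log n$, establishing $\odegap_\epsilon(\overline{\Collision}_n^2)\le 2\log n$. Your claimed inequality $\odegap_\epsilon(\overline{\PTP}_n)\ge\odegap_\epsilon(\overline{\Collision}_n^2)$ is true (the Collision domain is contained in the $\PTP$ domain and the two functions agree there), but it is useless here: the right-hand side is logarithmic, not $\Omega(n^{1/3})$.

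The reason this collapse happens is the same reason the paper itself switches from $\Collision$ to $\PTP$ in its communication-complexity arguments (see the remark before Lemma~\ref{lm:ptp-in-cd}): $\Collision_n^2$ is undefined on the vast majority of inputs, so the $\odegap$ constraint imposes nothing there, and the collision-count polynomial (which is wildly unbounded off the Collision domain) sneaks through. In particular, $\dega(\Collision_n^2)=\Omega(n^{1/3})$ crucially relies on the requirement that approximating polynomials be bounded by $1+\epsilon$ \emph{outside} the domain $D$; one-sided approximate degree drops that requirement entirely, so Kutin's Paturi-style argument has no boundedness hypothesis to exploit once you restrict to $\Collision$. Contrast this with $\overline{\PTP}_n$: there, \emph{every} function that differs from all permutations in $\ge n/8$ coordinates is a YES input, which is all but a $2^{-\Omega(n)}$ fraction of $\{0,1\}^{n\log n}$ (cf.\ Lemma~\ref{lm:most-false}). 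Those abundant constraints are exactly what rules out the low-degree collision polynomial (on $3$-to-$1$ functions, $\frac{2C(f)}{n}=2\neq 1$) and is what a correct proof must leverage. The symmetrization you invoke is the right tool, but it must be applied to $\PTP$ directly, with the primal side carrying a boundedness constraint over the full $\PTP$ domain in the spirit of the paper's weak one-sided approximation notion (Theorem~\ref{theo:primal-cond} and Lemma~\ref{lm:no-weak-approx}), not over the tiny Collision domain.
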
    
    
    The $\SZK$ protocol described for $\overline{\Collision}$ works unmodified for $\PTP$ as well.
    \label{sec:ptpprotocol}
%%% Local Variables: 
%%% mode: latex
%%% TeX-master: "main"
%%% End: 
 
%\input{distr}
\section{Hardness Amplification For Approximate Degree}
\label{sec:composition}

In this section we prove a novel hardness amplification theorem. Specifically,
we show that for any function $f$ with high approximate degree, composing $f$
with $\Gapmajority$ yields a function with high threshold degree, and hence the resulting function is hard
for any $\UPP$ algorithm in the query model. Similarly, we show
that if $f$ has high positive one-sided approximate degree, then composing $f$
with $\Gapand$ yields a function with high threshold degree.

 Note that this hardness amplification theorem is tight, in the sense that if $f$ has low approximate degree, then composing $f$ with $\Gapmajority$ yields a function that has low $\UPP$ query complexity, and the same holds for composing $f$ with $\Gapand$ if $f$ has low positive one-sided approximate degree. See Appendix \ref{sec:char} for details. 
 
\subsection{Notation}

For a partial function $f$, an integer $n$ and a real $\epsilon \in (1/2,1]$, we denote $\GapMaj{f}{n}{\epsilon}$ by $F$ for convenience, where $n$ and $\epsilon$ will always be clear in the context. We also use $x = (x_1,x_2,\dotsc,x_n)$ to denote an input to $F$, where $x_i$ represents the input to the $i$th copy of $f$.

The following simple lemma establishes some basic properties of dual witnesses exhibiting the fact that  $\dega_{\epsilon}(f) > d$  or $\odegap_{\epsilon}(f) > d$. 

\newcommand{\vstart}{\vspace{-1mm}}
\newcommand{\vmid}{\vspace{-8mm}}

\newcommand{\twoprops}[4]{
	\vstart
	\begin{flalign}\label{#2} 
	\qquad\bullet\qquad\text{#1}&&
	\end{flalign}
	\vmid
	\begin{flalign}\label{#4} 
	\qquad\bullet\qquad\text{#3}&&
	\end{flalign}
}

%label name, prop 1, prop 2, prop 3
\newcommand{\twopropsname}[3]{
	\twoprops{#2}{#1prop1}{#3}{#1prop2}
}

%text1,label1,text2,label2,text3,label3
\newcommand{\threeprops}[6]{
	\vstart
	\begin{flalign}\label{#2} 
	\qquad\bullet\qquad\text{#1}&&
	\end{flalign}
	\vmid
	\begin{flalign}\label{#4} 
	\qquad\bullet\qquad\text{#3}&&
	\end{flalign}
	\vmid
	\begin{flalign}\label{#6} 
	\qquad\bullet\qquad\text{#5}&&
	\end{flalign}
}

%label name, prop 1, prop 2, prop 3
\newcommand{\threepropsname}[4]{
	\threeprops{#2}{#1prop1}{#3}{#1prop2}{#4}{#1prop3}
}

\newcommand{\fourprops}[8]{
	\vstart
	\begin{flalign}\label{#2} 
	\qquad\bullet\qquad\text{#1}&&
	\end{flalign}
	\vmid
	\begin{flalign}\label{#4} 
	\qquad\bullet\qquad\text{#3}&&
	\end{flalign}
	\vmid
	\begin{flalign}\label{#6} 
	\qquad\bullet\qquad\text{#5}&&
	\end{flalign}
	\vmid
	\begin{flalign}\label{#8} 
	\qquad\bullet\qquad\text{#7}&&
	\end{flalign}
}

%label name, prop 1, prop 2, prop 3, prop 4
\newcommand{\fourpropsname}[5]{
	\fourprops{#2}{#1prop1}{#3}{#1prop2}{#4}{#1prop3}{#5}{#1prop4}
}

\newcommand{\fivepropsname}[6]{
	\vstart
	\begin{flalign}\label{#1prop1} 
	\qquad\bullet\qquad\text{#2}&&
	\end{flalign}
	\vmid
	\begin{flalign}\label{#1prop2} 
	\qquad\bullet\qquad\text{#3}&&
	\end{flalign}
	\vmid
	\begin{flalign}\label{#1prop3} 
	\qquad\bullet\qquad\text{#4}&&
	\end{flalign}
	\vmid
	\begin{flalign}\label{#1prop4} 
	\qquad\bullet\qquad\text{#5}&&
	\end{flalign}
	\vmid
	\begin{flalign}\label{#1prop5} 
	\qquad\bullet\qquad\text{#6}&&
	\end{flalign}
}

\begin{lemma}\label{lm:mus} \label{lem:simple}
	Let $f \colon D \to \{0,1\}$ with $D \subseteq \{0,1\}^M$ be a partial function, $\epsilon$ be a real in $[0,1/2)$, and $d$ be an integer such that $\dega_{\epsilon}(f) > d$.
	
	Let $\mu \colon \{0,1\}^M \to \R$ be a dual witness to the fact $\dega_{\epsilon}(f) > d$ as per Theorem~\ref{theo:dual-dega}. If $f$ satisfies the stronger condition that $\odegap_{\epsilon}(f) > d$, let $\mu$ to be a dual witness to the fact that $\odegap_{\epsilon}(f) > d$ as per Theorem~\ref{theo:dual-odega}.
	
	We further define $\mu_+(x) := \max\{0,\mu(x)\}$ and $\mu_-(x) := -\min\{0,\mu(x)\}$ to be two non-negative real functions on $\{0,1\}^M$, and $\mu_-^i$ and $\mu_+^i$ be the restrictions of $\mu_-$ and $\mu_+$ on $f^{-1}(i)$ respectively for $i \in \{0,1\}$. Then the following holds:
		\threepropsname{lm41}{$\mu_+$ and $\mu_-$ have disjoint supports.}{$\langle \mu_+, p \rangle = \langle \mu_-, p \rangle$ for any polynomial $p$ of degree at most $d$. Hence, $\|\mu_+\|_1 = \|\mu_-\|_1 = \frac{1}{2}$.}{$\|\mu_+^1\|_1 > \epsilon$ and $\|\mu_-^0\|_1 > \epsilon$. 
			If $\odegap_\epsilon(f) > d$, then $\|\mu_+^1\|_1 = 1/2$.}
\end{lemma}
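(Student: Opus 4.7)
The plan is to verify each property by simple manipulations of the dual witness characterizations from Theorems~\ref{theo:dual-dega} and~\ref{theo:dual-odega}, without any clever ingredients.

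First I would dispatch Property~\ref{lm41prop1} directly from the definitions: $\mu_+(x) > 0$ only where $\mu(x) > 0$, and $\mu_-(x) > 0$ only where $\mu(x) < 0$, so the supports are disjoint. This also gives the decomposition $\mu = \mu_+ - \mu_-$ and $|\mu| = \mu_+ + \mu_-$, which I will use throughout. For Property~\ref{lm41prop2}, the pure high degree condition on $\mu$ says $\langle \mu, p\rangle = 0$ for every polynomial $p$ of degree at most $d$, and substituting $\mu = \mu_+ - \mu_-$ gives $\langle \mu_+, p\rangle = \langle \mu_-, p\rangle$. Specializing to $p \equiv 1$ (valid as $d \geq 0$) and using non-negativity of $\mu_\pm$ yields $\|\mu_+\|_1 = \|\mu_-\|_1$; the disjoint supports then give $\|\mu_+\|_1 + \|\mu_-\|_1 = \|\mu\|_1 = 1$, so each equals $1/2$.

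For Property~\ref{lm41prop3}, I would start from the correlation condition
\begin{equation*}
\sum_{x \in f^{-1}(1)} \mu(x) - \sum_{x \notin D} |\mu(x)| \;>\; \epsilon.
\end{equation*}
Since $\sum_{x \in f^{-1}(1)} \mu(x) = \|\mu_+^1\|_1 - \|\mu_-^1\|_1 \leq \|\mu_+^1\|_1$ and the second sum is non-negative, this immediately gives $\|\mu_+^1\|_1 > \epsilon$. For the bound $\|\mu_-^0\|_1 > \epsilon$, I would use $\langle \mu, 1\rangle = 0$ (another consequence of pure high degree) to write
\begin{equation*}
\sum_{x \in f^{-1}(0)} \mu(x) \;=\; -\sum_{x \in f^{-1}(1)} \mu(x) - \sum_{x \notin D} \mu(x),
\end{equation*}
and then combine with the correlation bound and the trivial inequality $-\sum_{x \notin D} \mu(x) \leq \sum_{x \notin D} |\mu(x)|$ to conclude $\sum_{x \in f^{-1}(0)} \mu(x) < -\epsilon$, which gives $\|\mu_-^0\|_1 > \epsilon$ since $\sum_{x \in f^{-1}(0)} \mu(x) \geq -\|\mu_-^0\|_1$.

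Finally, under the stronger hypothesis $\odegap_\epsilon(f) > d$, Theorem~\ref{theo:dual-odega} supplies a $\mu$ supported on $D$ with $\mu(x) \leq 0$ whenever $f(x) = 0$. Thus $\mu_+$ vanishes outside $D$ and vanishes on $f^{-1}(0)$, so its entire mass lies on $f^{-1}(1)$, giving $\|\mu_+^1\|_1 = \|\mu_+\|_1 = 1/2$ by Property~\ref{lm41prop2}. There is no real obstacle in this proof; every step is a one-line consequence of the dual characterizations, and the only care needed is in the bookkeeping between contributions from $D$ and from $\{0,1\}^M \setminus D$ when the witness comes from Theorem~\ref{theo:dual-dega} rather than Theorem~\ref{theo:dual-odega}.
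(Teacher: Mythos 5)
Your proposal is correct and follows essentially the same route as the paper: all three properties are read off from the dual characterizations, and the only difference is cosmetic — for $\|\mu_-^0\|_1 > \epsilon$ you invoke $\sum_x \mu(x) = 0$ directly and then pass to $f^{-1}(0)$, whereas the paper arrives at the same conclusion by decomposing $\|\mu_-\|_1 = 1/2$ into its pieces on $f^{-1}(0)$, $f^{-1}(1)$, and the complement of $D$. Both arguments rely on the identical ingredients (pure high degree applied to the constant polynomial, plus the correlation inequality), so this is the same proof with slightly different bookkeeping.
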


The lemma follows directly from Theorem~\ref{theo:dual-dega}. We provide a proof in Appendix~\ref{app:missing-lift} for completeness.

\subsection{Warm Up : A $\PP$ Lower Bound}

As a warmup, we establish a simpler hardness amplification theorem for $\PPdt$.

\newcommand{\udistr}{\mathcal{U}}

\begin{theo}\label{theo:lift}
	Let $f:D \to \{0,1\}$ with $D \subseteq \{0,1\}^M$ be a partial function, $n,d$ be two positive integers, and $1/2 <\epsilon<1$ and $0<\epsilon_2<1/2$ be two constants such that $2\epsilon_2>\epsilon$. Suppose $\dega_{\epsilon_2}(f) > d$. Then
	$$
	\PPdt(\GapMaj{f}{n}{\epsilon}) > \Omega\left\{\min\left(d,(2\epsilon_2 - \epsilon)^{2} \cdot n\right) \right\}.
	$$
\end{theo}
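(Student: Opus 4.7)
The plan is to exhibit a dual witness $\Psi$ for the approximate degree of $F := \GapMaj{f}{n}{\epsilon}$ at error close to $1/2$; Lemma~\ref{lm:dega-to-ppdt} then converts this into the desired $\PP$ query bound. Specifically, I will construct $\Psi \colon \{0,1\}^{nM} \to \R$ of pure high degree at least $D$, unit $\ell_1$ norm, and correlation at least $1/2 - 2^{-D}$ with $F$, for $D = \Omega(\min(d, (2\epsilon_2 - \epsilon)^2 n))$.

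Let $\mu$ be the dual witness to $\dega_{\epsilon_2}(f) > d$ furnished by Theorem~\ref{theo:dual-dega}, and decompose $\mu = \mu_+ - \mu_-$ via Lemma~\ref{lem:simple}, so $\mu_+,\mu_-\ge 0$ have disjoint supports with $\|\mu_+\|_1 = \|\mu_-\|_1 = 1/2$ and $\|\mu_+^1\|_1, \|\mu_-^0\|_1 > \epsilon_2$. Set
\[
\Psi(x_1,\ldots,x_n) \;:=\; 2^{n-1}\!\left[\prod_{i=1}^n \mu_+(x_i)\; -\; \prod_{i=1}^n \mu_-(x_i)\right].
\]
Because the two tensor products have disjoint supports in $x$, we obtain $\|\Psi\|_1 = 2^{n-1}(\|\mu_+\|_1^n + \|\mu_-\|_1^n) = 1$. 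To see that $\Psi$ has pure high degree at least $d$, fix any monomial $m(x) = \prod_i m_i(x_i)$ of total degree at most $d$; then each $\deg(m_i) \le d$, so $\langle \mu, m_i\rangle = 0$ by the pure high degree of $\mu$. Writing $\mu_\pm = (|\mu|\pm \mu)/2$ gives $\langle \mu_+, m_i\rangle = \langle \mu_-, m_i\rangle = \langle |\mu|, m_i\rangle/2$, so the two product terms in $\Psi$ cancel exactly, yielding $\langle \Psi, m\rangle = 0$.

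For the correlation, let $\nu_\pm := 2\mu_\pm$, which are probability distributions. A direct expansion gives
\[
\sum_{F(x)=1}\Psi(x)\; -\; \sum_{x \notin D_F}|\Psi(x)| \;=\; \tfrac{1}{2}\bigl[\Pr_{\nu_+^{\otimes n}}[F=1] - \Pr_{\nu_-^{\otimes n}}[F=1] - \Pr_{\nu_+^{\otimes n}}[F\text{ undef}] - \Pr_{\nu_-^{\otimes n}}[F\text{ undef}]\bigr].
\]
Under $\nu_+^{\otimes n}$ each coordinate independently satisfies $\Pr[f(x_i) = 1] = 2\|\mu_+^1\|_1 > 2\epsilon_2 > \epsilon$, so Hoeffding's inequality gives $\Pr_{\nu_+^{\otimes n}}[\nyes \ge \epsilon n] \ge 1 - e^{-2n(2\epsilon_2-\epsilon)^2}$, which in particular forces $F(x) = 1$. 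By symmetric reasoning under $\nu_-^{\otimes n}$, each of the three remaining probabilities above is bounded by $e^{-2n(2\epsilon_2-\epsilon)^2}$. Substituting yields correlation at least $\tfrac{1}{2} - 2 e^{-2n(2\epsilon_2-\epsilon)^2}$, so $\Psi$ certifies $\dega_{1/2 - 2^{-D}}(F) > D$ for any $D = \Omega(\min(d, (2\epsilon_2-\epsilon)^2 n))$, and Lemma~\ref{lm:dega-to-ppdt} completes the proof. The argument is essentially a direct calculation; the only bookkeeping subtlety is that $f$ and $F$ may be undefined on some inputs, but the Hoeffding tail cleanly absorbs their contribution under both $\nu_+^{\otimes n}$ and $\nu_-^{\otimes n}$.
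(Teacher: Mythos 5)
Your proof is correct and matches the paper's argument nearly line for line: the dual witness $\Psi = 2^{n-1}\bigl(\prod_i \mu_+(x_i) - \prod_i \mu_-(x_i)\bigr)$ is exactly the one the paper constructs, the pure-high-degree argument via monomial factorization and $\langle\mu_+,m_i\rangle = \langle\mu_-,m_i\rangle$ is the same, and the Chernoff/Hoeffding tail bound on the correlation under $\nu_\pm^{\otimes n}$ is the paper's as well. The only difference is cosmetic bookkeeping in the correlation expansion, where your exact identity followed by termwise bounds is, if anything, a touch cleaner than the paper's inequality chain.
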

\begin{proof}
	For $i \in \{0, 1\}$ let $\mu_+,\mu_-,\mu_+^i,\mu_-^i$ be functions whose existence is guaranteed by Lemma~\ref{lm:mus}, combined with the assumption that $\dega_{\epsilon_2}(f) > d$.
	
	In light of Lemma~\ref{lm:dega-to-ppdt}, it suffices to {show that $\dega_{1/2 - 2^{-T}}(\GapMaj{f}{n}{\epsilon}) > T$, for\\
	 $T = \Omega\left\{\min \left(d,(2\epsilon_2 - \epsilon)^{2} \cdot n \right) \right\}$}. We prove this by constructing a dual witness to this fact, as per Theorem \ref{theo:dual-dega}.
	
	We first define the following two non-negative functions on $\{0,1\}^{n \cdot M}$:
	$$
	\psi^+(x) := \prod_{i=1}^{n} \mu_+(x_i) \quad \text{and} \quad \psi^-(x) := \prod_{i=1}^{n} \mu_-(x_i).
	$$
	
	Our dual witness $\psi$ is simply their linear combination:
	$$
	\psi := 2^{n-1} \cdot (\psi^+ - \psi^-).
	$$
	
	We remark that $\psi$ is precisely the function denoted by $\psi_{BT}$ alluded to in Section \ref{sec:proofoverview}.
	Now we verify that $\psi$ is the dual witness we want. 
	
	\medskip \noindent \textbf{Proving the $\psi$ has unit $\ell_1$-norm.} 
	Since $\mu_+$ and $\mu_-$ have disjoint supports by Condition~(\ref{lm41prop1}) of Lemma~\ref{lm:mus}, so does $\psi^+$ and $\psi^-$. Therefore $\|\psi\|_1 = 2^{n-1} \cdot (2^{-n} + 2^{-n}) = 1$.
	
	\medskip \noindent \textbf{Proving the $\psi$ has pure high degree $d$.} 
	Let $p \colon \{0,1\}^{n \cdot M} \to \R$ be any monomial of degree at most $d$, 
	and let $p_i \colon \{0, 1\}^M \rightarrow \R$ be such that $p(x_1, \dots, x_n) = \prod_{i=1}^n p_i(x_i)$. Then it holds that
	
	$$
	\langle \psi^+, p \rangle = \prod_{i=1}^{n} \langle \mu_+, p_i \rangle = \prod_{i=1}^{n} \langle \mu_-, p_i \rangle = \langle \psi^-, p \rangle,
	$$ 
	where  the second equality holds by Condition~(\ref{lm41prop2}) of Lemma \ref{lem:simple}.
	
	As a polynomial is a sum of monomials, by linearity, it follows that $\langle \psi, p \rangle = \langle \psi^+, p \rangle - \langle \psi^-, p \rangle = 0$ for any polynomial $p$ with degree at most $d$.
	
	\medskip \noindent \textbf{Proving that $\psi$ has high correlation with $F$.} 
	%Finally, we verify $\psi$ has high correlation with $F$.
    Define $\distr_0 := 2 \cdot \mu_-$ and $\distr_1 := 2 \cdot \mu_+$. Note $\mu_+$ and $\mu_-$  are non-negative functions with norm $1/2$, so $\distr_0$ and $\distr_1$ can be thought as distributions on $\{0,1\}^M$. We further define distributions $\udistr_i$ on $\{0,1\}^{n \cdot M}$ for $i \in \{0,1\}$ as $\udistr_i := \distr_i^{\otimes n}$. Observe that $\udistr_0 = 2^{n} \cdot \psi^-$ and $\udistr_1 = 2^{n} \cdot \psi^{+}$ as functions.
	
	\newcommand{\GapEps}{\Delta}
	\newcommand{\temp}{k}
	
	Then by Condition~(\ref{lm41prop3}) of Lemma \ref{lm:mus}, we have $\Pr_{x\sim \distr_1}[f(x) = 1] = 2 \cdot \|\mu_+^1\|_1 >2\epsilon_2>\epsilon$, and $\Pr_{x\sim \distr_0}[f(x) = 0] = 2 \cdot \|\mu_-^0\|_1 >2\epsilon_2>\epsilon$.
	
	Let $D_F$ denote the domain of $F$. By the definition of $F = \GapMaj{f}{n}{\epsilon}$ and a simple Chernoff bound, we have 
	\begin{equation} \label{eqlabel}
 	2^{n} \cdot \sum_{x \in D_F} \psi^+(x) \cdot F(x) = \Pr_{x \sim \udistr_1}[F(x) = 1] \ge 1 - 2^{-c_1 \GapEps^2 \cdot n },
	\end{equation}
	
	where $c_1$ is a universal constant and $\GapEps := 2\epsilon_2 - \epsilon$. For brevity, let $k$ denote $c_1 \GapEps^2 \cdot n$.
	
	Since $2^n \cdot \|\psi^+\|_1=1$, inequality \eqref{eqlabel} further implies that
	$$
	2^{n} \cdot \sum_{x \notin D_F} \psi^+(x) \le 2^{-k}.
	$$
	
	Similarly, we have
	$$
	\Pr_{x \sim \udistr_0}[F(x) = 0] \ge 1 - 2^{-\temp},
	$$
	
	which implies that
	$$
	2^{n} \cdot \sum_{x \notin D_F} \psi^-(x) \le 2^{-k}.
	$$
	
	Putting everything together, we can calculate the correlation between $F$ and $\psi$ as follows:
	
	\begin{align*}
	    &\sum_{x \in D_F} F(x) \psi(x) - \sum_{x \notin D_F} |\psi(x)|\\
	\ge& 2^{n-1} \cdot \sum_{x \in D_F} \psi^{+}(x) F(x) - 2^{n-1} \cdot \left( \sum_{x \notin D_F} \psi^-(x) + \sum_{x \notin D_F} \psi^+(x) \right)\\
	\ge& 1/2 - 2^{-k-1} - 2^{-k} \\
	>& 1/2 - 2^{-k+1}.
	\end{align*}
	
	Setting $T = \min(d,k-1)$, then we can see that $\psi$ is a dual witness for {$\dega_{1-2^{-T}}(\GapMaj{f}{n}{\epsilon}) > T$}. Clearly $T = \Omega\left\{\min\left(d,(2\epsilon_2 - \epsilon)^{2} \cdot n\right) \right\}$, which completes the proof.
\end{proof}

\subsection{The $\UPP$ Lower Bound}

\newcommand{\mugood}{\varphi_{\circ}}
\newcommand{\mubad}{\varphi_{\times}}

\newcommand{\corr}{\psi_{corr}}

\newcommand{\Acnt}{n_A}

The dual witness $\psi \sim \psi^{+} - \psi^{-}$ constructed in the previous subsection is not a dual witness for the high threshold degree of $F=\Gapmajority_n(f)$ for two reasons: it puts weight on some points outside of the domain of $F$, and it does not satisfy the sign-agreement condition of Theorem \ref{theo:dual-degsh}.

In order to obtain a valid dual witness for threshold degree, we add two error correction terms $\corr^+$ and $\corr^-$ to $\psi$. The purpose of the error correction terms is to zero out the erroneous values, while simultaneously maintaining the high pure degree property and avoiding changing the sign of $\psi$ on inputs at which it does not agree in sign with $F$. We achieve this through an error correction lemma that may be of independent interest.

\begin{lemma}[Error Correction Lemma]\label{lm:error-correction}
	Let $A$ be a subset of $\{0,1\}^M$, and $\varphi$ be a function on $\{0,1\}^M$. Let $\mugood$ and $\mubad$ be the restrictions of $\varphi$ on $A$ and $\{0,1\}^M \setminus A$ respectively. 
That is, $\mugood(x_i)=\varphi(x_i)$ if $x_i \in A$ and $\mugood(x_i)=0$ otherwise, and similarly 
$\mubad(x_i)=\varphi(x_i)$ if $x_i \not\in A$ and $\mubad(x_i)=0$ otherwise.
Define $\psi: \{0,1\}^{n \cdot M} \to \{0,1\}$ as $\psi(x_1,x_2,\dotsc,x_n) := \prod_{i=1}^n \varphi(x_i)$, and $\Acnt(x) := \sum_{i=1}^{n} \indicator_{x_i \in A}$.

	Suppose $\alpha = \|\mubad\|_1/\|\mugood\|_1 < 1/40$, and let $0.5< \epsilon < 1$ be a real number and $n$ be a sufficient large integer. Then there exists a function $\corr \colon \{0,1\}^{n \cdot M} \to \R$ such that:
	\threepropsname{lm43}{$\corr(x) = \psi(x)$, when $\Acnt(x) \le \epsilon \cdot n$.}{$|\corr(x)| \le \psi(x)/2$, when $\Acnt(x) > \epsilon \cdot n$.}{$\corr$ has pure high degree of at least $\left(1-\left(1 + 10 \alpha\right)  \cdot \epsilon \right) \cdot n - 4$.}
	
\end{lemma}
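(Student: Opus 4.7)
The plan is to construct $\corr$ as a symmetric weighted combination of the ``type-expansion'' of $\psi$. Split $\varphi = \mugood + \mubad$ (the two restrictions have disjoint supports by definition) and expand the product to obtain
\[
\psi(x_1,\ldots,x_n) \;=\; \sum_{S \subseteq [n]} \psi_S(x),\qquad \psi_S(x) := \prod_{i \in S}\mubad(x_i)\prod_{j \notin S}\mugood(x_j).
\]
At any fixed $x$, disjointness forces exactly one term to be nonzero, namely the one indexed by $S = T(x) := \{i : x_i \notin A\}$, and it equals $\psi(x)$ itself. Moreover $\|\psi_S\|_1 = \alpha^{|S|}\|\mugood\|_1^{\,n}$, so the mass decays geometrically in $|S|$; this is the quantitative lever that the hypothesis $\alpha < 1/40$ provides.

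I will then set $\corr(x) := \sum_S c_{|S|}\,\psi_S(x)$ for a scalar sequence $(c_0,\ldots,c_n)$ to be designed, so that pointwise $\corr(x) = c_{|T(x)|}\,\psi(x)$. Properties (\ref{lm43prop1}) and (\ref{lm43prop2}) then translate immediately into boundary conditions on this sequence: $c_k = 1$ whenever $k \ge (1-\epsilon)n$, and $|c_k| \le 1/2$ whenever $k < (1-\epsilon)n$. The real content of the lemma is the pure-high-degree requirement (\ref{lm43prop3}).

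For the pure-high-degree claim, I will test $\corr$ against an arbitrary monomial $p(x) = \prod_i p_i(x_i)$ of total degree at most $D := (1-(1+10\alpha)\epsilon)n - 4$. By the product structure,
\[
\langle \corr,p\rangle \;=\; \sum_k c_k\,\sum_{|S|=k}\prod_{i\in S}\langle \mubad,p_i\rangle \prod_{j\notin S}\langle\mugood,p_j\rangle.
\]
Since $D \ll n$, at most $D$ of the factors $p_i$ are non-constant; by symmetrizing over the remaining coordinates (for which $p_i \equiv 1$, so each factor contributes the scalar $\langle\mubad,1\rangle$ or $\langle\mugood,1\rangle$) the inner sum collapses to a binomial-type expression in $k$, and $\langle\corr,p\rangle = 0$ reduces to a univariate orthogonality statement. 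My plan is to take $c_k = q(k)$ for a univariate polynomial $q$ that is identically $1$ on $[\lceil(1-\epsilon)n\rceil,n]$ and has absolute value at most $1/2$ on $[0,\lceil(1-\epsilon)n\rceil)$, with $\deg q$ low enough that the symmetrization annihilates every test monomial of total degree $\le D$. The existence of such $q$ follows from a standard Chebyshev-style transition construction, and the slack provided by $\alpha < 1/40$ is exactly what allows us to shift the transition region below the threshold while maintaining the magnitude bound.

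The main obstacle, and the delicate quantitative part, will be simultaneously controlling all three demands on $q$: attaining the value $1$ above the threshold, staying within $[-1/2,1/2]$ below it, and keeping the degree small enough that the symmetrization argument kills every degree-$\le D$ test polynomial. Tracking the exact interplay between the transition width of $q$, the geometric decay $\alpha^{|S|}$, and the number of ``active'' coordinates $m \le D$ is where the $(1+10\alpha)$ multiplier in the pure-degree bound should emerge: it is precisely the amount by which the transition region of $q$ must be pushed below $\lceil(1-\epsilon)n\rceil$ in order to achieve the $|c_k| \le 1/2$ bound without exhausting the available degree budget.
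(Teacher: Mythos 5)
Your setup through the identity $\corr(x) = c_{|T(x)|}\,\psi(x)$ (with $|T(x)| = n - \Acnt(x)$) is correct, and it is exactly the parametrization the paper uses. The boundary conditions you derive on $(c_k)$ are also right. The gap is in the proposed form $c_k = q(k)$ for a low-degree polynomial $q$: the degree of $q$ does \emph{not} control the pure high degree of $\corr$, and the plan fails already at the first nontrivial constraint.

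To see this concretely, carry out the symmetrization you sketch. For a monomial $p = \prod_i p_i$ whose non-constant factors sit on a set $I$ with $|I| = m$, one gets
\[
\langle \corr, p \rangle \;=\; \|\mugood\|_1^{\,n-m}\sum_{j=0}^{m} Q_j\cdot \Lambda_{m,j},
\qquad
\Lambda_{m,j} \;:=\; \sum_{\ell=0}^{n-m} c_{j+\ell}\binom{n-m}{\ell}\alpha^{\ell},
\]
where $Q_j$ are (essentially arbitrary) elementary-symmetric weights built from $(\langle \mubad, p_i\rangle, \langle \mugood, p_i\rangle)_{i \in I}$. So pure high degree $\ge D$ requires $\Lambda_{m,j}=0$ for every $m \le D$ and $0\le j\le m$. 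This is an exponentially weighted binomial sum of $(c_k)$, \emph{not} an orthogonality condition against low-degree polynomials. If $c_k = q(k)$ is a polynomial, these sums do not vanish; for instance $\Lambda_{0,0} = \sum_k q(k)\binom{n}{k}\alpha^k$ equals $(1+\alpha)^n \ne 0$ when $q\equiv 1$, so even pure high degree $0$ fails. (Separately: a polynomial $q$ that is \emph{identically} $1$ on the $\sim \epsilon n$ integers in $[\lceil(1-\epsilon)n\rceil,n]$ and has degree below $\epsilon n$ must be the constant $1$, so a Chebyshev-style low-degree transition cannot satisfy your boundary conditions in the first place.)

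What is missing is the exponential weight. The correct ansatz is $c_k = (-\alpha)^{\,n-k}\,P(n-k)$ for a polynomial $P$ of low degree. Then
\[
\Lambda_{m,j} \;=\; (-\alpha)^{n-j}\sum_{\ell=0}^{n-m}(-1)^{\ell}\binom{n-m}{\ell}P(n-j-\ell),
\]
which vanishes whenever $\deg P < n-m$, by the finite-difference identity $\sum_\ell(-1)^\ell\binom{N}{\ell}R(\ell)=0$ for $\deg R < N$. Equivalently, the right basis for expanding $\corr$ is \emph{not} your $\prod_{i\in S}\mubad\prod_{j\notin S}\mugood$ but rather $\prod_{i\in S}\varphi \prod_{j\notin S}(\mubad - \alpha\mugood)$: each such term with $|S|=k$ has pure high degree $n-k-1$ \emph{automatically}, because $\langle \mubad - \alpha\mugood, \text{const}\rangle = 0$ by the very definition of $\alpha$. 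Your $\psi_S$ individually have pure high degree $0$, and no low-degree re-weighting of them recovers the cancellation. Finally, the hypothesis $\alpha < 1/40$ is not used to ``shift the transition region''; it is what lets one bound $|P|$ on $\{\epsilon n+1,\dots,n\}$ when $P$ interpolates $(-1/\alpha)^x$ on $\{0,\dots,\epsilon n\}$ (this is the content of the paper's Lemma~\ref{lm:helper2}). Fixing your proposal amounts to inserting the $(-\alpha)^{n-k}$ weight and then rederiving the boundary conditions on $P$, at which point you land on the paper's construction.
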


We defer the proof of Lemma \ref{lm:error-correction} to Subsection \ref{sec:errorcorrection}. Here, we show that it implies the desired hardness amplification results.

%We state our hardness composition theorem formally first.

\begin{theo}\label{theo:lift-UPP} \label{mainfthm}
	Let $f:D \to \{0,1\}$ with $D \subseteq \{0,1\}^M$ be a partial function, $n$ be a sufficiently large integer, $d$ be an integer, and $1/2 <\epsilon<1$ and $ 0.49 < \epsilon_2 < 1/2$ be two constants.
Let $a = \frac{2\epsilon_2}{1-2\epsilon_2}$. Then the following holds. 
	\begin{align*} \text{If } \dega_{\epsilon_2}(f) > d, \text{ then }
	\degsh(\GapMaj{f}{n}{\epsilon}) &> \min\left(d,\left(1 - \left(1+\frac{10}{a} \right) \cdot \epsilon \right) \cdot n - 4\right).\\
	\text{If } \odegap_{\epsilon_2}(f) > d, \text{ then } \degsh(\Gapand_{n,\epsilon}(f)) &> \min\left(d,\left(1 - \left(1+\frac{10}{a} \right) \cdot \epsilon \right) \cdot n - 4\right).
	\end{align*}
\end{theo}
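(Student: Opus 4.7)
Here is my plan. The approach is to repair the dual polynomial $\psi^+ - \psi^-$ from the warm-up (Theorem~\ref{theo:lift}), where $\psi^\pm(x) := \prod_{i=1}^n \mu_\pm(x_i)$ and $\mu_\pm$ are the positive/negative parts of the dual witness $\mu$ supplied by Lemma~\ref{lm:mus}. As a threshold-degree witness for $F \in \{\GapMaj{f}{n}{\epsilon}, \Gapand_{n,\epsilon}(f)\}$, this warm-up witness fails on two counts: it can place mass on inputs outside the domain of $F$, and even inside the domain of $F$ its sign need not agree with $F$. I would repair both defects by adding error correction terms supplied by Lemma~\ref{lm:error-correction}. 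The hypothesis $\epsilon_2 > 0.49$ is exactly what is required to invoke that lemma: condition~(\ref{lm41prop3}) of Lemma~\ref{lm:mus} gives $\|\mu_+^1\|_1,\|\mu_-^0\|_1 > \epsilon_2$, and taking $A$ to be the appropriate preimage yields $\alpha = \|\mubad\|_1/\|\mugood\|_1 < (1/2-\epsilon_2)/\epsilon_2 = 1/a < 1/40$.

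For the $\GapMaj{f}{n}{\epsilon}$ case I would invoke Lemma~\ref{lm:error-correction} twice -- once with $(\varphi,A) = (\mu_+, f^{-1}(1))$ to produce $\corr^+$ and once with $(\varphi,A)=(\mu_-, f^{-1}(0))$ to produce $\corr^-$ -- and set
\[
\Psi \;:=\; (\psi^+ - \corr^+) \;-\; (\psi^- - \corr^-).
\]
By condition~(\ref{lm43prop1}), $\psi^+ - \corr^+$ vanishes whenever $\nyes(x)\le \epsilon n$, and by condition~(\ref{lm43prop2}) it is sandwiched in $[\psi^+/2,\,3\psi^+/2]$ (hence $\ge 0$) when $\nyes(x) > \epsilon n$; symmetrically for $\psi^- - \corr^-$. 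Since $\epsilon > 1/2$, the events $\{\nyes \ge \epsilon n\}$ and $\{\nno \ge \epsilon n\}$ are mutually exclusive, so the two corrected blocks live on disjoint ``sides'' of $F$ and cannot interfere with one another.

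For the $\Gapand_{n,\epsilon}(f)$ case I would exploit the extra structure of the $\odegap$ witness: Lemma~\ref{lm:mus} gives $\|\mu_+^1\|_1 = 1/2$, so $\mu_+$ is supported entirely on $f^{-1}(1)$ and $\psi^+$ is automatically supported on $(f^{-1}(1))^n$, which is exactly $\Gapand_{n,\epsilon}(f)^{-1}(1)$. Thus no correction is needed on the positive side. I would invoke Lemma~\ref{lm:error-correction} only with $(\varphi,A) = (\mu_-, f^{-1}(0))$ and set $\Psi := \psi^+ - (\psi^- - \corr^-)$. The case analysis on $\Gapand(f)(x) \in \{0,1,\text{undefined}\}$ then becomes: $\Gapand = 1$ forces $\nno(x) = 0$ so $\psi^- - \corr^- = 0$ by condition~(\ref{lm43prop1}); $\Gapand = 0$ forces some $x_i \in f^{-1}(0)$, which is disjoint from $\mathrm{supp}(\mu_+)$, so $\psi^+(x) = 0$ while $\psi^- - \corr^- \ge 0$; $\Gapand$ undefined forces some $x_i \notin f^{-1}(1)$ (so $\psi^+ = 0$) and also $\nno(x) < \epsilon n$ (so $\psi^- - \corr^- = 0$).

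To finish I would verify the three conditions of Theorem~\ref{theo:dual-degsh}. Writing $\Psi = (\psi^+ - \psi^-) + (\corr^- - \corr^+)$ (or $(\psi^+-\psi^-)+\corr^-$ in the $\Gapand$ case), the warm-up shows that $\psi^+ - \psi^-$ is orthogonal to every polynomial of degree $\le d$, and condition~(\ref{lm43prop3}) of Lemma~\ref{lm:error-correction} gives each $\corr^\pm$ pure high degree at least $(1-(1+10\alpha)\epsilon)n - 4 \ge (1-(1+10/a)\epsilon)n - 4$, yielding the claimed degree bound. Sign agreement with $F$ and vanishing outside the domain of $F$ were established above. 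Non-triviality follows from the Chernoff estimate in the warm-up, which shows $\psi^+$ retains $\ell_1$-mass $\ge 2^{-n}(1 - 2^{-\Omega(n)})$ on inputs with $\nyes \ge \epsilon n$, so $\Psi$ is strictly positive on a set of positive measure. The step I expect to require the most care is ensuring that the corrections do not flip the sign of the witness anywhere; this hinges on the quantitative bound $|\corr^\pm| \le \psi^\pm/2$ from condition~(\ref{lm43prop2}) together with the disjointness of the heavy-$\nyes$ and heavy-$\nno$ regions guaranteed by $\epsilon > 1/2$.
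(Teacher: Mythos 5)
Your plan is the paper's: set $\Psi = (\psi^+-\corr^+) - (\psi^--\corr^-)$ using the correction terms from Lemma~\ref{lm:error-correction}, then verify the four conditions of Theorem~\ref{theo:dual-degsh} via the sandwiching bounds in Conditions~(\ref{lm43prop1}) and (\ref{lm43prop2}) and the disjointness of $\{\nyes\ge\epsilon n\}$ and $\{\nno\ge\epsilon n\}$. Your $\Gapand$ witness $\psi^+ - (\psi^--\corr^-)$ is in fact the same object as the paper's $(\psi^+-\corr^+) - (\psi^--\corr^-)$: under $\odegap_{\epsilon_2}(f)>d$ one has $\|\mu_+^1\|_1 = \|\mu_+\|_1$, so the application of Lemma~\ref{lm:error-correction} with $(\varphi,A)=(\mu_+,f^{-1}(1))$ has $\alpha = 0$, and the $\alpha = 0$ special case of that lemma returns $\corr^+\equiv 0$. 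The paper keeps $\corr^+$ in the formula while you fold it in directly; this is a presentational, not substantive, difference.

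One step of your $\Gapand$ case analysis is stated incorrectly. You assert that $\Gapand_{n,\epsilon}(f)(x)$ being undefined forces $\nno(x) < \epsilon n$. That is false: $\Gapand_{n,\epsilon}(f)$ is a genuine block composition, so it is undefined whenever some $x_j\notin D$, and such an $x$ can perfectly well have $\nno(x)\ge\epsilon n$. On such an $x$, Condition~(\ref{lm43prop1}) does not apply, so your stated reason for $\psi^- - \corr^- = 0$ fails. The conclusion $\Psi(x)=0$ is nevertheless true, and the right reason is the same support observation you made on the positive side but did not redeploy here: a dual witness for $\odegap_{\epsilon_2}(f)>d$ (Theorem~\ref{theo:dual-odega}) is supported on $D$, so $x_j\notin D$ gives $\mu_-(x_j)=0$, hence $\psi^-(x)=0$, and hence $\corr^-(x)=0$ because $\corr^- = \psi_P$ from Lemma~\ref{lm:poly} is a pointwise multiple of $\psi^-$. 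You should add this sentence to close the ``support contained in the domain of $\Gapand_{n,\epsilon}(f)$'' check.
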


\begin{proof}
We prove both claims in the theorem by exhibiting a single dual solution that witnesses both.

As in the proof of Theorem \ref{theo:lift}, for $i \in \{0, 1\}$, let $\mu_+,\mu_-,\mu_+^i,\mu_-^i$ denote the functions whose existence is guaranteed by Lemma~\ref{lm:mus}, combined with the assumption
that either $\dega_\epsilon(f) > d$ or $\odegap_\epsilon(f) > d$. Also as in the proof of Theorem \ref{theo:lift}, define
the following two non-negative functions on $\{0,1\}^{n \cdot M}$:
$$
\psi^+(x) := \prod_{i=1}^{n} \mu_+(x_i) \quad \text{and} \quad \psi^-(x) := \prod_{i=1}^{n} \mu_-(x_i).
$$
	
	Given an input $x = (x_1,x_2,\dotsc,x_n)$, let $\nyes(x):= \sum_{i=1}^{n} \indicator_{f(x_i) = 1}$ and $\nno(x) := \sum_{i=1}^{n} \indicator_{f(x_i) = 0}$ as in Definition~\ref{defi:gap-majority}. Now apply Lemma~\ref{lm:error-correction} with the following parameters.
	
	\begin{itemize}
		\item Set $A = f^{-1}(1)$, $\varphi = \mu_{+}$. Then for $\alpha$ as defined in Lemma \ref{lm:error-correction}, we have $\alpha = \frac{\|\mu_+\|_1 -\|\mu_+^1\|_1}{\|\mu_+^1\|_1} \le \frac{1 - 2\epsilon_2}{2 \epsilon_2} = a^{-1}$ by Conditions~(\ref{lm41prop2}) and (\ref{lm41prop3}) of Lemma~\ref{lm:mus}. Note that $a^{-1}< 1/40$ by the assumption that $0.49 < \eps_2$. %\jnote{Again, I changed the vertical lines to denote $\ell_1$-norm} 
		Hence, by Lemma~\ref{lm:error-correction}, there exists a function $\corr^+ : \{0,1\}^{n \cdot M} \to \R$ such that:
		\threepropsname{psiplus}{$\corr^+(x) = \psi^+(x)$, for all $x$ such that $\nyes(x) \le \epsilon \cdot n$}{$|\corr^+(x)| \le \psi^+(x)/2$, for all $x$ such that $\nyes(x) > \epsilon \cdot n$}{$\corr^+$ has pure high degree at least $\left(1-\left(1 + \frac{10}{a}\right)  \cdot \epsilon \right) \cdot n - 4$}
		\item Similarly, set $A = f^{-1}(0)$, $\varphi = \mu_{-}$. Again by Lemma~\ref{lm:error-correction}, there exists a function $\corr^- : \{0,1\}^{n \cdot M} \to \R$ such that:
		\threepropsname{psiminus}
		{$\corr^-(x) = \psi^-(x)$, for all $x$ such that $\nno(x) \le \epsilon \cdot n$}{$|\corr^-(x)| \le \psi^-(x)/2$, for all $x$ such that $\nno(x) > \epsilon \cdot n$}{$\corr^-$ has pure high degree of at least $\left(1-\left(1 + \frac{10}{a}\right)  \cdot \epsilon \right) \cdot n - 4$}
	\end{itemize}
	
	For convenience, let $N = \left(1-\left(1 + \frac{10}{a}\right)  \cdot \epsilon \right) \cdot n - 4$. We are ready to construct the dual witness $\psi$ that establishes the claimed threshold degree lower bounds. Define $\psi \colon \{0,1\}^{n \cdot M} \to \R$ by
	
	$$
	\psi := (\psi^+ - \corr^+) - (\psi^- - \corr^-).
	$$
	
	We first establish two properties of $\psi$.
	\twopropsname{psi}{When $\nyes(x) \ge \epsilon \cdot n$, $\psi(x) = \psi^+(x) - \corr^{+}(x) \ge \psi^{+}(x)/2 \ge 0$}
	{When $\nno(x) \ge \epsilon \cdot n$, $\psi(x) = -(\psi^-(x) - \corr^{-}(x)) \le -\psi^{-}(x)/2 \le 0$}
	\medskip
	\noindent \textbf{Verifying Condition~(\ref{psiprop1}) and (\ref{psiprop2}).} To establish that Condition~(\ref{psiprop1}) holds, observe that since $\nyes(x) \ge \epsilon \cdot n$, and $\eps > 1/2$ by assumption, it follows that $\nno(x) \le (1-\epsilon) \cdot n \le \epsilon \cdot n$. This implies that $\psi^-(x) = \corr^-(x)$ by Condition~(\ref{psiminusprop1}) and $|\corr^+(x)| \le \psi^+(x)/2$ by Condition~(\ref{psiplusprop2}). Then $\psi(x) = \psi^+(x) - \corr^+(x) \ge \psi^{+}(x)/2 \ge 0$, where the last inequality follows from the fact that $\psi^{+}$ is non-negative.
	
	Similarly, for Condition~(\ref{psiprop2}), as $\nno(x) \ge \epsilon \cdot n$, it follows that $\nyes(x) \le (1-\epsilon) \cdot n \le \epsilon \cdot n$. This implies that $\psi^+(x) = \corr^+(x)$ by Condition~(\ref{psiplusprop1}) and $|\corr^{-}(x)| \le \psi^-(x)/2$ by Condition~(\ref{psiminusprop2}). Note $\psi^-$ is also non-negative. Hence $\psi(x) = -(\psi^-(x) - \corr^-(x)) \le -(\psi^-(x)/2) \le 0$.
	 
	 \medskip
	We now verify that $\psi$ is a dual witness for $\degsh(F) > \min\left(d, N\right)$ (recall that $F$ denotes $\Gapmajority(f)$). 
	
	\medskip
\noindent \textbf{Analyzing the pure high degree of $\psi$.} Write $ \psi := \psi^+ - \psi^- -\corr^+ + \corr^-$. We already established that $\psi^+ - \psi^-$ has pure high degree $d$ in the proof of Theorem \ref{theo:lift}, and both $\corr^+$ and $\corr^+$ have pure high degree at least $N$ (cf. Conditions~(\ref{psiplusprop3}) and (\ref{psiminusprop3})). By linearity, $\psi$ itself has pure high degree at least $\min\left(d, N\right)$.
	
	\medskip
\noindent \textbf{Showing that the support of $\psi$ is a subset of the inputs on which $F$ is defined.} 	Let $x$ be an input outside of the domain of $F$. Then by the definition of $\Gapmajority$, 
it must be the case that both $\nyes(x)$ and $\nno(x)$ are strictly less than $\epsilon \cdot n$. This means that $\psi^+(x) = \corr^+(x)$ and $\psi^-(x) = \corr^-(x)$  by Conditions~(\ref{psiplusprop1}) and (\ref{psiminusprop1}), and hence $\psi(x) = 0$. Therefore, the support of $\psi$ is a subset of the domain of $F$. 
	
		\medskip
\noindent \textbf{Showing that $\psi$ agrees in sign with $F$.} When $F(x) = 1$, by the definition of $\Gapmajority$, we have $\nyes(x) \ge \epsilon \cdot n$. Then $\psi(x) \ge 0$ follows directly from Condition~(\ref{psiprop1}). Similarly, when $F(x) = 0$, we have $\nno(x) \ge \epsilon \cdot n$ and $\psi(x) \le 0$ by Condition~(\ref{psiprop2}). Therefore, $\psi$ agrees in sign with $F$.

		\medskip
\noindent \textbf{Showing that $\psi$ is non-trivial.} Pick an input $x_0$ to $f$ such that $\mu_+^1(x_0) > 0$, and let $x=(x_0,x_0,\dotsc,x_0)$. Then we have $f(x_0) = 1$ and $\nyes(x) = n \ge \epsilon \cdot n$. 
Therefore, $\psi(x) = \psi^+(x) - \corr^+(x) \ge \psi^+(x)/2 = (\mu_+^1(x_0))^n/2 > 0$ by Condition~(\ref{psiprop1}). So $\psi$ is non-trivial.
	
\medskip Putting everything together and invoking Theorem~\ref{theo:dual-degsh} proves the first claim of Theorem~\ref{theo:lift-UPP}.
	
\medskip
\noindent \textbf{Showing $\psi$ is also a dual witness for $\Gapand_{n,\epsilon}(f)$.}	Now we show that, when $\odegap_{\epsilon_2}(f) > d$, the same function $\psi$ is also a dual witness for 
	$\degsh(\Gapand_{n,\epsilon}(f)) > \min\left(d, N\right)$. 
	
	We already proved that the pure high degree of $\psi$ is as claimed, and that it is non-trivial. So it remains to verify $\psi$ only puts weight in the domain of $\Gapand_{n,\epsilon}(f)$, and that $\psi$ agrees in sign with $\Gapand_{n,\epsilon}(f)$.
	
	By Condition~(\ref{lm41prop3}) of  Lemma~\ref{lm:mus}, we have $|\mu_+^1| = |\mu_+| = \frac{1}{2}$, which means $\mu_+$ only puts weight inputs in $f^{-1}(1)$. So $\psi^+$ only takes non-zero values when $\nyes(x) = n$. Also, note that when $\nno(x) \le \epsilon \cdot n$, we have $\psi^-(x) = \corr^-(x)$ by Condition~(\ref{psiminusprop1}). Therefore, $\psi$ only puts weight on inputs when $\nyes(x) = n$ or $\nno(x) > \epsilon \cdot n$. All such inputs are in the domain of $\Gapand_{n,\epsilon}(f)$. 
	
	Finally, we verify that $\psi$ agrees in sign with $\Gapand_{n,\epsilon}(f)$.
	When $\Gapand_{n,\epsilon}(f)(x) = 1$, we have $\nyes(x) = n \ge \epsilon \cdot n$, hence $\psi(x) \ge 0$ by Condition~(\ref{psiprop1}). When $\Gapand_{n,\epsilon}(f)(x) = 0$, we have $\nno(x) \ge \epsilon \cdot n$, so $\psi(x) \le 0$ follows immediately from Condition~(\ref{psiprop2}). Applying Theorem~\ref{theo:dual-degsh} again, this completes the proof for the second claim of Theorem~\ref{theo:lift-UPP}.

\end{proof}

\subsection{Proof of the Error Correction Lemma}
\label{sec:errorcorrection}
In this subsection we prove Lemma~\ref{lm:error-correction}. 
We need two lemmas. In the first, we construct a polynomial with certain properties.

\begin{lemma}\label{lm:helper2}
	Let $a \ge 40$, $n$ be a sufficiently large integer, and $\epsilon$ be a real such that $0.5 < \epsilon < 1$.
	Then there exists an (explicitly given) univariate polynomial $P \colon \R \to \R$ such that:
	
	\begin{itemize}
		\item $P(x) = (-a)^x$ for $x \in \{0,\dotsc,\epsilon \cdot n \}$.
		\item $|P(x)| \le a^x / 2$ for $x \in \{\epsilon \cdot n +1, \dotsc, n \}$.
		\item $P$ has degree of at most $\left(1 + \frac{10}{a}\right) \cdot \epsilon \cdot n + 3$.
	\end{itemize}
\end{lemma}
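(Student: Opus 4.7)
The plan is to construct $P$ explicitly and then verify the three conditions in turn. The natural starting point is the Newton forward-difference interpolant
\[
L(x) \;=\; \sum_{k=0}^{d}\binom{x}{k}(-a-1)^{k}, \qquad d := \lfloor \epsilon n\rfloor,
\]
which has degree $d$ and satisfies $L(i) = (-a)^{i}$ for $i = 0,1,\ldots,d$, already realizing the first required property. The difficulty is the second: a direct estimate gives $|L(x)| \asymp \binom{x}{d}(a+1)^{d}$ on $\{d+1,\ldots,n\}$, which exceeds $a^{x}/2$ by a polynomial-in-$x$ factor and therefore must be damped.

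The plan is to damp $L$ by multiplying by a low-degree polynomial $D$ which is \emph{approximately} $1$ on $\{0,\ldots,d\}$, and then to restore exact agreement on this set by an additive correction of the form $\Pi(x)\,R(x)$, with $\Pi$ vanishing on a carefully chosen small subset of $\{0,\ldots,d\}$. Since no polynomial of degree less than $d+1$ can vanish on the whole of $\{0,\ldots,d\}$, the vanishing-factor trick is unavailable within the degree budget $(1+10/a)\epsilon n + 3$; the quantitative leverage instead comes from the hypothesis $a\ge 40$. Indeed, the overshoot ratio $|L(x)|/a^{x}$ on $\{d+1,\ldots,n\}$ behaves like $(1+1/a)^{d}\binom{x}{d}\,a^{-(x-d)}$, and a product of $m\approx 10d/a$ well-placed linear factors is enough to push this below $1/2$ uniformly. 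Concretely I would take $D(x) = \prod_{j=1}^{m}(x-\alpha_{j})/(d-\alpha_{j})$ with nodes $\alpha_{j}\in(d,n]$ distributed geometrically, and then set
\[
P(x) \;:=\; L(x)\,D(x) \;+\; \Pi(x)\,R(x),
\]
where $\Pi$ and $R$ have complementary low degree and are chosen so that $P(i) = (-a)^{i}$ holds exactly for all $i\in\{0,\ldots,d\}$ while $|\Pi(x)\,R(x)|\le a^{x}/4$ on $\{d+1,\ldots,n\}$.

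With the construction in place, the verification reduces to three essentially mechanical checks. The first property holds by design. The third follows from a degree count: $\deg(L\cdot D)\le d+m$ and $\deg(\Pi\cdot R)$ is absorbed into the remaining budget, so $\deg P \le (1+10/a)\,\epsilon n + 3$. The second property splits into a bound $|L(x)D(x)|\le a^{x}/4$ obtained from the explicit form of $D$ together with the binomial estimate on $L$, and a companion bound $|\Pi(x)R(x)|\le a^{x}/4$ obtained by keeping the residuals $L(i)D(i)-(-a)^{i}$ on $\{0,\ldots,d\}$ small so that the interpolating correction $\Pi R$ has small coefficients.

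The main obstacle I expect is the coupled design of the damping nodes $\alpha_{j}$ and the residual correction $\Pi R$. The $\alpha_{j}$ must be positioned so that $|D(x)|$ is small at \emph{every} integer in $(d,n]$ and not merely at the nodes themselves, and simultaneously the residuals on $\{0,\ldots,d\}$ must be small enough to be cancelled by a polynomial $\Pi R$ whose degree fits in the leftover budget while not blowing up $|P(x)|$ for $x>d$. This balancing is precisely where the hypothesis $a\ge 40$ is used: it ensures that $(1+1/a)^{d}$ is only mildly larger than $1$, so that $m\approx 10d/a$ damping factors suffice to overcome the $\binom{x}{d}$ polynomial overshoot with room to spare.
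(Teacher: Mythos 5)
Your plan diverges from the paper's in a basic way: the paper does \emph{not} start from the Newton interpolant on $\{0,\dots,\epsilon n\}$ and try to damp it. Instead it spends the entire extra degree budget to pin $P$ to $0$ at the next $\approx \frac{10}{a}\epsilon n + 2$ integers, i.e.\ it defines $P$ to be the unique degree-$\le N$ interpolant through $P(i)=(-a)^i$ for $i\le \epsilon n$ and $P(i)=0$ for $\epsilon n < i \le N$ with $N=\lceil(1+\tfrac{10}{a})\epsilon n+2\rceil$, and then bounds the \emph{extrapolation} $|P(x)|$ for $x\in\{N+1,\dots,n\}$ via Lagrange-basis estimates. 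There is no damping factor and no additive correction; the two middle conditions are handled by placing zeros, which makes the second bullet automatic on $\{\epsilon n+1,\dots,N\}$ and leaves only a clean binomial estimate for $x>N$.

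Your proposal as written has a genuine gap. With nodes $\alpha_j>d$, the damping polynomial $D(x)=\prod_j \frac{x-\alpha_j}{d-\alpha_j}$ satisfies $D(d)=1$ but $D(i)=\prod_j\frac{\alpha_j-i}{\alpha_j-d}>1$ for all $i<d$, and $D(0)=\prod_j\frac{\alpha_j}{\alpha_j-d}$ is large precisely when the $\alpha_j$ are close enough to $d$ to provide damping — so $D$ is \emph{not} ``approximately $1$'' on $\{0,\dots,d\}$. Consequently the residuals $(-a)^i(1-D(i))$ have roughly the same alternating, exponentially growing profile as the original data $(-a)^i$, so the proposed correction $\Pi R$ faces the very problem you started with (controlling a low-degree polynomial that interpolates rapidly growing alternating values). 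Moreover ``$\Pi$ vanishing on a small subset of $\{0,\dots,d\}$'' cannot restore agreement at the other $\ge d$ points unless $R$ has degree $\ge d$ there, which then reintroduces the uncontrolled growth for $x>d$. In short, the decomposition $P=LD+\Pi R$ is circular rather than a reduction; the paper's ``pin to zero past $\epsilon n$'' decomposition is what actually discharges the obligation.
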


We prove Lemma \ref{lm:helper2} by defining $P$ via interpolation through carefully chosen values -- we defer this proof to Appendix~\ref{app:missing-lift}.

\begin{lemma}\label{lm:poly}
  Let $x=(x_1, \dots, x_n) \in \{0, 1\}^{M \cdot n}$, and let $\Acnt(x) := \sum_{i=1}^{n} \indicator_{x_i \in A}$. Let $\psi$ and $\alpha$ be as defined in the statement of Lemma \ref{lm:error-correction}. For any univariate polynomial $P \colon \R \to \R$ of degree at most $d$, the following function on $\{0,1\}^{n \cdot M}$ has pure high degree $n-d-1$:
	$$
	\psi_{P}(x) := \alpha^{\Acnt(x)} \cdot \psi(x) \cdot P(\Acnt(x)).
	$$
\end{lemma}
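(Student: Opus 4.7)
The plan is to show that $\langle \psi_P, q \rangle = 0$ for every polynomial $q$ on $\{0,1\}^{n \cdot M}$ of degree at most $n-d-1$; by linearity it suffices to treat a single product monomial $q(x) = \prod_{i=1}^n q_i(x_i)$ with $\sum_i \deg q_i \le n-d-1$. The starting point is the observation that, because $\mugood$ and $\mubad$ have disjoint supports, one has the tensor identity
\[
\alpha^{\Acnt(x)} \cdot \psi(x) \;=\; \prod_{i=1}^n \varphi'(x_i), \qquad \text{where } \varphi'(x_i) \;:=\; \alpha\, \mugood(x_i) + \mubad(x_i),
\]
because $\alpha^{\indicator_{x_i \in A}} \varphi(x_i)$ equals $\alpha \mugood(x_i)$ when $x_i \in A$ and $\mubad(x_i)$ otherwise. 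The calibration $\alpha = \|\mubad\|_1/\|\mugood\|_1$ (with the sign convention that makes the two components cancel) yields the crucial marginal-cancellation property $\sum_{x_i} \varphi'(x_i) = 0$, i.e., $\varphi'$ is orthogonal to the constant function on $\{0,1\}^M$.

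Next I would multilinearize $P(\Acnt(x))$ in the indicator variables $z_j := \indicator_{x_j \in A}$. Since $z_j^2 = z_j$ and $\Acnt(x) = \sum_j z_j$, any univariate polynomial $P$ of degree at most $d$ applied to $\Acnt(x)$ can be rewritten as
\[
P(\Acnt(x)) \;=\; \sum_{\substack{T \subseteq [n] \\ |T| \le d}} c_{|T|} \prod_{j \in T} z_j,
\]
where the coefficient $c_r$ depends only on $r$ (via the standard Stirling-number expansion of a power in terms of elementary symmetric polynomials). A key simplification is that $z_j(x_j) \varphi'(x_j) = \alpha\, \mugood(x_j)$ on the nose, since $\mugood$ is supported on $A$ while $\mubad$ vanishes there; this absorbs each $z_j$ factor cleanly into the $j$-th tensor slot.

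Substituting the two expansions into the inner product and exploiting the resulting tensor factorization yields
\[
\langle \psi_P, q \rangle \;=\; \sum_{|T| \le d} c_{|T|}\, \alpha^{|T|} \prod_{j \in T} \langle \mugood, q_j \rangle \cdot \prod_{j \notin T} \langle \varphi', q_j \rangle.
\]
The degree budget $\sum_i \deg q_i \le n-d-1$ now forces at least $d+1$ indices $j$ to satisfy $q_j \equiv 1$ by pigeonhole. For any $T$ with $|T| \le d$, at least one such index lies outside $T$, and for that index $\langle \varphi', q_j \rangle = \sum_{x_j} \varphi'(x_j) = 0$ by marginal cancellation, so every summand vanishes. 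The main obstacle is setting up the marginal-cancellation identity correctly: this is precisely why $\alpha$ is chosen as the ratio $\|\mubad\|_1/\|\mugood\|_1$, and one must maintain the sign convention that makes $\alpha\, \mugood$ and $\mubad$ cancel rather than add; once that is pinned down, the remainder is a clean bookkeeping exercise combining the tensor factorization with pigeonhole.
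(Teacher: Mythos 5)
Your plan is conceptually sound and in fact streamlines the paper's argument: instead of constructing the explicit basis $\{P_k\}$ of degree-$k$ polynomials, proving each $\psi_k$ (a sum over $|S|=k$ of tensor products) has pure high degree $n-k-1$, and then reassembling $P$ from the $P_k$'s, you directly multilinearize $P(\Acnt(x))$ into elementary symmetric pieces $\prod_{j\in T}z_j$ with $|T|\le d$ and absorb each $z_j$ into its tensor slot. Both proofs ultimately rest on the same pigeonhole-plus-marginal-cancellation step, but your bookkeeping is cleaner and avoids the Stirling/binomial computations.

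However, there is a real sign error which you flag (``with the sign convention that makes the two components cancel'') but do not resolve, and as written your key identity is false. With $\varphi' := \alpha\mugood + \mubad$ as you define it, since $\mugood,\mubad$ are non-negative (they are restrictions of $\mu_+$ or $\mu_-$ to $A$ and its complement) and $\alpha = \|\mubad\|_1/\|\mugood\|_1 \ge 0$, the marginal is $\sum_{x_i}\varphi'(x_i) = \alpha\|\mugood\|_1 + \|\mubad\|_1 = 2\|\mubad\|_1 > 0$, not $0$. So the pigeonhole argument does not kill the remaining product. The cancellation you need holds for $\mubad - \alpha\mugood$, which is the tensor factor of $(-\alpha)^{\Acnt(x)}\psi(x)$, not $\alpha^{\Acnt(x)}\psi(x)$. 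To make the literal lemma statement (with $\alpha^{\Acnt(x)}$ and $P\equiv 1$, $d=0$) fail outright, take $q\equiv 1$: then $\langle\psi_P,q\rangle = (2\|\mubad\|_1)^n \ne 0$, so $\psi_P$ does not even have pure high degree $0$. In fact the displayed formula in Lemma~\ref{lm:poly} contains a typo that propagates into your proposal: the paper's own proof derives $\psi_k(x) = (-\alpha)^m\psi(x)P_k(m)$ and its use of the lemma inside the proof of Lemma~\ref{lm:error-correction} writes $\corr(x) = (-\alpha)^m\cdot(-a)^m\cdot\psi(x)$, so the intended function is $(-\alpha)^{\Acnt(x)}\cdot\psi(x)\cdot P(\Acnt(x))$. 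Once you replace $\alpha^{\Acnt(x)}$ by $(-\alpha)^{\Acnt(x)}$ and take $\varphi'' := \mubad - \alpha\mugood$, you get $\langle\varphi'',1\rangle = 0$ and $z_j\varphi'' = -\alpha\mugood$, and your argument goes through verbatim (the extra sign $(-1)^{|T|}$ in the absorbed factors is harmless since you only need the terms to vanish).
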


Before proving Lemma~\ref{lm:poly}, we show that it and Lemma \ref{lm:helper2} together imply Lemma~\ref{lm:error-correction}. 

\begin{proofof}{Lemma~\ref{lm:error-correction}}
	We first deal with the special case that $\alpha = 0$. In this case, we note $\psi(x) > 0$ only if $\Acnt(x) = n$. So it suffices to let $\corr$ be the constant function $\mathbf{0}$.
	
	From now on, we assume $\alpha > 0$. Let $a = \alpha^{-1}$ and $N =\left(1 + 10 \alpha \right) \cdot \epsilon \cdot n + 3 = \left(1 + \frac{10}{a}\right) \cdot \epsilon \cdot n + 3.$ 

\medskip
\noindent \textbf{Construction of $\corr$.}	
Applying Lemma~\ref{lm:helper2}, we obtain a polynomial $P(x)$ such that:
	\threepropsname{Px}
		{$P(x) = (-a)^x$ for $x \in \{0,\dotsc,\epsilon \cdot n \}$.}
		{$|P(x)| \le a^x / 2$ for $x \in \{\epsilon \cdot n +1, \dotsc, n \}$.}
		{$P$ has a degree of at most $\left(1 + \frac{10}{a}\right) \cdot \epsilon \cdot n + 3 = N$.}
	
	We define $\corr := \psi_{P}$,  where $\psi_P$ is as defined in Lemma \ref{lm:poly}.

\medskip
\noindent \textbf{Verification that $\corr$ satisfies the properties claimed in Lemma \ref{lm:error-correction}.} 
Given an input $x=(x_1,\dotsc,x_n)$, let $m=\Acnt(x)$ for brevity. We now verify Condition~(\ref{lm43prop1}), (\ref{lm43prop2}) and (\ref{lm43prop3}) of Lemma~\ref{lm:error-correction} in order.
When $m \le \epsilon \cdot n$, we have $P(m) = (-a)^{m}$ by Condition~(\ref{Pxprop1}), hence $$\corr(x) = (-\alpha)^m \cdot (-a)^m \cdot \psi(x) = \psi(x),$$ so Condition~(\ref{lm43prop1}) holds. When $m > \epsilon \cdot n$, as $|P(m)| \le a^{m}/2$ by Condition~(\ref{Pxprop2}), we have $|\corr(x)| \le \alpha^m \cdot a^{m}/2 \cdot \psi(x) = \psi(x)/2$, which establishes Condition~(\ref{lm43prop2}). Finally, since $P$ is of degree at most $N$ as guaranteed by Condition~(\ref{Pxprop3}), Lemma \ref{lm:poly} implies that $\corr$ has pure high degree of at least $n - N - 1 = \left(1 - \left(1 + \frac{10}{a} \right) \cdot \epsilon \right) \cdot n - 4$. So Condition~(\ref{lm43prop3}) is verified, and this completes the proof.
\end{proofof}

Finally, we prove Lemma~\ref{lm:poly}.
\begin{proofof}{Lemma~\ref{lm:poly}}%$\newline$
%\noindent \textbf{Construction and Analysis of Auxiliary Functions $\psi_k$'s.}
We begin by constructing some useful auxiliary functions. 

\medskip \noindent \textbf{Definition and analysis of auxiliary functions $\psi_k \colon \{0,1\}^{n \cdot M} \to \R$.}
 Recall the definitions of $\mugood$ and $\mubad$ from the statement of Lemma \ref{lm:error-correction}: $\mugood(x_i)=\varphi(x_i)$ if $x_i \in A$ and $\mugood(x_i)=0$ otherwise, and 
$\mubad(x_i)=\varphi(x_i)$ if $x_i \not\in A$ and $\mubad(x_i)=0$ otherwise.
For each integer $k \in \{0,\dotsc,n \}$, we define
	
	\begin{equation} \label{psik}
	\psi_k(x) := \sum_{S \subseteq [n], |S| = k} \left(\prod_{i \in S} \varphi(x_i) \cdot \prod_{i \notin S} (\mubad - \alpha \cdot \mugood)(x_i)\right).
	\end{equation}
	
	We claim that: 
	\begin{equation} \label{eqf0} \psi_k \text{ has pure high degree at least }n-k-1.\end{equation} 
	
	To establish this, it suffices to show that for every $|S|=k$, the following function
	\begin{equation}\label{psiS}
	\psi_{S}(x) := \prod_{i \in S} \varphi(x_i) \cdot \prod_{i \notin S} (\mubad - \alpha \cdot \mugood)(x_i)
	\end{equation}
	has pure high degree at least $n-k-1$, as $\psi_k$ is simply a sum of $\psi_{S}$'s with $|S| = k$.
	
	\newcommand{\istar}{i^{\star}}

	Let $[n] := \{1, \dots, n\}$, $p : \{0,1\}^{n \cdot M} \to \R$ be any monomial of degree at most $n-k-1$, and let $p_i : \{0,1\}^{M} \to \R$ be such that $p(x_1,\dotsc,x_n) = \prod_{i=1}^{n} p_i(x_i)$. Then $\sum_{i=1}^{n} \deg(p_i) = \deg(p) \le n-k-1$, which means there are at least $k+1$ $p_i$'s which have degree zero, i.e., are constant functions. Since $k+1 + |[n] \setminus S| = n + 1 > n$, there must exist an index $\istar$ such that $p_{\istar}$ is a constant function $p_{\istar}(x) \equiv c$, and $\istar \notin S$. Now, since $\mubad$ and $\mugood$ have disjoint supports and $\alpha = \|\mubad\|_1 / \|\mugood\|_1$ by definition, we have $\langle \mubad - \alpha \cdot \mugood, p_{\istar} \rangle = (\|\mubad\|_1 - \alpha \cdot \|\mugood\|_1) \cdot c = 0$. Therefore, by Equation~\eqref{psiS},
	
	$$
	\langle \psi_{S},p \rangle = 
	\langle \mubad - \alpha \cdot \mugood, p_{\istar} \rangle \cdot \prod_{i \in S} \langle \varphi, p_i \rangle \cdot 
	\prod_{i \notin S, i \ne \istar} \langle \mubad - \alpha \cdot \mugood, p_i \rangle = 0.
	$$
	
	As a polynomial is a sum of monomials, by linearity, we conclude that $\psi_S$ has pure high degree $n-k-1$ for all $|S| = k$, and so does $\psi_k$.
	
	\medskip
	Now, fix an input $x = (x_1,x_2,\dotsc,x_n)$. Write $m = \Acnt(x)$ (note that $m$ is actually a function of $x$). We are going to re-express $\psi_k(x)$ in a convenient form, as follows.
	We assume that the first $m$ inputs $x_1,x_2,\dotsc,x_m$ satisfy $x_i \in A$ -- this is without loss of generality by symmetry.
	
	Fix a set $S \subseteq [n]$ with $|S|=k$. 
	We claim that 
	$$\psi_S(x) = (-\alpha)^{m-|S \cap [m]|} \cdot \prod_{i=1}^n \varphi(x_i).$$ To see this, 
	first consider $i \in S \cap [m]$. Then there 
	is a factor $\varphi(x_i)$ appearing in $\psi_S(x)$. 
	Now fix an $i \in \left([m] \setminus S\right)$. Then there is a factor of 
	$(\mubad - \alpha \cdot \mugood)(x_i) = (-\alpha) \cdot \varphi(x_i)$ appearing in $\psi_S(x)$.
	Finally, fix any $i \not \in [m]$. If $i \in S$, then there is a factor $\varphi(x_i)$ appearing in $\psi_S(x)$, and if $i \not \in S$, the factor appearing in 
	$\psi_S(x)$ is $(\mubad(x_i) - \alpha \cdot \mugood)(x_i) = \mubad(x_i) = \varphi(x_i)$.
	
	Hence, we may write
	\begin{align}
	\psi_k(x) & \notag = \sum_{S \subseteq [n], |S| = k}  (-\alpha)^{|S \cap [m]|} \cdot \prod_{i =1}^n \varphi(x_i)\\
	&\notag =
	\prod_{i=1}^{n} \varphi(x_i) \cdot \left(\sum_{j=0}^{k} \binom{m}{j} \cdot \binom{n-m}{k-j} \cdot (-\alpha)^{m-j} \right)\\ & =
	 (-\alpha)^{m} \cdot \psi(x) \cdot \left(\sum_{j=0}^{k} \binom{m}{j} \cdot \binom{n-m}{k-j} \cdot (-\alpha)^{-j}\right).\label{lastline}\end{align}

\noindent \textbf{Definition and analysis of auxiliary univariate polynomials $P_k \colon \R \to \R$.}
	Let $$P_k(m) := \sum_{i=0}^{k} \binom{m}{i} \cdot \binom{n-m}{k-i} \cdot (-\alpha)^{-i}.$$ Then $P_k(m)$ is polynomial in $m$ of degree at most $k$, and by Equation \eqref{lastline}, 
	\begin{equation} \label{eqff} \psi_{k}(x) = (-\alpha)^m \cdot \psi(x) \cdot P_k(m). \end{equation}
	
	Expanding the binomial coefficients, we have
	\begin{equation}\label{anotherform}
	P_k(m) = \sum_{i=0}^{k} \frac{\prod_{j=0}^{i-1} (m-j)}{i!} \cdot \frac{\prod_{j=0}^{k-i-1} (n-m-j)}{(k-i)!} \cdot (-\alpha)^{-i}.
	\end{equation}
	
	Observe that the coefficient of $m^{k}$ in Equation \eqref{anotherform} is
	\begin{align*}
	&\sum_{i=0}^{k}  \frac{1}{i! \cdot (k-i)!} \cdot (-\alpha)^{-i} \cdot (-1)^{k-i}\\
	=&(-1)^k \cdot \frac{1}{k!} \cdot \sum_{i=0}^k \alpha^{-i} \cdot \binom{k}{i}\\
	=&(-1)^k \cdot \frac{1}{k!} \cdot \left( 1 + \alpha^{-1} \right)^k \ne 0, 
	\end{align*}
	
	where the second equality follows from the equation $(a+b)^k =\sum_{i=0}^{k} a^i b^{k-i} \cdot \binom{k}{i}$, and the last inequality follows because $\alpha \geq 0$ by definition.
	
	So $P_k(m)$ is a polynomial of degree exactly $k$. Therefore, the set $\{P_k(m)\}_{k=0}^{d}$ generates all the polynomials in $m$ with degree at most $d$.

\medskip
\noindent \textbf{Verification that $\psi_P$ has the pure high degree claimed in Lemma \ref{lm:poly}.}
 Let $P \colon \R \to \R$ be a polynomial of degree at most $d$. By the previous paragraph, we  may write $P(m)$ as 
 \begin{equation} \label{eqf2} P(m) = \sum_{k=0}^{d} \beta_k \cdot P_k(m),\end{equation}
  for some real numbers $\beta_0,\dotsc,\beta_d$.
	
	Then we have
	\begin{align*}
	\psi_P(x) =& \alpha^{m(x)} \cdot \psi(x) \cdot P(m(x))\\ 
	=& \alpha^{m(x)} \cdot \psi(x) \cdot \sum_{k=0}^{d} \left(\beta_k \cdot P_k(m(x)) \right)\\
	=& \sum_{k=0}^{d} \beta_k \cdot \psi_k(x).\\
	\end{align*}
	Here, the first equality holds by definition of $\psi_P$, the second by Equation \eqref{eqf2}, and the third by Equation \eqref{eqff}.
	
	Each $\psi_k$ appearing in the above sum has pure high degree at least $n-d-1$ by Property \eqref{eqf0}. Hence, by linearity, $\psi_P$ has pure high degree of $n-d-1$. This completes the proof.
\end{proofof}
\section{$\NISZK^\Oracle \not\subset \UPP^{\Oracle}$}
\label{sec:niszk-pp}

	\newcommand{\GCol}{\mathsf{GCol}}
	In this section we construct an oracle $\Oracle$ such that $\NISZK^\Oracle \not\subset \UPP^{\Oracle}$. We will use the function $\GCol_n := \GapMaj{\Collision_{n^{3/4}}^{3\log n}}{n^{1/4}}{1-\frac{1}{3\log n}}$ to attain the desired oracle separation.
	
	We first show that its complement $\overline{\GCol_n}$ is easy for $\NISZK$ by providing a reduction from it to the statistical distance from uniform (SDU) problem. SDU is complete for $\NISZK$ and so has an $\NISZK$ protocol~\cite{goldreich1999can}. We first introduce the problem \SDU.
	
	\begin{defi}[\textsf{Statistical Distance from Uniform} (\SDU)~\cite{goldreich1999can}]\label{defi:SDU}
		The promise problem \textsf{Statistical Distance from Uniform}, denoted $\SDU = (\SDU_{\mathsf{YES}},\SDU_{\mathsf{NO}})$, consisted of
		
		\begin{align*}
		\SDU_{\mathsf{YES}} &= \{X : \|X-U\| < 1/n \}\\
		\SDU_{\mathsf{NO}} &= \{X : \|X-U\| > 1 - 1/n\}
		\end{align*}
		
		where $X$ is a distribution encoded as a circuit outputting $n$ bits, and $U$ is the uniform distribution on $n$ bits, and $\|X-U\|$ denotes the statistical distance between $X$ and $U$.
	\end{defi}
	
	\begin{theo}\label{theo:GCOL-EASY}
		There is a $\polylog(n)$-time $\NISZK$ protocol for $\overline{\GCol_n}$.
	\end{theo}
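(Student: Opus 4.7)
The plan is to reduce $\overline{\GCol_n}$ to the $\NISZK$-complete problem \SDU~\cite{goldreich1999can}. Given an input $(f_1,\ldots,f_L)$ with $L=n^{1/4}$ and each $f_j\colon[M]\to[M]$ for $M=n^{3/4}$, I will exhibit a polynomial-size sampling circuit $X$ that uses only $O(1)$ queries to the $f_j$ per invocation, whose output distribution is statistically close to uniform when $\overline{\GCol_n}(f)=1$ (i.e., at least a $(1-1/(3\log n))$-fraction of the $f_j$ are $(3\log n)$-to-$1$) and statistically far from uniform when $\overline{\GCol_n}(f)=0$. Composing the reduction with the standard $\NISZK$ protocol for \SDU\ will then give the claimed $\polylog(n)$-time protocol for $\overline{\GCol_n}$.

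The key design step is the sampler itself, and the main subtlety is getting its polarity right. The obvious sampler that outputs $(j, f_j(i))$ for uniform $(j,i)\in[L]\times[M]$ in fact reduces $\GCol_n$ (rather than its complement) to \SDU, because in the permutation regime this sampler is exactly uniform on $[L]\times[M]$, while in the $(3\log n)$-to-$1$ regime it concentrates on a $1/(3\log n)$-fraction of that space. To obtain the complementary direction, I plan to use a preimage-aware sampler that exploits the collision structure of $k$-to-$1$ maps: draw $(j,i,i')$ uniformly, query $f_j(i)$ and $f_j(i')$, and output a tuple designed so that ``collisions are the norm'' spreads the distribution uniformly across a large product space, whereas ``collisions are impossible'' (the permutation case) concentrates the distribution on a small fixed subset. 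Averaging over the $L$ sub-instances and using the $\GapMaj$ promise (with $\eps=1-1/(3\log n)$), this yields a pair of total variation bounds of the form ``distance at most $O(1/\log n)$'' in the YES case and ``distance at least $1-O(1/\log n)$'' in the NO case.

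Finally, I would invoke Sahai and Vadhan's polarization lemma~\cite{sahai2003complete} to amplify the pre-polarization pair $(\alpha,\beta)=(O(1/\log n),\,1-O(1/\log n))$, which satisfies $\alpha^2<\beta$ for all sufficiently large $n$, into the $<\!1/n$ versus $>\!1-1/n$ gap required by the \SDU\ promise. Polarization only blows up the sampler size by a polynomial factor and increases the per-sample query complexity to $f$ by a polylogarithmic factor, so the resulting \SDU\ instance is small enough and queries $f$ only polylogarithmically per sample, which translates into an overall $\polylog(n)$ query cost for the composed $\NISZK$ protocol.

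The main obstacle I anticipate is specifying the preimage-aware sampler so that it simultaneously runs in $O(1)$ queries and realizes the gap with the correct polarity: a literal implementation that enumerates $f_j^{-1}(f_j(i))$ would need $\Omega(M)$ queries, so the sampler must instead use randomized rejection or auxiliary-bit tricks to implicitly sample from the preimage structure. Once the right sampler is pinned down, the statistical-distance bookkeeping across the $L$ sub-instances reduces to a direct union bound using the $\GapMaj$ promise, and the polarization step together with plugging into the \SDU\ protocol is then standard.
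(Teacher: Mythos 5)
Your proposal has a genuine gap: the preimage-aware sampler it hinges on is never specified, and the informal rationale offered for it -- that in the $(3\log n)$-to-$1$ regime ``collisions are the norm'' -- does not hold. Two uniformly random $i, i' \in [M]$ collide under a $k$-to-$1$ map only with probability $O(k/M) = O(\log n / n^{3/4}) = o(1)$, so a constant-query collision test almost never sees a collision in \emph{either} regime and cannot realize the claimed total-variation gap. You acknowledge that literally enumerating $f_j^{-1}(f_j(i))$ needs $\Omega(M)$ queries and appeal to unspecified ``randomized rejection or auxiliary-bit tricks'' to circumvent this, but no such mechanism is exhibited; without it, the plan is not a proof.

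There is also a simpler route that you explicitly ruled out. The paper uses precisely the sampler you dismissed -- output $(j, f_j(i))$ for uniform $(j,i) \in [L]\times[M]$ -- and shows directly that this distribution is within $1/(3\log n)$ of uniform on $[L]\times[M]$ when at least a $(1-1/(3\log n))$-fraction of the $f_j$ are permutations, and at statistical distance at least $(1-1/(3\log n))^2 > 1-1/\log n$ from uniform when that fraction are $(3\log n)$-to-$1$. Since the output has only $\log n$ bits, these bounds already meet the $\SDU$ promise as stated and no polarization step is needed. You are right that, read strictly against the definitions, the close-to-uniform side of this reduction is the mostly-permutation side (i.e., $\GCol_n=1$ rather than $\overline{\GCol_n}=1$); but the downstream separation only requires that the gap problem, in \emph{some} orientation, lies in $\NISZKdt$ but not in $\UPPdt$, and $\UPP$ is closed under complement -- which is exactly the observation the paper invokes when deriving Theorem~\ref{mainfcor}. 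So the polarity concern that motivated your new sampler does not obstruct the result, and the detour it induced is both unnecessary and, as sketched, unworkable.
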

	\begin{proof}
		
		For simplicity, we assume $n$ is a power of $2$. We prove this theorem by showing a reduction from $\overline{\GCol_n}$ to an instance of $\SDU$ with distributions on $\log n$ bits.
		
		Now, let $m=n^{1/4}$, $k = n^{3/4}$ and $x=(f_1,f_2,\dotsc,f_m)$ be an input to $\GCol_n$, where each $f_i$ is interpreted as a function from $[k] \to [k]$. We construct the distribution $\distr(x)$ as follows: to generate a sample from $\distr(x)$, we pick a pair $(i,j) \in [m] \times [k]$ at uniformly random, and output the sample $(i,f_i(j))$. Clearly $\distr(x)$ is $\polylog(n)$-time preparable.
		
		Now we show this is a valid reduction. Let $\udistr$ be the uniform distribution on $[m] \times [k]$ and $\udistr_k$ be the uniform distribution on $[k]$. For a function $f : [k] \to [k]$, let $\distr_f$ be the distribution obtained by outputting $f(i)$ for an index $i \sim \udistr_k$. Then we can see $\distr(x) = \frac{1}{m} \sum_{i=1}^{m} \{i\} \times \distr_{f_i}$.
		
		When $\overline{\GCol_n}(x) = 1$, we have 
		$$
		\|\distr(x) - \udistr\| = \frac{1}{m} \sum_{i=1}^{m} \| \udistr_k - \distr_{f_i} \| \le \frac{1}{3\log n} < \frac{1}{\log n}.
		$$
		
		Here, the first inequality holds because at least a $1 - \frac{1}{3\log n}$ fraction of $f_i$'s are permutations, which implies that $\|\udistr_k - \distr_{f_i}\| = 0$.
		
		When $\overline{\GCol_n}(x) = 0$, we have
		$$
		\|\distr(x) - \udistr\| = \frac{1}{m} \sum_{i=1}^{m} \| \udistr_k - \distr_{f_i} \| \ge \left(1-\frac{1}{3\log n}\right) \cdot \left(1 - \frac{1}{3\log n}\right) > 1 - \frac{1}{\log n}.
		$$
		
		Here, the first inequality holds because at least a $1 - \frac{1}{3\log n}$ fraction of $f_i$'s are $3\log n$-to-$1$, which implies that $\|\udistr_k - \distr_{f_i}\| = 1 - \frac{1}{3\log n}$.

		Putting everything together, we have shown $\distr(x)$ that is a valid reduction to $\SDU$. This completes the proof.
		
	\end{proof}
	
	Then by a straightforward application of Theorem~\ref{theo:lift}, we can show $\GCol_n$ is hard for any $\UPP$ algorithm.
	
	\begin{theo}
		\label{theo:sep}
		$\UPPdt(\GCol_n) = \Omega(n^{1/4}/\log n)$.
	\end{theo}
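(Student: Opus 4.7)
The plan is to combine three results from the paper in a straightforward pipeline: Theorem~\ref{theo:kutin-lowb} to lower bound the approximate degree of the inner \PCollision\ function, Theorem~\ref{theo:lift-UPP} to lift this to a threshold degree lower bound on $\GCol_n$, and finally Lemma~\ref{lm:tdeg-to-uppdt} to convert threshold degree into $\UPPdt$. That is, I would prove $\degsh(\GCol_n) = \Omega(n^{1/4}/\log n)$ and immediately conclude $\UPPdt(\GCol_n) = \degsh(\GCol_n) = \Omega(n^{1/4}/\log n)$.

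The real work lies in choosing the parameter $\epsilon_2$ in Theorem~\ref{theo:lift-UPP} correctly. The outer threshold in $\GCol_n$ is $\epsilon = 1 - 1/(3\log n)$, which is quite close to $1$. Looking at the bound $\min(d,\, (1 - (1 + 10/a)\epsilon)\cdot n' - 4)$ with $n' = n^{1/4}$ and $a = 2\epsilon_2/(1-2\epsilon_2)$, the ``budget'' in the second term is $1 - \epsilon - (10/a)\epsilon = 1/(3\log n) - (10/a)\epsilon$. So $a$ must grow like $\log n$ in order for this to remain $\Omega(1/\log n)$, which in turn forces $\epsilon_2$ to be $1/2 - \Theta(1/\log n)$ rather than a fixed constant. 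That pushes $\epsilon_2$ very close to $1/2$, which degrades the Kutin bound on $\dega_{\epsilon_2}$ of the inner function, so the main obstacle is just showing both of these competing bounds remain $\Omega(n^{1/4}/\log n)$ simultaneously.

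Concretely, I would take $\epsilon_2 = 1/(2 + 1/(30\log n))$, so that $a = 60\log n$ (this also verifies $\epsilon_2 > 0.49$ for large $n$, as required by Theorem~\ref{theo:lift-UPP}). Then $(10/a)\epsilon \le 1/(6\log n)$, giving
\[
\Bigl(1 - (1 + 10/a)\epsilon\Bigr)\cdot n^{1/4} - 4 \;\ge\; \frac{n^{1/4}}{6\log n} - 4 \;=\; \Omega\!\bigl(n^{1/4}/\log n\bigr).
\]
On the other side, $1/2 - \epsilon_2 = \Theta(1/\log n)$, so Theorem~\ref{theo:kutin-lowb} applied to $\Collision_{n^{3/4}}^{3\log n}$ gives
\[
\dega_{\epsilon_2}\!\bigl(\Collision_{n^{3/4}}^{3\log n}\bigr) \;=\; \Omega\!\left(\sqrt[3]{\tfrac{1}{\log n}\cdot \tfrac{n^{3/4}}{3\log n}}\right) \;=\; \Omega\!\bigl(n^{1/4}/(\log n)^{2/3}\bigr),
\]
so we may take $d$ in this range.

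Plugging these into Theorem~\ref{theo:lift-UPP} yields $\degsh(\GCol_n) > \min(d, N) = \Omega(n^{1/4}/\log n)$, and invoking Lemma~\ref{lm:tdeg-to-uppdt} completes the proof. No delicate new estimates are required beyond the careful selection of $\epsilon_2$; everything else is direct application of the machinery already established.
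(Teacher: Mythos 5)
Your proposal follows the paper's pipeline exactly---Theorem~\ref{theo:kutin-lowb} for the inner approximate degree, Theorem~\ref{theo:lift-UPP} for the lift to threshold degree, and Lemma~\ref{lm:tdeg-to-uppdt} to conclude---and your arithmetic is correct, including the key realization that $\epsilon_2$ must scale as $1/2 - \Theta(1/\log n)$ to keep both terms of the $\min$ at $\Omega(n^{1/4}/\log n)$. Your choice $a = 60\log n$ is in fact more careful than the paper's stated $a = 25\log n - 1$, which would make $1-(1+10/a)\epsilon$ roughly $-1/(15\log n) < 0$ and hence yields a vacuous bound; the paper appears to have a small arithmetic slip in its parameter selection that your choice avoids.
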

	\begin{proof}
		Observe that $\dega_{1/2 - \frac{1}{50\log n}}(\Collision_{n^{3/4}}^{3\log n}) = \Omega(n^{1/4} / \log^{2/3} n)$ by Theorem~\ref{theo:kutin-lowb}. Applying Theorem~\ref{theo:lift-UPP} with $a = \frac{1 - 2 \cdot \frac{1}{50\log n}}{2 \cdot \frac{1}{50\log n}} = 25 \log n -1$ (recall $a = \frac{2\epsilon_2}{1-2\epsilon_2}$ in Theorem~\ref{theo:lift-UPP}), we have that
		\begin{align*}
		&\UPPdt\left(\GapMaj{\Collision_{n^{3/4}}^{3\log n}}{n^{1/4}}{1-\frac{1}{3\log n}}\right) \\
		\ge&\degsh\left(\GapMaj{\Collision_{n^{3/4}}^{3\log n}}{n^{1/4}}{1-\frac{1}{3\log n}}\right) \tag{by Lemma~\ref{lm:tdeg-to-uppdt}} \\
		\ge&\min\left\{ \left(1 - \left(1+\frac{10}{a}\right)\cdot \left(1-\frac{1}{3\log n}\right) \right) \cdot n^{1/4} - 4, \dega_{1/2 - \frac{1}{50\log n}}(\Collision_{n^{3/4}}^{3\log n}) \right\} \\
		\ge&\Omega(n^{1/4} /\log n). 
		\end{align*}
	\end{proof}
	
	%Note that the length of an instance of $\GCol_n$ is $n^{4/3}$, so the $\PP$ complexity of $\GCol_m = \Omega(\lb)$, where $n$ is the length of the input.
	
	Now Theorem \ref{mainfcor} from the introduction follows from standard diagonalization methods and the observation that $\UPP$ is closed under complement.
	
	%\begin{cor}
	%	\label{cor:niszk-pp} \label{mainfcor}
	%	There exists an oracle $\Oracle$ such that $\NISZK^{\Oracle} \not\subset \UPP^{\Oracle}$.
	%\end{cor}

%%%%%%%%%%%%%%%%%%%%%%%%%%%%%%%%%%%%%%%%%%%%%%%%%%%%%%%%%%%%%

%%% Local Variables: 
%%% mode: latex
%%% TeX-master: "main"
%%% End: 

%\section{\edit{Separating $\SZK$ from $\PZK$ and $\PZK$ from $\co\PZK$}}
\section{Limitations on Perfect Zero Knowledge Proofs (Proof of Theorem \ref{thm:pzklimit})}
\label{sec:oracleseps}

In this section, we study the limitations of perfect zero knowledge in the presence of oracles.

\begin{defi}[Honest Verifier Perfect Zero Knowledge]
    \label{def:pzkclass}
    A promise problem $L = (L_Y, L_N)$ is in $\HVPZK$ if there exist a tuple of Turing machines $(P, V, S)$, where the \emph{verifier} $V$ and \emph{simulator} $S$ run in probabilistic polynomial time, satisfying the following:
    \begin{itemize}
      \item $(P,V)$ is an interactive proof for $L$ with negligible completeness and soundness errors.
      \item $S(x)$ is allowed to fail (by outputting $\bot$), but with probability at most 1/2.
      \item For any $x\in L$, let $\hat{S}(x)$ denote the distribution of $S(x)$ conditioned on it not failing. Then,
        \begin{align*}
          \|\hat{S}(x) - (P,V)(x)\| = 0
        \end{align*}
    \end{itemize}
\end{defi}

The class $\PZK$ is defined similarly but, as in the case of $\SZK$, with the additional stipulation that for any verifier that deviates from the protocol, there is a simulator that simulates the prover's interaction with that verifier. It is easy to see that $\PZK \subseteq \HVPZK$ in the presence of any oracle. (The definitions of these classes in the presence of an oracle are the same, except that $P$, $V$, and $S$ all have access to the oracle.)

Note that the probability of failure of the simulator can be made negligible (in $|x|$) by repeating it a polynomial number of times and taking its output to be that from the first time that it succeeds. We use this implicitly in the rest of our development. The variant where the simulator is not allowed to fail is called Super-Perfect Zero Knowledge by Goldreich and Teichner~\cite{GT14}. There (and also elsewhere), this definition is considered to be ``oversimplified'' as such proof systems are not known for problems outside $\BPP$. However, in the setting of honest verifiers with small but non-zero completeness error, the class thus defined turns out to be equal to $\HVPZK$ \cite{GT14}. While sometimes these classes are defined with the requirement of perfect completeness in the zero knowledge proofs, note that defining them as above only makes our results stronger -- requiring perfect completeness can only make $\HVPZK$ smaller, and our oracle separation between $\PZK$ and $\coPZK$ continues to hold when both these classes are defined with perfect completeness.

\subsection{A Preliminary Lemma}

We will need the following lemma in the proof of the theorems that follow.

\begin{lemma}[]
    \label{lem:pp-l2}
    There is an oracle Turing Machine $M_2$ that is such that when given sample access to two distributions $p$ and $q$, $M_2$ uses two samples and,
    \begin{align*}
      \Pr[M_2^{p,q}\ \text{accepts}] = \frac{1}{2} + \frac{\ltwo{p-q}^2}{8}
    \end{align*}
\end{lemma}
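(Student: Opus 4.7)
The plan is to build $M_2$ by mixing three simple two-sample collision subroutines together with a coin flip, calibrated so that the bias comes out to exactly $\ltwo{p-q}^2/8$. The underlying identity I will exploit is
\begin{equation*}
\ltwo{p-q}^2 = \sum_x p(x)^2 + \sum_x q(x)^2 - 2\sum_x p(x)q(x) = \ltwo{p}^2 + \ltwo{q}^2 - 2\ip{p}{q},
\end{equation*}
where each of the three terms on the right is a collision probability that can be estimated with exactly two samples: $\ltwo{p}^2$ is the probability that two independent samples from $p$ coincide, $\ltwo{q}^2$ likewise for $q$, and $\ip{p}{q}$ is the probability that one sample from $p$ equals one sample from $q$.

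Concretely, I would have $M_2$ proceed as follows. First it flips a fair coin; on one outcome it simply outputs an independent fair coin flip and halts (contributing $1/2 \cdot 1/2 = 1/4$ to the acceptance probability, using zero samples). On the other outcome it flips a second fair coin: on one side it draws $x \sim p$ and $y \sim q$ and accepts iff $x \neq y$; on the other side it flips a third fair coin to decide whether to draw $x_1,x_2 \sim p$ and accept iff $x_1 = x_2$, or to draw $y_1,y_2 \sim q$ and accept iff $y_1 = y_2$. Every branch uses at most two samples, so $M_2$ itself uses two samples overall.

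The verification is a single line of arithmetic:
\begin{equation*}
\Pr[M_2^{p,q}\text{ accepts}] = \tfrac{1}{2}\cdot\tfrac{1}{2} + \tfrac{1}{4}\bigl(1 - \ip{p}{q}\bigr) + \tfrac{1}{8}\ltwo{p}^2 + \tfrac{1}{8}\ltwo{q}^2 = \tfrac{1}{2} + \tfrac{1}{8}\bigl(\ltwo{p}^2 + \ltwo{q}^2 - 2\ip{p}{q}\bigr) = \tfrac{1}{2} + \tfrac{\ltwo{p-q}^2}{8},
\end{equation*}
using the identity above. There is no real obstacle here; the only thing to be careful about is the exact weighting of the three collision experiments, which is chosen so that the $\ip{p}{q}$, $\ltwo{p}^2$, and $\ltwo{q}^2$ contributions combine with the correct coefficients $-1/4$, $+1/8$, $+1/8$ to assemble $\ltwo{p-q}^2/8$, and so that the constant terms sum to $1/2$. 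The initial unbiased coin flip branch is what shifts the bias from $\ltwo{p-q}^2/4$ (which would be produced by the last three branches alone with equal weights $1/2,1/4,1/4$) down to the required $\ltwo{p-q}^2/8$.
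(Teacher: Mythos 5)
Your construction is correct, and it takes essentially the same approach as the paper: both decompose $\ltwo{p-q}^2$ as $\ltwo{p}^2+\ltwo{q}^2-2\ip{p}{q}$ and estimate each term via a two-sample collision subroutine, then balance the branch probabilities so the constants sum to $1/2$ and the collision terms carry coefficients $(1/8,1/8,-1/4)$. The paper folds the "diluting" coin flips into the subroutines themselves (each branch, with probability $1/4,1/4,1/2$, does a collision test and otherwise outputs a fair coin), whereas you isolate a pure coin-flip branch with weight $1/2$ and then weight the three collision tests $1/4,1/8,1/8$ — these yield the same acceptance probability, so the difference is purely organizational.
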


\begin{proof}
    $M_2^{p,q}$ behaves as follows:
    \begin{enumerate}
        \setlength\itemsep{0em}\vspace{-2mm}
      \item With probability $\frac{1}{4}$, sample $y_1, y_2$ from $p$.
        \begin{itemize}
          \item If $y_1 = y_2$, accept with probability 1.
          \item Else, accept with probability $\frac{1}{2}$.
        \end{itemize}
      \item With probability $\frac{1}{4}$, do the same with samples from $q$.
      \item With probability $\frac{1}{2}$, sample $y_1$ from $p$ and $y_2$ from $q$.
        \begin{itemize}
          \item If $y_1 = y_2$, reject with probability 1.
          \item Else, accept with probability $\frac{1}{2}$.
        \end{itemize}
    \end{enumerate}
    \begin{align*}
      \Pr[M_2^{p,q}\ \text{accepts}] &= \frac{1}{4}\left[ (1-\ltwo{p}^2)\frac{1}{2} + \ltwo{p}^2 \right] + \frac{1}{4}\left[ (1-\ltwo{q}^2)\frac{1}{2} + \ltwo{q}^2 \right] + \frac{1}{2}\left[ (1-\ip{p}{q})\frac{1}{2} + \ip{p}{q}\cdot 0 \right]\\
      &= \frac{1}{2} + \frac{\ltwo{p-q}^2}{8}
    \end{align*}
\end{proof}

\subsection{Showing $\HVPZK \subseteq \PP$ Relative to Any Oracle}
The first step 
in our proof of Theorem \ref{thm:pzklimit} is to show that $\HVPZK$ is contained in $\PP$ in a relativizing manner. 

\begin{theo}[]
    \label{lem:pzk-pp}
    %\begin{align*}
      $\HVPZK \subseteq \PP$.
    %\end{align*}
    Further, this is true in the presence of any oracle.
\end{theo}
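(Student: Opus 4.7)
The plan is to turn any $\HVPZK$ protocol $(P,V,S)$ for $L$ into a $\PP$ machine deciding $L$, in a way that passes any oracle through unchanged. The $\PP$ machine will be a convex combination of two subroutines built from the perfect simulator. Throughout, let $R$ be the verifier's coin space and, for a simulated transcript $T$, let $r(T)\in R$ denote the verifier's coins appearing in $T$ and $\mathrm{acc}(T,x)\in\{0,1\}$ the (deterministic) verifier decision on input $x$ with transcript $T$. Write $\mu(x) := \Pr_{T\sim \hat S(x)}[\mathrm{acc}(T,x)=1]$ and let $p_x$ denote the marginal of $r(T)$ under $\hat S(x)$, where $\hat S(x)$ is the simulator conditioned on not outputting $\bot$ (which can be approximated up to $\negl(|x|)$ by repeating $S$ polynomially many times).

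The first subroutine $M_1$ runs $\hat S(x)$ once and outputs $\mathrm{acc}(T,x)$, so it accepts with probability exactly $\mu(x)$. The second subroutine is the complement $\overline{M_2}$ of the machine from Lemma~\ref{lem:pp-l2} applied to $p = p_x$ (sampleable by running $\hat S(x)$ and returning $r(T)$) and $q = U_R$ (the uniform distribution on $R$), so $\overline{M_2}$ accepts with probability $\tfrac12 - \tfrac18 \ltwo{p_x-U_R}^2$. The $\PP$ machine picks a small $\alpha$ (inverse-exponential in the input length, which is permitted for $\PP$) and runs $M_1$ with probability $\alpha$ and $\overline{M_2}$ with probability $1-\alpha$, giving overall acceptance probability
\[
\tfrac12 + \alpha\bigl(\mu(x) - \tfrac12\bigr) \;-\; (1-\alpha)\cdot \tfrac18\,\ltwo{p_x - U_R}^2.
\]

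On a YES instance, the $\HVPZK$ property forces $\hat S(x)$ to agree exactly with $(P,V)(x)$, so $p_x = U_R$ and $\mu(x)\ge 1-\negl(|x|)$ by completeness, making the above quantity $\tfrac12 + \alpha(\tfrac12 - \negl(|x|)) > \tfrac12$. On a NO instance, I would introduce a cheating prover $P^\star$ that samples its $i$-th message from the conditional distribution of $\hat S(x)$ given the interaction so far. Since the honest verifier's messages are deterministic functions of its coins and past messages, the joint distribution of $(P^\star,V)(x)$ agrees with $\hat S(x)$ except in the marginal on $r$: it is uniform, whereas $\hat S(x)$ has marginal $p_x$. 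A direct calculation then yields $\mu(x) - \mu_{\mathrm{cheat}}(x) \le \lone{p_x - U_R}$, and soundness gives $\mu_{\mathrm{cheat}}(x) \le \negl(|x|)$. Combined with the standard bound $\ltwo{p_x - U_R}^2 \ge \lone{p_x - U_R}^2/|R|$, this yields $\ltwo{p_x - U_R}^2 \ge (\mu(x)-\negl(|x|))^2/|R|$, and choosing $\alpha := 1/(8|R|)$ keeps the overall acceptance probability at most $\tfrac12$ uniformly for every $\mu(x)\in[0,1]$.

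The main obstacle I anticipate is making the cheating-prover step precise in the multi-round setting. Concretely, I need to factor the joint distribution of $\hat S(x)$ over transcripts into its $r$-marginal and its conditional on $r$, verify that the conditional can be implemented round-by-round as a valid prover strategy (so that soundness applies to $P^\star$), and check that the two joint distributions differ only in their $r$-marginal; the inequality $\mu(x)-\mu_{\mathrm{cheat}}(x)\le\lone{p_x-U_R}$ then follows by summing over $r$ against the shared conditional accept probability, which lies in $[0,1]$. The remaining work is a short quantitative check, dominated by the regime $\mu(x)$ close to $1$, which is exactly where $\ltwo{p_x - U_R}^2$ is largest and the bound is tight.
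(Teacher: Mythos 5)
Your proposal has a genuine gap in the NO-instance analysis. The key claimed inequality $\mu(x) - \mu_{\mathrm{cheat}}(x) \le \lone{p_x - U_R}$ rests on the assertion that $(P^\star,V)(x)$ and $\hat S(x)$ ``agree except in the marginal on $r$,'' i.e., that they share the same conditional distribution of transcripts given the verifier's coins. This is false in general on NO instances: the HVPZK condition constrains $\hat S(x)$ only on YES instances, and on NO instances the simulator's prover messages may depend on the simulated coin value $r$ through information \emph{not} contained in the visible transcript. Your $P^\star$, which by construction sees only the visible transcript, cannot reproduce such a dependence, so the two joint distributions need not share a common conditional.

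Concretely, suppose the verifier's messages are constant (independent of $r$) and the verifier accepts iff the prover's reply equals $f(r)$ for a balanced function $f$, so that soundness holds (any actual prover, seeing only the constant message, succeeds with probability $\negl(|x|)$). On a NO instance, nothing prevents the simulator from outputting $(r, t_1, t_2) = (r, t_1, f(r))$ with $r$ uniform. Then $p_x = U_R$ exactly and $\mu(x) = 1$, so your machine accepts with probability $\tfrac12 + \tfrac{\alpha}{2} > \tfrac12$ even though $x \notin L$; the bound $\mu(x) - \mu_{\mathrm{cheat}}(x) \le \lone{p_x - U_R}$ fails with margin roughly $1$ because the right-hand side is $0$ while $\mu_{\mathrm{cheat}}(x) = \negl(|x|)$. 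In your final paragraph you flag that you need to ``verify that the conditional can be implemented round-by-round as a valid prover strategy,'' but that step is not a quantitative check to be filled in later: it is a substantive property of the simulator that can simply fail on NO instances, and when it fails the whole inequality collapses.

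What is missing is exactly the second condition of the paper's Claim~\ref{claim:valid-sim}: the $\PP$ machine must additionally test that the simulated prover's message at each round is conditionally independent of the simulated verifier randomness $R_V(x)$ given the messages exchanged so far. The paper's machine $M_P$ does this with a separate family of $\ell_2$-tests (the branches with $i \ge 1$) that compare $\Pr[S_{2i}(x) = (r_V^1, t_1, \dots, t_{2i})]$ against the same quantity with $r_V^2$ for randomly chosen $r_V^1, r_V^2$. Your $\overline{M_2}$ component handles only the marginal-on-$r$ and acceptance-probability checks; without the per-round conditional-independence tests, the characterization of membership is wrong, not merely the quantitative accounting. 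Adding those tests essentially recovers the paper's argument.
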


\begin{proof}
    Let $L$ be a language with an $\HVPZK$ proof system $(P,V,S)$. We will show how to decide membership in $L$ in $\PP$. Fix any input length $n$, and let the number of messages in the proof system for any input of this length be $m$, and the length of each message be $\ell$ (these are without loss of generality). Also suppose that the first message is always sent by the verifier. Let the number of random bits used by $V$ on an input of length $n$ be $v$, and the number of random bits used by $S$ be $s$.

    For any $x\in \bset^n$, we write the output of the simulator $S$ on input $x$ using randomness $r$ as $S(x;r) = (R_V(x;r), T_1(x;r), \dots, T_m(x;r))$, where $R_V$ is the simulated randomness of the verifier, and $T_i$ is the simulated $i$th message in the protocol. Let $S_i$ denote $S$ truncated at $T_i$. Denote by $V_S$ the verifier simulated by $S$, and by $P_S$ the simulated prover, both conditioned on the simulator not failing.

    \begin{claim}
        \label{claim:valid-sim}
        An input $x$ is in $L$ if and only if the following three conditions are satisfied:
        \begin{enumerate}
          \item $V_S$ on input $x$ behaves like the actual verifier $V$. This involves the following:
            \begin{itemize}
              \item $R_V(x)$, conditioned on not being $\bot$, is distributed uniformly over $\bset^v$.
              \item For any non-failing transcript $(r_V, t_1, \dots, t_m)$ output by $S(x)$, the verifier's responses in $(t_1, \dots, t_m)$ are consistent with what $V$ would have sent when using $r_V$ as randomness.
            \end{itemize}
          \item $P_S$ on input $x$ is a valid prover.
            \begin{itemize}
              \item This means that the distribution of the prover's simulated messages $(T_{2i}(x))$ should depend only on the messages in the transcript so far $(T_1(x), \dots, T_{2i-1}(x))$, and should be independent of the verifier's simulated randomness $(R_V(x))$.
            \end{itemize}
          \item $S(x)$ is an accepting transcript with probability at least ${3}/{4}$.
        \end{enumerate}
    \end{claim}

    For any $x\in L$, the transcript of the actual protocol satisfies the above properties, and so does the simulation, since it is perfect conditioned on not failing.
    
    The other direction follows on noting that if all three conditions are satisfied for some $x$, then $P_S$ is a prover strategy that convinces the actual verifier $V$ that $x\in L$. By the soundness of the $(P,V)$ proof system, this can only happen if $x$ is indeed in $L$.

    So to decide the membership of $x$ in $L$ in $\PP$, it is sufficient to be able to decide each of the above three properties of $S(x)$ in $\PP$ (since $\PP$ is closed under conjunction \cite{BRS91}). Of these, property (3) is easily seen to be decidable in $\BPP$, and hence in $\PP$.

    Lemma~\ref{lem:pp-l2} says, in particular, that testing whether two polynomial-time-samplable distributions $p$ and $q$ are identical can be done in $\PP$. Let $U_S(x)$ be the distribution sampled by first running $S(x)$, outputting $\bot$ if it fails, and a uniform sample from $\bset^v$ if it doesn't. The first check on $V_S$ is the same as checking whether $R_V(x)$ is identical to $U_S(x)$. The other check required on $V_S$ is a $\co\NP$ statement, and hence can be done in $\PP$.

    Let $M_2$ be the TM from Lemma~\ref{lem:pp-l2}. To check that $P_S$ is a valid prover, consider the TM -- call it $M_P$ -- that works as follows on input $x$.
    \begin{enumerate}
      \item Select $i \in \left[-1,\frac{m}{2}\right]$ at random.
      \item If $i = 0$, run $M_2$ on the distributions $R_V(x)$ and $U_S(x)$.
      \item If $i = -1$, check the consistency of transcripts produced by $S(x)$ with the simulated randomness.
        \begin{itemize}
          \item This is done by selecting $r_S \in \bset^s$, and running $S(x;r_S)$ to get $(r_V, t_1, \dots, t_m)$.
          \item If this transcript is failing or consistent, accept with probability $1/2$, else with probability $1$.
        \end{itemize}
      \item Else, select at random $t_1, \dots, t_{2i} \in \bset^\ell$, $r_V^1, r_V^2 \in \bset^v$, and $r_S^1, r_S^2 \in \bset^s$.
      \item If $S(x;r_S^1)$ does not have $(r_V^1, t_1, \dots, t_{2i-1})$ as a prefix or $S(x;r_S^2)$ does not have $(r_V^2, t_1, \dots, t_{2i-1})$ as a prefix, accept with probability ${1}/{2}$.
      \item Let $p$ be the distribution over $\bset$ such that $p(1) = \Pr[S_{2i}(x) = (r_V^1, t_1, \dots, t_{2i})]$, and $q$ be the same but with $r_V^2$ instead of $r_V^1$.
      \item Run $M_2$ on the distributions $p$ and $q$.
    \end{enumerate}

    \begin{claim}
        \label{claim:in-pp}
        $M_P(x)$ accepts with probability at most $\frac{1}{2}$ if and only if $V_S$ is a valid verifier and $P_S$ is a valid prover on input $x$.
    \end{claim}

    Suppose $V_S$ is a valid verifier and $P_S$ is a valid prover on input $x$. If $M_P$ selects $i=0$ or $i=-1$, then it accepts with probability $\frac{1}{2}$ because $V_S$ is a valid verifier. 

    If $i\notin \{-1,0\}$, and $M_P$ picks $r_V^1, r_V^2, t_1, \dots, t_{2i}$. If this fails the check in step 5, then $M_P$ again accepts with probability $1/2$. If this does not happen and $r_V^1, r_V^2, t_1, \dots, t_{2i-1}$ are in the support of $S_{2i-1}(x)$,
    \begin{align*}
      \Pr[S_{2i}(x) = (r_V^1, &t_1, \dots, t_{2i})]\\ &= \Pr[S_{2i-1}(x) = (r_V^1, t_1, \dots, t_{2i-1})] \Pr[T_{2i}(x) = t_{2i}\ |\ S_{2i-1}(x) = (r_V^1, t_1, \dots, t_{2i-1})] \\
                                                   &= \Pr[S_{2i-1}(x) = (r_V^1, t_1, \dots, t_{2i-1})] \Pr[T_{2i}(x) = t_{2i}\ |\ S_{2i-1}(x) = (r_V^2, t_1, \dots, t_{2i-1})]
    \end{align*}
    where the second equality is because $P_S$ is a valid prover, so its responses do not depend on the simulated randomness of the verifier. We can write the first term in the product above as:
    \begin{align*}
      \Pr[S_{2i-1}(x) = (&r_V^1, t_1, \dots, t_{2i-1})] \\ &= \Pr[S_{2i-2}(x) = (r_V^1, t_1, \dots, t_{2i-2})]\Pr[T_{2i-1}(x) = t_{2i-1}\ |\ S_{2i-2}(x) = (r_V^1, t_1, \dots, t_{2i-2})]\\
                                                       &=  \Pr[S_{2i-2}(x) = (r_V^1, t_1, \dots, t_{2i-2})]
    \end{align*}
    where the second equality is because $V_S$ is a valid verifier and is deterministic once $R_V$ is fixed, and step 5 was there precisely to check that this probability is non-zero.

    Now starting from the fact that $\Pr[S_0(x) = r_V^1] = \Pr[S_0(x) = r_V^2]$, and using the above relationships, we can inductively prove that $\Pr[S_{2i}(x) = (r_V^1, t_1, \dots, t_{2i})] = \Pr[S_{2i}(x) = (r_V^2, t_1, \dots, t_{2i})]$. This implies that the call to $M_2$ in step 7 of $M_P$ accepts with probability $1/2$, as the distributions $p$ and $q$ there are identical. So in all cases, $M_P$ accepts with probability $1/2$.

    To prove the converse, we start by noting that each branch of $M_P$ always accepts with probability $1/2$ or more. So even if one of the branches accepts with probability strictly more than $1/2$, the acceptance probability of $M_P$ as a whole will be strictly more than $1/2$.

    Now suppose $V_S$ is not a valid verifier. Then $M_P$ would accept with probability strictly more than $1/2$ because either $i=0$ or $i=-1$ would accept with probability more than $1/2$.

    The remaining case is where $V_S$ is a valid verifier but $P_S$ is not a valid prover. This means that at some point the distribution of $P_S$'s responses depended on the simulated verifier's randomness. Specifically, there must exist an $i\in [m/2]$ and $r_V^1, r_V^2, t_1, \dots, t_{2i}$ such that $(\{r_V^1, r_V^2\}, t_1, \dots, t_{2i-1})$ are in the support of $S_{2i-1}(x)$ and:
    \begin{align*}
      \Pr[T_{2i}(x) = t_{2i}\ |\ S_{2i-1}(x) = (r_V^1, t_1, \dots, t_{2i-1})] \neq \Pr[T_{2i}(x) = t_{2i}\ |\ S_{2i-1}(x) = (r_V^2, t_1, \dots, t_{2i-1})]
    \end{align*}
    For this $r_V^1$ and $r_V^2$, let $i_0$ be the least $i$ such that there exist $t_1, \dots, t_{2i_0}$ where such an inequality holds. $i_0$ being the smallest such $i$ implies, by the same induction arguments above and the validity of $V_S$ as a verifier, that:
    \begin{align*}
      \Pr[S_{2i_0-1}(x) = (r_V^1, t_1, \dots, t_{2i_0-1})] = \Pr[S_{2i_0-1}(x) = (r_V^2, t_1, \dots, t_{2i_0-1})]     
    \end{align*}
    Putting the above two relations together, we get:
    \begin{align*}
      \Pr[S_{2i_0}(x) = (r_V^1, t_1, \dots, t_{2i_0})] \neq \Pr[S_{2i_0}(x) = (r_V^2, t_1, \dots, t_{2i_0})]     
    \end{align*}
    So when $M_P$ chooses $i = i_0$ and these values of $r_V^1, r_V^2$ and $t_1, \dots, t_{2i_0}$, it will accept with probability strictly greater than $1/2$, and so it will do so overall as well. This proves Claim~\ref{claim:in-pp}.

    Due to the fact that $\PP$ is closed under complement and Claim~\ref{claim:in-pp}, we have now established that the conditions in Claim~\ref{claim:valid-sim} can be checked in $\PP$. And so by Claim~\ref{claim:valid-sim}, $L$ can be decided in $\PP$. It is also easy to see that this proof still works relative to any oracle, as it only makes black-box use of $S$. 
\end{proof}

The following theorem follows immediately from Corollary~\ref{cor:niszk-pp} and Lemma~\ref{lem:pzk-pp}.

\begin{theo}[]
    \label{thm:niszk-pzk}
    There is an oracle $\Oracle$ such that $\NISZK^\Oracle \not\subseteq \HVPZK^\Oracle$. Consequently, $\SZK^\Oracle \not\subseteq \PZK^\Oracle$ and $\NISZK^\Oracle \not\subseteq \NIPZK^\Oracle$.
\end{theo}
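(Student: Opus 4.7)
The plan is to derive this theorem as an essentially immediate corollary of the two preceding results, namely Theorem~\ref{mainfcor} (which yields an oracle $\Oracle$ with $\NISZK^{\Oracle} \not\subseteq \UPP^{\Oracle}$) and Lemma~\ref{lem:pzk-pp} (which establishes $\HVPZK \subseteq \PP$ in a relativizing manner). First I would fix $\Oracle$ to be the oracle produced by Theorem~\ref{mainfcor}. Then I would chain the containments $\HVPZK^{\Oracle} \subseteq \PP^{\Oracle} \subseteq \UPP^{\Oracle}$, the first coming from Lemma~\ref{lem:pzk-pp} (applied with the oracle $\Oracle$) and the second being the trivial relativizing inclusion of $\PP$ in $\UPP$. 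Combining these, if we had $\NISZK^{\Oracle} \subseteq \HVPZK^{\Oracle}$, we could conclude $\NISZK^{\Oracle} \subseteq \UPP^{\Oracle}$, contradicting the choice of $\Oracle$. Hence $\NISZK^{\Oracle} \not\subseteq \HVPZK^{\Oracle}$, which is the main assertion.

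For the two stated consequences, I would use the relativizing inclusions $\PZK \subseteq \HVPZK$, $\NIPZK \subseteq \HVPZK$, and $\NISZK \subseteq \SZK$ (these hold with respect to every oracle, as noted in Section~\ref{sec:comp-classes} and in Definition~\ref{def:pzkclass}). If $\SZK^{\Oracle} \subseteq \PZK^{\Oracle}$, then $\NISZK^{\Oracle} \subseteq \SZK^{\Oracle} \subseteq \PZK^{\Oracle} \subseteq \HVPZK^{\Oracle}$, contradicting what we just established; so $\SZK^{\Oracle} \not\subseteq \PZK^{\Oracle}$. Similarly, if $\NISZK^{\Oracle} \subseteq \NIPZK^{\Oracle}$, then $\NISZK^{\Oracle} \subseteq \NIPZK^{\Oracle} \subseteq \HVPZK^{\Oracle}$, again a contradiction, so $\NISZK^{\Oracle} \not\subseteq \NIPZK^{\Oracle}$.

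There is no genuine obstacle here; all the work has already been done in Theorem~\ref{mainfcor} (the hardness amplification core, via $\Gapmajority$ composed with $\Collision$) and in Lemma~\ref{lem:pzk-pp} (the simulation-based argument placing $\HVPZK$ inside $\PP$ via the $\ell_2$ testing machine $M_2$ of Lemma~\ref{lem:pp-l2}). The only point to be careful about is ensuring that both ingredients apply uniformly to the same oracle $\Oracle$ and that the inclusions $\PZK \subseteq \HVPZK$, $\NIPZK \subseteq \HVPZK$, $\NISZK \subseteq \SZK$, and $\PP \subseteq \UPP$ all relativize, which they do by inspection of the definitions.
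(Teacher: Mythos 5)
Your proposal is correct and is precisely the argument the paper uses: the paper states that this theorem "follows immediately from Corollary~\ref{cor:niszk-pp} and Lemma~\ref{lem:pzk-pp}," and the chain $\HVPZK^\Oracle \subseteq \PP^\Oracle \subseteq \UPP^\Oracle$ together with the relativizing inclusions $\PZK \subseteq \HVPZK$, $\NIPZK \subseteq \HVPZK$, and $\NISZK \subseteq \SZK$ is exactly how the authors derive the stated consequences (cf. Section~\ref{sec:pzkintro}).
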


%Arguments using an oracle constructed slightly differently give us the following theorem, whose proof may be found in Appendix~\ref{sec:others2}. The proof uses our hardness-amplification theorem for positive one-sided approximate degree under composition with the $\mathsf{GapAND}$ operator (Part 2 of Theorem \ref{mainfthm}).

\subsection{A Relativized Separation of $\PZK$ and $\coPZK$}
\begin{theo}[]
    \label{thm:pzk-copzk}
    There is an oracle $\Oracle$ such that $\PZK^\Oracle \neq \coPZK^\Oracle$.
\end{theo}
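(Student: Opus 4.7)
The plan is to establish $\coPZKdt \not\subseteq \PZKdt$ at the query-complexity level and then lift this to the desired oracle separation via the same diagonalization used to pass from Theorem~\ref{theo:sep} to Theorem~\ref{mainfcor}. The witness function will be $\bar f$ for some $f \in \PZKdt$ whose complement has positive one-sided approximate degree $N^{\Omega(1)}$ at some error level $\epsilon_2 > 0.49$. The base case is $\PTP_N$: the Collision-style protocol of Section~\ref{sec:ptpprotocol} is in fact perfect zero knowledge, because the honest prover always convinces the verifier on a permutation, and the simulator can sample an index $i$ uniformly, query the oracle once to learn $\PTP(i)$, and output the pair $(i, \PTP(i))$ to generate a transcript identically distributed to an honest interaction. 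From Theorem~\ref{theo:ptp} we have $\odegap_{1/6}(\overline{\PTP}_N) = \Omega(N^{1/3})$, and a standard amplification of positive one-sided approximate degree (via repetition with appropriate outer combining functions, while keeping membership of $f$ in $\PZKdt$ through the closure property below) boosts this to $\odegap_{\epsilon_2}(\bar f) = N^{\Omega(1)}$ for any constant $\epsilon_2 < 1/2$, as needed to invoke Theorem~\ref{thm:lift-UPP}.

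The core structural ingredient is that $\PZKdt$ is closed under $\Gapand_{n,\epsilon}$-composition. Given a PZK protocol for $h \in \PZKdt$, the composed protocol has the verifier select $k = O(\log n)$ indices $i_1,\dots,i_k$ uniformly at random and run the underlying PZK protocol for $h$ (with its inner soundness error driven down to $1/\poly(n)$ by sequential repetition, which preserves PZK) on blocks $x_{i_1},\dots,x_{i_k}$, accepting iff every subprotocol accepts. Perfect completeness on yes-instances of $\Gapand_{n,\epsilon}(h)$ holds because every block is then an $h$-yes-instance. Soundness follows because in a no-instance at least an $\epsilon$-fraction of blocks are $h$-no-instances, so with probability $1 - (1-\epsilon)^k \ge 1 - 1/\poly(n)$ at least one selected block triggers the boosted inner soundness and causes rejection. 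Zero knowledge is preserved because on yes-instances every chosen block is itself an $h$-yes-instance, so the inner PZK simulator produces a perfect simulation per block; composing these simulators with a union bound over their failure events keeps the aggregate simulator failure probability below $1/2$ while maintaining perfect agreement with the real transcript distribution conditioned on success.

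With these pieces in hand, assume for contradiction that $\bar f \in \PZKdt$. Then $G := \Gapand_{n,\epsilon}(\bar f) \in \PZKdt$ by the closure result. Applying the second part of Theorem~\ref{thm:lift-UPP} with $n = \Theta(N^{1/3})$ and $\epsilon$ close enough to $1/2$ that the factor $(1 - (1 + 10/a)\epsilon)$ is positive, we obtain $\degsh(G) = N^{\Omega(1)}$, and Lemma~\ref{lm:tdeg-to-uppdt} then forces $\UPPdt(G) = N^{\Omega(1)}$. In particular $G \notin \PPdt$. On the other hand, the proof of Theorem~\ref{lem:pzk-pp} fully relativizes and adapts to the query-complexity setting, giving $\PZKdt \subseteq \PPdt$, which would place $G$ in $\PPdt$. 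This contradiction shows $\bar f \notin \PZKdt$, and hence $\bar f \in \coPZKdt \setminus \PZKdt$, yielding the desired query separation. The standard diagonalization against enumerations of oracle $\PZK$ and $\coPZK$ machines then produces an oracle $\Oracle$ with $\PZK^\Oracle \ne \coPZK^\Oracle$. The main obstacle I anticipate is verifying the closure of $\PZKdt$ under $\Gapand$: one must argue carefully that composition of PZK subprotocols preserves \emph{perfect} (not merely statistical) zero knowledge, handle the inner simulators' failure probabilities correctly, and confirm that the verifier's random choice of indices reveals nothing about the input beyond the statement being proven. A secondary but nontrivial technical step is the positive one-sided approximate degree amplification from error $1/6$ up to a parameter above $0.49$, which must be performed so that the complemented amplified function remains in $\PZKdt$.
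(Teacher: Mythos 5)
Your overall skeleton matches the paper's: find $f \in \PZKdt$ with $\odegap(\bar f)$ polynomial, hit $\bar f$ with $\Gapand$ to push it outside $\PPdt$, and use $\HVPZK \subseteq \PP$ (relativizing) to get a contradiction if $\bar f \in \PZKdt$. But there is a genuine gap in the step that establishes $f = \PTP_N \in \PZKdt$ in the first place.

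You assert that the interactive protocol of Section~\ref{sec:ptpprotocol} is ``in fact perfect zero knowledge'' by exhibiting a simulator that samples $i$ uniformly and outputs the honest transcript. That argument only establishes \emph{honest-verifier} perfect zero knowledge. The class $\PZK$ requires a simulator for every (possibly malicious) verifier $V^*$. In this protocol, $V^*$ may ignore its prescribed coin flips and send an adversarially chosen $y$; the honest prover then responds with $f^{-1}(y)$. A simulator with only $\polylog(n)$ queries to $f$ cannot, in general, compute $f^{-1}(y)$ for a $y$ it did not choose itself, and the ``sample $x$ at random and hope $f(x)=y$'' trick has failure probability $1 - 1/n$, too large for the standard amplification. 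So your direct argument does not place $\PTP_N$ in $\PZKdt$. The paper avoids this by first showing $\PTP_N \in \NIPZKdt$ (Lemma~\ref{lem:ptp-pzk}): in the non-interactive model the challenge $r$ comes from a common random string that the verifier does not control, and the simulator can invert the sampling by choosing $x$ first and setting $r = f(x)$. It then invokes Lemma~\ref{lem:nipzk-pzk} (the $\NIPZK \to \PZK$ transformation of~\cite{DGOW95}) to get a bona fide $\PZK$ protocol with security against malicious verifiers. That transformation is a real ingredient, and your proof needs it or something equivalent; without it you have only shown $\PTP_N \in \HVPZKdt$, which does not combine with the hypothesis ``$\PZK = \coPZK$ relative to all oracles'' to give $\overline{\PTP}_N \in \PZKdt$.

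Two smaller remarks. First, your ``standard amplification of positive one-sided approximate degree via repetition with appropriate outer combining functions'' is both unnecessary and risky: modifying the function would indeed threaten the $\PZKdt$-membership argument, as you yourself worry. The correct and much cheaper fact is Claim~\ref{claim:const-ok}: for any constant $\epsilon \in (0,1/2)$, $\odegap_{\epsilon}(g) = \Theta(\odegap(g))$, with no change to $g$. This is what allows $\odegap_{1/6}(\overline{\PTP}_N) = \Omega(N^{1/3})$ from Theorem~\ref{theo:ptp} to serve as a lower bound on $\odegap_{\epsilon_2}(\overline{\PTP}_N)$ for $\epsilon_2 > 0.49$, which is what Theorem~\ref{theo:lift-UPP} actually requires. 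Second, proving closure of $\PZKdt$ under $\Gapand$ is a stronger statement than is needed and again forces you to argue carefully about malicious verifiers; the paper only needs (and only proves, in Lemma~\ref{lem:pzk-closure}) closure of $\HVPZKdt$ under $\Gapand$, which suffices because $\PZKdt \subseteq \HVPZKdt$ and the contradiction is derived from $\HVPZK \subseteq \PP$.
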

\begin{proof}
In order to prove Theorem~\ref{thm:pzk-copzk}, we first show that $\HVPZK$ is closed under ``composition'' with $\mathsf{GapAND}$.

\begin{lemma}
    \label{lem:pzk-closure}
    Let $f:D \to \{0,1\}$ with $D \in \{0,1\}^M$ be a partial function and $n$ be a positive integer, $1/2 < \epsilon < 1$ be a constant. If $f$ has a $\polylog(M)$-time $\HVPZK$ protocol, then $\GapAND{f}{n}{\epsilon}$ has a $\polylog(nM)$-time $\HVPZK$ protocol.
\end{lemma}

\begin{proof}
    For convenience, denote $\GapAND{f}{n}{\epsilon}$ by $g$. Given an $\HVPZK$ protocol $(P,V,S)$ for $f$, we will construct an $\HVPZK$ protocol $(P',V',S')$ for $g$. Given an input $x = (x_1, \dots, x_n)$ for $g$, $V'$ selects, say, $\log^2(n)$ values of $i \in [n]$, and $P'$ and $V'$ run the interactive protocol $(P,V)$ on each of the corresponding $x_i$'s independently. $V'$ accepts if and only if $(P,V)$ accepts on all these $x_i$'s. Completeness and soundness follows easily from standard arguments and the definition of $g$.

    On a similar input, the simulator $S'$ simply selects the same number of $i$'s, and runs $S$ on the corresponding $x_i$'s. Since in a YES instance all of the $x_i$'s are such that $f(x_i) = 1$, $S$ simulates the transcripts of $(P,V)$ on all of these exactly, except with a negligible probability when it fails on one or more of these. Hence $S'$ simulates $(P',V')$ exactly as well, again failing only with a negligible probability.
\end{proof}

We will need the following implication of the constructions in~\cite{DGOW95}.

\begin{lemma}[Implied by \cite{DGOW95}]
    \label{lem:nipzk-pzk}
    Any partial Boolean function that has a $\polylog(n)$-time $\NIPZK$ protocol also has a $\polylog(n)$-time $\PZK$ protocol.
\end{lemma}

We will use the function $\PTP_n$ (cf. Definition~\ref{defi:ptp}) to establish our separation. The following are immediate consequences of Theorems~\ref{theo:ptp} and \ref{theo:lift-UPP} and Lemma~\ref{lm:dega-to-ppdt}.
\begin{cor}
    \label{cor:ptp}
      $\PPdt(\GapAND{\overline{\PTP}_n}{n}{7/8}) = \Omega(n^{1/3})$
\end{cor}

\begin{lemma}
    \label{lem:ptp-pzk}
    $\PTP_n$ has a $\polylog(n)$-time $\PZK$ protocol.
\end{lemma}

\begin{proof}
    We will show this by presenting a $\polylog(n)$-time $\NIPZK$ protocol for $\PTP_n$ and invoking Lemma~\ref{lem:nipzk-pzk}. The protocol is very similar to the one described in Section \ref{sec:ptpprotocol}. Given a function $f:[n]\rightarrow [n]$ as input, an $r\in [n]$ is chosen at random using the common random string. $P$ is then supposed to send an $x$ to $V$ such that $f(x) = r$. $V$ accepts if this is true. Completeness, soundness and perfect zero-knowledge are all easily argued using the definition of $\PTP_n$.
\end{proof}

Now we have everything we need to prove Theorem~\ref{thm:pzk-copzk}. Suppose $\PZK^\Oracle = \coPZK^\Oracle$ with respect to all oracles $\Oracle$. This implies that any language that is in $\polylog(n)$-time $\PZK$ is also in $\polylog(n)$-time $\coPZK$, and vice versa -- if this were not true for some language, then we would be able to use that language to construct an oracle that separates the two classes by diagonalization. In particular, this hypothesis and Lemma~\ref{lem:ptp-pzk} imply that $\overline{\PTP}_n$ has a $\polylog(n)$-time $\PZK$ (and hence $\HVPZK$) protocol. Then, by Lemma~\ref{lem:pzk-closure}, $\GapAND{\overline{\PTP}_n}{n}{7/8}$ has a $\polylog(n)$-time $\HVPZK$ protocol.

    This fact, along with the lower bound in Corollary~\ref{cor:ptp}, can be used to construct an oracle separating $\HVPZK$ from $\PP$ by standard diagonalization. But by Lemma~\ref{lem:pzk-pp}, such an oracle cannot exist. So there has to be some oracle separating $\PZK$ and $\coPZK$, completing the proof of Theorem \ref{thm:pzk-copzk}.
\end{proof}

An argument identical to the proof of Theorem~\ref{thm:pzk-copzk} (without the need to invoke Lemma \ref{lem:nipzk-pzk})
shows that the same oracle separates $\NIPZK$ and $\coNIPZK$, as well as $\HVPZK$ and $\coHVPZK$. %We mention that a similar Arguments very similar to the proof of Theorem~\ref{thm:pzk-copzk} can also be used to show the existence of an oracle $\Oracle$ such that $\HVPZK^\Oracle \neq \coHVPZK^\Oracle$,

\begin{theo}
\label{thm:nipzk} The oracle $\Oracle$ witnessing Theorem \ref{thm:pzk-copzk} also satisfies $\NIPZK^\Oracle \neq \coNIPZK^{\Oracle}$, as well as $\HVPZK^\Oracle \neq \coHVPZK^{\Oracle}$.
\end{theo}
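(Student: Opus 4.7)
The plan is to mimic the proof of Theorem~\ref{thm:pzk-copzk} almost verbatim, using only the fact that $\NIPZK$ and $\PZK$ are both contained in $\HVPZK$ relative to every oracle. The reason Lemma~\ref{lem:nipzk-pzk} is not needed is that, in the original argument, Lemma~\ref{lem:nipzk-pzk} was only used to convert the natural $\NIPZK$ protocol for $\PTP_n$ constructed in the proof of Lemma~\ref{lem:ptp-pzk} into a $\PZK$ protocol; but if we are trying to separate $\NIPZK$ from $\coNIPZK$ (or $\HVPZK$ from $\coHVPZK$), the natural $\NIPZK$ protocol for $\PTP_n$ already lives in the class we care about, so no such conversion is necessary.

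For the $\NIPZK \neq \coNIPZK$ separation, I would proceed as follows. First, recall from the proof of Lemma~\ref{lem:ptp-pzk} that $\PTP_n$ admits a $\polylog(n)$-time $\NIPZK$ protocol directly (the $r\in[n]$ chosen from the common random string serves as the single message from $V$ to $P$ in the non-interactive setting). Now suppose, for contradiction, that $\NIPZK^\Oracle = \coNIPZK^\Oracle$ holds relative to every oracle $\Oracle$; by the standard diagonalization argument, this implies that every language admitting a $\polylog(n)$-time $\NIPZK$ protocol also admits a $\polylog(n)$-time $\coNIPZK$ protocol (and vice versa). In particular, $\overline{\PTP}_n$ has a $\polylog(n)$-time $\NIPZK$ protocol, and hence a $\polylog(n)$-time $\HVPZK$ protocol since $\NIPZK \subseteq \HVPZK$ relative to any oracle. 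Then Lemma~\ref{lem:pzk-closure} implies $\GapAND{\overline{\PTP}_n}{n}{7/8}$ has a $\polylog(n)$-time $\HVPZK$ protocol. Combined with Corollary~\ref{cor:ptp} (which yields $\PPdt(\GapAND{\overline{\PTP}_n}{n}{7/8}) = \Omega(n^{1/3})$) and standard diagonalization, this produces an oracle relative to which $\HVPZK \not\subseteq \PP$, contradicting Lemma~\ref{lem:pzk-pp}.

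For the $\HVPZK \neq \coHVPZK$ separation, the argument is identical except that we directly assume $\HVPZK^\Oracle = \coHVPZK^\Oracle$ for every $\Oracle$, use $\PTP_n \in \NIPZK \subseteq \HVPZK$ to conclude $\overline{\PTP}_n \in \HVPZK$ (by diagonalization as above), and then apply Lemma~\ref{lem:pzk-closure} and Corollary~\ref{cor:ptp} to derive the same contradiction with Lemma~\ref{lem:pzk-pp}.

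There is no serious obstacle here: every ingredient already appears in the proof of Theorem~\ref{thm:pzk-copzk}. The only minor point to verify is that Lemma~\ref{lem:pzk-closure} really does apply when the inner protocol is non-interactive or merely honest-verifier; but this is immediate because the lemma is stated for $\HVPZK$ and its proof only uses the honest-verifier simulator of $(P,V,S)$ as a black box, independently of how interactive the underlying protocol is. Thus the same oracle $\Oracle$ produced in the proof of Theorem~\ref{thm:pzk-copzk} witnesses both $\NIPZK^\Oracle \neq \coNIPZK^\Oracle$ and $\HVPZK^\Oracle \neq \coHVPZK^\Oracle$.
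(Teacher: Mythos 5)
Your proposal is correct and takes essentially the same route as the paper, whose entire justification for this theorem is the single sentence that the argument is identical to the proof of Theorem~\ref{thm:pzk-copzk} minus the invocation of Lemma~\ref{lem:nipzk-pzk}; you have correctly identified that Lemma~\ref{lem:nipzk-pzk} was only needed to upgrade the $\NIPZK$ protocol for $\PTP_n$ to a $\PZK$ protocol, which is unnecessary when the target class already contains $\NIPZK$. (One small slip in your narration: in the $\NIPZK$ protocol for $\PTP_n$, the common random string determines $r$ but the single message is sent from $P$ to $V$, not from $V$ to $P$; this does not affect the argument.)
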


%Arguments very similar to the proof of Theorem~\ref{thm:pzk-copzk} can also be used to show the existence of an oracle $\Oracle$ such that $\HVPZK^\Oracle \neq \coHVPZK^\Oracle$, or that $\NIPZK^\Oracle \neq \coNIPZK^\Oracle$ ($\NIPZK$ is to $\PZK$ as $\NISZK$ is to $\SZK$).

Combining Theorems \ref{lem:pzk-pp}, \ref{thm:niszk-pzk}, \ref{thm:pzk-copzk}, and \ref{thm:nipzk} yields Theorem \ref{thm:pzklimit} from Section \ref{sec:pzkintro}.

\label{sec:pzk}

%%% Local Variables: 
%%% mode: latex
%%% TeX-master: "main"
%%% End: 

\section{Communication Separation Between $\NISZK$ and $\UPP$}\label{sec:lowerbound-cc}
\label{sec:cc}
%%% Local Variables: 
%%% mode: latex
%%% TeX-master: "main"
%%% End: 
Based on the framework of Razborov and Sherstov~\cite{razborovsherstov}, and Bun and Thaler~\cite{bt16}, we are able to generalize Section \ref{sec:niszk-pp}'s separation between the query complexity classes $\NISZKdt$ and $\UPPdt$ to communication complexity. That is, we prove the following theorem (the communication complexity classes $\NISZKcc$ and $\UPPcc$ are formally defined in Appendix \ref{ccappendix}).
 
\begin{theo} \label{cctheorem} \label{THMCC}
$\NISZKcc \not\subset \UPPcc.$
\end{theo}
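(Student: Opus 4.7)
The plan is to lift the query-complexity separation from Section~\ref{sec:niszk-pp} to the communication-complexity setting by combining the orthogonalizing distribution produced by Theorem~\ref{theo:lift-UPP} with the pattern matrix method of Razborov and Sherstov~\cite{razborovsherstov}, together with the dual-smoothing framework developed by Bun and Thaler~\cite{bt16}.

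Step one is to replace the inner function $\Collision$ by $\PTP$. Since $\odegap_{1/6}(\overline{\PTP}_M) = \Omega(M^{1/3})$ (Theorem~\ref{theo:ptp}), a standard calculation shows $\dega(\PTP_M) = \Omega(M^{1/3})$ as well. Applying Theorem~\ref{theo:lift-UPP} with $f = \PTP_M$ and with parameters $n, \epsilon, \epsilon_2$ chosen as in Theorem~\ref{theo:sep} yields a function $F := \Gapmajority_n(\PTP_M)$ with $\degsh(F) = \tilde{\Omega}(\min(n, M^{1/3}))$, witnessed by the explicit orthogonalizing distribution $\psi = (\psi^+ - \corr^+) - (\psi^- - \corr^-)$ constructed in the proof of that theorem.

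Step two, which is the main technical step, is to modify $\psi$ so that it becomes \emph{smooth} in the sense of~\cite{razborovsherstov,bt16}, i.e.\ so that $|\psi(x)|$ is at least a $1/\poly(nM)$ fraction of the uniform weight at every $x$ in the domain of $F$. The reason for working with $\PTP$ rather than $\Collision$ is precisely that $\PTP$ supports the smoothing operation of Bun and Thaler: their construction requires the inner function's hard distribution to be close to uniform on appropriately structured subsets, and $\PTP$'s inputs are naturally uniform over permutations and over near-permutations. Concretely, I would adapt the Bun-Thaler procedure by smoothing the product terms $\psi^{\pm}$ (by convolving $\mu_{\pm}$ with a tiny amount of suitably structured uniform weight and absorbing the resulting loss in pure high degree), and then smoothing the correction terms $\corr^{\pm}$ by exploiting their explicit description in Lemma~\ref{lm:poly} as $\alpha^{\Acnt(x)} \cdot \psi^{\pm}(x) \cdot P(\Acnt(x))$ for an explicit low-degree univariate polynomial $P$. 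The symmetric structure of $\Gapmajority$ means the $+$ and $-$ sides are smoothed identically, which if anything simplifies matters relative to the asymmetric $\OR$-based construction of~\cite{bt16}.

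Step three is the pattern matrix lift: given a smooth orthogonalizing distribution witnessing $\degsh(F) = \tilde{\Omega}((nM)^{\Omega(1)})$, the standard sign-rank argument of~\cite{razborovsherstov,sherstov2011pattern} converts this into an $\tilde{\Omega}((nM)^{\Omega(1)})$ lower bound on $\UPPcc(F^\star)$, where $F^\star$ is the two-party function obtained by applying the pattern matrix transformation to $F$. Step four is to place $F^\star$ in $\NISZKcc$: composing the polylogarithmic-query $\NISZK$ protocol for $\PTP_M$ (Section~\ref{sec:ptpprotocol}) with the $\SDU$-style reduction used in Theorem~\ref{theo:GCOL-EASY} yields a polylogarithmic-query $\NISZK$ protocol for $F$, and because the pattern matrix transformation is designed so that each oracle query on an input to $F$ can be simulated by Alice and Bob with $O(\log(nM))$ bits of communication, this mechanically lifts to a polylogarithmic-cost $\NISZKcc$ protocol for $F^\star$.

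The main obstacle is step two: verifying that all four properties of our dual witness for $F$ (pure high degree, sign agreement with $F$, containment of the support of $\psi$ in the domain of $F$, and non-triviality) are preserved by the smoothing procedure. The Bun-Thaler analysis handles an analogous task for Sherstov's $\OR \circ f$ dual witness, and since our $\psi$ is structurally a product dual plus a Lemma~\ref{lm:poly}-type correction term, the analysis should transfer with careful bookkeeping rather than essentially new ideas; the one genuinely new subtlety is the need to handle promise-problem inputs on which $\PTP_M$ is undefined, which is exactly the issue the error-correction term was designed to address in step one and can be tracked through the smoothing step without difficulty.
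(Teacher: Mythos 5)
Your four-step roadmap is essentially the paper's: swap $\Collision$ for $\PTP$, smooth the orthogonalizing distribution for $\Gapmajority(\PTP)$ along Bun--Thaler lines, apply Razborov--Sherstov's pattern-matrix sign-rank theorem, and put the result in $\NISZKcc$ via reduction to an $\NISZK$-complete problem (the paper uses $\EA$; your $\SDU$ suggestion would also work). But the mechanism you sketch for the smoothing step would not work as stated, and you misidentify the reason $\PTP$ is needed. ``Convolving $\mu_{+}$ and $\mu_{-}$ with a tiny amount of suitably structured uniform weight and absorbing the resulting loss in pure high degree'' is not a viable plan: pure high degree is brittle under additive perturbation (a single nonzero low-degree Fourier coefficient destroys it), and there is no generic perturbation that simultaneously pushes the distribution up to $\Omega(2^{-O(d)}\cdot 2^{-nm})$ on nearly every input \emph{and} preserves orthogonality to all degree-$d$ polynomials. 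What the paper actually does (Theorem~\ref{theo:smooth-ort}) builds a smooth distribution from scratch rather than patching up $\psi$: after symmetrizing, it constructs targeted orthogonalizing distributions $\nu_w$ for points Hamming-close to a distinguished ``all-$c$'' point (Claim~\ref{claim:close}), extends to far-away points $v$ by adding a correction built from Lemma~\ref{lemma15} (Claim~\ref{claim:far}), and averages. Your idea of separately smoothing $\corr^{\pm}$ via their description in Lemma~\ref{lm:poly} plays no role in this construction.

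Second, the obstruction to using $\Collision$ is not the (non)uniformity of its hard distribution. Theorem~\ref{theo:rs-sign-rank-lowb} requires the orthogonalizing distribution to satisfy $\mu(x)\ge 2^{-\alpha d}2^{-nm}$ for all but a $2^{-\Omega(d)}$ fraction of \emph{all} $x\in\{-1,1\}^{nm}$; since $\mu$ is supported on $\zeroset(h_m)$ (inputs on which every copy of the inner function is $\False$), this forces the inner function to be defined and equal to $\False$ on all but a $2^{-\Omega(d)}$ fraction of its domain. $\Collision$ is \emph{undefined} on nearly all inputs, so it fails this outright; $\PTP$ evaluates to $\False$ on nearly all inputs (Lemma~\ref{lm:most-false}), which is the actual reason it is substituted. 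Relatedly, your step one produces only a dual witness to high approximate degree, but the smoothing needs more: a \emph{dual object} in the sense of Definition~\ref{defi:dual-obj}, whose symmetrization places mass at least $\eta$ on a specific distinguished point $z_+$ (this is what seeds Claim~\ref{claim:close}). This extra metric condition is not implied by $\dega(\PTP)=\Omega(M^{1/3})$, and the paper establishes it for $\PTP$ by a separate primal argument about weak one-sided approximations (Lemma~\ref{lm:no-weak-approx} and Corollary~\ref{cor:cond}); your proposal does not supply it.
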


In light of Razborov and Sherstov's framework, proving the above theorem boils down to identifying some $f$ in $\NISZKdt$ such that $\degsh(\Gapmajority(f))$ is large, and moreover there is a dual witness to this fact that satisfies an additional \emph{smoothness} condition.
Unfortunately, the dual witness for $\Gapmajority(f)$ constructed in Theorem~\ref{theo:lift-UPP} is not smooth.  %for some $f$ has an efficient $\NISZK$ protocol. 

Bun and Thaler described methods for ``smoothing out'' certain dual witnesses. However, their
methods were specifically described in the context of functions of the form $\OR(f)$, while
we must consider functions of the form $\Gapmajority(f)$. %That is,
%they considered the ``hardness-amplifying'' function to be $\OR$ rather than $\Gapmajority$. 
Nonetheless, we are able to apply Bun and Thaler's methodology to smooth out the 
dual witness for $\Gapmajority(f)$ that we constructed in Theorem~\ref{theo:lift-UPP}, for a non-trivial class of functions $f$. We thereby obtain the claimed separation of Theorem \ref{THMCC}.
While the proof of this result largely follows the same lines of Bun and Thaler's, 
there are many details and a few subtleties to work through. We present the proof in Appendix~\ref{sec:prelimcc} for completeness.

\section{Additional Consequences}
\label{sec:consequences}
\subsection{Consequences In Structural Complexity}
Theorem \ref{mainfcor}'s oracle separation between $\NISZK$ (or $\SZK$) and $\PP$ also answers a number of open questions in structural complexity; for instance it trivially implies oracle separations between $\SZK$ and $\BPPpath$, as well as separations between $\CQP$ \& $\DQP$ and $\PP$. The latter classes, defined by  Aaronson \cite{aaronsondqp}, and Aaronson, Bouland, Fitzsimons and Lee \cite{ABFL16} are complexity classes capturing the power of quantum computing with ``more powerful" modified versions of quantum mechanics. The authors of \cite{aaronsondqp,ABFL16} showed these classes are in $\EXP$ and $\BPP^{\#\mathsf{P}}$, respectively, and ask if one could improve the upper bounds on these classes to e.g. $\PP$ (which is an upper bound on $\BQP$) as an open problem. Since these classes contain $\SZK$, our result implies that one cannot place their classes in $\PP$ using relativizing techniques. This partially explains the difficulty in substantially improving their upper bounds.
This was part of our original motivation for studying this problem.

\subsection{Consequences for Polarization}
\label{sec:polarization}

A polarization algorithm is an algorithm that is given black-box sampling access to two distributions, and outputs two new distributions that are either extremely close in total variation distance (if they were initially somewhat close) or extremely far in total variation distance (if they were originally somewhat far). In this section we describe how our oracle separation between $\SZK$ and $\PP$ implies lower bounds on polarization algorithms. In particular we show black-box polarization algorithms are limited in how close they can push the statisical difference to $0$ or $1$ relative to the number of bits in the output distribution.

The concept of polarization first arose in work of Sahai and Vadhan \cite{sahai2003complete}. In their work, Sahai and Vadhan showed that the statistical difference problem is complete for the class $\SZK$. The statistical distance problem is formulated as follows: Let $P_b(x)$ be poly-sized classical circuits. Let $D_b$ be the distribution on $\{0,1\}^n$ induced by inputting a uniformly random input $x$ to $P_b(x)$. 
The statistical difference problem is, given circuits $P_0$ and $P_1$, determine if either $||D_0-D_1|| \leq 1/3$ or if $||D_0-D_1|| \geq 2/3$, promised one is the case. 
Here $||D_0-D_1||$ indicates the total variation distance between the distributions $D_0$ and $D_1$.

In their paper, Sahai and Vadhan also showed a remarkable property of the statistical difference problem -- namely that the constants 1/3 and 2/3 in the Statistical Difference problem can be amplified to be exponentially close to 0 and 1 \cite{sahai2003complete}. This property is not immediately obvious, because it cannot be obtained by simply repeatedly sampling from $D_0$ and $D_1$.
Nevertheless, they showed the following: given black-box distributions $D_0$ and $D_1$, and a number $k$ expressed in unary, then in polynomial time one can sample from distributions $D_0'$ and $D_1'$ (using polynomially many samples from $D_0$ and $D_1$) such that, if $||D_0-D_1||\leq 1/3$, then $||D_0'-D_1'||\leq \epsilon$ and if $||D_0-D_1|| \geq 2/3$, then $||D_0'-D_1'||\geq 1-\epsilon$, where $\epsilon=2^{-k}$. 
Hence without loss of generality, one can assume that the distributions in the statistical different problem are exponentially close to 0 or 1; their transformation ``polarizes" the distributions to be either very close or very far from one another. 
This is known as the Polarization Lemma, and is a key part of the proof that Statistical Difference is $\SZK$-complete\footnote{In statistical zero-knowledge proof systems, the verifier must be able to simulate the honest prover to negligibly small (1/superpoly) total variation distance. The ability to polarize distributions allows the statistical difference problem to have this property.}.

Given this fundamental result, it is natural to ask whether or not one can improve the parameters of the Polarization Lemma. For instance, Sahai and Vadhan noted in their paper that their algorithm could only polarize distributions under the promise $||D_0-D_1||>\alpha$ or $||D_0-D_1||<\beta$ in the case that $\alpha^2 > \beta$.
So their algorithm can polarize $\alpha=2/3$ and $\beta=1/3$, but not $\alpha=5/9$ and $\beta=4/9$. A natural question is whether or not this limitation could be removed. 
Holenstein and Renner answered this question in the negative for certain types of black-box polarization \cite{HR05}.
In particular, they showed that any form of black-box polarization which works by drawing strings $b,c \in \{0,1\}^\ell$, and then sets 
$D_0' = D_{b_1}\otimes ...\otimes D_{b_\ell}$ and $D_1' = D_{c_1}\otimes ...\otimes D_{c_\ell}$ cannot polarize in the case where $\alpha^2 <\beta$.
As Sahai and Vadhan's polarization algorithm took this form, this was strong evidence that this limitation was fundamental.
Note, however, that it remains open to show that polarization cannot occur when $\alpha^2<\beta$ using \emph{arbitrary} black-box algorithms. For instance, one could feed the random outputs of of $D_0$ back into the circuit for $D_1$ in order to help polarize the distributions. 
While it is not clear how these sorts of operations could help one polarize, it is difficult to rule out the possibility that such operations might lead to a stronger polarization algorithm.

In this paper we consider different parameters of the Polarization Lemma - namely how small can the security parameter $\epsilon$ be relative to the size of the range of the output distributions. 
For example, if one is given distributions $D_0$ and $D_1$ over $n$-bit strings with total variation distance $>2/3$ or $<1/3$, then can one create distributions $D_0'$ and $D_1'$ over $n'$-bit string such that the total variation distance is $\leq \epsilon$ or $\geq 1-\epsilon$ where $\epsilon = 2^{-n'}$, or $2^{-n'^2}$? 
At first it might appear the answer to the above question is trivially yes - because one can simply set $k=-n^2$ (or $k=n^c$ for any constant c) and run the Polarization Lemma.
However this does not work because the Polarization Lemma increases the size of the domains of the distributions as it polarizes; in other words $n'$ is some polynomial function of $n$ and $k$. 
By tweaking the parameters of the Polarization Lemma slightly \cite{sahai2003complete}, one can polarize distributions on $n$ bits to distributions on $n'=\poly(n)$ bits which are polarized to roughly $\epsilon\approx 2^{-\sqrt{n'}}$. However, it seems difficult to do better than $\epsilon=2^{-\sqrt{n'}}$ using the proof techniques of Sahai and Vadhan \cite{sahai2003complete}. This is because their proof alternates between two lemmas, one which total variation distance towards 1 in the case the distributions are far apart, and another which pushes the total variation distance towards zero in the case the distributions are close. In order to make the distributions $2^{-k}$-close or $1-2^{-k}$-far, one must apply both lemmas, each of which increases the number of bits output by the distribution by a factor of $k$. Hence using Sahai and Vadhan's Lemma with $k=n^c$, the best one can achieve are distributions on $n'=n^{2c+1}$ bits which are either $2^{-n^c}$-close or $(1-2^{-n^c})$-far. For large constant $c$ this gives $\epsilon \approx 2^{-\sqrt{n'}}$. It seems difficult to improve their lemma further using the techniques of their paper. 

A natural question is therefore: what is the smallest value of $\epsilon$ that one can achieve relative to the size of the output distributions $n'$? 
In this work, we show that if $\epsilon$ can be made very small relative to $n'$, then that would place $\SZK^\mathcal{O} \subseteq \PP^\mathcal{O}$ (and even $\SZK^\mathcal{O} \subseteq \BPPPATH^\mathcal{O}$) for all oracles $\mathcal{O}$.
Therefore, as a corollary of our main result, $\epsilon$ cannot be made very small by any poly-time black-box polarization algorithm.
More specifically, we achieve a lower bound of $\epsilon>2^{-n'/2 -1}$ for any poly-time polarization algorithm. %, as stated previously in Theorem \ref{thm:polarizationconsequence}. 
%
%\begin{theo}{Theorem \ref{thm:polarizationconsequence}} There does not exist an algorithm running in poly$(n)$ time, which given black box distributions $D_0$, $D_1$ on strings of length $n$ which obey either $||D_0-D_1||<1/3$ or $||D_0-D_1||>2/3$,  produces two output distributions $D_0'$ and $D_1'$ on strings of length $n'=\text{poly}(n)$ such that either $||D_0'-D_1'||<\epsilon$ (in the first case) or $||D_0'-D_1'||>1-\epsilon$ (in the second case) where $\epsilon \leq 2^{-n'/2 -1}$. 
%\end{theo}

%To prove these lower bounds on Polarization, % given in Theorem \ref{thm:polarizationconsequence}, here 
More specifically, we prove two theorems showing that a stronger version of polarization places $\SZK$ in $\PP$ relative to all oracles. Therefore, a stronger polarization algorithm cannot exist as a corollary of Theorem \ref{mainfcor}.

\begin{theo} Suppose that there is an algorithm running in poly$(n)$ time, which given black box distributions $D_0$, $D_1$ on strings of length $n$ which obey either $|D_0-D_1|<1/3$ or $|D_0-D_1|>2/3$,  produces two output distributions $D_0'$ and $D_1'$ on strings of length $n'=\text{poly}(n)$ such that either $|D_0'-D_1'|<\epsilon$ (in the first case) or $|D_0'-D_1'|>1-\epsilon$ (in the second case) where $\epsilon \leq 2^{-n'/2 -1}$.  Then $\SZK^{\mathcal{O}}\subseteq \PP^{\mathcal{O}}$ for all oracles $\mathcal{O}$. \label{thm:szkppandpolarization}
\end{theo}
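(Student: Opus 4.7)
The plan is to combine the hypothesized strong polarization with a $\PP$ decision procedure for the resulting polarized statistical difference instances. First I would reduce any $\SZK$ problem to $\SD$ via Sahai--Vadhan's completeness theorem, obtaining black-box distributions $D_0, D_1$ on $n$ bits with $|D_0 - D_1| < 1/3$ or $|D_0 - D_1| > 2/3$. I would then apply the hypothesized polarization to produce $D_0', D_1'$ on $n' = \poly(n)$ bits satisfying $|D_0' - D_1'| < \epsilon$ or $|D_0' - D_1'| > 1 - \epsilon$, where $\epsilon \leq 2^{-n'/2 - 1}$. The remainder of the argument is a $\PP$ algorithm that decides this polarized instance.

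The natural $\PP$ primitive here is the machine $M_2$ from Lemma~\ref{lem:pp-l2}, which uses one sample from each distribution and accepts with probability exactly $\tfrac{1}{2} + \tfrac{1}{8}\|D_0' - D_1'\|_2^2$. Combining $M_2$ with a biased coin -- flip a coin that rejects with probability $T/(T+4)$, else run $M_2$ -- yields a $\PP$ machine whose acceptance probability strictly exceeds $\tfrac{1}{2}$ iff $\|D_0' - D_1'\|_2^2 > T$, for any computable threshold $T$. It therefore suffices to show that $\|D_0' - D_1'\|_2^2$ lies in two well-separated intervals in the two polarization cases. The close-case upper bound is immediate from the pointwise inequality $(D_0'(x) - D_1'(x))^2 \leq |D_0'(x) - D_1'(x)|$, giving $\|D_0' - D_1'\|_2^2 \leq 2\epsilon \leq 2^{-n'/2}$.

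The main obstacle is a matching lower bound on $\|D_0' - D_1'\|_2^2$ in the far case. The naive Cauchy--Schwarz bound on a support of size at most $2^{n'}$ gives only $\|D_0' - D_1'\|_2^2 \geq 4(1-\epsilon)^2 / 2^{n'}$, which for $n' > 4$ is already dominated by the close-case upper bound, so the raw $L_2^2$ test does not distinguish. I expect the proof to circumvent this by reducing polarized $\SD$ to polarized $\SDU$ (via a hashing construction reminiscent of Sahai--Vadhan's completeness reduction), where the collision probability against uniform admits a much stronger far-case bound: if $|X - U| > 1 - \epsilon$ then some set $S$ with $|S| \leq \epsilon \cdot 2^{n'}$ must carry mass at least $1 - \epsilon$, forcing $\|X\|_2^2 \geq (1-\epsilon)^2 / (\epsilon \cdot 2^{n'}) \geq 2(1-\epsilon)^2 \cdot 2^{-n'/2}$, while the close case gives $\|X\|_2^2 \leq 1/2^{n'} + 2\epsilon$. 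This is exactly the factor-two gap that pins the exponent $\epsilon = 2^{-n'/2 - 1}$ in the hypothesis, and it feeds straight into the biased-coin threshold trick to produce the required $\PP$ machine. The technically hardest step will be executing the $\SD \to \SDU$ reduction black-box in a way that preserves the exponentially tight polarization parameter without loss.
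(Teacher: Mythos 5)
Your overall plan — reduce to $\SD$ via Sahai--Vadhan completeness, polarize, and run an $\ell_2^2$ test via $M_2$ plus a biased coin — matches the paper's approach exactly. However, you abandon this plan prematurely because you do not push the close-case bound hard enough, and the detour you propose in its place is unnecessary (and probably cannot be made to work).

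The issue is your close-case estimate $\|D_0'-D_1'\|_2^2 \le \|D_0'-D_1'\|_1 \le 2\epsilon \le 2^{-n'/2}$. This is correct but lossy. You should instead use
\[
\|D_0'-D_1'\|_2^2 \;\le\; \Bigl(\max_x |D_0'(x)-D_1'(x)|\Bigr)\cdot \|D_0'-D_1'\|_1 \;\le\; \|D_0'-D_1'\|^2 \cdot 2 \;<\; 2\epsilon^2,
\]
where the middle inequality uses the elementary fact that for two distributions $p,q$ on the same set, $\max_x |p(x)-q(x)| \le \|p-q\|$ (the total variation distance, since the positive part of the signed measure $p-q$ has total mass equal to $\|p-q\|$). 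With $\epsilon \le 2^{-n'/2-1}$ this gives $\|D_0'-D_1'\|_2^2 < 2^{-n'-1}$ in the close case. The far-case Cauchy--Schwarz lower bound you already wrote down, $\|D_0'-D_1'\|_2^2 \ge (2(1-\epsilon))^2/2^{n'} = 4(1-\epsilon)^2\cdot 2^{-n'}$, is strictly larger than $2^{-n'-1}$ for any $\epsilon < 1 - 1/(2\sqrt{2})$, so the two ranges are already disjoint. Setting a threshold of, say, $T = 2^{-n'}$ and feeding it into your biased-coin-plus-$M_2$ machine gives a $\PP$ decision procedure for the polarized instance, and combined with the $\SD$-completeness reduction this yields $\SZK^{\Oracle} \subseteq \PP^{\Oracle}$ relative to any oracle. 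This is exactly why the hypothesis pins $\epsilon$ at $2^{-n'/2-1}$: the squaring in the close case turns $2^{-n'/2}$ into $2^{-n'}$, which is precisely the scale of the Cauchy--Schwarz floor.

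Consequently the $\SD \to \SDU$ reduction you propose is not needed and adds a genuine difficulty of its own: the hashing-based completeness reduction from $\SD$ to $\SDU$ does not in general preserve exponentially tight distance parameters in a black-box manner, which is exactly the concern you raise at the end of your writeup. The fix is not to route around the $\ell_2$ test but to sharpen the close-case bound as above, after which the raw $\ell_2^2$ test suffices.
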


\begin{theo} Suppose that there is an algorithm running in poly$(n)$ time, which given black box distributions $D_0$, $D_1$ on strings of length $n$ which obey either $|D_0-D_1|<1/3$ or $|D_0-D_1|>2/3$,  produces two output distributions $D_0'$ and $D_1'$ on strings of length $n'=\text{poly}(n)$ such that either $|D_0'-D_1'|<\epsilon$ (in the first case) or $|D_0'-D_1'|>1-\epsilon$ (in the second case) where $\epsilon \leq 2^{-2n'/3 -1}$.  Then $\SZK^{\mathcal{O}}\subseteq \left( \BPPPATH\right)^{\mathcal{O}}$ for all oracles $\mathcal{O}$. \label{thm:szkbpppathandpolarization}
\end{theo}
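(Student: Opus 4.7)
The plan is to prove this by establishing $\SZK^{\mathcal{O}} \subseteq \BPPpath^{\mathcal{O}}$ relative to every oracle $\mathcal{O}$ under the polarization hypothesis. Since the $\SZK$-completeness of the Statistical Difference problem is witnessed by a relativizing reduction, it suffices to place SD into $\BPPpath^{\mathcal{O}}$. First I would apply the hypothesized polarization algorithm to an SD instance on $n$-bit strings, obtaining poly-time samplable distributions $D_0', D_1'$ on $\{0,1\}^{n'}$ with $\|D_0' - D_1'\| \leq \epsilon$ in YES instances and $\|D_0' - D_1'\| \geq 1 - \epsilon$ in NO instances, where $\epsilon \leq 2^{-2n'/3 - 1}$.

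Next I would design a $\BPPpath$ algorithm based on an $\ell_2^2$-distinguishing subroutine analogous to Lemma~\ref{lem:pp-l2}, whose acceptance probability on two samples from each of $D_0'$ and $D_1'$ is exactly $\tfrac{1}{2} + \tfrac{1}{8}\|D_0' - D_1'\|_2^2$. To convert this signal into a bounded-error $\BPPpath$ output, I would compare it against a biased reference coin calibrated to a threshold strictly between the YES and NO values, using a standard $\BPPpath$ comparison gadget: flip a fair coin $c$; if $c = 0$ run the subroutine and record its acceptance bit $a_0$, if $c = 1$ set $a_1$ to the outcome of the biased coin; postselect on $a_c = 1$ and output $c$. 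The post-conditional probability of outputting $1$ then measures whether the subroutine's acceptance probability exceeds the calibrated threshold, which happens in the NO case but not in the YES case.

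The main obstacle is the quantitative analysis matching $\epsilon \leq 2^{-2n'/3 - 1}$. One has $\|D_0' - D_1'\|_2^2 \leq 2\epsilon$ in YES (from $\|\cdot\|_2^2 \leq \|\cdot\|_1 \|\cdot\|_\infty$) and $\|D_0' - D_1'\|_2^2 \geq 4(1-\epsilon)^2/2^{n'}$ in NO (from Cauchy--Schwarz applied to $\|D_0' - D_1'\|_1 \geq 2(1-\epsilon)$). The raw $\ell_2^2$-gap is inverse-exponential, which suffices directly for $\PP$ but not for $\BPPpath$'s bounded-error model; the hard part is amplifying this gap into a constant-factor post-conditional separation. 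Running $k = \poly(n')$ independent copies of the comparison gadget and postselecting on a unanimous outcome boosts the gap into the required $\tfrac{2}{3}$ vs $\tfrac{1}{3}$ regime precisely when $\epsilon \leq 2^{-2n'/3 - 1}$, with the exponent $2n'/3$ emerging from balancing the YES bound $2\epsilon$ against the NO bound $\Theta(2^{-n'})$ together with the exponent that unanimous-postselection extracts from the independent trials. Since every step is black-box in $\mathcal{O}$ and in the polarization algorithm, the conclusion $\SZK^{\mathcal{O}} \subseteq \BPPpath^{\mathcal{O}}$ holds for every oracle.
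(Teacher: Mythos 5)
There is a genuine gap at the amplification step. Your $\ell_2^2$ test has acceptance probability $\frac{1}{2} + \frac{1}{8}\|D_0'-D_1'\|_2^2$, and both the YES and NO values of $\|D_0'-D_1'\|_2^2$ are exponentially small (the NO lower bound from Cauchy--Schwarz is only about $4 \cdot 2^{-n'}$). So both acceptance probabilities sit at $\frac{1}{2} + 2^{-\Theta(n')}$, and the multiplicative gap between them (and between either of them and your calibrated threshold $t$) is of the form $1 + 2^{-\Theta(n')}$. Unanimous postselection over $k$ independent copies shifts the post-conditional probability by roughly $(p/t)^k \approx e^{k(p-t)/t}$, which reaches a constant bounded away from $1$ only when $k \cdot |p-t| = \Omega(1)$, i.e. $k = \Omega(2^{n'})$. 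For $k = \poly(n')$ the post-conditional output stays $\frac{1}{2} \pm 2^{-\Omega(n')}$, which does not meet the $\BPPpath$ bounded-error requirement. This is precisely why the $\ell_2$ test suffices for $\PP$ (Theorem~\ref{thm:szkppandpolarization}) but not for $\BPPpath$: $\PP$ tolerates an exponentially small advantage, $\BPPpath$ does not, and no polynomial number of repetitions closes that gap. As a secondary issue, your YES-case bound $\|D_0'-D_1'\|_2^2 \leq 2\epsilon$ (from $\|\cdot\|_\infty \leq 1$) is too weak: with $\epsilon \approx 2^{-2n'/3}$ it exceeds the NO lower bound $\Theta(2^{-n'})$, so the raw $\ell_2^2$ separation you state does not even hold. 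The correct bound is $\|D_0'-D_1'\|_2^2 \leq \|D_0'-D_1'\|_1\|D_0'-D_1'\|_\infty \leq 4\epsilon^2 \approx 2^{-4n'/3}$, which restores the multiplicative separation but not the constant post-conditional gap.

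The paper takes a structurally different route. It draws three samples $y_1, y_2, y_3$ from $D_{b_1}, D_{b_2}, D_{b_3}$ for three random bits $b_1, b_2, b_3$, postselects on the three-way collision $y_1 = y_2 = y_3$, and then outputs according to whether $b_1 = b_2 = b_3$. Because the collision event involves products of three probabilities, the conditional distribution of $(b_1, b_2, b_3)$ after postselection has a constant gap: nearly uniform over $\{0,1\}^3$ in the YES case (error $\to 1/4$) and concentrated on $\{000,111\}$ in the NO case (error $\to 0$). The amplification comes from the collision postselection itself, not from repetition, and the cubic degree of the test is exactly what produces the $2^{-2n'/3}$ threshold (via bounds like $\max_y D_0(y)^2 / \langle D_0, D_0^2\rangle = O(2^{2n'/3})$). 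To repair your approach you would need a test whose postselection by itself produces a constant post-conditional gap, rather than a test with an exponentially small bias that you then try to amplify.
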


Therefore as a corollary of Theorem \ref{mainfcor}, there do not exist poly-time polarization algorithms achieving $\epsilon = 2^{-n'/2-1}$. In fact one could have achieved such a lower bound even if one had merely given an oracle separation between $\SZK$ and $\BPPPATH$.
It remains open to close the gap between our lower bound of $\epsilon = 2^{-n'/2-1}$ and the upper bound of $\epsilon = 2^{-n'^{1/2 + \delta}}$ for any $\delta>0$ given by Sahai and Vadhan \cite{sahai2003complete}. 

The proof of Theorem \ref{thm:szkppandpolarization} is relatively straightforward. Suppose one can polarize to $\epsilon'\ll2^{-n'/2}$. Then the output distributions now have a promise on the $\ell_2$ distance between the output distributions - in particular the $\ell_2$ distance between them is more or less than some (exponentially small) threshold. It is easy to decide this problem $\PP$ - this is because the $\ell_2$ distance square is a degree-two polynomial in the output probabilities. To see this, say you're trying to determine if $S=\sum_{x\in\{0,1\}^n} (D_0'(x)-D_1'(x))^2 $
is more or less than some threshold $t$, 
consider the following algorithm: pick at random x, pick a random number 1,2,3 or 4. If the number is 1 (respectively 4) sample two samples from $D_0'$ (respectively $D_1'$) and accept if they both give output x, otherwise output accept/reject using a 50-50 coin flip. If the number is 2 or 3 sample one sample from $D_0'$ and $D_1'$ and reject iff they collide, otherwise output a 50-50 coin flip. The probabiltiy this machine accepts is $1/2 + S/2 $ - which is more more or less than a known threshold $(1+t)/2$. Therefore by correcting the bias of the machine with an initial coin flip, this is a $\PP$ algorithm to decide the problem.
In short, deciding thresholds for the $\ell_2$ norm is easy for $\PP$ because it is a low-degree polynomial, while deciding thresholds for the $\ell_1$ norm is hard for $\PP$ because the $\ell_1$ norm is not a low degree polynomial.

On the other hand, the proof of Theorem \ref{thm:szkbpppathandpolarization} is involved -- it works by examining the algorithms of Aaronson \cite{aaronsondqp} and Aaronson, Bouland, Fitzsimons and Lee \cite{ABFL16} showing that certain modified versions of quantum mechanics can be used to solve $\SZK$-hard problems in polynomial time. 
These algorithms are not based on postselection (otherwise they would place $\SZK\subseteq \PostBQP=\PP$ for all oracles, a contradiction with our main result). 
However, it turns out that if one has a very strong polarization lemma, then one can turn them into postselected quantum algorithms (and even postselected classical algorithms) for statistical difference. 
Interestingly, this was part of our original motivation for this work. 
We include this proof in Appendix \ref{polarizationbpppath} for the interested reader. 
\subsection{Consequences for Property Testing}
\medskip 
\noindent \textbf{Lower Bounds for Property Testers That Barely Do Better Than Random Guessing.} For any $\NISZK$-hard property testing problem $P$, our query complexity lower bounds immediately imply that any property testing algorithm for $P$ that outputs the correct answer with probability strictly greater than $1/2$ requires $n^{\Omega(1)}$ queries. 
For concreteness, we highlight the result we obtain for the $\NISZK$-complete
problem of entropy approximation. Specifically, given a distribution $D$ over $n$ elements, a natural problem is to ask how many samples from $D$ are required to estimate the entropy of $D$ to additive error. In 2011, Valiant and Valiant  \cite{ValiantValiant2011} showed that to achieve any constant additive error less than $\log 2 /2$, it is both necessary and sufficient to take $\Theta(n/\log n)$ samples from $D$. However, their bounds assume that one wishes to estimate the entropy with high probability, say with probability $1- o(1/\text{poly}(n))$.
Quantitatively, our $\UPPdt$ query lower bounds imply the following.

\begin{cor} Any algorithm which decides if the entropy of $D$ (over domain size $n$) is $\leq k-1$ or $\geq k+1$ and succeeds with probability $>\frac{1}{2}$ requires $\Omega(\lb)$ samples from $D$.
\end{cor}

In other words, estimating the entropy of a distribution to additive error 2 requires $\tilde{\Omega}(n^{1/4})$ samples
, even if the algorithm is only required to have an arbitrarily small bias in deciding the answer correctly.

\subsection{Consequences for Delegating Computation}
\label{sec:sips}
In this section, we point out an easy implication of our results:
two-message streaming interactive proofs (SIPs) \cite{cty}
of logarithmic cost can compute functions outside of $\UPPcc$.

In a SIP,
a verifier with limited working memory makes a single streaming pass over an input $x$, and then
interacts with an untrusted prover, who evaluates a function $f$ of the input, and attempts to convince the verifier
of the value of $f(x)$. The protocol must satisfy standard notions of completeness and soundness. The cost of the protocol is the size of the verifier's working memory and the total length of the messages exchanged between the prover and verifier. We direct the interested reader to \cite{cty}
for the formal definition.

It follows from our analysis in Appendix \ref{ccappendix} that the $(O(n),n,F)$-pattern matrix of the function $F:=\GapMaj{\PTP_{n^{3/4}}}{n^{1/4}}{.499}$ specifies a communication problem that is outside of $\UPPcc$ (see Appendix \ref{sec:pmat} for a definition of pattern matrices).
For our purposes, the relevant properties of such pattern matrices are as follows. In the communication problem $F^{\cc}(x, y)$ corresponding to the pattern matrix of a function $F \colon \{0, 1\}^n \to \{0, 1\}$, Alice's input $x$ and Bob's input $y$ together specify a vector $u(x, y) \in \{0, 1\}^n$, and $F^{\cc}(x,y)$ is defined to equal $F(u(x, y))$. Moreover, each coordinate of $u(x, y)$ depends on $O(1)$ entries of $x$ and $y$ respectively. 

Observe that $F$ is computed by a simple two-message interactive proof in which the verifier makes $O(\log n)$ \emph{non-adaptive} queries to bits of the input $x$, uses $O(\log n)$ bits of working memory, and the total communication cost is also $O(\log n)$: the verifier
picks an instance of $\PTP_{n^{3/4}}$ at random, and runs the $\SZK$ protocol for $\PTP$
described in Section \ref{sec:ptpprotocol} on that instance (we do not need the zero-knowledge property of the $\SZK$ protocol here). Completeness and soundness follow from the definition of $\Gapmajority$ and completeness and soundness of the $\SZK$ protocol for $\PTP$. 

Consider a data stream consisting of Alice and Bob's inputs to $F^{\cc}$, and a SIP verifier who
wishes to compute $F^{\cc}(x, y)$. That is, the first part of the stream specifies $x$ and the second part specifies $y$. There is a simple two-message SIP for evaluating $F^{\cc}(x, y)$: 
the SIP verifier simulates the verifier in the above interactive proof for $F(u(x, y))$. Since 
the latter verifier only needs to know $O(\log n)$ bits of $u(x, y)$, and each bit of $u(x, y)$ depends
on $O(1)$ bits of $x$ and $y$, the SIP verifier can compute the bits of $u(x, y)$ that are necessary to run the simulation, using just $O(\log n)$ bits of memory and a single streaming pass over the input. 
We obtain the following theorem.

\begin{theo} \label{siptheo}
There is a communication problem $F^{\cc}(x, y) \colon \{0, 1\}^{n} \times \{0, 1\}^{n} \to \{0, 1\}$
such that $F^{\cc}$ is not in $\UPPcc$, yet given a data stream specifying $x$ followed by $y$,
there is a 2-message SIP of cost $O(\log n)$ computing $F^{\cc}(x, y)$. 
\end{theo}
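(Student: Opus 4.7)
The plan is to take $F^{\cc}$ to be the communication problem obtained from the $(O(n), n, F)$-pattern matrix of $F := \GapMaj{\PTP_{n^{3/4}}}{n^{1/4}}{\epsilon}$, where $\epsilon$ is the constant fixed in Appendix \ref{ccappendix}. The lower bound $F^{\cc} \notin \UPPcc$ is exactly the output of that appendix, which combines Theorem \ref{theo:ptp} (to get the $\odegap$ lower bound for $\overline{\PTP}$), Theorem \ref{theo:lift-UPP} (to lift that to a threshold-degree lower bound for $F$), a Bun--Thaler-style smoothing step applied to the resulting orthogonalizing distribution, and the Razborov--Sherstov pattern-matrix method \cite{razborovsherstov, bt16} converting a smooth dual witness into a $\UPPcc$ lower bound for the pattern-matrix problem.

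For the SIP upper bound, I would first construct a two-message public-coin interactive proof of cost $O(\log n)$ for $F$ itself in the standard query model. On input $w = (w_1, \dots, w_{n^{1/4}})$, the verifier uses public randomness to sample an index $i \in [n^{1/4}]$ and a target $r \in [n^{3/4}]$; the prover replies with a claimed preimage $j$; and the verifier accepts iff $w_i(j) = r$. By completeness of the $\PTP$ subprotocol together with the $\Gapmajority$ promise, this has a constant completeness--soundness gap in a single trial. Running $O(\log n)$ independent trials in parallel preserves the two-message structure because the subprotocol is one-round on the prover's side; a Chernoff bound drives the error to $\negl(n)$ while keeping the communication $O(\log n)$ and the verifier's queries into $w$ non-adaptive and of total size $O(\log n)$.

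Finally, I would convert the above into a streaming interactive proof for $F^{\cc}$. By construction of the pattern matrix, each coordinate of the induced string $w = u(x, y) \in \{0,1\}^{n^{1/4} \cdot n^{3/4} \log n^{3/4}}$ is determined by an explicit $O(1)$-local formula in bits of $x$ and $y$. Since the query verifier's access pattern into $u(x, y)$ depends only on its public coins, the SIP verifier can commit to this access pattern before the stream begins. As $x$ streams by, the SIP verifier records the $O(\log n)$ bits of $x$ relevant to the $O(\log n)$ queries, and likewise for $y$ in the second half of the stream. It then reconstructs the required coordinates of $u(x, y)$ from its buffered bits, interacts with the prover exactly as in the non-streaming protocol, and decides. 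The total memory and communication are both $O(\log n)$, and the interaction remains two-message.

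The main thing to check carefully, and essentially the only place substantive work is needed, is that the specific pattern-matrix encoding used in Appendix \ref{ccappendix} really does let each coordinate of $u(x,y)$ be extracted from $O(1)$ probes to $x$ and $y$, and that the $O(\log n)$ parallel copies of the $\PTP$ subprotocol can be scheduled to keep the overall interaction two-message. Both facts are standard once the ingredients are in place, so the proof reduces to gluing together the appendix's $\UPPcc$ lower bound with the elementary observation that a non-adaptive query protocol on an $O(1)$-local encoding of $(x, y)$ yields a streaming verifier of essentially the same cost.
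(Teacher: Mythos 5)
Your overall plan matches the paper's: take $F^{\cc}$ to be the pattern-matrix communication problem for $F := \Gapmajority(\PTP)$, inherit the $\UPPcc$ lower bound from Appendix~\ref{ccappendix}, and get the SIP upper bound by simulating a cheap query-model interactive proof for $F$ whose data accesses are local (via the $O(1)$-locality of the pattern-matrix encoding) and can be executed during a single streaming pass.

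There is, however, a genuine gap in the $\PTP$ subprotocol you chose, and it invalidates the sentence ``the query verifier's access pattern into $u(x,y)$ depends only on its public coins.'' In your protocol the verifier samples $i$ and a target $r$ with public coins, the prover replies with a claimed preimage $j$, and the verifier then must read $w_i(j)$ to check $w_i(j)=r$. That query location depends on the prover's message $j$, which arrives only \emph{after} the streaming pass is over; so the verifier cannot know during the stream which $O(\log n)$ bits of $u(x,y)$ to buffer, and storing all of $w_i$ to hedge would cost $\Theta(n^{3/4}\log n)$ memory, not $O(\log n)$. Running $O(\log n)$ parallel copies compounds the problem, since now there are $O(\log n)$ prover-controlled query locations. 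This is not a cosmetic issue: it is exactly the adaptivity that the SIP model forbids.

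The paper avoids this by using the other $\PTP$ protocol, the one described in Section~\ref{sec:ptpprotocol}: the verifier privately samples $j$, reads $w_i(j)$ (a query determined entirely by the verifier's own coins), then sends $w_i(j)$ to the prover and accepts iff the prover returns the hidden $j$. Here the prover's message affects only the final accept/reject decision, never which bits of the input are read, so the SIP verifier can commit to all of its $O(\log n)$ probes into $u(x,y)$ before the stream, buffer the corresponding $O(\log n)$ bits of $x$ and $y$, and run the two-message interaction afterward. If you replace your preimage-finding subprotocol with this reveal-and-identify one, the remainder of your argument (the $\UPPcc$ lower bound from the appendix, the $O(1)$-locality of the pattern-matrix encoding, the parallel composition keeping things two-message, and the $O(\log n)$ cost accounting) is correct and essentially coincides with the paper's proof.
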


Theorem \ref{siptheo} provides an explanation for why prior work attempting to understand the power of 2-message SIPs has succeeded only in proving lower bounds on special classes of such protocols \cite{CCMTV}. Indeed, taking $\UPPcc$ to represent the limit of our methods for proving lower bounds in communication complexity, the fact that 2-message SIPs and their analogous two-party communication model (called $\textbf{OIP}_{\mathbf{+}}^{\mathbf{[2]}}$ in \cite{CCMTV}) can compute functions outside of $\UPPcc$ means that proving superlogarithmic lower bounds for 2-message SIPs will require new lower bound techniques.

\subsubsection{Consequences for the Algebrization Barrier}

Arithmetization is a powerful technique in complexity theory that is used, for example, to prove $\mathsf{IP} = \PSPACE$~\cite{shamir1992ip,lund1992algebraic} and many other celebrated theorems. 
While arithmetization circumvents the \emph{relativization} barrier, Aaronson and Wigderson~\cite{aaronson2009algebrization} proposed a new \emph{algebrization} barrier. Roughly speaking, their results show that arithmetization alone will not suffice to resolve many open questions in complexity theory. Informally, 
one of their key results was the following.

\begin{theo}[Informal, implicit in Theorem 5.11 of \cite{aaronson2009algebrization}]
	For two complexity classes $\mathcal{C}$ and $\mathcal{D}$, if $\mathcal{C}^{\mathsf{cc}} \not\subset \mathcal{D}^{\mathsf{cc}}$, then arithmetization techniques alone cannot prove $\mathcal{C} \subseteq \mathcal{D}$.
\end{theo}

From the above theorem, our communication class separation $\NISZKcc \not\subset \UPPcc$ immediately implies the following informal corollary.

\begin{cor}[Informal]
	Arithmetization techniques alone cannot prove that $\NISZK \subseteq \PP$.
\end{cor}

%%% Local Variables: 
%%% mode: latex
%%% TeX-master: "main"
%%% End: 

\section{Open Problems}

Our works leaves a number of open related problems.
First, we have shown that the function $\Gapmajority(f)$ is hard for $\UPPdt$, for any function $f$ of high approximate degree, and that $\Gapand(f)$ is hard for $\UPPdt$, for any function of high positive one-sided approximate degree. Can one extend this work to characterize when $f\circ g$ is hard for $\UPPdt$, based on some properties of $f$ and $g$? We conjecture that the $\UPPdt$
complexity of $\Gapmajority(f)$ (respectively, $\Gapand(f)$) is characterized by the \emph{rational approximate degree} of $f$ (respectively, positive one-sided approximate degree of $f$).
Such a result would complement the characterization of the threshold degree 
of $\AND(f)$ in terms of positive one-sided rational approximate degree given in \cite{sherstov2014breaking}.

Additionally, we have shown a lower bound on certain parameters of the polarization lemma. Is there a polarization algorithm which matches our lower bound?

It would also be interesting to determine whether our lower bounds on property testing algorithms that output the correct answer with probability strictly greater than $1/2$ are quantitatively tight. For example, is there an algorithm that, given query access to a distribution $D$ (over domain size $n$) that is promised to have entropy $\leq k - 1$ or $\geq k + 1$, decides which is the case with probability greater than $1/2$, using $\tilde{O}(n^{1/4})$ samples from $D$?

Finally, the main open question highlighted by our work is to break through the $\UPP$
frontier in communication complexity. We formalize this question via the following challenge:
 prove any superlogarithmic lower 
bound for an explicit problem in a natural communication model that cannot be efficiently 
simulated by $\UPPcc$. Our work shows that any communication model capable of efficiently computing
the pattern matrix of $\Gapmajority(\PTP)$ is a candidate for achieving this goal. 
Thomas Watson has suggested the following as perhaps the simplest such candidate: consider the $\NISZKcc$ model, but restricted to be one-way, in the sense that neither
Merlin nor Bob can talk to Alice. This model effectively combines the key features of the $\NISZKcc$ and  $\textbf{OIP}_{\mathbf{+}}^{\mathbf{[2]}}$ (cf. \cite{CCMTV}) communication models. 
There is a logarithmic cost ``one-way $\NISZK$'' protocol for the pattern matrix of $\Gapmajority(\PTP)$, so this model cannot be efficiently simulated by $\UPPcc$. %-- can we prove superlogarithmic lower bounds in this model for any explicit function? 
Curiously, despite the ability of this model to compute functions outside of $\UPPcc$, to the best of our knowledge it is possible that even the $\mathsf{INDEX}$ function requires polynomial cost in this model. 
Note that while Chakrabarti et al. \cite{CCMTV} gave an efficient $\textbf{OIP}_{\mathbf{+}}^{\mathbf{[2]}}$ communication protocol for $\mathsf{INDEX}$, their protocol is not zero-knowledge.

\section*{Acknowledgments}
We thank Scott Aaronson, Jayadev Acharya, Shalev Ben-David, Cl\'{e}ment Canonne, Oded Goldreich, Mika G{\"{o}}{\"{o}}s, Gautam Kamath, Robin Kothari, Tomoyuki Morimae, Harumichi Nishimura,  Ron Rothblum, Mike Saks, Salil Vadhan and Thomas Watson for helpful discussions.
Adam Bouland was supported in part by the NSF Graduate Research Fellowship under grant no. 1122374 and by the NSF Alan T. Waterman award under grant no. 1249349. Lijie Chen was supported in part by the National Basic Research Program of China Grant 2011CBA00300, 2011CBA00301, the National Natural Science Foundation of China Grant 61361136003.
Dhiraj Holden was supported in part by an Akamai Presidential fellowship, by NSF MACS - CNS-1413920, and by a
SIMONS Investigator award Agreement Dated 6-5-12.
Prashant Vasudevan was  supported in part by the Qatar Computing Research Institute under the QCRI-CSAIL partnership, and by the National Science Foundation Frontier grant CNS 1413920.

\bibliographystyle{alpha}
\bibliography{team} 

\newcommand{\etalchar}[1]{$^{#1}$}
\begin{thebibliography}{DGOW95}

\bibitem[Aar]{AaronsonPersonal}
Scott Aaronson.
\newblock Personal communication.

\bibitem[Aar02]{aaronson2002quantum}
Scott Aaronson.
\newblock Quantum lower bound for the collision problem.
\newblock In {\em Proceedings of the thiry-fourth annual ACM symposium on
  Theory of computing}, pages 635--642. ACM, 2002.

\bibitem[Aar05]{aaronsondqp}
Scott Aaronson.
\newblock Quantum computing and hidden variables.
\newblock {\em Physical Review A}, 71(3):032325, 2005.

\bibitem[Aar12]{aaronson2012impossibility}
Scott Aaronson.
\newblock Impossibility of succinct quantum proofs for collision-freeness.
\newblock {\em Quantum Information \& Computation}, 12(1-2):21--28, 2012.

\bibitem[ABFL16]{ABFL16}
Scott Aaronson, Adam Bouland, Joseph Fitzsimons, and Mitchell Lee.
\newblock The space ``just above" {BQP}.
\newblock In {\em Proceedings of the 2016 {ACM} Conference on Innovations in
  Theoretical Computer Science, Cambridge, MA, USA, January 14-16, 2016}, pages
  271--280, 2016.

\bibitem[AH91a]{AHPZKBPP}
William Aiello and Johan H{\aa}stad.
\newblock Relativized perfect zero knowledge is not {BPP}.
\newblock {\em Information and Computation}, 93:223--240, 1991.

\bibitem[AH91b]{AH91}
William Aiello and Johan H{\aa}stad.
\newblock Statistical zero-knowledge languages can be recognized in two rounds.
\newblock {\em J. Comput. Syst. Sci.}, 42(3):327--345, 1991.

\bibitem[AIKP15]{AIKP15}
Shweta Agrawal, Yuval Ishai, Dakshita Khurana, and Anat Paskin{-}Cherniavsky.
\newblock Statistical randomized encodings: {A} complexity theoretic view.
\newblock In {\em Automata, Languages, and Programming - 42nd International
  Colloquium, {ICALP} 2015, Kyoto, Japan, July 6-10, 2015, Proceedings, Part
  {I}}, pages 1--13, 2015.

\bibitem[Amb05]{ambainis2005polynomial}
Andris Ambainis.
\newblock Polynomial degree and lower bounds in quantum complexity: Collision
  and element distinctness with small range.
\newblock {\em Theory of Computing}, 1(1):37--46, 2005.

\bibitem[AS04]{aaronson2004quantum}
Scott Aaronson and Yaoyun Shi.
\newblock Quantum lower bounds for the collision and the element distinctness
  problems.
\newblock {\em Journal of the ACM (JACM)}, 51(4):595--605, 2004.

\bibitem[AW09]{aaronson2009algebrization}
Scott Aaronson and Avi Wigderson.
\newblock Algebrization: A new barrier in complexity theory.
\newblock {\em ACM Transactions on Computation Theory (TOCT)}, 1(1):2, 2009.

\bibitem[Bar01]{Barak01}
Boaz Barak.
\newblock How to go beyond the black-box simulation barrier.
\newblock In {\em 42nd Annual Symposium on Foundations of Computer Science,
  {FOCS} 2001, 14-17 October 2001, Las Vegas, Nevada, {USA}}, pages 106--115,
  2001.

\bibitem[BFS86]{babai1986complexity}
Laszlo Babai, Peter Frankl, and Janos Simon.
\newblock Complexity classes in communication complexity theory.
\newblock In {\em Foundations of Computer Science, 1986., 27th Annual Symposium
  on}, pages 337--347. IEEE, 1986.

\bibitem[BHZ87]{bhz87}
Ravi~B. Boppana, Johan H{\aa}stad, and Stathis Zachos.
\newblock Does co-np have short interactive proofs?
\newblock {\em Inf. Process. Lett.}, 25(2):127--132, 1987.

\bibitem[BRS91]{BRS91}
Richard Beigel, Nick Reingold, and Daniel~A. Spielman.
\newblock {PP} is closed under intersection (extended abstract).
\newblock In {\em Proceedings of the 23rd Annual {ACM} Symposium on Theory of
  Computing, May 5-8, 1991, New Orleans, Louisiana, {USA}}, pages 1--9, 1991.

\bibitem[BT15a]{bun2015dual}
Mark Bun and Justin Thaler.
\newblock Dual polynomials for collision and element distinctness.
\newblock {\em arXiv preprint arXiv:1503.07261}, 2015.

\bibitem[BT15b]{bun2015hardness}
Mark Bun and Justin Thaler.
\newblock Hardness amplification and the approximate degree of constant-depth
  circuits.
\newblock In {\em International Colloquium on Automata, Languages, and
  Programming}, pages 268--280. Springer, 2015.

\bibitem[BT16]{bt16}
Mark Bun and Justin Thaler.
\newblock Improved bounds on the sign-rank of {AC}$^0$.
\newblock {\em Electronic Colloquium on Computational Complexity {(ECCC)}},
  23:75, 2016.

\bibitem[BT17]{btlatest}
Mark Bun and Justin Thaler.
\newblock A nearly optimal lower bound on the approximate degree of
  ac\({}^{\mbox{0}}\).
\newblock {\em Electronic Colloquium on Computational Complexity {(ECCC)}},
  24:51, 2017.

\bibitem[CCG{\etalchar{+}}94]{CCGHHRR94}
Richard Chang, Benny Chor, Oded Goldreich, Juris Hartmanis, Johan H{\aa}stad,
  Desh Ranjan, and Pankaj Rohatgi.
\newblock The random oracle hypothesis is false.
\newblock {\em J. Comput. Syst. Sci.}, 49(1):24--39, 1994.

\bibitem[CCM{\etalchar{+}}15]{CCMTV}
Amit Chakrabarti, Graham Cormode, Andrew McGregor, Justin Thaler, and Suresh
  Venkatasubramanian.
\newblock Verifiable stream computation and arthur-merlin communication.
\newblock In David Zuckerman, editor, {\em 30th Conference on Computational
  Complexity, {CCC} 2015, June 17-19, 2015, Portland, Oregon, {USA}}, volume~33
  of {\em LIPIcs}, pages 217--243. Schloss Dagstuhl - Leibniz-Zentrum fuer
  Informatik, 2015.

\bibitem[Che16a]{lijiepszk}
Lijie Chen.
\newblock Adaptivity vs postselection.
\newblock {\em arXiv:1606.04016}, 2016.

\bibitem[Che16b]{lijiepszkqszk}
Lijie Chen.
\newblock A note on oracle separations for $\textsf{BQP}$.
\newblock {\em arXiv:1605.00619}, 2016.

\bibitem[CTY11]{cty}
Graham Cormode, Justin Thaler, and Ke~Yi.
\newblock Verifying computations with streaming interactive proofs.
\newblock {\em {PVLDB}}, 5(1):25--36, 2011.

\bibitem[DGJ{\etalchar{+}}10]{diakonikolas2010bounded}
Ilias Diakonikolas, Parikshit Gopalan, Ragesh Jaiswal, Rocco~A Servedio, and
  Emanuele Viola.
\newblock Bounded independence fools halfspaces.
\newblock {\em SIAM Journal on Computing}, 39(8):3441--3462, 2010.

\bibitem[DGOW95]{DGOW95}
Ivan Damg{\aa}rd, Oded Goldreich, Tatsuaki Okamoto, and Avi Wigderson.
\newblock Honest verifier vs dishonest verifier in public coin zero-knowledge
  proofs.
\newblock In {\em Advances in Cryptology - {CRYPTO} '95, 15th Annual
  International Cryptology Conference, Santa Barbara, California, USA, August
  27-31, 1995, Proceedings}, pages 325--338, 1995.

\bibitem[DR96]{dubhashi1996balls}
Devdatt~P Dubhashi and Desh Ranjan.
\newblock Balls and bins: A study in negative dependence.
\newblock {\em BRICS Report Series}, 3(25), 1996.

\bibitem[EFHK14]{KP14b}
Javier Esparza, Pierre Fraigniaud, Thore Husfeldt, and Elias Koutsoupias,
  editors.
\newblock {\em Automata, Languages, and Programming - 41st International
  Colloquium, {ICALP} 2014, Copenhagen, Denmark, July 8-11, 2014, Proceedings,
  Part {I}}, volume 8572 of {\em Lecture Notes in Computer Science}. Springer,
  2014.

\bibitem[Fis02]{Fischlin2002}
Marc Fischlin.
\newblock On the impossibility of constructing non-interactive
  statistically-secret protocols from any trapdoor one-way function.
\newblock In {\em Proceedings of the The Cryptographer's Track at the RSA
  Conference on Topics in Cryptology}, CT-RSA '02, pages 79--95, London, UK,
  UK, 2002. Springer-Verlag.

\bibitem[For87]{Fortnow87}
Lance Fortnow.
\newblock The complexity of perfect zero-knowledge (extended abstract).
\newblock In {\em Proceedings of the 19th Annual {ACM} Symposium on Theory of
  Computing, 1987, New York, New York, {USA}}, pages 204--209, 1987.

\bibitem[GG98]{goldreich1998limits}
Oded Goldreich and Shafi Goldwasser.
\newblock On the limits of non-approximability of lattice problems.
\newblock In {\em Proceedings of the thirtieth annual ACM symposium on Theory
  of computing}, pages 1--9. ACM, 1998.

\bibitem[GMR89]{goldwasser1989knowledge}
Shafi Goldwasser, Silvio Micali, and Charles Rackoff.
\newblock The knowledge complexity of interactive proof systems.
\newblock {\em SIAM Journal on computing}, 18(1):186--208, 1989.

\bibitem[GMW91]{goldreich1991proofs}
Oded Goldreich, Silvio Micali, and Avi Wigderson.
\newblock Proofs that yield nothing but their validity or all languages in {NP}
  have zero-knowledge proof systems.
\newblock {\em Journal of the ACM (JACM)}, 38(3):690--728, 1991.

\bibitem[Gol15]{GoldwasserVideo}
Shafi Goldwasser.
\newblock Zero knowledge probabilistic proof systems.
\newblock \url{https://www.youtube.com/watch?v=J4TkHuTmHsg#t=1h15m20s}, 2015.

\bibitem[GPW15a]{landscape}
Mika G{\"{o}}{\"{o}}s, Toniann Pitassi, and Thomas Watson.
\newblock The landscape of communication complexity classes.
\newblock {\em Electronic Colloquium on Computational Complexity {(ECCC)}},
  22:49, 2015.

\bibitem[GPW15b]{goositcs}
Mika G{\"{o}}{\"{o}}s, Toniann Pitassi, and Thomas Watson.
\newblock Zero-information protocols and unambiguity in arthur-merlin
  communication.
\newblock In Tim Roughgarden, editor, {\em Proceedings of the 2015 Conference
  on Innovations in Theoretical Computer Science, {ITCS} 2015, Rehovot, Israel,
  January 11-13, 2015}, pages 113--122. {ACM}, 2015.

\bibitem[GSV98]{GSV98}
Oded Goldreich, Amit Sahai, and Salil~P. Vadhan.
\newblock Honest-verifier statistical zero-knowledge equals general statistical
  zero-knowledge.
\newblock In {\em Proceedings of the Thirtieth Annual {ACM} Symposium on the
  Theory of Computing, Dallas, Texas, USA, May 23-26, 1998}, pages 399--408,
  1998.

\bibitem[GSV99]{goldreich1999can}
Oded Goldreich, Amit Sahai, and Salil Vadhan.
\newblock Can statistical zero knowledge be made non-interactive? or on the
  relationship of {SZK} and {NISZK}.
\newblock In {\em Advances in Cryptology—CRYPTO’99}, pages 467--484.
  Springer, 1999.

\bibitem[GT14]{GT14}
Oded Goldreich and Liav Teichner.
\newblock Super-perfect zero-knowledge proofs.
\newblock {\em Electronic Colloquium on Computational Complexity {(ECCC)}},
  21:97, 2014.

\bibitem[HR05]{HR05}
Thomas Holenstein and Renato Renner.
\newblock One-way secret-key agreement and applications to circuit polarization
  and immunization of public-key encryption.
\newblock In {\em Advances in Cryptology - {CRYPTO} 2005: 25th Annual
  International Cryptology Conference, Santa Barbara, California, USA, August
  14-18, 2005, Proceedings}, pages 478--493, 2005.

\bibitem[Kla11]{klauck11}
Hartmut Klauck.
\newblock On arthur merlin games in communication complexity.
\newblock In {\em Proceedings of the 26th Annual {IEEE} Conference on
  Computational Complexity, {CCC} 2011, San Jose, California, June 8-10, 2011},
  pages 189--199. {IEEE} Computer Society, 2011.

\bibitem[KP14]{KP14a}
Hartmut Klauck and Supartha Podder.
\newblock Two results about quantum messages.
\newblock In Erzs{\'{e}}bet Csuhaj{-}Varj{\'{u}}, Martin Dietzfelbinger, and
  Zolt{\'{a}}n {\'{E}}sik, editors, {\em Mathematical Foundations of Computer
  Science 2014 - 39th International Symposium, {MFCS} 2014, Budapest, Hungary,
  August 25-29, 2014. Proceedings, Part {II}}, volume 8635 of {\em Lecture
  Notes in Computer Science}, pages 445--456. Springer, 2014.

\bibitem[KT14]{varun}
Varun Kanade and Justin Thaler.
\newblock Distribution-independent reliable learning.
\newblock In Maria{-}Florina Balcan, Vitaly Feldman, and Csaba
  Szepesv{\'{a}}ri, editors, {\em Proceedings of The 27th Conference on
  Learning Theory, {COLT} 2014, Barcelona, Spain, June 13-15, 2014}, volume~35
  of {\em {JMLR} Workshop and Conference Proceedings}, pages 3--24. JMLR.org,
  2014.

\bibitem[Kut05]{kutin2005quantum}
Samuel Kutin.
\newblock Quantum lower bound for the collision problem with small range.
\newblock {\em Theory of Computing}, 1(1):29--36, 2005.

\bibitem[LFKN92]{lund1992algebraic}
Carsten Lund, Lance Fortnow, Howard Karloff, and Noam Nisan.
\newblock Algebraic methods for interactive proof systems.
\newblock {\em Journal of the ACM (JACM)}, 39(4):859--868, 1992.

\bibitem[Lok01]{Lokam01}
Satyanarayana~V. Lokam.
\newblock Spectral methods for matrix rigidity with applications to size-depth
  trade-offs and communication complexity.
\newblock {\em J. Comput. Syst. Sci.}, 63(3):449--473, 2001.

\bibitem[LS09]{LS09}
Nathan Linial and Adi Shraibman.
\newblock Learning complexity vs communication complexity.
\newblock {\em Combinatorics, Probability {\&} Computing}, 18(1-2):227--245,
  2009.

\bibitem[LZ16]{LovettZhang16}
Shachar Lovett and Jiapeng Zhang.
\newblock On the impossibility of entropy reversal, and its application to
  zero-knowledge proofs.
\newblock {\em ECCC TR16-118}, July 31 2016.

\bibitem[Mal15]{Malka2015}
Lior Malka.
\newblock How to achieve perfect simulation and a complete problem for
  non-interactive perfect zero-knowledge.
\newblock {\em Journal of Cryptology}, 28(3):533--550, 2015.

\bibitem[Oka96]{okamoto1996relationships}
Tatsuaki Okamoto.
\newblock On relationships between statistical zero-knowledge proofs.
\newblock In {\em Proceedings of the twenty-eighth annual ACM symposium on
  Theory of computing}, pages 649--658. ACM, 1996.

\bibitem[Pat92]{paturi1992degree}
Ramamohan Paturi.
\newblock On the degree of polynomials that approximate symmetric boolean
  functions (preliminary version).
\newblock In {\em Proceedings of the twenty-fourth annual ACM symposium on
  Theory of computing}, pages 468--474. ACM, 1992.

\bibitem[PS86]{paturisimon}
Ramamohan Paturi and Janos Simon.
\newblock Probabilistic communication complexity.
\newblock {\em J. Comput. Syst. Sci.}, 33(1):106--123, 1986.

\bibitem[PSS14]{PSS14}
Periklis~A. Papakonstantinou, Dominik Scheder, and Hao Song.
\newblock Overlays and limited memory communication.
\newblock In {\em {IEEE} 29th Conference on Computational Complexity, {CCC}
  2014, Vancouver, BC, Canada, June 11-13, 2014}, pages 298--308. {IEEE}
  Computer Society, 2014.

\bibitem[PV08]{peikert2008noninteractive}
Chris Peikert and Vinod Vaikuntanathan.
\newblock Noninteractive statistical zero-knowledge proofs for lattice
  problems.
\newblock In {\em Annual International Cryptology Conference}, pages 536--553.
  Springer, 2008.

\bibitem[RS10]{razborovsherstov}
Alexander~A. Razborov and Alexander~A. Sherstov.
\newblock The sign-rank of {AC}$^0$.
\newblock {\em {SIAM} J. Comput.}, 39(5):1833--1855, 2010.

\bibitem[Sha92]{shamir1992ip}
Adi Shamir.
\newblock {IP}={PSPACE}.
\newblock {\em Journal of the ACM (JACM)}, 39(4):869--877, 1992.

\bibitem[She11]{sherstov2011pattern}
Alexander~A Sherstov.
\newblock The pattern matrix method.
\newblock {\em SIAM Journal on Computing}, 40(6):1969--2000, 2011.

\bibitem[She14]{sherstov2014breaking}
Alexander~A Sherstov.
\newblock Breaking the {M}insky-{P}apert barrier for constant-depth circuits.
\newblock In {\em Proceedings of the 46th Annual ACM Symposium on Theory of
  Computing}, pages 223--232. ACM, 2014.

\bibitem[She15]{sherstov2015power}
Alexander~A Sherstov.
\newblock The power of asymmetry in constant-depth circuits.
\newblock In {\em Foundations of Computer Science (FOCS), 2015 IEEE 56th Annual
  Symposium on}, pages 431--450. IEEE, 2015.

\bibitem[SV03]{sahai2003complete}
Amit Sahai and Salil Vadhan.
\newblock A complete problem for statistical zero knowledge.
\newblock {\em Journal of the ACM (JACM)}, 50(2):196--249, 2003.

\bibitem[Tha14]{thalericalp}
Justin Thaler.
\newblock Lower bounds for the approximate degree of block-composed functions.
\newblock {\em Electronic Colloquium on Computational Complexity {(ECCC)}},
  21:150, 2014.

\bibitem[Tod91]{Toda91}
Seinosuke Toda.
\newblock {PP} is as hard as the polynomial-time hierarchy.
\newblock {\em {SIAM} J. Comput.}, 20(5):865--877, 1991.

\bibitem[Ver95]{Vereshchagin95}
Nikolai~K. Vereshchagin.
\newblock Lower bounds for perceptrons solving some separation problems and
  oracle separation of {AM} from {PP}.
\newblock In {\em Third Israel Symposium on Theory of Computing and Systems,
  {ISTCS} 1995, Tel Aviv, Israel, January 4-6, 1995, Proceedings}, pages
  46--51, 1995.

\bibitem[VV11]{ValiantValiant2011}
Gregory Valiant and Paul Valiant.
\newblock Estimating the unseen: an n/log(n)-sample estimator for entropy and
  support size, shown optimal via new {CLTs}.
\newblock In {\em Proceedings of the 43rd {ACM} Symposium on Theory of
  Computing, {STOC} 2011, San Jose, CA, USA, 6-8 June 2011}, pages 685--694,
  2011.

\end{thebibliography}

%\newpage
\appendix

\section{Upper Bounds on $\UPPdt(\Gapmajority(f))$ and $\UPPdt(\Gapand(f))$}
\label{sec:char}

In Section~\ref{sec:composition}, we proved that if $f$ has a high {\em approximate degree} ({\em positive one-sided approximate degree}), then $\Gapmajority(f)$ ($\Gapand(f)$) is hard for $\UPP$ algorithms. In this section we show that condition is also necessary: when $\odegap(f)$ is small, $\Gapand(f)$ has a lower $\UPP$ query complexity; and when $\dega(f)$ is small, $\Gapmajority(f)$ has a low $\UPP$ query complexity. Formally, we have:

\begin{theo}\label{theo:char}
	For a partial function $f$ with input length $n$, and a positive integer $m$.
	
	\begin{align*}
	\UPPdt(\GapMaj{f}{m}{2/3}) &= O(\dega(f)). \\ %\label{degaf}
	\UPPdt(\GapAND{f}{m}{2/3})    &= O(\odegap(f)). %\label{odegaf}
	\end{align*}
	
	Recall that $\odegap(f) := \odegap_{1/3}(f)$, and $\dega(f) := \dega_{1/3}(f)$.
\end{theo}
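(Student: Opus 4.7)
The plan is to establish both bounds by constructing, for each case, an explicit polynomial $P(x_1,\dots,x_m)$ of the claimed degree that strictly sign-represents the target partial function, and then appealing to Lemma~\ref{lm:tdeg-to-uppdt} (which identifies $\UPPdt(F)$ with $\degsh(F)$). Both constructions start from the approximating polynomial guaranteed by the hypothesis and apply a constant-degree univariate amplification step that tightens the error parameter strictly below $1/3$ while preserving the partial-function bound $|p|\le 1+\epsilon$ on $\{0,1\}^M$.

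For $\Gapand(f)$, since $\Gapand_{m,2/3}(f)$ is a true block-composition, every $x_i$ lies in $\mathrm{dom}(f)$, so a linear aggregator suffices. First boost the positive one-sided approximating polynomial: composing any witness of $\odegap(f)\le d$ with $r(t)=-2(t-1)^2+1$ yields $q=r\circ(\cdot)$ of degree $2d$ with $q(x)\in[7/9,1]$ on $f^{-1}(1)$ and $q(x)\le 1/9$ on $f^{-1}(0)$ (note that no bound on $q$ from below is required, which is why this quadratic amplifier works in the one-sided setting). Define $Q(x_1,\dots,x_m)=\sum_{i=1}^m q(x_i)-\tfrac{2m}{3}$. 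On YES-instances $Q\ge m\cdot 7/9-2m/3=m/9>0$; on NO-instances $Q\le (2m/3)(1/9)+(m/3)(1)-2m/3 = -m/9<0$. Thus $\degsh(\Gapand(f))\le 2\odegap(f)=O(\odegap(f))$.

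For $\Gapmajority(f)$ the analogous linear aggregator $\sum(p(x_i)-1/2)$ fails: on YES-inputs up to $m/3$ of the blocks can lie outside $\mathrm{dom}(f)$, and on such a block $p(x_i)-1/2$ can be as negative as $-(3/2+\epsilon)$, easily swamping the $+(1/2-\epsilon)$ contribution from the $\ge 2m/3$ YES-type blocks. The key idea is to replace the linear aggregator by a \emph{quadratic} one, exploiting the fact that $p(x_i)^2$ cannot be negative. After amplifying the two-sided approximating polynomial to error $\epsilon<\sqrt{10}-3\approx 0.162$ (the hard step is the amplification itself — see below), set $P(x_1,\dots,x_m)=\sum_{i=1}^m p(x_i)^2 - c\,m$. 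Using $p(x_i)^2\ge (1-\epsilon)^2$ for $f(x_i)=1$, $p(x_i)^2\le \epsilon^2$ for $f(x_i)=0$, and $p(x_i)^2\le (1+\epsilon)^2$ for undefined $x_i$, the two bounds $(2/3)(1-\epsilon)^2$ and $(2/3)\epsilon^2+(1/3)(1+\epsilon)^2$ are strictly separated exactly when $\epsilon^2+6\epsilon-1<0$, i.e.\ $\epsilon<\sqrt{10}-3$, allowing a threshold $c$ that makes $P$ strictly positive on YES and strictly negative on NO; this yields $\degsh(\Gapmajority(f))\le 2\deg p=O(\dega(f))$.

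The main obstacle is constructing the amplification polynomial used in the $\Gapmajority$ argument: one needs a univariate $r$ of constant degree satisfying $r(t)\in[1-\epsilon,1+\epsilon]$ on $[2/3,4/3]$, $r(t)\in[-\epsilon,\epsilon]$ on $[-1/3,1/3]$, \emph{and} $|r(t)|\le 1+\epsilon$ on the full range $[-4/3,4/3]$ of $p$'s values on $\{0,1\}^M$. The last condition rules out smoothstep-type constructions, which blow up outside $[0,1]$, and forces a carefully rescaled Chebyshev-style construction on $[-4/3,4/3]$; once such an $r$ is available the composition $r\circ p$ is the required amplified polynomial, and the quadratic aggregator argument goes through to give $\degsh(\Gapmajority(f))=O(\dega(f))$, completing the proof via Lemma~\ref{lm:tdeg-to-uppdt}.
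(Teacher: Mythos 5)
Your strategy is the same one the paper uses: invoke Lemma~\ref{lm:tdeg-to-uppdt} to reduce to a threshold-degree upper bound, then exhibit an explicit sign-representing polynomial as a quadratic aggregator $\sum_i p(x_i)^2 - cm$ for $\Gapmajority(f)$ and a linear aggregator $\sum_i p(x_i) - cm$ for $\Gapand(f)$, after first amplifying the error of the approximating (resp.\ one-sided approximating) polynomial to a small constant. Your diagnosis of why the linear aggregator fails for $\Gapmajority$ — unbounded-magnitude blocks outside $\mathrm{dom}(f)$ can swamp the signal, while $p(x_i)^2 \ge 0$ cannot — is exactly the right observation, and your arithmetic for the separation threshold $\epsilon < \sqrt{10}-3$ is correct. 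Your explicit degree-$2$ one-sided amplifier $r(t) = -2(t-1)^2 + 1$ for the $\Gapand$ case is valid, and you correctly note that a lower bound on $r$ is unnecessary there. (Your NO-side computation gives $-7m/27$, not $-m/9$, but the sign is still right.)

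The one genuine gap is the $\Gapmajority$ amplification step, which you flag yourself as ``the hard step'' and then leave as an assertion: you never construct the univariate $r$ with $|r(t)-1| \le \epsilon$ on $[2/3,4/3]$, $|r(t)|\le\epsilon$ on $[-1/3,1/3]$, \emph{and} $|r(t)|\le 1+\epsilon$ on all of $[-4/3,4/3]$. You are right that smoothstep-type constructions blow up outside $[0,1]$, and right that a Chebyshev-flavored construction works on a bounded interval; but ``a carefully rescaled Chebyshev-style construction'' is not a proof. The paper sidesteps this entirely by invoking Claim~\ref{claim:const-ok} — a standard, citable fact (for this paper's bounded-outside-the-domain definition of $\dega$) that $\dega_{\eps}(f) = \Theta(\dega(f))$ for any constant $\eps < 1/2$, and likewise for $\odegap$ — which lets it jump straight to an $\eps=1/20$ approximator $p$ and run the aggregator argument. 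Had you either cited such a lemma or completed the Chebyshev construction, the proof would be fully correct; as written, you have re-derived what amounts to the one-sided half of Claim~\ref{claim:const-ok} and asserted the two-sided half. Note also that amplification is genuinely necessary in both cases: at $\eps=1/3$ the YES and NO bounds for the linear aggregator in the $\Gapand$ case coincide exactly (both equal $2m/3 - cm$), and for the quadratic aggregator the inequality $\eps^2+6\eps-1<0$ fails.
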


The choice of the constants $2/3$ and $1/3$ is only for convenience. We need the following standard fact for approximate degree and positive one-sided approximate degree, (for example, see ~\cite[Claim 4.3]{diakonikolas2010bounded} and ~\cite[Fact~2.4]{sherstov2015power}).

\begin{claim}\label{claim:const-ok}
	For any constant $0 < \epsilon < 0.5$, and any partial function $f$, $\dega_{\epsilon}(f) = \Theta(\dega(f))$ and $\odegap_{\epsilon}(f) = \Theta(\odegap(f))$.
\end{claim}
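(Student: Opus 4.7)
The plan is to prove both equalities by standard error amplification arguments for approximate degree and positive one-sided approximate degree. Since both quantities are non-increasing in $\epsilon$, the ``easy'' direction of each equality is automatic: if $\epsilon \geq 1/3$ then $\dega_\epsilon(f) \leq \dega(f)$, and if $\epsilon \leq 1/3$ then $\dega(f) \leq \dega_\epsilon(f)$, with analogous inequalities for $\odegap$. It therefore suffices to show that for any constants $0 < \epsilon_1 < \epsilon_2 < 1/2$, one has $\dega_{\epsilon_1}(f) = O(\dega_{\epsilon_2}(f))$ and $\odegap_{\epsilon_1}(f) = O(\odegap_{\epsilon_2}(f))$, with the hidden constants depending only on $\epsilon_1$ and $\epsilon_2$. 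In each case, the strategy is to take an $\epsilon_2$-approximator $p$ of $f$ (as in Definition~\ref{maindef}) and compose it with a univariate polynomial $R \colon \R \to \R$ of constant degree to produce $q := R \circ p$, which will be an $\epsilon_1$-approximator of $f$ of degree $\deg(R) \cdot \deg(p) = O(\deg(p))$.

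For the two-sided case, the polynomial $R$ must satisfy three conditions: $R([-\epsilon_2, \epsilon_2]) \subseteq [-\epsilon_1, \epsilon_1]$ to handle inputs in $f^{-1}(0)$; $R([1-\epsilon_2, 1+\epsilon_2]) \subseteq [1-\epsilon_1, 1+\epsilon_1]$ to handle inputs in $f^{-1}(1)$; and $R([-(1+\epsilon_2), 1+\epsilon_2]) \subseteq [-(1+\epsilon_1), 1+\epsilon_1]$ to preserve the boundedness of $q$ outside the domain of $f$. The existence of such an $R$ of constant degree follows from quantitative Weierstrass approximation: let $F \colon [-(1+\epsilon_2), 1+\epsilon_2] \to [0,1]$ be any continuous function (say piecewise linear) that equals $0$ on $[-(1+\epsilon_2), \epsilon_2]$ and equals $1$ on $[1-\epsilon_2, 1+\epsilon_2]$. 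By Jackson's theorem there is a polynomial $R$ of degree at most some constant depending only on $\epsilon_1, \epsilon_2$ that uniformly approximates $F$ on $[-(1+\epsilon_2), 1+\epsilon_2]$ to error $\epsilon_1$. Since $F$ lies in $[0,1]$, all three conditions above then follow for $R$.

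For the positive one-sided case, the construction is simpler and completely explicit: take $R(t) := 1 - (1-t)^{2k+1}$ for a large enough odd exponent $2k+1$ depending only on $\epsilon_1$ and $\epsilon_2$. Given a positive one-sided $\epsilon_2$-approximator $p$, set $q = R \circ p$. For $x \in f^{-1}(1)$, $p(x) \in [1-\epsilon_2, 1+\epsilon_2]$, so $|1 - q(x)| = |1 - p(x)|^{2k+1} \leq \epsilon_2^{2k+1} \leq \epsilon_1$ for $k$ a sufficiently large constant. For $x \in f^{-1}(0)$, $p(x) \leq \epsilon_2 < 1/2$, so $1 - p(x) \geq 1 - \epsilon_2 > 1/2 > 0$, and since $2k+1$ is odd we get $(1-p(x))^{2k+1} \geq (1-\epsilon_2)^{2k+1} \geq 1 - \epsilon_1$ for $k$ large, yielding $q(x) \leq \epsilon_1$ as required. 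No constraint is imposed on $q$ elsewhere, so this completes the argument.

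The main subtlety is in the two-sided case: the requirement $|q(x)| \leq 1 + \epsilon_1$ on $\{0,1\}^M \setminus D$ forces $R$ to be controlled on the entire interval $[-(1+\epsilon_2), 1+\epsilon_2]$ rather than only on the two ``target'' subintervals around $0$ and $1$. The Weierstrass/Jackson-based construction handles this uniformly because the function being approximated is globally bounded in $[0,1]$. An alternative hands-on approach based on iterating the cubic $3t^2 - 2t^3$ also works but requires first clamping $p$ via a Chebyshev-type preprocessing polynomial to keep its values near $[0,1]$, which is slightly more delicate to set up.
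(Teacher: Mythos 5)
The paper does not actually prove this claim---it is invoked as a standard fact with citations to \cite{diakonikolas2010bounded} and \cite{sherstov2015power}---so there is no in-paper argument to compare against; I will just assess correctness. Your reduction to showing $\dega_{\epsilon_1}(f)=O(\dega_{\epsilon_2}(f))$ for constants $0<\epsilon_1<\epsilon_2<1/2$ via monotonicity is fine, and your two-sided argument (compose an $\epsilon_2$-approximator with a constant-degree Jackson approximant of a piecewise-linear $F\in[0,1]$ that is $0$ near $0$ and $1$ near $1$) is correct; in particular you correctly handle the boundedness requirement $|q(x)|\le 1+\epsilon_1$ outside $D$ because $F$ is globally in $[0,1]$.

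The one-sided case, however, is wrong: the map $R(t)=1-(1-t)^{2k+1}$ amplifies in the wrong direction. For $x\in f^{-1}(0)$ with $p(x)=\epsilon_2$ you need $(1-\epsilon_2)^{2k+1}\ge 1-\epsilon_1$, but $(1-\epsilon_2)^{2k+1}$ is a \emph{decreasing} function of $k$ tending to $0$, so your inequality ``$(1-\epsilon_2)^{2k+1}\ge 1-\epsilon_1$ for $k$ large'' is backwards; already for $k=0$ it requires $\epsilon_2\le\epsilon_1$, the opposite of the case you are in. Concretely, with $\epsilon_2=1/3$, $\epsilon_1=0.1$, $k=1$ and $p(x)=1/3$ on a $0$-input, you get $q(x)=1-(2/3)^3=19/27\approx 0.70$, violating $q(x)\le\epsilon_1$, and increasing $k$ only pushes $q(x)$ closer to $1$. (Your $R$ is the ``OR-style'' amplifier that pulls everything in $(0,2)$ toward $1$; it is suited to a different one-sided error model.) The fix is to amplify around $t=1$ with an \emph{even} power and an explicit rescaling, e.g.\ $R(t):=1-\epsilon_1\bigl(\tfrac{1-t}{\epsilon_2}\bigr)^{2k}$. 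For $x\in f^{-1}(1)$ one has $1-p(x)\in[-\epsilon_2,\epsilon_2]$, so $R(p(x))\in[1-\epsilon_1,1]$. For $x\in f^{-1}(0)$ one has $1-p(x)\ge 1-\epsilon_2>0$, so $\bigl(\tfrac{1-p(x)}{\epsilon_2}\bigr)^{2k}\ge\bigl(\tfrac{1-\epsilon_2}{\epsilon_2}\bigr)^{2k}$, and since $\tfrac{1-\epsilon_2}{\epsilon_2}>1$ (here is where $\epsilon_2<1/2$ is used), a constant $k$ with $\epsilon_1\bigl(\tfrac{1-\epsilon_2}{\epsilon_2}\bigr)^{2k}\ge 1$ forces $R(p(x))\le 0\le\epsilon_1$; this also correctly handles $p(x)\to-\infty$ on $0$-inputs, where no lower bound is imposed by Definition~\ref{maindef}. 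With that replacement your proof is complete.
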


Now we prove Theorem~\ref{theo:char}.

\begin{proofof}{Theorem~\ref{theo:char}}
	
	We first upper bound $\UPPdt(\GapMaj{f}{m}{2/3})$. By Lemma~\ref{lm:tdeg-to-uppdt}, it suffices to upper bound $\degsh(\GapMaj{f}{m}{2/3})$. By Claim~\ref{claim:const-ok}, we have $\dega_{1/20}(f) = O(\dega(f))$; let $p$ be the corresponding approximating polynomial for $f$. Now, let the input for $\GapMaj{f}{m}{2/3}$ be $x=(x_1,x_2,\dotsc,x_m)$, where each $x_i$ is the input to the $i$th copy of $f$. Let $q$ be a polynomial on $\{0,1\}^{nm}$ defined as $q(x) := \frac{1}{m} \cdot \sum_{i=1}^{m} p(x_i)^2 - 0.5$.
	
	Now, evidently $\deg(q)=2 \deg(p)$. From the definition of $\dega_{1/20}(f)$, we can see when $\GapMaj{f}{m}{2/3}(x) \newline= 1$, 
	$$
	q(x) \ge \frac{2}{3} \cdot 0.95^2 -0.5 > 0;
	$$ 
	and when $\Gapmajority(f)(x) = 0$, 
	$$
	q(x) \le \frac{2}{3} \cdot 0.05^2 + \frac{1}{3} \cdot 1.05^2 -0.5 < 0.
	$$ 
	
	Hence, by the definition of $\degsh$ (cf. Definition~\ref{maindef}), we conclude that $\degsh(\GapMaj{f}{m}{2/3}) = O(\dega(f))$, and this completes the proof for the first claim.
	
	Similarly, in order to upper bound $\UPPdt(\GapAND{f}{m}{2/3})$, it suffices to consider $\degsh(\GapAND{f}{m}{2/3})$. By Claim~\ref{claim:const-ok}, we have $\odegap_{1/20}(f) = O(\odegap(f))$; let $p$ be a corresponding positive one-sided approximating polynomial for $f$.  Let $q$ be a polynomial on $\{0,1\}^{nm}$ defined as $q(x) := \frac{1}{m} \cdot \sum_{i=1}^{m} p(x_i) - 0.5$.
	
	Clearly $\deg(q) = \deg(p)$. From the definition of $\odegap_{1/20}(f)$, we can see when $\GapAND{f}{m}{2/3}(x) = 1$, 
	$$
	q(x) \ge 0.95 -0.5 > 0;
	$$ 
	and when $\GapAND{f}{m}{2/3}(x) = 0$, 
	$$
	q(x) \le \frac{2}{3} \cdot 0.05 + \frac{1}{3} \cdot 1.05 -0.5 < 0.
	$$ 
	
	Therefore, $\degsh(\GapAND{f}{m}{2/3} = O(\odegap(f))$, and this completes the whole proof.
\end{proofof}

\section{Missing Proofs From Section~\ref{sec:composition}}
\label{app:missing-lift}

In this section we provide the missing proofs from Section~\ref{sec:composition}.
We begin with Lemma \ref{lm:helper2}, restating the lemma here for convenience.

\vspace{0.2cm}
\noindent
{\bf Lemma~\ref{lm:helper2} }
(restated)
{\em
	Let $a \ge 40$, $n$ be a sufficiently large integer and $\epsilon$ be a real such that $0.5 < \epsilon < 1$.
	Then there exists an (explicitly given) univariate polynomial $P \colon \R \to \R$ such that:
	
	\begin{itemize}
		\item $P(x) = (-a)^x$ for $x \in \{0,\dotsc,\epsilon \cdot n \}$.
		\item $|P(x)| \le a^x / 2$ for $x \in \{\epsilon \cdot n +1, \dotsc, n \}$.
		\item $P$ has degree of at most $\left(1 + \frac{10}{a}\right) \cdot \epsilon \cdot n + 3$.
	\end{itemize}
}

\begin{proofof}{Lemma~\ref{lm:helper2}}
We begin by constructing the polynomial $P \colon \R \to \R$ whose existence is claimed by the lemma. 

	\medskip
	\noindent \textbf{Construction of $P$.}
	Let $N = \left\lceil \left(1 + \frac{10}{a}\right) \cdot \epsilon \cdot n + 2 \right\rceil$. 
	We define $P$ through interpolation to be the unique polynomial of degree at most $N$ satisfying the following properties.
	
	\begin{itemize}
		\item $P(x) = (-a)^x$ for $x \in \{0,\dotsc,\epsilon \cdot n \}$.
		\item $P(x) = 0$ for $x \in \{\epsilon \cdot n + 1, \dotsc, N\}$.
	\end{itemize}
	
	\medskip
	\noindent \textbf{Analysis of $P$.} 
	Under the above definition, it is obvious that  the first and the last conditions in Lemma~\ref{lm:helper2} are satisfied by $P$. In the rest of the proof, we establish that $P$ also satisfies the second condition claimed by Lemma~\ref{lm:helper2}, i.e., 
	\begin{equation} \label{eq1} |P(x)| \le a^x / 2 \text{ for } x \in \{\epsilon \cdot n +1, \dotsc, n \}. \end{equation}
	
	When $\eps$ is a constant strictly between $1/2$ and $1$ and $a$ is a sufficiently large constant, Equation \eqref{eq1} is an easy consequence of standard bounds on the growth rate of low-degree polynomials defined through interpolation (cf. \cite[Lemma 3.1]{razborovsherstov}). However, our applications require us to consider $\eps \approx 1-1/3\log n = 1-o(1)$ and $a=\Theta(\log n)$. To handle this parameter regime, a more delicate analysis seems to be required.

	For each $i \in \{0,\dotsc,\epsilon \cdot n \}$, define the polynomial $e_i$ as
	
	\begin{equation}\label{eqe_i}
	e_i(x) := \prod_{j \in \{0,\dotsc,N\} \setminus \{i\}} \frac{x-j}{i-j}.
	\end{equation}
	
	Observe that
	\begin{equation}\label{eqob}
	\text{when $x \in \{0,\dotsc,N\}$, $e_i(x)$ is equivalent to $\indicator_{x = i}$.}
	\end{equation}
	
	Moreover, each $e_i(x)$ has degree at most $N$. Hence, we may write 
	\begin{equation} \label{eqPsum} P = \sum_{i \in \{0,\dotsc,\epsilon \cdot n\}} e_i \cdot (-a)^i.\end{equation}
	Indeed, the right hand side of Equation \eqref{eqPsum} is a polynomial of degree of at most $N$, and by Observation~\eqref{eqob}, the right hand side agrees with $P$ at all $N$ inputs in $\{0, 1, \dots, N\}$. It follows that the right hand side of Equation \eqref{eqPsum} and $P$ are equal as formal polynomials.
	
        Thus, for \emph{any} $x$, $P(x)$ can be expressed as follows.
	\begin{equation} \label{eqP}
	P(x) := \sum_{i \in \{0,\dotsc,\epsilon \cdot n\}} e_i(x) \cdot (-a)^i.
	\end{equation}

     For $x \in \{\epsilon \cdot n +1, \dotsc, N\}$, as $P(x) = 0$, Equation \eqref{eq1} trivially satisfied. So we  assume $x \in \{N+1,\dotsc, n\}$ from now on.
	Observe that for each $i \in \{0,\dotsc,\epsilon\cdot n\}$,
	\begin{equation}\label{eqob2}
	\prod_{j \in \{0,\dotsc,N\} \setminus \{i\}}  (x-j) = \prod_{j \in \{x-N,\dotsc,x\} \setminus \{x-i\}} j = \frac{x-N}{x-i} \prod_{j=x-N+1}^{x} j \le \prod_{j=x-N+1}^{x} j = x! / (x-N)!
	\end{equation}
	
	and
	\begin{equation}\label{eqob3}
	\prod_{j \in \{0,\dotsc,N\} \setminus \{i\}} |i-j| = \prod_{j=0}^{i-1} (i-j) \cdot \prod_{j=i+1}^{N} (j-i) = i! \cdot (N-i)!.
	\end{equation}
	
	Using Equation~\eqref{eqob3} and Inequality~\eqref{eqob2}, we can bound $|e_i(x)|$ by
	\begin{align}
	|e_i(x)| &\le \frac{x!/(x-N)!}{i! \cdot (N-i)!} \notag\\
	&=\frac{x!}{(x-N)! \cdot N!} \cdot \frac{N!}{i! \cdot (N-i)!}\notag\\
	&=\binom{x}{N} \cdot \binom{N}{i} \label{lastlast}.\\
	\end{align}
	
	Combining Expression \eqref{lastlast} with Equation~\eqref{eqP}, we can bound $|P(x)|$ by
	\begin{align}
	|P(x)| &\le  \sum_{i=0}^{\epsilon \cdot n} |e_i(x)| \cdot a^i \notag\\
	&\le \sum_{i=0}^{\epsilon \cdot n} \binom{x}{N} \cdot \binom{N}{i} \cdot a^i \notag\\
	&=  \binom{x}{N} \cdot  \sum_{i=0}^{\epsilon \cdot n} \binom{N}{i} \cdot a^i. \label{eqboundP}
	\end{align}
	
	Now, we are going to  bound $\sum_{i \in \{0,\dotsc,\epsilon \cdot n\}} \binom{N}{i} \cdot a^i$, as it is independent of the variable $x$. Note that for $i \in \{0,\dotsc,\epsilon \cdot n-1\}$, we have
	\begin{align*}
	&\left[\binom{N}{i+1} \cdot a^{i+1} \right]\Big/ \left[ \binom{N}{i} \cdot a^i \right]\\
	=& \frac{N-i}{i+1} \cdot a \\
	\ge& \frac{N - \epsilon \cdot n}{\epsilon \cdot n} \cdot a \tag{$i \le \epsilon \cdot n -1$}\\
	\ge& \frac{\left(1+\frac{10}{a}\right)\cdot \epsilon\cdot n - \epsilon \cdot n}{\epsilon \cdot n} \cdot a \tag{$N \ge \left(1+\frac{10}{a}\right) \cdot \epsilon \cdot n$}\\
	\ge& \frac{10}{a} \cdot a \ge 2.\\
	\end{align*}
	
	Hence,
	\begin{align}
	&\notag\sum_{i \in \{0,\dotsc,\epsilon \cdot n\}} \binom{N}{i} \cdot a^i\\
	\le&\notag \sum_{i \in \{0,\dotsc,\epsilon \cdot n\}} \binom{N}{\epsilon \cdot n} \cdot a^{\epsilon \cdot n} \cdot 2^{-\epsilon \cdot n +i}\\
	\le \label{eqCNepsn} & 2 \cdot \binom{N}{\epsilon \cdot n} \cdot a^{\epsilon \cdot n}.\\
	\end{align}
	Combining Expression \eqref{eqCNepsn} with Expression \eqref{eqboundP}, 
	it follows, in order to establish that Equation \eqref{eq1} holds, it suffices to show that
	\begin{equation}\label{eq2}
	2 \cdot \binom{x}{N}  \cdot \binom{N}{\epsilon \cdot n} \cdot a^{\epsilon \cdot n} \Big/ a^{x} \le 1/2 \text{ for } x \in \{N+1,\dotsc,n\}.
	\end{equation}

	\newcommand{\xstar}{x^{\star}}
	
	Note the left side of inequality~\eqref{eq2} is maximized if and only if the function
	\begin{equation}\label{eqf}
	f(x) := \binom{x}{N} \Big/ a^{x}
	\end{equation}
	is maximized. So now we are going to derive the value $\xstar \in \{N+1,\dotsc,n\}$ maximizing $f(\xstar)$.
	
	When $x \in \{N+1,\dotsc,n\}$, we have
	
	\begin{align}
	\notag f(x+1)/f(x)&=\left[ \binom{x+1}{N} \big/ a^{x+1} \right] \Big/ \left[\binom{x}{N} \big/ a^{x} \right]\\
	&=\frac{x+1}{a(x-N+1)} = \frac{1}{a} \cdot \left( 1 + \frac{N}{x-N+1} \right). \label{eqx}
	\end{align}
	
	By Equation~\eqref{eqx}, we can see that $f(x+1)/f(x)$ is a decreasing function in $x$. Therefore, $f(x)$ is maximized when $x$ is the smallest integer such that $
	f(x+1)/f(x) \le 1$, which is equivalent to
	
	\begin{align*}
	&1 + \frac{N}{x-N+1} \le a.\\
	\implies& \frac{N}{x-N+1} \le a-1\\
	\implies& (a-1) \cdot x \ge a\cdot N -(a-1).\\
	\end{align*}
	
	Therefore, the maximizer of $f(x)$ is 
	\begin{equation}
	\xstar =  \left\lceil \frac{a \cdot N}{a-1} - 1 \right\rceil. \label{eqxstar}
	\end{equation}
	
	Now, it suffices to verify that inequality~\eqref{eq2} holds when $x=\xstar$, i.e., 
	\begin{equation} \label{eqCxstarN}
	2 \cdot \binom{\xstar}{N}  \cdot \binom{N}{\epsilon \cdot n} \cdot a^{\epsilon \cdot n} \le a^{\xstar}/2.
	\end{equation}
	
	\noindent \textbf{Establishing Inequality \eqref{eqCxstarN}.} 
	We claim that 
	
	\begin{equation}
	\binom{\xstar}{N} \le (2e \cdot a)^{\xstar - N}. \label{eqclaim1}
	\end{equation}
	It is easy to see that $\xstar \ge N$ by Equation~\eqref{eqxstar}. When $\xstar = N$, we have $\binom{\xstar}{N} = 1 \le \left( 2 e \cdot a \right)^{\xstar - N}$. And when $\xstar > N$, we have $\xstar - N = \left\lceil \frac{N}{a-1} -1 \right\rceil \ge 1$, which in turn means $\frac{N}{a-1} > 1$.
	
	If $\frac{N}{a-1} > 2$, we have $\left\lceil \frac{N}{a-1} -1 \right\rceil \ge \frac{N}{a-1} - 1 \ge \frac{N}{2(a-1)}$. Otherwise, $\frac{N}{a-1} \in (1,2]$, and we also have $\left\lceil \frac{N}{a-1} -1 \right\rceil = 1 \ge \frac{N}{2(a-1)}$. Putting them together, we can see that when $\xstar > N$,
	\begin{equation} \label{theabove}
	\frac{\xstar}{\xstar - N} = \left\lceil \frac{a \cdot N}{a-1} - 1 \right\rceil \Big/ \left\lceil \frac{N}{a-1} - 1 \right\rceil \le \frac{a \cdot N}{a-1} \Big/ \frac{N}{2(a-1)} \le 2a.
	\end{equation}
	
	Combining Inequality \eqref{theabove} with the inequality $\binom{n}{m} \le \left( \frac{en}{m} \right)^m$, we have
	
	$$
	\binom{\xstar}{N}
	=\binom{\xstar}{\xstar-N}
	\le\left( \frac{e \cdot \xstar}{\xstar-N} \right)^{\xstar-N}
	\le\left( 2 e \cdot a \right)^{\xstar - N},
	$$
	
	when $\xstar > N$. This proves our Claim~\eqref{eqclaim1}.
	
	\medskip
	
	Now we bound $\binom{N}{\epsilon \cdot n}$. As $N \ge \left( 1 + \frac{10}{a} \right) \cdot \epsilon \cdot n$ and $a \ge 40$, we have
	
	$$
	\frac{e \cdot N}{N-\epsilon \cdot n} = e \cdot \left( 1 + \frac{\epsilon \cdot n}{N - \epsilon \cdot n} \right) \le  e \cdot \left(
	1 + \frac{a}{10} \right) \le \frac{ae}{5},
	$$
	
	and
	
	\begin{equation}\label{eqNepsn}
	\binom{N}{\epsilon \cdot n} 
	= \binom{N}{N - \epsilon \cdot n}
	\le \left( \frac{e \cdot N}{N-\epsilon \cdot n} \right)^{N-\epsilon \cdot n} 
	\le \left( \frac{ae}{5} \right)^{N-\epsilon \cdot n}.
	\end{equation}
	
	Putting Inequalities~\eqref{eqclaim1} and \eqref{eqNepsn} together, we have
	
	\begin{align}
	&2 \cdot \binom{\xstar}{N}  \cdot \binom{N}{\epsilon \cdot n} \cdot a^{\epsilon \cdot n}\\
	\le&2 \cdot \left( 2e \cdot a \right)^{\xstar - N} \cdot \left( \frac{ae}{5} \right)^{N-\epsilon \cdot n} \cdot a^{\epsilon \cdot n}\\
	\le& 2 \cdot (2e)^{\xstar - N} \cdot \left( \frac{e}{5} \right)^{N - \epsilon \cdot n} \cdot a^{\xstar}. \label{eqfinal}
	\end{align}
	
	As $\left( 1 + \frac{10}{a} \right) \cdot \epsilon \cdot n + 2 \le N = \left\lceil \left(1 + \frac{10}{a}\right) \cdot \epsilon \cdot n + 2 \right\rceil \le \left( 1 + \frac{10}{a} \right) \cdot \epsilon \cdot n + 3$ and $a \ge 40$, we have
	$$
	\xstar - N = \left\lceil \frac{N}{a-1} - 1 \right\rceil \le \frac{N}{a-1} \le \frac{1}{a-1} \cdot \left[ \left( 1 + \frac{10}{a} \right) \cdot \epsilon \cdot n + 3 \right] \le \frac{2}{a} \cdot \epsilon \cdot n + \frac{1}{10},
	$$
	
	and
	$$
	N -\epsilon \cdot n \ge \frac{10}{a} \cdot \epsilon \cdot n + 2.
	$$
	
	Therefore, we can further bound \eqref{eqfinal} by
	$$
	2 \cdot 2^{\frac{2}{a} \cdot \epsilon \cdot n + \frac{1}{10}} \cdot \left(\frac{e}{5} \right)^{\frac{10}{a} \cdot \epsilon \cdot n + 2} \cdot a^{\xstar} \le a^{\xstar}/2,
	$$

	which establishes Inequality \eqref{eqCxstarN} and completes the proof.
	
\end{proofof}

Now we provide the simple proof for Lemma~\ref{lm:mus}. We first restate it for the reader's convenience.

\vspace{0.2cm}
\noindent
{\bf Lemma~\ref{lm:mus} }
(restated)
{\em
	Let $f \colon D \to \{0,1\}$ with $D \subseteq \{0,1\}^M$ be a partial function, $\epsilon$ be a real in $[0,1/2)$, and $d$ be an integer such that $\dega_{\epsilon}(f) > d$.
	
	Let $\mu \colon \{0,1\}^M \to \R$ be a dual witness to the fact $\dega_{\epsilon}(f) > d$ as per Theorem~\ref{theo:dual-dega}. If $f$ satisfies the stronger condition that $\odegap_{\epsilon}(f) > d$, let $\mu$ to be a dual witness to the fact that $\odegap_{\epsilon}(f) > d$ as per Theorem~\ref{theo:dual-odega}.
	
	We further define $\mu_+(x) := \max\{0,\mu(x)\}$ and $\mu_-(x) := -\min\{0,\mu(x)\}$ to be two non-negative real functions on $\{0,1\}^M$, and $\mu_-^i$ and $\mu_+^i$ be the restrictions of $\mu_-$ and $\mu_+$ on $f^{-1}(i)$ respectively for $i \in \{0,1\}$. Then the following holds:
	
	\begin{itemize}
		\item $\mu_+$ and $\mu_-$ have disjoint supports.
		\item $ \langle \mu_+, p \rangle = \langle \mu_-, p \rangle$, for any polynomial $p$ of degree at most $d$. In particular, $\|\mu_+\|_1 = \|\mu_-\|_1 = \frac{1}{2}$.
		\item $\|\mu_+^1\|_1 > \epsilon$ and $\|\mu_-^0\|_1 > \epsilon$. 
		\item If $\odegap_\epsilon(f) > d$, then $\|\mu_+^1\|_1 = 1/2$.
	\end{itemize}
}

\begin{proofof}{Lemma~\ref{lm:mus}}
	The first two claims follows directly from Theorem~\ref{theo:dual-dega} and the definitions of $\mu_+$ and $\mu_-$.
	
	For the third claim, by Theorem~\ref{theo:dual-dega}, we have
	
	\begin{align*}
	\sum_{x \in D} f(x) \cdot \mu(x) - \sum_{x \notin D} |\mu(x)| &> \epsilon.\\
	\text{Hence, } \|\mu_+^1\|_1 - \|\mu_-^1\|_1 - \sum_{x \notin D} |\mu_-(x)| &> \epsilon.\\
	\text{This implies that } \|\mu_+^1\|_1 - \|\mu_-^1\|_1 - (0.5-\|\mu_-^1\|_1-\|\mu_-^0\|_1) &> \epsilon.\\
	\text{Hence, }\|\mu_+^1\|_1 - (0.5-\|\mu_-^0\|_1) &> \epsilon.\\
	\end{align*}
	
	Therefore, $\|\mu_+^1\|_1 > \epsilon$, and $(0.5-\|\mu_-^0\|_1) < 0.5 - \epsilon$, which means $\|\mu_-^0\|_1 > \epsilon$.
	
	Finally, the last claim follows directly from Theorem~\ref{theo:dual-odega}.
\end{proofof}

%%% Local Variables: 
%%% mode: latex
%%% TeX-master: "main"
%%% End: 

%assymetric
%f,n,eps1,eps2
%yes: #1>=eps1 \cdot n
%no: #0>=eps2 \cdot n
\newcommand{\domain}{D}

\section{Proof of Theorem \ref{THMCC}: $\NISZKcc \not\subseteq \UPPcc$}
\label{sec:prelimcc}
\label{ccappendix}
In this appendix, we define the communication complexity classes $\NISZKcc$
and $\UPPcc$ and prove that $\NISZKcc \not\subseteq \UPPcc$. % (cf. Theorem \ref{cctheorem}).

\subsection{Preliminaries}
\subsubsection{Representation of Boolean Functions}
Up until this point of the paper, we have considered functions $f \colon \{0, 1\}^n \to \{0, 1\}$.
However, in order to define and reason about $\UPPcc$ communication complexity,
it will be highly convenient to consider functions $f \colon \{-1, 1\}^n \to \{-1, 1\}$ instead,
where $1$ is interpreted as logical \False\ and $-1$ is interpreted as logical \True. 
Hence, throughout this appendix, a total Boolean function $f$ will map $\{-1,1\}^n \to \{-1,1\}$. 
We will sometimes refer to the outputs of $f$ as $\True$ and $\False$ rather than $-1$ and $1$ for clarity.

The following additional convention will be highly convenient: we define partial Boolean functions to map undefined inputs to $0$. That is, a partial Boolean function on $\{-1,1\}^n$ will be thought as a map from $\{-1,1\}^n \to \{-1,0,1\}$. We use $\domain_f$ to denote the domain of $f$, i.e., $\domain_f := \{x \in \{-1,1\}^n: f(x) \in \{-1,1\} \}$.
%Hence, for the duration of this appendix, $\Gapmajority$ will denote a function mapping $\{-1, 1\}^n \to \{-1, 0, 1\}$ in the natural way.
%As we will see, this particular representation of partial functions is very convenient in many situations.

\subsubsection{Entropy Estimation}
The definition of $\NISZK$ (cf. Definition~\ref{def:niszkclass}) is quite technical. Fortunately, there is a simple complete problem for $\NISZK$ named \textsf{Entropy Estimation}, which basically asks one to estimate the entropy of a distribution. Formally, it is defined as follows.

\begin{defi}[\textsf{Entropy Estimation} ($\EA$)~\cite{goldreich1999can}]\label{defi:EA}
	The promise problem \textsf{Entropy Estimation}, denoted $\EA = (\EA_{\mathsf{YES}},\EA_{\mathsf{NO}})$, is defined such that $\EA^{-1}(1)= \EA_{\mathsf{YES}}$ and 
	$\EA^{-1}(0)= \EA_{\mathsf{NO}}$, where
	
	\begin{align*}
	\EA_{\mathsf{YES}} &:= \{(X,k) : H(X) > k + 1 \}\\
	\EA_{\mathsf{NO}} &:= \{(X,k) : H(X) < k - 1.\}
	\end{align*}
	
	Here, $k$ is an integer specified as part of the input in binary and $X$ is a distribution encoded as a circuit outputting $n$ bits.
	\end{defi}

\subsubsection{Definition of Communication Models}
\medskip \noindent \textbf{Sign-Rank and $\UPPcc$.}\label{sec:sign-rank-upp-cc}
The original definition of \emph{unbounded error communication complexity} ($\UPPcc$) defined by Babai et al.~\cite{babai1986complexity} is for {\em total functions}, but it is straightforward to generalize the definition to partial functions.
Consider a partial Boolean function $f : X \times Y \to \{-1,0,1\}$. In a $\UPP$ protocol of $f$, 
Alice receives an input $x \in X \subseteq \{-1,1\}^{n_x}$, and Bob receives an input $y \in Y \subseteq \{-1,1\}^{n_y}$. Each has an unlimited source of private randomness, and their goal is to compute the joint function $f(x, y)$ of their inputs with minimal communication for all pair $(x,y)$ such that $f(x,y) \ne 0$.
We say the protocol computes $f$ if for any input $(x, y) \in f^{-1}(\{-1,1\})$, the output of the protocol is correct with probability strictly greater than $1/2$. The cost of a protocol for computing $f$ is the maximum number of bits exchanged on any input $(x, y)$. The unbounded error communication complexity $\UPP(f)$ of a function $f$ is the minimum cost of a protocol computing $f$. A partial function $f$ is in the complexity class $\UPPcc$ if $\UPP(f) = O(\log^c n)$ for some constant $c$, where $n = \max\{n_x,n_y\}$.

The \emph{sign-rank} of a matrix $A$ with entries in $\R$ is the least rank of a real matrix $B$ with $A_{ij} \cdot B_{ij} > 0$ for all $i, j$ such that $A_{i,j} \ne 0$.

Paturi and Simon \cite{paturisimon} showed that $\UPP(f) = \log (\text{sign-rank}([f(x, y)]_{x \in X, y \in Y})) + O(1)$.\footnote{Paturi and Simon's proof was in the context of total functions, but their result is easily seen to apply to partial functions as well.} Therefore, the sign-rank characterizes the $\UPPcc$ complexity of a communication problem.

\medskip \noindent \textbf{Definition of $\NISZKcc$.}
We now define the $\NISZK^{\cc}$ complexity of a Boolean function $f$. This is a natural extension of the original definition by Goldreich, Sahai and Vadhan~\cite{goldreich1999can}, and follows the canonical method of turning a complexity class into its communication analogue introduced by Babai, Frankl and Simon~\cite{babai1986complexity}.

\begin{defi}[$\NISZK^{\cc}$]
	In a $\NISZK$-protocol for a partial Boolean function $f : X \times Y \to \{-1,0,1\}$ with $X \subseteq \{-1,1\}^{n_x}$ and $Y \subseteq \{-1,1\}^{n_y}$, there are three computationally-unbounded parties Alice, Bob, and Merlin. Alice holds an input $x \in \mathcal{X}$ while Bob holds an input $y \in \mathcal{Y}$. The goal of Merlin is to convince Alice and Bob that $f(x,y) = 1$, in a non-interactive and zero knowledge fashion. 
	
	Specifically, there is a public random string shared between the three parties $\sigma \in \{0,1\}^r$. Additionally, Alice and Bob can also use shared randomness between them, which is not visible to Merlin. The protocol starts by Merlin sending a message $m=m(x,y,\sigma)$ to Alice (without loss of generality, we can assume the message is a function of $x,y$ and $\sigma$). Then Alice and Bob communicate, after which Alice outputs ``accept'' or ``reject''.  A $\NISZK$ communication protocol for $f$ also must satisfy additional conditions. The first two are the standard notions of completeness and soundness for probabilistic proof systems.
	
	\begin{itemize}
		\item Completeness: For all $(x,y) \in f^{-1}(1)$, there is a strategy $m^*$ for Merlin that causes Alice to output accept with probability $\ge 2/3$ (where the probability is taken over both the public random string $\sigma$ and the shared randomness between Alice and Bob that is not visible to Merlin).
		\item Soundness: For all $(x,y) \in f^{-1}(0)$, and for every strategy for Merlin, Alice outputs accept with probability $\le 1/3$.
		\end{itemize}
		Let the worst case communication cost be $w_{V}$, where this cost includes both the length of Merlin's message $m^*(x, y, \sigma)$ and the total number of bits exchanged by Alice and Bob.
		Finally, a $\NISZK$ communication protocol must also satisfy the following zero knowledge condition
		\begin{itemize}
		\item Zero Knowledge: There is a public-coin randomized communication protocol $S$ with output in $\{0,1\}^k$ and worst case communication complexity $w_{S}$, such that for all $(x,y) \in F^{-1}(1)$, the statistical distance between the following two distributions is smaller than $1/n$, where $n = \max(n_x,n_y)$.
		
		\begin{itemize}
			\item (A) Choose $\sigma$ uniformly from $\{0,1\}^r$, and output $m^*(x,y,\sigma)$.
			\item (B) The output distribution of $S$ on $(x,y)$.
		\end{itemize}
	\end{itemize}
	
	Finally, the cost of the $\NISZK$ communication protocol is defined as $r + \max(w_V,w_S)$. The quantity $\NISZK^{\cc}(f)$ is defined as the minimum of the cost of all $\NISZK$ protocols for $f$.
\end{defi}

\begin{rem}
	In the original definition in~\cite{goldreich1999can} (see also Definition~\ref{def:niszkclass} in Section \ref{sec:niszkdef} of this work), it is required that the statistical difference between the distributions (A) and (B) is negligible. In the context of communication complexity, where protocols of cost $\polylog(n)$ are considered efficient, negligible corresponds 
	to $1/\log^{\omega(1)}(n)$. However, since for polylogarthmic cost protocols, the difference between the distributions (A) and (B) can be amplified from $1/\polylog(n)$ to $1/n^{\omega(1)}$ with a polylogarithmic blowup in cost (cf. Lemma~3.1 in \cite{goldreich1999can}), we simply require the difference to be at most $1/n$ here.
\end{rem}

\subsubsection{Approximate Degree, Threshold Degree, and Their Dual Characterizations}
Since we are now considering Boolean functions mapping $\{-1, 1\}$ to $\{-1,1\}$ rather than $\{0, 1\}$, it is convenient to redefine approximate degree and threshold degree in this new setting and state the appropriate dual formulations. 

\begin{defi}\label{defi:approx-deg-new}
	Let $f: \{-1,1\}^M \to \{-1,0,1\}$ be a partial function. Recall that the domain of $f$ is $\domain_f := \{x \in \{-1,1\}^M: f(x) \in \{-1,1\} \}$.
	\begin{itemize}
	\item The \emph{approximate degree} of $f$ with approximation constant $0\le\epsilon<1$, denoted $\dega_\epsilon(f)$, is the least degree of a real polynomial $p \colon \{-1, 1\}^M \to \R$ such that $|p(x)-f(x)| \le \epsilon$ when $x \in \domain_f$, and $|p(x)| \le 1+\epsilon$ for all $x\not\in \domain_f$. We refer to such a $p$ as an \emph{approximating polynomial} for $f$. We use $\dega(f)$ to denote $\dega_{1/3}(f)$.
	
	\item The \emph{threshold degree} of $f$, denoted $\degsh(f)$, is the least degree of a real polynomial $p$ such that $p(x) \cdot f(x) > 0$ for all $x \in \domain_f$.
    \end{itemize}
\end{defi}

\begin{rem}\label{rem:change}
All the results from earlier in this paper regarding partial functions mapping (subsets of) $\{0, 1\}^n$ to $\{0, 1\}$
can be translated to results regarding functions mapping $\{-1, 1\}^n$ to $\{-1, 0, -1\}$ as considered in this appendix. Specifically, given a partial function $f : \{-1,1\}^M \to \{-1,0,1\}$ with domain $\domain_f$, 
let $D_f^{\{0, 1\}} := \left\{\left(\frac{1-x_1}{2},\dotsc,\frac{1-x_M}{2}\right) \colon (x_1, \dots, x_M) \in D_f\right\}$.
Consider the partial function $f^{\{0,1\}} \colon D_f^{\{0,1\}} \to \{0, 1\}$ defined as 
$$f^{\{0,1\}}(x_1,\dotsc,x_M) = \frac{1-f((-1)^{x_1},\dotsc,(-1)^{x_M})}{2}.
$$
Then it is easy to see that $\dega_{\epsilon}(f)$ is equal to $\dega_{\epsilon/2}(f^{\{0,1\}})$, for any $\epsilon \in [0,1)$.
\end{rem}
We recall the definitions of norm, inner product and high pure degree, now with respect to Boolean representation in $\{-1,1\}$.
For a function $\psi \colon \{-1, 1\}^M \to \R$, define the $\ell_1$ norm of $\psi$ by $\|\psi\|_1 = \sum_{x \in \{-1,1\}^M} |\psi(x)|$. 
If the support of a function $\psi \colon \{-1,1\}^M \to \R$ is (a subset of) a set $D \subseteq \{-1, 1\}^M$, we will write $\psi \colon D \to \R$. 	
For
functions $f, \psi \colon D \to \R$, denote their inner product by $\langle f, \psi \rangle := \sum_{x \in D} f(x) \psi(x)$. 
We say that a function $\psi \colon \{-1,1\}^M \to \R$ has \emph{pure high degree $d$}
if $\psi$ is uncorrelated with any polynomial $p \colon \{-1,1\}^M \to \R$ of total degree at most $d$, i.e., if
$\langle \psi, p \rangle = 0$. 

\begin{theo}\label{theo:dual-dega-new}
	
	Let $f:\{-1,1\}^M \to \{-1,0,1\}$ be a partial function and $\epsilon$ be a constant in $[0,1)$. $\dega_\epsilon(f) > d$ if and only if there is a real function $\psi: \{-1,1\}^M \to \R$ such that:
	\begin{enumerate}
		\item (Pure high degree): $\psi$ has pure high degree of $d$.
		\item (Unit $\ell_1$-norm): $\|\psi\|_{1} =1$. 
		\item (Correlation): $\sum_{x \in D_f} \psi(x) f(x) - \sum_{x\not\in D_f} |\psi(x)| > \epsilon$.
	\end{enumerate}
\end{theo}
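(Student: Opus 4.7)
The statement is a standard dual linear-programming characterization of approximate degree, directly analogous to Theorem~\ref{theo:dual-dega} in the body but stated for the $\{-1,1\}$-representation. There are two natural routes: (a) derive it from Theorem~\ref{theo:dual-dega} using the change of variables $y_i=(1-x_i)/2$ from Remark~\ref{rem:change}, together with the identity $\dega_\epsilon(f)=\dega_{\epsilon/2}(f^{\{0,1\}})$; or (b) re-derive it directly via LP duality (Farkas' lemma) in the $\{-1,1\}$ setting. My plan is to take route (b), since it is self-contained and avoids bookkeeping around the factor-of-two rescaling of $\epsilon$ and the coordinate shift needed to translate the correlation condition across representations.

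The LP setup is as follows: $\dega_\epsilon(f)\le d$ iff there exist reals $\{c_S\}_{|S|\le d}$ so that $p(x)=\sum_{|S|\le d}c_S\,\chi_S(x)$ (with $\chi_S(x)=\prod_{i\in S}x_i$) satisfies $|p(x)-f(x)|\le\epsilon$ for $x\in D_f$ and $|p(x)|\le 1+\epsilon$ for $x\notin D_f$. This is a finite-dimensional feasibility LP in the variables $c_S$. The claim is that infeasibility is certified exactly by the existence of a function $\psi$ with the three listed properties.

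The easy direction ($\Leftarrow$) is a short contradiction. Suppose $\psi$ satisfies (1)--(3) and, for contradiction, that some degree-$\le d$ polynomial $p$ is feasible for the LP. Pure high degree gives $\langle\psi,p\rangle=0$. On the other hand, splitting the inner product according to $D_f$ and its complement and using the approximation bounds together with $\|\psi\|_1=1$, one computes
\begin{align*}
\langle\psi,p\rangle \;\ge\; \sum_{x\in D_f}\psi(x)f(x)-\epsilon\sum_{x\in D_f}|\psi(x)|-(1+\epsilon)\sum_{x\notin D_f}|\psi(x)|
\;=\;\Bigl[\sum_{x\in D_f}\psi(x)f(x)-\sum_{x\notin D_f}|\psi(x)|\Bigr]-\epsilon\|\psi\|_1,
\end{align*}
which by properties (2) and (3) is strictly greater than $\epsilon-\epsilon=0$, contradicting $\langle\psi,p\rangle=0$. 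Hence no such $p$ exists and $\dega_\epsilon(f)>d$.

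The hard direction ($\Rightarrow$) uses Farkas' lemma applied to the LP above. Each constraint is written as a pair of linear inequalities in the $c_S$; associate nonnegative multipliers $\alpha^+_x,\alpha^-_x\ge 0$ to the two sides of each pair. Infeasibility of the LP produces such multipliers whose combination cancels every coefficient $c_S$ (for $|S|\le d$) while yielding a strictly negative constant on the RHS. Defining $\psi(x)=\alpha^-_x-\alpha^+_x$ and normalizing so $\|\psi\|_1=1$, the vanishing of every coefficient translates into $\langle\psi,\chi_S\rangle=0$ for all $|S|\le d$, i.e., pure high degree~$d$; the strict RHS inequality translates, after normalization, into exactly the correlation bound $\sum_{x\in D_f}\psi(x)f(x)-\sum_{x\notin D_f}|\psi(x)|>\epsilon$. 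The main obstacle I anticipate is the careful sign and scaling bookkeeping in this last step: one must track how the $(1+\epsilon)$-bound on inputs outside $D_f$ and the $\epsilon$-bound on inputs inside $D_f$ combine with the normalization $\|\psi\|_1=1$ to yield precisely the coefficient $\epsilon$ (rather than, say, $\epsilon/2$ or $\epsilon/(1+\epsilon)$) on the right-hand side. Once this is done, properties (1)--(3) are read off directly from the Farkas multipliers, completing the proof.
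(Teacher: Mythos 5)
The paper itself does not prove this theorem: both Theorem~\ref{theo:dual-dega} (the $\{0,1\}$ version) and this $\{-1,1\}$ restatement are given without proof, with a pointer to the cited literature, so there is no in-paper argument to compare against. Your approach is the canonical LP-duality route, which is what those cited works use.

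Your easy direction ($\Leftarrow$) is complete and correct as written: splitting $\langle\psi,p\rangle$ over $D_f$ and its complement, using $|p(x)-f(x)|\le\epsilon$ on $D_f$ and $|p(x)|\le 1+\epsilon$ off $D_f$, and then invoking $\|\psi\|_1=1$ gives $\langle\psi,p\rangle>0$, contradicting pure high degree. The hard direction ($\Rightarrow$) is the right Farkas setup, but you stop exactly at the step you flag as ``the main obstacle,'' and that step does require one observation you do not supply. After applying Farkas to the four families of one-sided constraints and setting $\psi(x)=\alpha^-_x-\alpha^+_x$, the infeasibility certificate reads
\begin{equation*}
\sum_{x\in D_f}\psi(x)f(x)\;>\;\epsilon\sum_{x\in D_f}(\alpha^+_x+\alpha^-_x)\;+\;(1+\epsilon)\sum_{x\notin D_f}(\alpha^+_x+\alpha^-_x),
\end{equation*}
and the issue is that the right-hand side involves $\alpha^+_x+\alpha^-_x$, not $|\psi(x)|$. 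The resolution is the elementary bound $|\psi(x)|=|\alpha^-_x-\alpha^+_x|\le\alpha^+_x+\alpha^-_x$: using it on both sums (with coefficient $\epsilon$ inside $D_f$ and $1+\epsilon$ outside $D_f$) converts the display above into
\begin{equation*}
\sum_{x\in D_f}\psi(x)f(x)\;-\;\sum_{x\notin D_f}|\psi(x)|\;>\;\epsilon\,\|\psi\|_1,
\end{equation*}
and after normalizing $\|\psi\|_1=1$ (which is legal since $\psi\not\equiv 0$; if it were, the Farkas certificate's right-hand side would be $\ge 0$) you get exactly the correlation condition with constant $\epsilon$. Alternatively, and slightly more cleanly, you may reduce to the case $\min(\alpha^+_x,\alpha^-_x)=0$ by subtracting the common minimum (this preserves the degree constraints and only makes the certificate inequality stronger), after which $|\psi(x)|=\alpha^+_x+\alpha^-_x$ exactly. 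Either way, the bookkeeping you anticipate does close; you just need to write one of these two lines before declaring the proof complete.
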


\begin{theo}\label{theo:dual-degsh-new}
	
	Let $f:\{-1,1\}^M \to \{-1,0,1\}$ be a partial function. $\degsh(f) > d$ if and only if there is a real function $\psi: \{-1,1\}^M \to \R$ such that:
	\begin{enumerate}
		\item (Zero Outside of Domain): $\psi(x) = 0$ when $x \notin D_f$.
		\item (Pure high degree): $\psi$ has pure high degree of $d$.
		\item (Sign Agreement): $\psi(x) \cdot f(x) \ge 0$ for all $x \in D_f$.
		\item (Non-triviality): $\|\psi\|_1 > 0$.
	\end{enumerate}
\end{theo}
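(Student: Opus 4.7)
The plan is to derive this dual characterization via LP duality (Farkas' Lemma), mirroring the proofs of the analogous dual formulations for approximate degree and one-sided approximate degree (Theorems~\ref{theo:dual-dega-new} and \ref{theo:dual-odega}). First I would recast $\degsh(f) \leq d$ as feasibility of a finite linear program: there exists a polynomial $p$ of degree at most $d$ with $f(x) \cdot p(x) > 0$ for every $x \in D_f$. Since $D_f$ is finite, the strict inequality can be rescaled to $f(x) \cdot p(x) \geq 1$, yielding a standard LP in the coefficients of $p$ written in the multilinear monomial basis $\{\prod_{i \in S} x_i : |S| \leq d\}$.

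Next I would apply Farkas' Lemma to the infeasibility of this LP. The resulting dual certificate is a family of nonnegative multipliers $(y_x)_{x \in D_f}$ satisfying $\sum_{x \in D_f} y_x \, f(x) \prod_{i \in S} x_i = 0$ for every $|S| \leq d$, together with $\sum_{x \in D_f} y_x > 0$. Setting $\psi(x) := y_x \cdot f(x)$ on $D_f$ and $\psi(x) := 0$ elsewhere then immediately yields a function satisfying Conditions~(1)--(4): (1) holds by construction, (3) because $\psi(x) f(x) = y_x \geq 0$, (4) because $\|\psi\|_1 = \sum_{x \in D_f} y_x > 0$, and (2) because the orthogonality equations for all multilinear monomials of degree at most $d$ extend by linearity to every polynomial of degree at most $d$. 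The converse is a one-line observation: pairing any $\psi$ satisfying (1)--(4) with a hypothetical degree-$\leq d$ sign-representing polynomial $p$ would simultaneously give $\langle \psi, p \rangle = 0$ (from pure high degree) and $\langle \psi, p \rangle > 0$ (from sign agreement together with non-triviality), a contradiction.

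No serious obstacle is anticipated; this is a textbook LP duality argument. The only detail worth spelling out carefully is the passage from $p(x) f(x) > 0$ to $p(x) f(x) \geq 1$, which relies on finiteness of $D_f$ so that the strictly positive minimum value of $p(x) f(x)$ over $D_f$ can be rescaled to $1$. Everything else is a direct transcription of the well-known primal-dual correspondence for sign-representation LPs, specialized to the partial-function setting by restricting the primal constraints to $D_f$ and forcing the dual witness $\psi$ to be supported on $D_f$.
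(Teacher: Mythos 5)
The paper does not actually prove Theorem~\ref{theo:dual-degsh-new} itself — it states it without proof and points to the standard references — but your proposed argument is the correct, standard LP-duality proof, and it is exactly what those references do. Both directions check out: the Farkas certificate $(y_x)_{x\in D_f}$ with $y\ge 0$, $A^{\mathsf T}y=0$, $\mathbf 1^{\mathsf T}y>0$ (for the infeasible system $Ac\ge\mathbf 1$ with $A_{x,S}=f(x)\prod_{i\in S}x_i$) transfers to $\psi(x)=y_x f(x)$ satisfying (1)–(4), using $|f(x)|=1$ on $D_f$ for (4) and linearity over the monomial basis for (2); and the converse contradiction $\langle\psi,p\rangle=0$ versus $\langle\psi,p\rangle>0$ is sound once you note $\psi(x)p(x)=(\psi(x)f(x))(p(x)f(x))\ge 0$ with strict inequality on at least one point of the support of $\psi$.
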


\medskip \noindent \textbf{Orthogonalizing Distributions.}
\label{sec:ortho}
If $\psi$ is a dual witness for the fact that $\degsh(f) > d$ as per Theorem \ref{theo:dual-degsh-new},
then $\psi \cdot f$ is a \emph{$d$-orthogonalizing distribution} for $f$, as defined next. 

\begin{defi}\label{defi:ortho}
	A distribution $\mu : \{-1,1\}^n \to [0, 1]$ is \emph{$d$-orthogonalizing} for a function $h : \{-1,1\}^n \to \{-1, 0, 1\}$ if
	\[\Ex_{x \sim \mu} [h(x) p(x)] = 0\]
	for every polynomial $p: \{-1,1\}^n \to \mathbb{R}$ with $\deg p \le d$. In other words, $\mu$ is $d$-orthogonalizing for $h$ if the function $\mu(x)h(x)$ has pure high degree $d$.
\end{defi}

\subsubsection{Pattern Matrices}\label{sec:pmat}
As indicated in Section \ref{sec:cc}, Razborov and Sherstov \cite{razborovsherstov}
showed that in order to turn a function $f \colon \{-1, 1\}^n \to \{-1, 0, 1\}$ that has high threshold degree into a matrix $M$ with high sign-rank (and hence high $\UPPcc$ complexity), it suffices
to show exhibit a dual witness $\psi$ to the fact that $\degsh(f)$ is large, such that $\psi$ satisfies an additional \emph{smoothness} condition. The transformation from $f$ to the matrix $M$ relies on the {\em pattern matrix} method introduced by Sherstov~\cite{sherstov2011pattern}. Pattern matrices are defined as follows.

Let $n$ and $N$ be positive integers for which $n$ divides $N$. Let $\mathcal{P}(N, n)$ denote the collection of subsets $S \subset [N]$ for which $S$ contains exactly one member of each block $\{1, 2, \dots, N/n\}, \{N/n + 1, \dots, 2N/n\}, \dots,\newline \{(n-1)N/n + 1, \dots, N\}$. For $x \in \{-1,1\}^N$ and $S \in \mathcal{P}(N, n)$, let $x|_S$ denote the restriction of $x$ to $S$, i.e., $x|_S = (x_{s_1}, \dots, x_{s_n})$ where $s_1 < \dots < s_n$ are the elements of $S$.

\begin{defi}\label{defi:pmat}
	For $\phi: \{-1,1\}^n \to \mathbb{R}$, the $(N, n, \phi)$-pattern matrix $M$ is given by
	\[M = [\phi(x |_S \oplus w)]_{x \in \{-1,1\}^N, (S, w) \in \mathcal{P}(N, n) \times \{-1,1\}^n}.\]
	Note that $M$ is a matrix with $2^N$ rows and $(N/n)^n2^n$ columns.
\end{defi}

When $\phi$ is partial function $\{-1,1\}^n \to \{-1,0,1\}$, the $(N, n, \phi)$-pattern matrix $M$ can be viewed as a communication problem as follows: Alice and Bob aim to evaluate the $\phi(u)$ for some ``hidden'' input $u$ to $\phi$. Alice gets a sequence of bits $x \in \{-1,1\}^N$, while Bob gets a set of coordinates $S = \{s_1,\dotsc,s_n\}$ and a shift $w \in \{-1,1\}^n$. The hidden input $u$ is simply $x |_S \oplus w$. 

Note that $M_{x,y}$ is defined (i.e., $M_{x,y} \ne 0$) if and only if $\phi(u)$ is defined ($\phi(u) \ne 0$). This is the reason we represent a partial function by a function $\{-1,1\}^n \to \{-1,0,1\}$.

\subsubsection{Symmetrization} 
We introduce the notion of symmetrization in this subsection, which is one of the key technical ingredients in our proof.

\begin{defi}
	Let $T : \{-1,1\}^k \to D$, where $D$ is a finite subset of $\R^n$ for some $n \in \mathbb{N}$. The
	map $T$ is {\em degree non-increasing} if for every polynomial $p : \{-1,1\}^k \to \R$, there exists a polynomial
	$q : D \to \R$ with $\mathrm{deg}(q) \le \mathrm{deq}(p)$ such that
	$$
	q(T(x)) = \mathop{\Ex}\limits_{y\ s.t.\ T(y)=T(x)} [p(y)]
	$$
	for every $x \in \{-1,1\}^k$. We say that a degree non-increasing map T {\em symmetrizes} a function
	$f : \{-1,1\}^k \to \R$ if $f(x) = f(y)$ whenever $T(x) = T(y)$, and in this case we say that $T$ is a
	symmetrization for $f$.
\end{defi}

For any function $\psi:\{-1,1\}^k \to \R$, a symmetrization $T: \{-1,1\}^k \to D$ for $\psi$ induces a symmetrization function $\tilde{\psi}: D \to \R$ defined as $\tilde{\psi}(z) := \Ex_{x \in T^{-1}(z)} [\psi(x)]$ (if $T^{-1}(z)$ is empty, we let $\tilde{\psi}(z)$ to be $0$). It will also be convenient to define an ``unnormalized'' version $\hatpsi$ of $\sympsi$, defined via $\hatpsi(z) := \sum_{x \in T^{-1}(z)} \psi(x)$. Observe that if $\mu$ is a distribution on $\{-1, 1\}^k$, then $\hatmu$ is a distribution
on $D$.

Let $T : \{-1,1\}^k \to D$ be a degree non-increasing map. A function $\hat\psi : D \to \R$
naturally induces an un-symmetrized function $\psi \colon \{-1,1\}^k \to \R$ by setting  $\psi(x) = \frac{1}{T^{-1}(z)} \hat{\psi}(z)$
where $z = T(x)$. That is, $\psi$ spreads the mass of $\hat\psi (z)$ out evenly over points $x \in T^{-1}(z)$. Observe
that, for any $\hat{\psi}$ and any degree non-increasing map $T$, the induced function $\psi$ is symmetrized by
$T$.

%%% Local Variables: 
%%% mode: latex
%%% TeX-master: "main"
%%% End: 
 
%\section{A Lifting Theorem for $\UPP^{\cc}$}
\label{sec:lift-upp-cc}

%In this section, we formally state and prove our lifting theorem. We first define our dual object and the partial function class $\cd{d}{a}$.
\subsubsection{Dual Objects}
A key technical ingredient in Bun and Thaler's \cite{bt16} methodology for proving sign-rank lower bounds
is the notion of a \emph{dual object} for a Boolean function $f$, which is roughly a dual witness $\psi$ for the high one-sided approximate degree of $f$, that satisfies additional metric properties. We introduce a related definition below. The difference between our definition of a dual object and Bun and Thaler's is that our definition only requires $\psi$ to witness the high approximate degree (rather than one-sided approximate degree) of $f$.   

\begin{defi}[Dual Object]\label{defi:dual-obj}
	Let $f: \{-1,1\}^k \to \{-1,0,1\}$ be a partial function, and let $T : \{-1,1\}^k \to D$ be a degree non-increasing symmetrization for $f$. Let $\hat\psi : D \to \R$ be any function, and let $\psi$ be the associated function on $\{-1,1\}^k$ induced by $T$. We say that $\hat\psi$ is a $(d,\epsilon,\eta)$-dual object for $f$ (with respect to $T$) if:
	
	\fourpropsname{dualobj}{$\sum\nolimits_{x \in \domain_f} \psi(x) f(x) - \sum\nolimits_{x \in \{-1,1\}^k \setminus \domain_f} |\psi(x)| > \epsilon$}{$\|\psi\|_1 = 1$}{$\psi$ has pure high degree at least $d$}{$\hat\psi(z_+) \ge \eta$ for some $z_+$ satisfying $\hat{f}(z_+) = 1$}
\end{defi}

\begin{rem}\label{rem:dualobj}
The first three Conditions~\eqref{dualobjprop1}, \eqref{dualobjprop2} and \eqref{dualobjprop3} together are equivalent to requiring that $\psi$ is a dual witness for $\dega_{\epsilon}(f) > d$ as per Theorem~\ref{theo:dual-dega-new}. Condition~\eqref{dualobjprop4} is an additional metric property that is crucial for the construction of a \emph{smooth} orthogonalizing distribution of $\Gapmajority(f)$.% in Appendix~\ref{sec:smooth}. 
\end{rem}

\subsection{The Key Technical Ingredient}
We now define a class of functions $\cd{d}{a}$, and establish that for appropriate values of $d$ and $a$, for
any function $f \in \cd{d}{a}$, it holds that the pattern matrix of $\Gapmajority(f)$ has large sign-rank.

\begin{defi}\label{defi:new-Cd} \label{defcda}
	Let $f: \{-1,1\}^k \to \{-1,0,1\}$ be a partial function and let $d, a > 0$. Then $f$ is in class $\cd{d}{a}$ if there exists a symmetrization $T : \{-1,1\}^k \to D$ for $f$ such that:
	
	\twopropsname{newcd}{There exists a $(d,\epsilon,\epsilon/2)$-dual object for $f$ with respect to $T$, such that $\frac{\epsilon}{1-\epsilon} > a$.}{$f$ evaluates to $\False$ (i.e. $f(x) = 1$) for all but at most a $2^{-d}$ fraction of inputs in $\{-1,1\}^k$.}
\end{defi}

Now we are ready to state the key technical claim that we use (cf. Section \ref{sec:niszkcc-uppcc} below) to
separate $\NISZKcc$ and $\UPPcc$.

\begin{theo}\label{theo:lifting-upp-cc}
	Let $\epsilon \in (0,0.1)$, consider a partial function $f : \{-1,1\}^k \to \{-1,0,1\} \in \cd{m}{40/\epsilon}$ such that $\epsilon \cdot m > 50$. Let $F := \GapMaj{f}{m}{1 - \epsilon}$ and $n := m\cdot k$. Then the $(2^{36 + 6 \log\epsilon^{-1}} \cdot n,n,F)$-pattern matrix $M$ has sign-rank of $\exp(\Omega(\epsilon \cdot m))$. 
\end{theo}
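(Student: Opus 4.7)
The approach is to combine the threshold-degree dual witness from Theorem~\ref{theo:lift-UPP} with a smoothing argument in the style of Bun--Thaler~\cite{bt16} to exhibit, for any $f \in \cd{m}{40/\epsilon}$, a dual solution $\mu$ that is simultaneously (i) an orthogonalizing distribution for $F = \GapMaj{f}{m}{1-\epsilon}$, (ii) of pure high degree $\Omega(\epsilon m)$, and (iii) pointwise lower-bounded by some $\tau > 0$ on every input of $F^{-1}(\{-1,1\})$. Once such a \emph{smooth} orthogonalizing distribution exists, the pattern-matrix theorem of Razborov--Sherstov (in its partial-function variant used by Bun and Thaler) immediately converts it into a sign-rank lower bound of $\exp(\Omega(\epsilon m))$ for the $(N, n, F)$-pattern matrix, as long as $N/n$ exceeds a polynomial in $1/\tau$.

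The plan breaks into three steps. \textbf{Step 1 (base dual witness).} Starting from the $(m, \epsilon, \epsilon/2)$-dual object $\hat{\psi}$ for $f$ guaranteed by Condition~(\ref{newcdprop1}) of Definition~\ref{defcda}, unfold it to a function $\psi$ on $\{-1,1\}^k$ using the symmetrization $T$, split $\psi = \mu_+ - \mu_-$, form the tensor-product pieces $\psi^{\pm}(x) = \prod_i \mu_{\pm}(x_i)$, and add the two error-correction terms $\corr^\pm$ from Lemma~\ref{lm:error-correction}. This is precisely the construction in the proof of Theorem~\ref{theo:lift-UPP}; it yields a signed measure $\Psi$ supported on the domain of $F$, agreeing in sign with $F$, of pure high degree $\Omega(\epsilon m)$, and with $\|\Psi\|_1 = \Theta(1)$. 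Set $\mu_0 := F \cdot \Psi$, which is a (not yet smooth) orthogonalizing distribution for $F$.

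\textbf{Step 2 (smoothing).} The raw $\mu_0$ is typically far from smooth, because the tensor products $\psi^{\pm}$ place negligible weight on many inputs. We follow the Bun--Thaler template: construct an auxiliary ``filler'' function $\rho$ of pure high degree $\Omega(\epsilon m)$ that spreads mass uniformly over the domain of $F$ while agreeing in sign with $F$, and then output $\mu := (1-\alpha)\mu_0 + \alpha\rho$ for an appropriate mixing weight $\alpha$. The two properties we need from the dual object are exactly the ones packaged into Definition~\ref{defi:dual-obj} and Definition~\ref{defcda}: Condition~(\ref{dualobjprop4}) provides a concrete ``reservoir'' point $z_+$ carrying mass $\ge \epsilon/2$, from which symmetrization lets us pump weight onto every input symmetric type, and Condition~(\ref{newcdprop2}) (that $f = \True$ on all but a $2^{-m}$ fraction of $\{-1,1\}^k$) ensures that the dominant tensor component $\psi^-$ already charges essentially the whole ambient cube, so the amount of mass that needs to be redistributed under $\Gapmajority$ (rather than under $\OR$, as in~\cite{bt16}) is bounded. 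Choosing $\alpha$ of order $\epsilon$ and tracking the resulting pointwise lower bound will give $\tau = 2^{-O(m\log(1/\epsilon))}$, with the hidden constant controlling the exponent $36+6\log\epsilon^{-1}$ in the final statement.

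\textbf{Step 3 (pattern-matrix conversion).} Applying the pattern-matrix theorem to $(F, \mu)$ with parameters $d = \Omega(\epsilon m)$ and smoothness $\tau$ yields the sign-rank bound $\mathrm{sign\text{-}rank}(M) \ge 2^d = \exp(\Omega(\epsilon m))$, provided $N \ge n\cdot (1/\tau)^{\Theta(1)}$; plugging in $\tau$ from Step~2 yields exactly $N = 2^{36 + 6\log \epsilon^{-1}} \cdot n$.

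The main obstacle is Step~2: the smoothing correction must simultaneously (a) have pure high degree $\Omega(\epsilon m)$, (b) preserve sign-agreement with $F$ (so $\mu$ stays a nonneg measure times $F$), (c) vanish outside $\domain_F$, and (d) raise every value of $\mu$ above $\tau$. The partial-function subtleties of $\Gapmajority$---where some coordinates of an input to $F$ may lie outside $\domain_f$ even when $F$ is defined---complicate the bookkeeping relative to the $\OR(f)$ setting of~\cite{bt16}, but as noted in Section~\ref{sec:cc} they do not require fundamentally new ideas: the error-correction machinery of Lemma~\ref{lm:error-correction}, combined with the reservoir point guaranteed by Condition~(\ref{dualobjprop4}), is enough to push the Bun--Thaler argument through.
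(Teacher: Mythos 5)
Your high-level outline (Steps 1 and 3) matches the paper: start from the dual witness of Theorem~\ref{theo:lift-UPP}, produce a \emph{smooth} orthogonalizing distribution, and feed it to the Razborov--Sherstov pattern-matrix theorem (Theorem~\ref{theo:rs-sign-rank-lowb}). But your Step~2 --- the actual content of the theorem --- is misdescribed in a way that would not close the gap.

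You propose to smooth $\mu_0$ by mixing it with a single ``filler'' function $\rho$ as $\mu := (1-\alpha)\mu_0 + \alpha\rho$, where $\rho$ has pure high degree $\Omega(\epsilon m)$, agrees in sign with $F$, is supported on $\domain_F$, and ``spreads mass uniformly.'' This is circular: such a $\rho$ would already be a smooth $d$-orthogonalizing distribution for $F$, which is precisely the object you are trying to build --- if you had it, you would not need $\mu_0$ at all. The paper's proof (Theorem~\ref{theo:smooth-ort}) does something different and more delicate. Rather than one filler, it constructs, for \emph{each} symmetrized input type $z \in Z^+$, a \emph{targeted} $d$-orthogonalizing distribution $\mu_z$ with $\hatmu_z(z)$ large, and then averages the whole family. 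This proceeds in two stages. Stage~1 (Claim~\ref{claim:close}): for $w$ whose symmetrization lies close (Hamming distance $\le \epsilon m/4$) to the ``all-reservoir'' point $z_c = (c,\dots,c)$, reuse the dual witness from Theorem~\ref{theo:lift-UPP} on a shorter block-length $\ell = (1-\epsilon/4)m$ (Lemma~\ref{lm:tdeg}) and extend it trivially; this yields $\hatnu_w(w) \ge (2\eta)^m/6$. Stage~2 (Claim~\ref{claim:far}): for $v$ far from $z_c$, build an auxiliary function $\phi_v$ using the Markov--Bernstein-type dual polynomial $\phi$ of Lemma~\ref{lem:hamming} (which has pure high degree $d$, $\phi(1^m)=1$, vanishes on intermediate Hamming weights, and has $\ell_1$ mass $\le 2^d\binom{m}{d}$), and correct its sign errors by adding back the Stage~1 distributions $\hatnu_w$ for $w$ in the support of $\phi_v$. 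This is what pays the $2^{-2d}\binom{m}{d}^{-2}$ factor in the smoothness bound of Theorem~\ref{theo:smooth-ort}, and hence the $2^{36 + 6\log\epsilon^{-1}}$ in the pattern-matrix aspect ratio.

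Relatedly, you credit ``the error-correction machinery of Lemma~\ref{lm:error-correction}, combined with the reservoir point'' with being enough to push the smoothing through. Lemma~\ref{lm:error-correction} is the tool used inside Theorem~\ref{theo:lift-UPP} to repair the sign of the unsmoothed dual witness; it does not help spread mass. The genuinely new ingredient for smoothing is Lemma~\ref{lem:hamming} --- the explicit pure-high-degree function $\phi$ with controlled $\ell_1$ norm --- which you do not mention at all. Two smaller issues: (i) Condition~(\ref{newcdprop2}) says $f$ evaluates to $\False$ on all but a $2^{-m}$ fraction of inputs, not $\True$ as you wrote, and its role is merely to show via a union bound that all but a $2^{-\Omega(d)}$ fraction of inputs to $h_m$ lie in the set $\zeroset(h_m)$ where the smoothness bound needs to hold; (ii) your quoted smoothness parameter $\tau = 2^{-O(m\log(1/\epsilon))}$ should carry a factor of $\epsilon$ in the exponent (the correct form is $2^{-mk}\cdot 2^{-O(\epsilon m\log(1/\epsilon))}$, i.e., $2^{-mk - \alpha d}$ with $d=\Theta(\epsilon m)$ and $\alpha = O(\log\epsilon^{-1})$), which is what makes the aspect ratio $2^{O(\log \epsilon^{-1})}\cdot n$ rather than something exponential in $m$.
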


\subsubsection{Proof for Theorem~\ref{theo:lifting-upp-cc}}

%Theorem~\ref{theo:lifting-upp-cc} is proved by constructing a smooth dual orthogonalizing distribution for $\Gapmajority(f)$, which makes us able to use machinery developed by Razborov and Sherstov for proving sign-rank lower bound.
We need the following theorem for lower bounding sign-rank, which is implicit in  \cite[Theorem~1.1]{razborovsherstov}. %, and was also used (in the context of total functions) by Bun and Thaler~\cite{bt16}.

\begin{theo}[{Implicit in \cite[Theorem~1.1]{razborovsherstov}}]\label{theo:rs-sign-rank-lowb}
	Let $h \colon \{-1,1\}^n \to \{-1,0,1\}$ be a Boolean function and $\alpha > 1$ be a real number. Suppose there exists a $d$-orthogonalizing distribution $\mu$ for $h$ such that $\mu(x) \ge 2^{-\alpha d} 2^{-n}$ for all but a $2^{-\Omega(d)}$ fraction of inputs $x \in \{-1,1\}^n$. Then the $(2^{3\alpha}n,n,h)$-pattern matrix $M$ has sign rank $\exp(\Omega(d))$.
\end{theo}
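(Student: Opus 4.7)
The plan is to exhibit a real matrix $\Psi$ of the same dimensions as $M$ such that (i) $\mathrm{sign}(\Psi_{x,(S,w)}) = M_{x,(S,w)}$ on every entry where $M \neq 0$, (ii) $|\Psi_{x,(S,w)}| \geq \eta$ on all but a $2^{-\Omega(d)}$ fraction of entries for an appropriate $\eta$, and (iii) the spectral norm $\|\Psi\|$ is small. Given such a $\Psi$, the Forster-type extension of Forster--Krause--Lokam--Simon to ``almost smooth'' matrices (in exactly the form that underpins the Razborov--Sherstov framework) gives $\mathrm{sign\text{-}rank}(M) \gtrsim \eta\sqrt{RC}/\|\Psi\|$, up to a correction from the exceptional entries that is absorbed by the $2^{-\Omega(d)}$ hypothesis.

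The natural candidate is to take $\Psi$ to be the $(N, n, \mu h)$-pattern matrix with $N = 2^{3\alpha}n$: that is, $\Psi_{x,(S,w)} := \mu(x|_S \oplus w)\, h(x|_S \oplus w)$. Property (i) is immediate since $\mu \geq 0$, so $\mathrm{sign}(\Psi_{x,(S,w)}) = h(x|_S \oplus w) = M_{x,(S,w)}$ whenever $h$ is defined. For (ii), the smoothness hypothesis $\mu(u) \geq 2^{-\alpha d}\cdot 2^{-n}$ on all but a $2^{-\Omega(d)}$ fraction of $u \in \{-1,1\}^n$, combined with the symmetry of the pattern matrix construction, propagates to give $|\Psi_{x,(S,w)}| \geq \eta := 2^{-\alpha d} 2^{-n}$ on all but a $2^{-\Omega(d)}$ fraction of entries.

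For (iii), we apply Sherstov's pattern matrix lemma, which controls $\|M_\phi\|$ in terms of the Fourier spectrum of $\phi$: the contribution of Fourier characters of order $k$ is suppressed by a factor $(n/N)^{k/2}$ relative to $\sqrt{RC}$. Because $\mu$ is a $d$-orthogonalizing distribution for $h$, the function $\mu h$ has pure high degree $d$ (i.e., its Fourier coefficients on sets of size $< d$ vanish), so only Fourier characters of order $\geq d$ contribute. Each high-degree coefficient is individually bounded by $|\widehat{\mu h}(T)| \leq 2^{-n}\|\mu h\|_1 \leq 2^{-n}$. With $N = 2^{3\alpha}n$, the suppression factor becomes $(n/N)^{d/2} = 2^{-3\alpha d/2}$, yielding $\|\Psi\| \lesssim \sqrt{RC}\cdot 2^{-n}\cdot 2^{-3\alpha d/2}$. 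Combining with (ii), the sign-rank bound reads
\[
\mathrm{sign\text{-}rank}(M)\;\gtrsim\;\frac{\eta\sqrt{RC}}{\|\Psi\|}\;\gtrsim\;\frac{2^{-\alpha d}\cdot 2^{-n}}{2^{-n}\cdot 2^{-3\alpha d/2}}\;=\;2^{\alpha d/2}\;=\;\exp(\Omega(d)),
\]
where the last equality uses $\alpha > 1$. The exceptional $2^{-\Omega(d)}$ fraction of entries with small $|\Psi|$ is absorbed into the Forster bound using the standard ``robust'' variant (e.g.\ of the type appearing in Razborov--Sherstov), which tolerates such a fraction of exceptional entries precisely because the target sign-rank lower bound is $\exp(\Omega(d))$.

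The main obstacle will be the careful bookkeeping in the spectral-norm step: the pattern matrix identity must be applied to $\phi = \mu h$, which is neither Boolean nor normalized to unit $\ell_2$ norm, so one has to simultaneously exploit pure high degree (to kill low-order Fourier mass) and a uniform $\ell_\infty$-type bound on individual high-order Fourier coefficients, and track the precise $N$-dependence. A secondary technical point is ensuring that the Forster/Razborov--Sherstov robust sign-rank bound can be invoked with the partial sign pattern $M$ (which has zero entries where $h$ is undefined) rather than a total sign matrix; this is routine since the zero entries of $M$ impose no constraint and the bound only needs the smoothness to hold on almost all entries of $\Psi$, which is exactly what our $2^{-\Omega(d)}$ fraction hypothesis yields.
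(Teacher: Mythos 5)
Your proposal is correct and is precisely the Razborov--Sherstov argument that the paper invokes: the paper gives no proof of this theorem, citing it as implicit in \cite{razborovsherstov}, and your three steps (sign and magnitude properties of the pattern matrix of $\mu h$, the pattern-matrix spectral norm bound exploiting the pure high degree $d$ of $\mu h$ together with the bound $|\widehat{\mu h}(T)|\le 2^{-n}$, and the robust Forster-type sign-rank bound with the harmonic-mean tradeoff between $\sqrt{RC}\,\|\Psi\|$ and the exceptional-entry term) are exactly the intended derivation with the right parameter bookkeeping. The only point to watch is that entries where $h$ is undefined must also be counted in the exceptional set of the Forster bound (since $\Psi=0$ there, not $\ge \eta$); this is harmless in the paper's applications because such inputs are likewise a $2^{-\Omega(d)}$ fraction, but it is an implicit assumption beyond the literal smoothness hypothesis.
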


\newcommand{\zeroset}{Z}

For a partial function $f:\{-1,1\}^n \to \{-1,0,1\}$ and its gapped majority version $h_m := \GapMaj{f}{m}{1-\epsilon}$, we use $\zeroset(h_m)$ to denote the set 
$$
\{
x=(x_1,x_2,\dotsc,x_m) \in \{-1,1\}^{nm} : f(x_i) = 1 \text{ for all $x_i$}
\}.
$$
That is, $\zeroset(h_m)$ is the set of inputs to $h_m$ such that all copies of $f$ evaluates to \False.

The following theorem asserts the existence of the \dort{d} distribution for $\Gapmajority(f)$ that is needed to apply Theorem \ref{theo:rs-sign-rank-lowb}.

\begin{theo}\label{theo:smooth-ort}

Let $\epsilon \in (0,0.1)$, $m$ be an integer such that $\epsilon \cdot m > 50$, and $f$ be partial function $f \colon \{-1,1\}^k \to \{-1,0,1\}$ with a $(d_1,\epsilon_2,\eta)$-dual object (with respect to symmetrization $T: \{-1,1\}^k \to D$) such that $\frac{\epsilon_2}{1-\epsilon_2} > 40/\epsilon$. 
Let $h_m := \GapMaj{f}{m}{1 - \epsilon}$ and $d = \min\{d_1,\epsilon/4 \cdot m\}$. Then there exists a \dort{d} distribution $\mu : \{-1,1\}^{mk} \to [0,1]$ for $h_m$ such that 
$$
\mu(x) \ge 2^{-2d} \cdot \binom{m}{d}^{-2} \cdot (2\eta)^{m} 2^{-mk}
$$ 
for all $x \in \zeroset(h_m)$.
\end{theo}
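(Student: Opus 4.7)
The plan is to produce the required $d$-orthogonalizing distribution by normalizing a modification of the threshold-degree dual witness constructed in Theorem \ref{theo:lift-UPP}, and then to establish the claimed pointwise smoothness on $\zeroset(h_m)$ by exploiting Condition \eqref{dualobjprop4} of the dual object together with a symmetrization-and-convex-combination step in the spirit of Bun and Thaler \cite{bt16}.

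First, I would unpack the $(d_1,\epsilon_2,\eta)$-dual object into a dual polynomial $\psi_f:\{-1,1\}^k\to\R$ of pure high degree $d_1$, unit $\ell_1$-norm and correlation $>\epsilon_2$ with $f$, that is constant on each $T$-orbit and has $\hat\psi_f(z_+)\geq\eta$ for some $z_+$ with $\hat f(z_+)=1$. Writing $\psi_f=\mu_+-\mu_-$ for its positive and negative parts as in Lemma \ref{lm:mus}, the assumption $\epsilon_2/(1-\epsilon_2)>40/\epsilon$ forces the parameter $a=2\epsilon_2/(1-2\epsilon_2)$ of Theorem \ref{theo:lift-UPP} to be at least $20/\epsilon$, so $\alpha=1/a<1/40$ and Lemma \ref{lm:error-correction} applies. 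I then form the tensor products $\psi^\pm(x_1,\ldots,x_m):=\prod_i\mu_\pm(x_i)$, obtain error-correction terms $\corr^\pm$ from Lemma \ref{lm:error-correction}, and set $\psi:=(\psi^+-\corr^+)-(\psi^--\corr^-)$. The proof of Theorem \ref{theo:lift-UPP} shows that $\psi$ has pure high degree $\bigl(1-(1+\tfrac{10}{a})(1-\epsilon)\bigr)m-4$, which under the hypothesis $\epsilon m>50$ and $a\geq 20/\epsilon$ is at least $\epsilon m/4$, so $\psi$ has pure high degree at least $d=\min(d_1,\epsilon m/4)$. Moreover it is supported on $\mathrm{dom}(h_m)$ and sign-agrees with $h_m$. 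Normalizing by its $\ell_1$-mass thus yields a $d$-orthogonalizing distribution $\mu$ for $h_m$.

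Second, to set up the pointwise bound on $\zeroset(h_m)$, fix $x=(x_1,\ldots,x_m)$ with every $f(x_i)=1$ (False). Since $\nyes(x)=m$ and $\nno(x)=0$ (in the sign convention aligned with the main-text proof), Conditions \eqref{psiminusprop1} and \eqref{psiplusprop2} of the setup in the proof of Theorem \ref{theo:lift-UPP} give $\corr^-(x)=\psi^-(x)$ and $|\corr^+(x)|\leq\psi^+(x)/2$, so the error-correction terms collapse exactly on the all-False region, leaving $|\psi(x)|\geq\psi^+(x)/2=\tfrac{1}{2}\prod_i\mu_+(x_i)$. The pointwise lower bound on $\mu$ therefore reduces to a pointwise lower bound on the product $\prod_i\mu_+(x_i)$, up to the multiplicative loss from $\|\psi\|_1=O(1)$.

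Third, and this is where the main obstacle lies, the dual object a priori only guarantees a lower bound of $\eta$ on the mass $\mu_+$ places on the \emph{particular} orbit $T^{-1}(z_+)$, not on every False orbit, so the naive product $\prod_i\mu_+(x_i)$ can vanish at many inputs of $\zeroset(h_m)$. I would overcome this exactly as in \cite{bt16}: symmetrize $\psi_f$ with respect to the orbit structure of $T$, and then convexly combine it with a low-degree ``flat'' non-negative dual polynomial supported on $f^{-1}(1)$, chosen so that (i) the combination still has pure high degree at least $d_1$, (ii) its correlation with $f$ is still at least $\epsilon_2$ (which is where the slack in the strict inequality $\epsilon_2/(1-\epsilon_2)>40/\epsilon$ is spent), and (iii) the resulting positive part satisfies the uniform per-point lower bound $\mu_+(x_i)\geq 2\eta\cdot 2^{-k}$ on every input in $f^{-1}(1)$. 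The $\binom{m}{d}^{-2}\cdot 2^{-2d}$ loss in the final bound is accounted for by the interpolating polynomial $P$ from Lemma \ref{lm:helper2} underlying $\corr^\pm$: its magnitude on the boundary region where it does active error correction is at most $\binom{m}{d}\cdot 2^d$ times its interior values, so undoing this amplification when estimating $|\psi(x)|$ on the interior $\zeroset(h_m)$ contributes a squared loss of $\binom{m}{d}^{-2}\cdot 2^{-2d}$. Substituting the smoothed per-coordinate lower bound into the $m$-fold product and dividing by the normalization $\|\psi\|_1$ gives $\mu(x)\geq 2^{-2d}\binom{m}{d}^{-2}(2\eta)^m 2^{-mk}$ for every $x\in\zeroset(h_m)$, completing the proof.
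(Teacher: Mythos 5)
Your first two paragraphs correctly set up the threshold-degree dual witness from Theorem~\ref{theo:lift-UPP} and observe that the naive product $\prod_i\mu_+(x_i)$ vanishes on most of $\zeroset(h_m)$, which is indeed the crux. However, the fix you propose in the fourth paragraph does not work, and the actual mechanism used in the paper is missing from the proposal.

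The specific gap: you propose to ``convexly combine $\psi_f$ with a low-degree `flat' non-negative dual polynomial supported on $f^{-1}(1)$'' so that the resulting $\mu_+$ satisfies a uniform pointwise lower bound on all of $f^{-1}(1)$ while retaining pure high degree at least $d_1$. These requirements are incompatible: any non-negative function supported on $f^{-1}(1)$ has nonzero inner product with the constant polynomial $1$, hence has pure high degree $0$, and mixing it in would destroy the pure high degree of $\psi_f$. The dual object only guarantees mass $\eta$ at the single symmetrized point $z_+$, and no amount of ``flattening'' the base witness can spread this to a per-point guarantee on all of $f^{-1}(1)$. The paper instead keeps the base witness as is and constructs, for each target $w$ in $\zeroset(h_m)$, a \emph{separate} targeted orthogonalizing distribution. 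It does this in two stages. Stage~1 (Claim~\ref{claim:close}) handles $w$ whose symmetrization $T^m(w)$ equals $z_+$ on at least $\ell=(1-\epsilon/4)m$ coordinates: one applies the $\Gapmajority$-dual from Lemma~\ref{lm:tdeg} (built for $g_\ell=\GapMaj{f}{\ell}{1-2\epsilon/3}$, not $h_m$) to those $\ell$ coordinates and multiplies by point masses $\indicator_{z_i=T(w_i)}$ on the remaining $m-\ell$ coordinates; the pointwise guarantee at $w$ then follows from Condition~\eqref{dualobjprop4} applied $\ell$ times. Stage~2 (Claim~\ref{claim:far}) handles the remaining $v\in Z^+$ by introducing a Hamming-weight interpolation function $\phi:\{0,1\}^m\to\R$ from Lemma~\ref{lem:hamming} (the Razborov--Sherstov gadget) with $\phi(1^m)=1$, pure high degree $d$, $\phi$ supported on $\{|s|\le d\}\cup\{1^m\}$, and $\sum_{|s|\le d}|\phi(s)|\le 2^d\binom{m}{d}$; after pushing $\phi$ forward to $D^m$ and adding correction terms drawn from the Stage~1 distributions, one gets a distribution putting mass $\ge 6\delta\cdot 2^{-2d}\binom{m}{d}^{-2}$ at $v$. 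It is this gadget $\phi$, through its $\ell_1$-norm and support size, that produces the $\binom{m}{d}^{-2}2^{-2d}$ factor in the theorem; your attribution of that loss to the interpolating polynomial $P$ of Lemma~\ref{lm:helper2} (which lives inside $\corr^\pm$ and is already absorbed into the pure-high-degree bound) is incorrect. Without Stage~2 and Lemma~\ref{lem:hamming} the proposal cannot reach arbitrary $x\in\zeroset(h_m)$, so the argument is not repairable as written.

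A minor further slip: the parameter $a$ in the appendix's Proposition~\ref{prop:dual-prop} is $\epsilon_2/(1-\epsilon_2)$ (the $\{-1,1\}$ convention, with $\epsilon_2$ close to $1$), not $2\epsilon_2/(1-2\epsilon_2)$ (which would be negative here since $\epsilon_2>1/2$); the hypothesis $\epsilon_2/(1-\epsilon_2)>40/\epsilon$ gives $a>40/\epsilon$ directly, not $a\ge 20/\epsilon$.
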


Before proving Theorem~\ref{theo:smooth-ort}, we show that combining it with  Theorem \ref{theo:rs-sign-rank-lowb} implies Theorem \ref{theo:lifting-upp-cc}.

\begin{proofof}{Theorem~\ref{theo:lifting-upp-cc}}

By Condition~\eqref{newcdprop1} of $f \in \cd{m}{40/\epsilon}$, $f$ has a $(m,\epsilon_2,\epsilon_2/2)$-dual object with respect to $T$ such that $\frac{\epsilon_2}{1-\epsilon_2} > 40/\epsilon$. Applying Theorem~\ref{theo:smooth-ort}, for $d = \min\{m,\epsilon/4 \cdot m\} = \epsilon/4 \cdot m$, there exists a $d$-orthogonalizing distribution $\mu :\{-1,1\}^{mk} \to [0,1]$ for $F$ such that $\mu(x) > 2^{-2d} \cdot \binom{m}{d}^{-2} \cdot \epsilon_2^{m} 2^{-mk}$ for all $x \in \zeroset(F)$. 

By the inequality $\binom{a}{b} \le \left( \frac{e\cdot a}{b} \right)^b$, we have 

\begin{equation}\label{eq:Cmd}
\binom{m}{d} \le (4e/\epsilon)^d \le 2^{(4 + \log \epsilon^{-1})d}.
\end{equation}

Since $\frac{\epsilon_2}{1-\epsilon_2} > 40/\epsilon$, we have $1-\epsilon_2 < \frac{\epsilon_2}{40/\epsilon} \le \frac{\epsilon}{40}$, hence $\epsilon_2 > 1 - \epsilon/40$. Therefore, 
\begin{equation}\label{eq:eps2m}
\epsilon_2^m \ge (1-\epsilon/40)^m \ge 4^{-\epsilon m/40} \ge 2^{-d/5},
\end{equation}
where the second inequality holds by the inequality $(1-x)^{1/x} \ge 1/4$ for all $x \in (0,0.5)$.

Putting Inequalities~\eqref{eq:Cmd} and~\eqref{eq:eps2m} together, we have for all $x \in \zeroset(F)$, 
\begin{equation}\label{eq:mux}
\mu(x) > 2^{-mk} \cdot 2^{-(2 + 2(4+\log \epsilon^{-1}) + 1/5) d} > 2^{-mk} \cdot 2^{-(12 + 2 \log \epsilon^{-1}) d}.
\end{equation}

And by Condition~\eqref{newcdprop2} of $f \in \cd{m}{40/\epsilon}$, $f(x) = 1$ for all but at most a $2^{-m}$ fraction of inputs in $\{-1,1\}^k$. Hence, by a union bound, there is at most a $m \cdot 2^{-m} \le 2^{-m/2} \le 2^{-d}$ fraction of inputs do not belong to $\zeroset(\GapMaj{f}{d}{1 - \epsilon}) = \zeroset(F)$. 

By the above fact and Inequality~\eqref{eq:mux}, we conclude that $\mu$ is a $d$-orthogonalizing for $F$ such that $\mu(x) \ge 2^{-mk} \cdot 2^{-(12 + 2 \log \epsilon^{-1}) d}$ for all but a $2^{-d}$ fraction of inputs in $\{-1,1\}^{mk}$. Therefore, invoking  Theorem~\ref{theo:rs-sign-rank-lowb}, the $(2^{36 + 6 \log\epsilon^{-1}} \cdot n, n, F)$-pattern matrix $M$ has sign-rank $\exp(\Omega(d)) = \exp(\Omega(\epsilon \cdot m))$.
\end{proofof}

\subsubsection{Proof of Theorem \ref{theo:smooth-ort}}
\label{sec:smooth}

\noindent \textbf{Additional Notation.} Let $f : \{-1,1\}^k \to \{-1,0,1\}$ be as in the statement of Theorem \ref{theo:smooth-ort}, and let $T : \{-1,1\}^k \to D$ be the symmetrization for $f$ associated with the assumed $(d_1, \epsilon_2, \eta)$-dual object for $f$. Define $T^m \colon \{-1,1\}^{mk} \rightarrow D^m$ by $T^m(x_1, \dots, x_m) := (T(x_1), \dots, T(x_m))$. Since $T$ is degree non-increasing, it is easy to see that $T^m$ is also degree non-increasing. Moreover, $T^m$ is a symmetrization for $h_m$. The map $T^m$
induces a symmetrized version $\symh_{m} \colon D^M \rightarrow \mathbb{R}$ of $h_m$ given by $\symh_{m} = \GapMaj{\symf}{m}{1-\epsilon}$.

Throughout the proof, we let $c \in \symf^{-1}(1)$ denote the point on which the dual object $\hatpsi$ for $f$ has $\hatpsi(c) \ge \eta$ (cf. Condition (\ref{dualobjprop4}) within Definition \ref{defi:dual-obj}).

\medskip \noindent \textbf{Proof Outline.}
Our proof follows roughly the same steps as in~\cite{bt16}. Let $Z^+ := T^m(\zeroset(h_m)) \subseteq D^m$. At a high level, our proof will produce, for every $z \in Z^+$, a $d$-orthogonalizing distribution $\mu_z$ that is targeted to $z$, in the sense that 
$$
\hatmu_z(z) \ge 2^{-O(d)} \cdot \binom{m}{d}^{-2} \cdot (2\eta)^{m}.
$$ Since the property of $d$-orthogonalization is preserved under averaging, we construct the final distribution by a convex combination of these constructed distributions $\mu_z$'s so that it places the required amount of probability mass on each input $x \in (T^m)^{-1}(Z^+) = \zeroset(h_m)$. The goal therefore becomes to construct these targeted distributions $\mu_z$. We do this in two stages.

\medskip
\noindent \textbf{Stage 1.}
In the first stage (see Claim \ref{claim:close} below), we construct distributions $\mu_z$ for every $z$ belonging to a highly structured subset $G \subset Z^+$ that we now describe. The set $G$ consists of inputs in $Z^+$ for which $c$ is repeated many times (specifically, at least $(1-\epsilon/4) \cdot m$ times). 

\medskip
\noindent \textbf{Stage 2.}
In the second stage (see Claim \ref{claim:far} below), we show that given the family of distributions $\{\mu_z : z \in G\}$ constructed in Stage 1, we can construct appropriate distributions $\mu_z$ for $z$ belonging to the entire set $Z^+$.

\medskip
We begin Stage 1 with a lemma.
\begin{lemma}\label{lm:tdeg} \label{f16}
	Let $\ell = (1-\epsilon/4) \cdot m$, and let $f$, $T$, and $T^{\ell}$ be as above. Consider the partial function $g_{\ell} : \{-1,1\}^{k\ell} \to \{-1,0,1\}$ defined as $g_{\ell} := \GapMaj{f}{\ell}{1-2\epsilon/3}$.
	There exists a function $\psi: \{-1,1\}^{k\ell} \to [0, 1]$ symmetrized by $T^{\ell}$ with the following properties.
	
	\threepropsname{lmtdeg}{$\psi$ is a dual witness for $\degsh(g_\ell) > d$ as per Theorem~\ref{theo:dual-degsh-new}, where $d = \min\{\epsilon/4 \cdot m, d_1\}$.}{$\|\psi\|_1 = 1$.}{$\hatpsi(\underbrace{c, \dots, c}_{\ell \text{ times}}) \ge (2\eta)^{\ell} / 6$.}
	
\end{lemma}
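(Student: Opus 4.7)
The plan is to re-run the dual-witness construction from the proof of Theorem~\ref{theo:lift-UPP}, but instantiated with the given $(d_1, \epsilon_2, \eta)$-dual object $\hatpsi_f$ for $f$ rather than an arbitrary dual polynomial for $\dega_{\epsilon_2}(f)$, so as to retain pointwise control on the mass at the symmetrized tuple $(c, \ldots, c)$. Let $\psi_f \colon \{-1,1\}^k \to \R$ be the function induced from $\hatpsi_f$ by spreading each $\hatpsi_f(z)$ uniformly over $T^{-1}(z)$. Then $\psi_f$ is $T$-symmetrized, has $\|\psi_f\|_1 = 1$, pure high degree $d_1$, and correlation $> \epsilon_2$ with $f$ (by Conditions~(\ref{dualobjprop1})--(\ref{dualobjprop3}) of Definition~\ref{defi:dual-obj} together with Remark~\ref{rem:dualobj}). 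Decompose $\psi_f = \mu_+ - \mu_-$ via Lemma~\ref{lm:mus}; each $\mu_\pm$ is non-negative, $T$-symmetrized, and has $\|\mu_\pm\|_1 = 1/2$. Because $\hatpsi_f(c) \ge \eta > 0$, the point $c$ lies entirely in the positive part, so $\hatmu_+(c) = \hatpsi_f(c) \ge \eta$ while $\hatmu_-(c) = 0$.

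Next, mirroring the proof of Theorem~\ref{theo:lift-UPP} with $n := \ell$ copies and gap-majority parameter $1 - 2\epsilon/3$, form the tensor products $\psi^{\pm}(x_1, \ldots, x_\ell) := \prod_{i=1}^\ell \mu_\pm(x_i)$ and apply the Error Correction Lemma (Lemma~\ref{lm:error-correction}) twice (once with $A = f^{-1}(\False)$, $\varphi = \mu_+$; once with $A = f^{-1}(\True)$, $\varphi = \mu_-$) to obtain correction functions $\corr^\pm$. The assumption $\epsilon_2/(1-\epsilon_2) > 40/\epsilon$ makes the parameter $a \ge 40/\epsilon$, so that the hypothesis $\alpha \le 1/40$ of Lemma~\ref{lm:error-correction} is satisfied. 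Set $\psi := 2^{\ell-1}\bigl[(\psi^+ - \corr^+) - (\psi^- - \corr^-)\bigr]$ and normalize by $\|\psi\|_1$ at the end. The verifications that $\psi$ has pure high degree at least $\min\{d_1,\, [1 - (1 + 10/a)(1 - 2\epsilon/3)]\ell - 4\}$, vanishes outside $\domain_{g_\ell}$, and sign-agrees with $g_\ell$ all carry over verbatim from the proof of Theorem~\ref{theo:lift-UPP}; a short computation using $a \ge 40/\epsilon$, $\ell = (1-\epsilon/4)m$, and $\epsilon m > 50$ shows this bound simplifies to at least $\min\{d_1,\, \epsilon m/4\} = d$. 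The $T^\ell$-symmetry of $\psi$ is inherited from that of $\mu_\pm$, because the explicit formula for $\corr^\pm$ in Lemma~\ref{lm:error-correction} depends on $x$ only through $\Acnt(x)$ and $\prod_i \varphi(x_i)$, both of which are $T^\ell$-invariant.

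The new ingredient relative to Theorem~\ref{theo:lift-UPP} is the pointwise bound $\hatpsi(c, \ldots, c) \ge (2\eta)^\ell/6$, which I verify on any pre-image $(x_1, \ldots, x_\ell)$ of $(c, \ldots, c)$ under $T^\ell$. First, $\psi^-(x_1,\ldots,x_\ell) = \prod_i \mu_-(x_i) = 0$, since $\hatmu_-(c) = 0$ forces $\mu_-(x_i) = 0$ for each $i$; because the $\corr^-$ produced by Lemma~\ref{lm:error-correction} is a pointwise multiple of $\psi^-$, it vanishes on this pre-image as well. Second, by the tensor identity, $\hat{\psi^+}(c,\ldots,c) = \hatmu_+(c)^\ell \ge \eta^\ell$. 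Third, since $\Acnt = \ell > (1-2\epsilon/3)\ell$ at this input, property~(\ref{lm43prop2}) of Lemma~\ref{lm:error-correction} gives $|\corr^+| \le \psi^+/2$, so $\hat{(\psi^+ - \corr^+)}(c,\ldots,c) \ge \tfrac12 \hat{\psi^+}(c,\ldots,c) \ge \eta^\ell/2$. Multiplying by the scaling $2^{\ell-1}$ yields $(2\eta)^\ell/4$ before normalization. Finally, the $\ell_1$-norm of the unnormalized $\psi$ is at most $3/2 + o(1)$: on the bulk $\{\Acnt > (1-2\epsilon/3)\ell\}$ the bound $|\corr^\pm| \le \psi^\pm/2$ yields $|\psi^\pm - \corr^\pm| \le (3/2)\psi^\pm$, and Chernoff concentration (exactly as in the proof of Theorem~\ref{theo:lift}) makes the tail contribution negligible. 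Dividing by $\|\psi\|_1$ gives the claimed bound $\hatpsi(c,\ldots,c) \ge (2\eta)^\ell/6$. The main obstacle will be executing this $\ell_1$-norm bookkeeping precisely enough to land a factor of $3/2$ (rather than the naive $2$) so that the constant $1/6$ is achieved, while also cleanly translating sign conventions between the $\{0,1\}$ setting of Section~\ref{sec:composition} and the $\{-1,+1\}$ setting used here.
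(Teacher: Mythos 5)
Your overall strategy matches the paper's: re-run the $\Gapmajority$ dual-witness construction from Theorem~\ref{theo:lift-UPP} with the dual object $\hatpsi_f$ as the inner dual polynomial, normalize, and track the mass at the tuple $(c, \ldots, c)$ by (i) observing that the $\mu_-$ part vanishes on $T^{-1}(c)$, (ii) lower-bounding the numerator via $\hatmu_+(c)^{\ell} \ge \eta^{\ell}$ and the $\tfrac12$-bound from the Error Correction Lemma, and (iii) upper-bounding $\|\psi\|_1$ to land the constant $1/6$. The paper packages the error-correction step into Proposition~\ref{prop:dual-prop} rather than re-deriving it inline, but that is purely presentational.

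There is one genuine (though easily fixable) gap in the $\ell_1$ bookkeeping. You invoke ``Chernoff concentration (exactly as in the proof of Theorem~\ref{theo:lift})'' to argue the tail contribution off the bulk $\{\Acnt > (1-2\epsilon/3)\ell\}$ is negligible, arriving at $\|\psi\|_1 \le 3/2 + o(1)$. This is the wrong mechanism: in the $\UPP$ construction the off-bulk contribution is \emph{exactly} zero, because Condition~(\ref{lm43prop1}) of Lemma~\ref{lm:error-correction} says $\corr^{\pm}(x) = \psi^{\pm}(x)$ whenever $\Acnt(x) \le (1-2\epsilon/3)\ell$, so $\psi^{\pm} - \corr^{\pm}$ vanishes there identically (Chernoff is only needed in the $\PP$ warmup of Theorem~\ref{theo:lift}, where there is no error correction). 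Moreover, the two terms $\psi^+ - \corr^+$ and $\psi^- - \corr^-$ have disjoint supports since $\nyes > (1-2\epsilon/3)\ell$ and $\nno > (1-2\epsilon/3)\ell$ cannot hold simultaneously when $\epsilon < 3/4$. Hence $\|\psi\|_1 \le 2^{\ell-1}\cdot\tfrac{3}{2}\bigl(\|\psi^+\|_1 + \|\psi^-\|_1\bigr) = 3/2$ exactly. As written, your estimate yields only $\hatpsi(c,\ldots,c) \ge (2\eta)^{\ell}/(6 + o(1))$, which is strictly weaker than the claimed $(2\eta)^{\ell}/6$; replacing the Chernoff step with the exact vanishing closes the gap. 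Your caution about translating between the $\{0,1\}$ and $\{-1,1\}$ conventions is warranted, but that is handled by Remark~\ref{rem:change} (and the corresponding reparametrization of $a$) and introduces no new difficulty.
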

\begin{proof}
	The proof of this lemma is just analyzing the dual witness constructed for $\GapMaj{f}{\ell}{1-2\epsilon/3}$ in Theorem~\ref{theo:lift-UPP}. The analysis is as follows.
	
	\newcommand{\psiold}{\psi_{\textsf{old}}}
	\newcommand{\hatpsiold}{\hatpsi_{\textsf{old}}}
	
	\noindent\textbf{Properties of the dual witness constructed in Theorem~\ref{theo:lift-UPP}.} We formalize some properties of the dual witness constructed by Theorem~\ref{theo:lift-UPP} below. The original theorem deals with partial functions with signature $\{0,1\}^M \to \{0, 1\}$, but with the transformation described in Remark~\ref{rem:change}, it is straightforward to obtain from it the following result for partial functions with signature $\{-1,1\}^M \to \{-1, 0, 1\}$.
	\begin{prop}[Implicit in Theorem~\ref{theo:lift-UPP}]\label{prop:dual-prop}
	Let $f : \{-1,1\}^M \to \{-1,0,1\}$ be a partial function, $n$ be a sufficiently large integer, $d$ be an integer, $\epsilon \in (0.5,1)$ and $\epsilon_2 \in (0.98,1)$ be two constants. Let $a = \frac{\epsilon_2}{1-\epsilon_2}$, $N = \min\left(d,\left(1 - \left(1+\frac{10}{a} \right) \cdot \epsilon \right) \cdot n - 4\right)$ and $F := \GapMaj{f}{n}{\epsilon}$.
	
	Suppose $\dega_{\epsilon_2}(f) > d$, and let $\mu$ be a dual witness to this fact as per Theorem~\ref{theo:dual-dega-new}. Define $\mu_+(x) := \max\{0,\mu(x)\}$ and $\mu_-(x) := -\min\{0,\mu(x)\}$ to be two non-negative real functions on $\{-1,1\}^M$ (analogous to Lemma~\ref{lm:mus}). 
	
	Then there exists a function $\psiold : \{-1,1\}^{n \cdot M} \to \R$ such that
	
	\renewcommand{\vmid}{\vspace{-7mm}}
	\fourpropsname{psidual}{$\psiold$ takes non-zero values only on the domain of $F$.}{$\psiold(x) \in \left[\frac{1}{2} \cdot \prod\nolimits_{i} \mu_+(x_i), \frac{3}{2} \cdot \prod\nolimits_{i} \mu_+(x_i) \right]$ when $F(x) = \False$.}{$-\psiold(x) \in \left[\frac{1}{2} \cdot \prod\nolimits_{i} \mu_-(x_i), \frac{3}{2} \cdot \prod\nolimits_{i} \mu_-(x_i) \right]$ when $F(x) = \True$.}{$\psiold$ has pure high degree $N$.}
	\renewcommand{\vmid}{\vspace{-8mm}}
	\end{prop}
	
	Now, let $\hatvarphi$ be the $(d_1,\epsilon_2,\eta)$-dual object with respect to $T$, and $\varphi$ be the associated function from $\{-1,1\}^k \to \R$. By Remark~\ref{rem:dualobj}, $\varphi$ is a dual witness for $\dega_{\epsilon_2}(f) > d_1$. And by assumption, $\frac{\epsilon_2}{1-\epsilon_2} > 40/\epsilon > 400$, which means $\epsilon_2 > 0.98$. So we can invoke Proposition~\ref{prop:dual-prop} to construct the function $\psiold$ for $F := g_{\ell} = \GapMaj{f}{\ell}{1-2\epsilon/3}$.
	
	\medskip
	\noindent\textbf{Verification of Conditions~\eqref{lmtdegprop1}-\eqref{lmtdegprop3}.}
	We simply set $\psi := \psiold / \|\psiold\|_1$. Conditions \eqref{lmtdegprop2} follows immediately from the definition of $\psi$. 
	
	To check Condition~\eqref{lmtdegprop1}, note that $\psiold$ has pure degree
	\begin{align*}
	&\min\left(d_1, \left(1 - \left(1+\frac{10}{40/\epsilon} \right) \cdot (1-2\epsilon/3) \right) \cdot \ell - 4\right)\\
 \ge&\min\left(d_1, \left(1 - 1 -\frac{\epsilon}{4} + \frac{2\epsilon}{3} + \frac{\epsilon^2}{6} \right) \cdot (1-\epsilon/4) \cdot m - 4\right)\\
 \ge&\min\left(d_1, \epsilon/4 \cdot m\right). \tag{$\epsilon \cdot m \ge 50$ and $\epsilon < 0.1$}
	\end{align*}
	
	Together with Properties \eqref{psidualprop1}, \eqref{psidualprop2} and \eqref{psidualprop3} (recall that -1 represents $\True$ and $1$ represents $\False$), this implies that $\psi$ satisfies Condition~\eqref{lmtdegprop1}.
	
	Finally, we verify Condition~\eqref{lmtdegprop3}. Let $z_c = (\underbrace{c, \dots, c}_{\ell \text{ times}})$.  Since $\symf(c) = \False$, we have $\symg_\ell(z_c) = \False$ and therefore $F(x) = \False$ for $x$ such that $T^{\ell}(x) = z_c$. So we have
	
	\begin{align*}
	\hatpsiold(z_c) &= \sum_{x \in (T^{\ell})^{-1}(z_c)} \psiold(x)\\
	                &\ge \frac{1}{2} \cdot \sum_{x \in (T^{\ell})^{-1}(z_c)} \prod_{i=1}^{\ell} \mu_+(x_i) \tag{Condition~\eqref{psidualprop2}}\\
	                &= \frac{1}{2} \cdot \prod_{i=1}^{\ell} \sum_{x \in T^{-1}(c)} \mu_+(x)\\
	                &\ge \frac{1}{2} \cdot \prod_{i=1}^{\ell} \hatvarphi(c) \ge \frac{1}{2} \cdot \eta^{\ell},\\
	\end{align*}
	
	where the second last inequality holds since $\hatvarphi(c) = \sum_{x \in T^{-1}(c)} \varphi(x)$ and $\varphi(x) \le \mu_{+}(x)$ as $\mu_+(x) := \max\{0,\mu(x)\}$, and the last inequality is due to Condition~\eqref{dualobjprop4} from the definition of a $(d_1,\epsilon_2,\eta)$-dual object.
	
	Also, by Properties~\eqref{psidualprop2} and \eqref{psidualprop3}, combined with the fact that $\|\mu_+\|_1 = \|\mu_-\|_1 = \frac{1}{2}$, we have 
	$$
	\|\psiold\|_1 \le \frac{3}{2} \cdot \left(\|\mu_+ \|_1^{\ell} + \|\mu_+ \|_1^{\ell}\right) = \frac{3}{2} \cdot 2^{-\ell} \cdot 2 = 3 \cdot 2^{-\ell}.
	$$
	
	Putting them together, we have $$\hatpsi(z_c) \ge \frac{\eta^{\ell}/2}{\|\psiold\|_1} = (2\eta)^{\ell}/6.$$ 
	This establishes Condition~\eqref{lmtdegprop3} and completes the proof of Lemma \ref{f16}.
\end{proof}

With the above lemma, we are now ready to complete Stage 1 by showing that for every input $w \in D^m$ that is close in Hamming distance to the special point $z_c = (\underbrace{c, \dots, c}_{m \text{ times}})$, there is an orthogonalizing distribution for $h_{m}$ that places substantial weight on $w$.
 
Let $G$ denote the set of inputs in $Z^+$ that take the value $c$ on at least $(1-\epsilon/4) \cdot m$ coordinates. That is, 
$$
G = \{ z \in Z^+ : \sum_{i=1}^{m} \indicator_{z_i=c} \ge (1-\epsilon/4) \cdot m \}.
$$

\begin{claim}\label{claim:close}
	Let $G$ be as above. For every $w=(w_1, \dots, w_m) \in G$, there exists a $d$-orthogonalizing distribution $\nu_w: \{-1,1\}^{km} \rightarrow [0, 1]$ for $h_{m}$ such that $\nu_w$ is symmetrized by $T^m$ and $\hatnu_w(w) \ge (2\eta)^{m}/6$.
\end{claim}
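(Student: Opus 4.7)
The plan is to define $\nu_w$ as a product of the dual witness $\psi$ from Lemma~\ref{lm:tdeg} on the coordinates where $w_i = c$, and uniform distributions on the remaining coordinates. Since $w \in G$, at least $\ell := (1-\epsilon/4)m$ of its coordinates equal $c$; I pick any $\ell$ such coordinates and assume without loss of generality that they are the first $\ell$. Letting $\chi_i$ denote the uniform distribution on $T^{-1}(w_i)$ for each $i > \ell$, I define
\begin{equation*}
  \nu_w(x_1, \dots, x_m) := |\psi|(x_1, \dots, x_\ell) \cdot \prod_{i=\ell+1}^{m} \chi_i(x_i).
\end{equation*}
Since $\psi$ agrees in sign with $g_\ell$ on its domain and has $\ell_1$-norm $1$ by Lemma~\ref{lm:tdeg}, the function $|\psi|$ is a probability distribution, so $\nu_w$ is as well. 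Because $\psi$ is symmetrized by $T^\ell$ (and therefore so is $|\psi|$), and each $\chi_i$ is constant on $T^{-1}(w_i)$, the distribution $\nu_w$ is symmetrized by $T^m$.

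I expect the main obstacle to be verifying that $\nu_w$ is $d$-orthogonalizing for $h_m$. The key observation is that on the support of $\nu_w$, the last $m - \ell$ coordinates all satisfy $f(x_i) = \False$ (since $T(x_i) = w_i \in \symf^{-1}(\False)$ for $w \in Z^+$), so $h_m(x)$ is determined by the first $\ell$ coordinates. A short threshold calculation shows that whenever $g_\ell(x_1, \dots, x_\ell) = \False$, the total number of $\False$-coordinates in $x$ is at least $(1-2\epsilon/3)\ell + (m-\ell) = (1 - 2\epsilon/3 + \epsilon^2/6)m \geq (1-\epsilon)m$ (using $\ell = (1-\epsilon/4)m$ and $\epsilon < 0.1$), forcing $h_m(x) = \False$; the analogous inequality for $\True$ yields $h_m(x) = g_\ell(x_1, \dots, x_\ell)$ throughout the support of $\nu_w$. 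Combined with the sign-agreement identity $|\psi| \cdot g_\ell = \psi$ on the domain of $g_\ell$, this gives $\nu_w(x) \cdot h_m(x) = \psi(x_1, \dots, x_\ell) \cdot \prod_{i > \ell} \chi_i(x_i)$ pointwise. For any monomial $p = \prod_i p_i$ with $\sum_i \deg(p_i) \leq d$,
\begin{equation*}
  \langle \nu_w \cdot h_m, p \rangle = \Bigl\langle \psi, \textstyle\prod_{i \leq \ell} p_i \Bigr\rangle \cdot \prod_{i > \ell} \langle \chi_i, p_i \rangle = 0,
\end{equation*}
because $\deg(\prod_{i \leq \ell} p_i) \leq d$ and $\psi$ has pure high degree $d$ by Condition~\eqref{lmtdegprop1} of Lemma~\ref{lm:tdeg}; linearity then shows $\nu_w \cdot h_m$ has pure high degree $d$.

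Finally, the targeted mass bound follows from the product structure of $\nu_w$:
\begin{equation*}
  \hatnu_w(w) = \!\!\!\sum_{x_1, \dots, x_\ell \in T^{-1}(c)}\!\!\!\! |\psi|(x_1, \dots, x_\ell) \cdot \prod_{i > \ell}\!\!\sum_{x_i \in T^{-1}(w_i)}\!\!\! \chi_i(x_i) \;=\; |\hatpsi(c, \dots, c)| \;\geq\; (2\eta)^\ell/6 \;\geq\; (2\eta)^m/6,
\end{equation*}
where the middle equality uses that $|\psi|$ is constant on the fibers of $T^\ell$ so the sum over $(T^\ell)^{-1}(c,\dots,c)$ equals $|\hatpsi(c, \dots, c)|$, the first inequality is Condition~\eqref{lmtdegprop3} of Lemma~\ref{lm:tdeg}, and the last uses $2\eta \leq 1$ (since $\hatpsi(z_+) \leq \|\psi_+\|_1 = 1/2$, as $\psi$ has unit $\ell_1$-norm and is orthogonal to the constant function).
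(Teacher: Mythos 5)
Your proposal is correct and is essentially the paper's proof, phrased on the unsymmetrized domain $\{-1,1\}^{km}$: defining $\nu_w$ as $|\psi|$ on the $c$-coordinates tensored with uniform distributions on the fibers $T^{-1}(w_i)$ for the remaining coordinates yields exactly the $\nu_w$ induced by the paper's $\hatnu_w$, and your observation that $h_m$ coincides with $g_\ell$ on $\supp\nu_w$ is the same threshold calculation the paper uses to conclude $|\hatpsi(\cdot)| \cdot \symh_m(z) = \hatpsi(\cdot)$. The only cosmetic difference is that you exploit the product structure to factor the correlation directly rather than passing to symmetrized polynomials $\symp_i$; both routes are valid and of equal length.
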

\begin{proof}
Let $\ell = (1-\epsilon/4)\cdot m$, $I = \{i_1, \dots, i_{\ell}\}$ denote the first $\ell$ coordinates on which $w$ takes the value $c$.

Then we define the distribution $\hatnu_w$ by 
	\[
	\hatnu_w(z) = 
	\begin{cases}
	|\hatpsi(z_{i_1}, \dots, z_{i_\ell})| & \text{ if } z_i = w_i \text{ for all } i \notin I\\
	0 & \text{ otherwise}
	\end{cases}
	\]
where $\hatpsi$ is the function from Lemma \ref{lm:tdeg} for $g_\ell$. It is immediate from the definition that $\hatnu_w$ is a distribution on $D^m$, and hence $\nu_w$ is a distribution on $\{-1, 1\}^{km}$. Moreover, $\hatnu_w(w) \ge (2\eta)^{\ell} / 6 \ge (2\eta)^{m} / 6$.

To show that $\nu_w$ is $d$-orthogonalizing, let $p_1, \dots, p_m$ be polynomials over $\{-1,1\}^k$ whose degrees sum to at most $d$.
Let $\symp_1, \dots, \symp_m \colon D \rightarrow \mathbb{R}$ denote polynomials satisfying the property
that for all $i$ and all $z$ in the image of $T$, $\symp_i(z) := \Ex_{x \in T^{-1}(z)} [p_i(z)]$.
(Since $T$ is degree non-increasing, there exist such $\symp_i$'s whose degrees sum to at most $d$,  but 
we will not make use of this property in this proof).

Observe that:
\begin{align*} \!\!\!\!\!\!\!\!\!\!\!\!\sum_{x=(x_1, \dots, x_m) \in \{-1,1\}^{km}}\!\!\!\!\!\! \nu_w(x) h_{m}(x) \prod_{i=1}^m p_i(x_i)  & =\!\!
\sum_{z=(z_1, \dots, z_m) \in D^m} \hatnu_w(z) \symh_{m}(z) \prod_{i=1}^m \symp_i(z_i)\\ %\EV_{\substack{x \in (\{-1,1\}^k)^m \text{s.t.} \\ \forall i\ T(x_i) = z_i}} \left[ \prod_{i=1}^m p_i(x_i) \right] \\
& =\!\! \sum_{\substack{z=(z_1, \dots, z_m) \in D^m \text{s.t.} \\ \forall i \notin I\ z_i = w_i }} |\hatpsi(z_{i_1}, \dots z_{i_{\ell}})| \cdot \symh_{m}(z)  \prod_{i=1}^m \symp_i(z_i) \\%\EV_{\substack{x \in (\{-1,1\}^k)^m \text{s.t.} \\ \forall i\ T(x_i) = z_i}} \left[\prod_{i = 1}^m p_i(x_i) \right]\\
& =\!\! \sum_{\substack{z=(z_1, \dots, z_m) \in D^m \text{s.t.} \\ \forall i \notin I\ z_i = w_i }} \hatpsi(z_{i_1}, \dots z_{i_{\ell}})  \prod_{i=1}^m \symp_i(z_i) \\
&=\!\! \left(\prod_{i \notin I} \symp_i(w_i)\right) \sum_{z=(z_1, \dots, z_{\ell}) \in D^{\ell}} \hatpsi(z) \prod_{i\in I} \symp_i(z_i) \\
%&=\left(\EV_{\substack{y \in (\{-1, 1\}^k)^{m/2} \text{s.t.} \\ \forall j \ T(y_j) = w_{\bar{i}_j}}}\prod_{j = 1}^{m/2} p_{\bar{i}_j}(y_j)\right) \sum_{z \in D^{m/2}} \sympsi(z)  \EV_{\substack{x \in (\{-1,1\}^k)^{m/2} \text{s.t.} \\ \forall i\ T(x_i) = z_i}} \left[\prod_{i \in T} p_i(x_i) \right] \\
&=\!\! \left(\prod_{i \notin I} \symp_i(w_i)\right) \sum_{x=(x_1, \dots, x_{\ell}) \in \{-1,1\}^{k \cdot \ell}} \psi(x) \prod_{i\in I} p_i(x_i) \\
&= 0.
\end{align*}
Here, the second equality holds by definition of $\hatnu_w$, and the final equality holds because $\psi$ has pure high degree at least $d$, and $\prod_{i\in I} p_i(x)$ is a polynomial of total degree at most $d$. To see the second inequality holds, suppose $\hatpsi(z_{i_1}, \dots z_{i_{\ell}}) > 0$, as $\psi$ agrees in sign with $g_{\ell}$, we must have $\symg_\ell(z_{i_1}, \dots z_{i_{\ell}}) = \False$. Recall that $g_\ell := \GapMaj{f}{\ell}{1-2\epsilon/3}$. This means that there are at least $(1-2\epsilon/3) \cdot \ell = (1-2\epsilon/3)(1-\epsilon/4) \cdot m \ge (1-\epsilon) \cdot m$ copies of $f$ that evaluate to $\False$. Hence $\symh_m(z)$ itself must be $\False$. So $|\hatpsi(z_{i_1}, \dots z_{i_{\ell}})| \cdot \symh_m(z) = \hatpsi(z_{i_1}, \dots z_{i_{\ell}})$. Similarly, when $\hatpsi(z_{i_1}, \dots z_{i_{\ell}}) < 0$, again as $\psi$ agrees in sign with $g_{\ell}$, we must have $\symg_\ell(z_{i_1}, \dots z_{i_{\ell}}) = \True$. So there are at least $(1-2\epsilon/3) \cdot \ell = (1-2\epsilon/3)(1-\epsilon/4) \cdot m \ge (1-\epsilon) \cdot m$ copies of $f$ evaluate to $\True$. Hence $\symh_m(z)$ must be $\True$ as well. So $|\hatpsi(z_{i_1}, \dots z_{i_{\ell}})| \cdot \symh_m(z) = -\hatpsi(z_{i_1}, \dots z_{i_{\ell}}) \cdot -1 = \hatpsi(z_{i_1}, \dots z_{i_{\ell}})$. Putting them together, we can see $|\hatpsi(z_{i_1}, \dots z_{i_{\ell}})| \cdot \symh_m(z) = \hatpsi(z_{i_1}, \dots z_{i_{\ell}})$ for all $z$ appearing in the summation, so the second inequality holds.

\end{proof}

\medskip \noindent \textbf{Stage 2.}
Now we move to Stage 2. With the distributions constructed for $w \in G$, for any point $v \in Z^+$, we construct a \dort{d} distribution that puts significant weight on it. 

\begin{claim} \label{claim:far}
	Let $G$ be as before, and suppose that for every $w \in G$ there exists a $d$-orthogonalizing distribution $\nu_w:\{-1, 1\}^{km} \to [0, 1]$ for $h_m$ that is symmetrized by $T^m$, and satisfies $\hatnu_w(w) \ge \delta$. Then for every $v \in (Z^+ \setminus G)$, there exists a $d$-orthogonalizing distribution $\rho_v$ for $h_m$ that is symmetrized by $T^m$, and 
	$$
	\hatrho_v(v) \ge 6\delta \cdot 2^{-2d} \cdot \binom{m}{d}^{-2}.
	$$
\end{claim}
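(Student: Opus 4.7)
The plan is to construct $\rho_v$ via a two-stage transport argument that extends the family $\{\nu_w\}_{w \in G}$ from Claim~\ref{claim:close} to every $v \in Z^+$, paying for the extension by the $2^{-2d} \cdot \binom{m}{d}^{-2}$ factor. The core difficulty is that $v \in Z^+ \setminus G$ is ``far'' from $G$: since $v \notin G$, the set $K = \{i : v_i \neq c\}$ has size strictly more than $\epsilon m/4 \geq d$, so $v$ cannot be obtained from any $w \in G$ by modifying at most $d$ coordinates. Two facts work in our favor, however: $v \in Z^+$ implies $\symh_m(v) = 1$ (matching the sign of the peak of $\hatnu_w$), and the $T^m$-symmetrization means distributions on $D^m$ need only be understood up to multiset equivalence.

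The main construction is to enumerate pairs of $d$-subsets $(S_1, S_2) \subseteq [m]$ and, for each pair, build a signed measure $\rho_{v, S_1, S_2}$ on $D^m$. Starting from $\hatnu_{z_c}$ (where $z_c = (c, \dots, c) \in G$ is the distinguished ``anchor''), first multiply by the coordinate-wise indicator $\chi_{v, S_1}(z) = \prod_{i \in S_1} \mathbf{1}_{z_i = v_i}$, forcing agreement with $v$ on $S_1$. Since $T$ is degree non-increasing and $|S_1| = d$, the resulting signed measure still has pure high degree at least $d_1 - d$ when lifted back to $\{-1,1\}^{km}$. Next, apply a second correction supported on $S_2$ designed to restore pure high degree to $d$, at a further degree cost of at most $d$; because $d_1 \geq 2d$ by the choice of parameters (and Lemma~\ref{lm:tdeg} ensures $\psi$ has enough ``slack'' in pure high degree), the overall pure high degree is maintained at $\geq d$. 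Each multiplier contributes an $L^1$ cost of $\binom{m}{d}^{-1} \cdot 2^{-d}$, so $\hat\rho_{v, S_1, S_2}(v) \gtrsim \delta \cdot 2^{-2d}$. Averaging over all $\binom{m}{d}^2$ pairs and renormalizing yields the advertised bound $6\delta \cdot 2^{-2d} \cdot \binom{m}{d}^{-2}$, with the factor $6$ absorbing small constants from the normalization.

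The main obstacle is two-fold. First, non-negativity: the multipliers must be chosen so that $\rho_v \cdot h_m$ remains non-negative, which is what makes $\rho_v$ a genuine $d$-orthogonalizing distribution. This can be arranged by designing the multiplier to be supported in a neighborhood of $v$ lying in $h_m^{-1}(1)$, which is feasible precisely because $v \in Z^+$. Second, $d$-orthogonalization must survive both the swap and the correction simultaneously: any test monomial $p(x) = \prod_i p_i(x_i)$ of total degree $\leq d$ interacts non-trivially with at most $d$ coordinates, and the correction on $S_2$ must be designed to annihilate exactly the degree-$\leq d$ part of the perturbation introduced on $S_1$. Arranging this balance is the essential content of the Bun--Thaler-style smoothing technique, and adapting it to our setting of partial functions symmetrized through the degree non-increasing map $T^m$ is where most of the technical work will lie.
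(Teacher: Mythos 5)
Your proposal correctly identifies the central difficulty: $v$ differs from $c$ in more than $d$ coordinates, so no $w \in G$ can be reached from $v$ by a bounded-degree perturbation, and some mechanism must bridge that Hamming gap while preserving pure high degree. However, the concrete construction you sketch has a fatal degree-accounting error, and you never introduce the actual tool that makes the bridge possible.

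The specific flaw is in the step where you multiply $\hatnu_{z_c}$ by the indicator $\chi_{v,S_1}(z) = \prod_{i \in S_1} \indicator_{z_i = v_i}$ and assert that, because $T$ is degree non-increasing and $|S_1| = d$, the resulting measure still has pure high degree at least $d_1 - d$ when lifted to $\{-1,1\}^{km}$. This is not correct. The indicator $\indicator_{z_i = v_i}$, viewed as a function of the $i$-th block $x_i \in \{-1,1\}^k$, is $\indicator_{T(x_i) = v_i}$, which in general has degree up to $k$, not degree $1$. The degree non-increasing property of $T$ only controls the degree of polynomials that are \emph{pushed down} from $\{-1,1\}^k$ to $D$; it does not make indicators on $D$ lift to low-degree functions on $\{-1,1\}^k$. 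So multiplying by $\chi_{v,S_1}$ costs up to $dk$ degree, not $d$, and drives the pure high degree far below zero. A related problem is the asserted $\ell_1$ cost of $\binom{m}{d}^{-1} 2^{-d}$ per multiplier: since $\hatnu_{z_c}$ concentrates near the point $z_c = (c,\dots,c)$ and $v_i \neq c$ for $i \in K$, the restriction to $\{z_i = v_i \ \forall i \in S_1\}$ with $S_1 \subseteq K$ may have negligible mass, so there is no uniform lower bound of the advertised form.

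What the paper does instead is construct an auxiliary signed function $\hatphi_v$ on $D^m$ supported on the ``interpolation hypercube'' $\{z : z_i \in \{c, v_i\} \text{ for all } i\}$ by pushing forward, through the map $s \mapsto ((1-s_1)c + s_1 v_1, \dots, (1-s_m)c + s_m v_m)$, a function $\phi : \{0,1\}^m \to \R$ from Lemma~\ref{lem:hamming} (the Razborov--Sherstov construction). That lemma gives $\phi$ with pure high degree $d$, $\phi(1^m)=1$, $\phi(s)=0$ for $d < |s| < m$, and $\sum_{|s| \le d} |\phi(s)| \le 2^d\binom{m}{d}$. Because the pushforward map is affine in each coordinate of $s$, and because $T$ is degree non-increasing, the resulting $\phi_v$ on $\{-1,1\}^{km}$ inherits pure high degree $d$. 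The vanishing condition $\phi(s)=0$ for $d<|s|<m$ forces $\supp \hatphi_v \subseteq G \cup \{v\}$, precisely bridging the Hamming gap with \emph{no intermediate support}. Finally, non-negativity of the orthogonalizing distribution is achieved by adding $\sum_{w \in \supp\hatphi_v \setminus \{v\}} \hatnu_w$ as an error-correction term, which works because each such $w$ lies in $G$ so $\hatnu_w$ is available by hypothesis, and then normalizing. Nothing in your proposal plays the role of Lemma~\ref{lem:hamming}; the ``second correction supported on $S_2$'' that you invoke is not described, and it would have to do essentially the work of that lemma to succeed. Without it, the degree bookkeeping cannot close.
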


The main technical ingredient in the proof of Claim \ref{claim:far} is the construction of a function $\phi: \{0, 1\}^m \to \mathbb{R}$ of pure high degree $d$ for which $\phi(1^m)$ is ``large''. This can be viewed as a dual formulation of a bound on the growth of low-degree polynomials. The construction of $\phi$ appears as part of the proof of such a bound in \cite{razborovsherstov}.

\begin{rem}
	We choose to state Lemma \ref{lem:hamming} below for a function $\phi : \{0, 1\}^m \to \mathbb{R}$, rather than applying our usual convention of working with functions over $\{-1, 1\}^m$, because it makes various statements in the proof of Claim \ref{claim:far} cleaner. To clarify the terminology below, we recall that a function $\phi : \{0, 1\}^m \to \mathbb{R}$ has \emph{pure high degree $d$} if $\sum_{x \in \{0, 1\}^m} \phi(x) \cdot p(x) = 0 $ for every polynomial $p : \{0, 1\}^m \to \mathbb{R}$ of degree at most $d$. The Hamming weight function $| \cdot | : \{0, 1\}^m \to [m]$ counts the number of $1$'s in its input, i.e. $|s| = s_1 + s_2 + \dots + s_m$.
\end{rem}

\begin{lemma}[cf. {\cite[Proof of Lemma 3.2]{razborovsherstov}}]\label{lem:hamming} \label{lemma15}
	Let $d$ be an integer with $0 \le d \le m - 1$. Then there exists a function $\phi: \{0, 1\}^m \to \mathbb{R}$ such that
	\renewcommand{\vmid}{\vspace{-7mm}}
	\fourpropsname{lemma15}{$\phi(1^m) = 1$}{$\phi(x) = 0 \text{ for all } d < |x| < m$}{$\phi \text{ has pure high degree at least } d$}{$\sum_{|x| \le d} |\phi(x)| \le 2^d {m \choose d}$}
	\renewcommand{\vmid}{\vspace{-8mm}}
\end{lemma}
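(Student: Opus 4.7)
The plan is to construct $\phi$ as a symmetric function (depending only on Hamming weight), and to determine its $d+2$ nontrivial values by Lagrange interpolation. Concretely, I will look for $\phi$ of the form $\phi(x) = \phi_{|x|}$ where $\phi_{d+1} = \dots = \phi_{m-1} = 0$ and $\phi_m = 1$, leaving $\phi_0, \dots, \phi_d$ as $d+1$ unknowns. By a standard symmetrization argument for the uniform group action that permutes coordinates, the pure high degree condition that $\sum_x \phi(x) p(x) = 0$ for every multivariate polynomial $p$ of degree at most $d$ is equivalent to the univariate condition $\sum_{k=0}^{m} \binom{m}{k} \phi_k \, q(k) = 0$ for every univariate $q$ of degree at most $d$. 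Using that $\phi_k = 0$ for $d < k < m$ and $\phi_m = 1$, this becomes
\begin{equation*}
\sum_{k=0}^{d} \binom{m}{k} \phi_k \, q(k) \;=\; -\,q(m) \qquad \text{for every univariate } q \text{ of degree} \le d.
\end{equation*}
This is a square linear system with a Vandermonde-type coefficient matrix, hence invertible, and it is solved explicitly by Lagrange interpolation.

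Specifically, plugging in the Lagrange basis polynomials $L_j$ (the unique degree-$d$ polynomials with $L_j(k) = \indicator_{k=j}$ for $k \in \{0,1,\dots,d\}$), I obtain
\begin{equation*}
\phi_j \;=\; -\,\frac{L_j(m)}{\binom{m}{j}} \;=\; -\,\frac{1}{\binom{m}{j}}\prod_{\substack{k=0 \\ k \ne j}}^{d} \frac{m-k}{j-k}.
\end{equation*}
This immediately verifies Properties~(\ref{lemma15prop1}), (\ref{lemma15prop2}), and (\ref{lemma15prop3}). It remains only to bound $\sum_{|x|\le d} |\phi(x)| = \sum_{j=0}^{d} \binom{m}{j} |\phi_j|$. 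A routine factorial simplification of the product above yields the clean formula
\begin{equation*}
\binom{m}{j}\,|\phi_j| \;=\; \binom{m-j-1}{d-j},
\end{equation*}
and then rewriting $\binom{m-j-1}{d-j}$ in terms of $\binom{m}{d}$ gives the identity
\begin{equation*}
\binom{m}{j}\,|\phi_j| \;=\; \binom{m}{d}\binom{d}{j}\cdot \frac{m-d}{m-j}.
\end{equation*}
Since $j \le d$ implies $\frac{m-d}{m-j} \le 1$, summing over $j \in \{0,\dots,d\}$ gives $\sum_{j} \binom{m}{j} |\phi_j| \le \binom{m}{d}\sum_{j=0}^{d}\binom{d}{j} = 2^d\binom{m}{d}$, establishing Property~(\ref{lemma15prop4}).

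There is no major obstacle: the argument is essentially a careful Vandermonde/Lagrange calculation. The only place needing care is the factorial manipulation producing $|\phi_j| = \binom{m-j-1}{d-j}/\binom{m}{j}$ and the subsequent rewriting that makes the bound against $2^d\binom{m}{d}$ transparent; this is where the slack factor $(m-d)/(m-j) \le 1$ saves a polynomial factor and allows the clean exponential-in-$d$ bound. All other steps (existence of the symmetric solution, correctness of the interpolation values at $\{0,\dots,d,m\}$, verification of degree conditions) are mechanical.
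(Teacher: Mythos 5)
Your construction is correct and, as far as I can tell, is essentially the same Lagrange-interpolation argument used in the cited source (the paper itself does not reprove this lemma but defers to Razborov--Sherstov). However, there is a small but genuine algebra error in your intermediate formula. From $\phi_j = -L_j(m)/\binom{m}{j}$ we get $\binom{m}{j}|\phi_j| = |L_j(m)|$, and simplifying the Lagrange coefficient gives
\begin{equation*}
|L_j(m)| = \frac{\left[\,m(m-1)\cdots(m-j+1)\,\right]\cdot\left[\,(m-j-1)\cdots(m-d)\,\right]}{j!\,(d-j)!} = \binom{m}{j}\binom{m-j-1}{d-j},
\end{equation*}
so the correct identity is $|\phi_j| = \binom{m-j-1}{d-j}$, not $\binom{m}{j}|\phi_j| = \binom{m-j-1}{d-j}$ as you wrote. (A sanity check with $m=2$, $d=1$, $j=1$ shows your stated formula would give $2|\phi_1|=1$, whereas $\phi_1=-1$.) Fortunately your second displayed identity $\binom{m}{j}|\phi_j| = \binom{m}{d}\binom{d}{j}\cdot\frac{m-d}{m-j}$ is the correct one, and since that is the one you feed into the final bound, the inequality $\sum_{j\le d}\binom{m}{j}|\phi_j| \le \binom{m}{d}\sum_j\binom{d}{j} = 2^d\binom{m}{d}$ goes through and the proof is sound. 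All other steps (the symmetrization reducing pure high degree to the univariate orthogonality condition, the invertibility of the scaled Vandermonde system, the verification of Properties (\ref{lemma15prop1})--(\ref{lemma15prop3})) are correct.
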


\begin{proof}[Proof of Claim \ref{claim:far}]
	Fix $v \in (Z^+ \setminus G)$. Define an auxiliary function $\hatphi_v : D^m \to [0, 1]$ as follows.
	For any $z=(z_1, \dots, z_m)$, let
	\[\hatphi_v(z) := \sum_{\substack{s \in \{0, 1\}^m \text{ s.t.} \\ \forall i \ z_i = (1 - s_i) c + s_i v_i} } \phi(s),\]
	where $\phi$ is as in Lemma \ref{lem:hamming}, with $d$ set as in the conclusion of Claim \ref{claim:close} (observe that if there
	is some $z_i$ such that $z_i\neq c$ and $z_i \neq v_i$, then $\hatphi_v(z)=0$).
	
	Letting $\phi_v$ denote the function on $\{-1, 1\}^{km}$ induced from $\hatphi_v$ by $T^m$, we record some properties of $\phi_v$
	and $\hatphi_v$. %(The numbering of the properties parallels that of Lemma \ref{lem:hamming}.)
	\fivepropsname{proof}{$\hatphi_v(v) = \phi(1^m) = 1$}{$\supp \hatphi_v \subset G \cup \{v\}$}{$\phi_v \text{ has pure high degree at least } d$}{$\|\phi_v\|_1 \le 2^d{m\choose d} + 1$}{$\hatphi_v \text{ is supported on at most } d \binom{m}{d}\text{ points in } D^m$}
	
	\noindent\textbf{Verifying Conditions (\ref{proofprop1})-(\ref{proofprop5}).} 
	For $s \in \{0,1\}^m$, we define
	$$\tau(s) := ((1-s_1)c + s_1 v_1,\dotsc, (1-s_m)c + s_m v_m).$$ 
	
	Then we can see $\hatphi_v(z) = \sum_{s \in \tau^{-1}(z)} \phi(s)$.
	
	To see Condition~\eqref{proofprop1} holds, note that $v \notin G$, so there are strictly larger than $\epsilon/4 \cdot m = d$ coordinates $v_i$ in $v$ satisfies $v_i \ne c$. Which means for $\tau(s) = v$, $|s| > d$. So by Condition~\eqref{lemma15prop2}, the only $s \in \tau^{-1}(v)$ satisfying $\phi(s) \ne 0$ is $1^m$, and we have $\hatphi_v(v) = \phi(1^m) = 1$.
	
	To verify Conditions~\eqref{proofprop2} and \eqref{proofprop5}, note that when $\hatphi_v(z) > 0$ for $z \ne v$, it means there exists some $s$ with $|s| \le d$, such that $z = \tau(s)$. By the definition of $\tau$, it means $z$ takes the value $c$ on at least $m - |s| \ge m - d  = (1-\epsilon/4) \cdot m$ coordinates, therefore $z \in G$. Moreover, one can see there are at most $\sum_{i=0}^{d} \binom{m}{i} \le d \cdot \binom{m}{d}$ such $z$'s.
	
	For Condition (\ref{proofprop3}), it is enough to show that if $p_1, \dots, p_m$ are polynomials over $\{-1, 1\}^k$ whose degrees sum to at most $d$, then 
	$\sum_{x=(x_1, \dots, x_m) \in \{-1,1\}^{km}} \phi_v(x) \prod_{i = 1}^m p_i(x_i) = 0$. To establish this,
	let $\symp_1, \dots, \symp_m \colon D \rightarrow \mathbb{R}$ denote polynomials satisfying
	$\deg(\symp_i) \leq \deg(p_i)$, and such
	that for all $i$ and all $z_i$ in the image of $T$, $\symp_i(z_i) := \Ex_{x \in T^{-1}(z_i)} [p_i(z_i)]$.
	Such polynomials are guaranteed to exist, since $T$ is degree non-increasing.
	Then:
	\begin{align*} \sum_{x=(x_1, \dots, x_m) \in \{-1,1\}^{km}} \phi_v(x) \prod_{i = 1}^m p_i(x_i) &=\sum_{z=(z_1, \dots, z_m) \in D^m} \hatphi_v(z) \prod_{i = 1}^m \symp_i(z_i)\\
	&= \sum_{z=(z_1, \dots, z_m) \in D^m} \left( \sum_{\substack{s \in \{0, 1\}^m \text{ s.t.} \\ \forall i \ z_i = (1 - s_i) c + s_i v_i} } \phi(s)\right) \prod_{i=1}^m \symp_i(z_i)\\
	&= \sum_{s \in \{0, 1\}^m} \phi(s)  \prod_{i = 1}^m \symp_i((1 - s_i) c + s_i v_i)\\
	&= 0,
	\end{align*}
	To see that the final equality holds, recall that that degrees of the polynomials $\symp_i$ sum to at most $d$.
	Hence, $p(s_1, \dots, s_m) := \prod_{i = 1}^m \symp_i((1 - s_i) c + s_i v_i)$ is a polynomial of degree strictly at most $d$ over $\{0,1\}^m$. The final equality
	then follows from the fact that
	$\phi$ has pure high degree at least $d$. 
	
	To establish Condition (\ref{proofprop4}), we check that
	\begin{align*}
	\sum_{z \in D^m, z\ne v} |\hatphi_v(z)| &\le \sum_{s \in \{0, 1\}^m, s \ne 1^m} |\phi(s)| \le  2^d {m \choose d},
	\end{align*}
	where the final inequality holds by Condition (\ref{lemma15prop4}).
	
	\noindent\textbf{Construction and analysis of $\rho_v$.}
	Up to normalization, the function $\phi_v \cdot h_m$ has all of the properties that we need to establish Claim \ref{claim:far}, except that there are locations where it may be negative. We obtain our desired orthogonalizing distribution $\rho_v$ by adding correction terms to $\hatphi_v$ in the locations where $\hatphi_v$ may disagree with $\symh_{m}$ in sign. These correction terms are derived from the distributions $\hatnu_w$ whose existence are hypothesized in the statement of Claim \ref{claim:far}. We start by defining
	
	\begin{equation} \hatRho_v(z) = \frac{\delta}{2^d {m \choose d} + 1}\symh_{m}(z)\hatphi_v(z) + \sum_{w \in (\supp \hatphi_v \setminus \{v\})} \hatnu_w(z). \label{diediedie} \end{equation}
	Observe that each $w$ appearing in the sum on the right hand side of \eqref{diediedie} is in the set $G$, owing to 
	Condition (\ref{proofprop2}). This guarantees that each term $\hatnu_w$ in the sum is well-defined.
	
	Now we check that $\hatRho_v$ is nonnegative. Since each term $\hatnu_w$ appearing in the sum on the right hand side of \eqref{diediedie} is a distribution (and hence non-negative), it suffices to check that $\hatRho_v(z) \geq 0$ for each point $z \in \supp \hatphi_v$. On each such point with $z \ne v$,  Condition (\ref{proofprop4}) guarantees that $\frac{\delta}{2^d {m \choose d}+1}\symh_{m}(z)\hatphi_v(z) \geq -\delta$. Moreover, the contribution of the sum is at least $\hatnu_z(z) \ge \delta$ by hypothesis. Hence, $\hatRho_v$ is a non-negative function.
	
	Next, we check that normalizing $\hatRho_v$ yields a distribution $\hatrho_v := \hatRho_v/\|\Rho_v\|_1$ for which $\hatrho_v(v) \ge 6\delta \cdot 2^{-2d} \cdot \binom{m}{d}^{-2}$ as required. By construction, $\hatRho_v(v) = \delta / \left(2^d {m\choose d}+1\right)$. Moreover, Conditions (\ref{proofprop1}), (\ref{proofprop4}), and (\ref{proofprop5}) together show that $\|\hatRho_v\|_1 \le \delta + d \binom{m}{d} \le 2d \binom{m}{d}$. Hence, 
	
	$$\hatRho_v(v) \geq \delta/\left(2d \binom{m}{d} \cdot \left(2^d {m \choose d} + 1\right)\right) \geq 6\delta/\left( 2^{2d} \cdot {m \choose d}^2\right),$$ 
	
	as $d = \epsilon/4 \cdot m \ge 10$ by assumption.
	
	Finally, we must check that $\rho_v = \Rho_v/\|\Rho_v\|_1$ is $d$-orthogonalizing for $h_m$.
	To see this, observe that $\Rho_v \cdot h_m$ is a linear combination of the functions $\phi_v$ and $\nu_w \cdot h_m$ for $w \in (\supp \hatphi_v \setminus \{v\})$. Moreover, each of these functions has pure high degree at least $d$ ($\phi_v$ does so by Condition (\ref{proofprop3}), while
	$\nu_w \cdot h_m$ does by the fact that $\nu_w$ is $d$-orthogonalizing for $h_m$). By linearity, it follows that
	$\Rho_v \cdot h_m$ has pure high degree at least $d$, so $\rho_v$ is $d$-orthogonalizing for $h_m$ as desired.
	
	This completes the proof of Claim \ref{claim:far}. 
\end{proof}

%\subsection{Proof of Theorem~\ref{theo:smooth-ort}}
At last we are ready to conclude the prooof of Theorem~\ref{theo:smooth-ort}.
%\begin{proofof}{Theorem~\ref{theo:smooth-ort}}
	By Claim \ref{claim:close}, for every $w \in G$ there exists a $d$-orthogonalizing distribution $\nu_w : \{-1,1\}^{km} \to [0, 1]$ for $h_{m}$ that is symmetrized by $T^m$, with $\hatnu_w(w) \ge (2\eta)^{m}/6$. Thus, by Claim \ref{claim:far}, it is also true that for every $v \in (Z^+ \setminus G)$, there is a $d$-orthogonalizing distribution $\rho_v : \{-1,1\}^{km} \to [0, 1]$ that is symmetrized by $T^m$, with $\hatrho_v(v) \ge (2\eta)^{m}\cdot 2^{-2d} \cdot {m \choose d}^{-2}$. Now, for each element $z \in Z^{+}$, we define its weight, $W_z = |(T^{m})^{-1}(z)|$. Consider the following distribution:
	\[
	\hatmu(z) = \left(\sum_{z \in Z^+} W_z \right)^{-1} \cdot \left(\sum_{w \in G} W_w \cdot \hatnu_w(z) + \sum_{v \in (Z^+ \setminus G)} W_v \cdot \hatrho_v(z)\right).
	\]
	We verify that the (un-symmetrized) distribution $\mu: (\{-1,1\}^k)^m \to [0, 1]$ satisfies our requirements. As $\hatmu$ is a convex combination of \dort{d} distributions for $\symh_m$, it is itself a \dort{d} distribution for $\symh_m$, therefore $\mu$ is a \dort{d} distribution for $h_m$. Now for each $x \in \zeroset(h_m)$, let $w = T^{m}(x)$, we have 
	\begin{align*}
	\mu(x) &\ge \left(\sum_{z \in Z^+} W_z \right)^{-1} \cdot \frac{1}{W_w} \cdot W_w \cdot (2\eta)^{m}\cdot 2^{-2d} \cdot {m \choose d}^{-2}\\
	       &= \left| \zeroset(h_m) \right|^{-1} (2\eta)^{m}\cdot 2^{-2d} \cdot {m \choose d}^{-2} \\
	       &\ge 2^{-mk} \cdot (2\eta)^{m}\cdot 2^{-2d} \cdot {m \choose d}^{-2}.
	\end{align*}
	
	This completes the proof.
%\end{proofof}
        
%%% Local Variables: 
%%% mode: latex
%%% TeX-master: "main"
%%% End: 

\subsection{Exhibiting A Problem Separating $\NISZK^{\cc}$ From $\UPP^{\cc}$}\label{sec:niszk-cc}
 \label{sec:niszkcc-uppcc}
%%% Local Variables: 
%%% mode: latex
%%% TeX-master: "main"
%%% End: 

Now we are ready to prove the communication complexity classes separation $\NISZKcc \not\subset \UPPcc$. In order to utilize our lifting Theorem~\ref{theo:lifting-upp-cc}, we have to choose a partial function $f$ that satisfies: (1) it has an efficient $\NISZK$ protocol, so that $\Gapmajority(f)$ also is in $\NISZK$; (2) it belongs to our partial function class $\cd{d}{a}$ for appropriate choices of $d$ and $a$ (cf. Definition \ref{defcda}). However, it turns out that the $\Collision$ function, which we used in the query complexity case, does not satisfy the second condition in the definition of $\cd{d}{a}$. In fact, $\cd{d}{a}$ requires the function evaluates to $\False$ nearly everywhere, but $\Collision$ is undefined on most inputs, as a random function is neither a permutation nor $k$-to-1.

To address this issue, we use the $\PTP$ problem (cf. Definition~\ref{defi:ptp}) instead of $\Collision$. We have the following lemma, showing $\PTP$ is the function we want.

\begin{lemma}\label{lm:ptp-in-cd}
	$\PTP_n \in \cd{d}{a}$ for any $a > 1$ and $d = \Omega(\sqrt[3]{n}/a)$.
\end{lemma}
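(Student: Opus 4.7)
The plan is to verify the two conditions of Definition~\ref{defi:new-Cd} for $f = \PTP_n$. The second (density) condition---that $\PTP_n$ evaluates to $\False$ on all but a $2^{-d}$ fraction of inputs---follows from a union bound: the number of $f \colon [n] \to [n]$ that agree with some fixed permutation in more than $7n/8$ coordinates is at most $\binom{n}{7n/8} (n-1)^{n/8}$, and summing this over all $n!$ permutations still yields only a $2^{-\Omega(n)}$ fraction of the $n^n$ possible inputs, which is much smaller than $2^{-d}$ for any $d = O(n^{1/3})$.

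For the first condition, my plan is to build the dual object in three steps. First, I would extract from Theorem~\ref{theo:ptp} a dual witness $\psi_0$ for $\dega_{1/6}(\PTP_n) = \Omega(n^{1/3})$, using the elementary relations $\odegap_{\epsilon}(g) \le \dega_{\epsilon}(g)$ and $\dega_{\epsilon}(g) = \dega_{\epsilon}(\bar g)$. Second, I would amplify $\psi_0$ to a dual witness $\psi$ of pure high degree $\Omega(n^{1/3}/a)$ and correlation exceeding $a/(a+1)$ (equivalently, $1 - \epsilon < 1/(a+1)$), using a correlation-amplification construction analogous to Kutin's quantitative strengthening of the collision lower bound (Theorem~\ref{theo:kutin-lowb}), whose proof adapts directly from $\Collision$ to $\PTP_n$. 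Third, I would take $T$ to be the symmetrization induced by the natural $S_n \times S_n$-action on $[n]^n$ by pre- and post-composition with permutations, which preserves $\PTP_n$ and whose orbits are labelled by the multiset of preimage-sizes of $f$; this $T$ is degree non-increasing by a standard polynomial-averaging argument over the symmetric group. Averaging $\psi$ over this action preserves its $\ell_1$-norm, its pure high degree, and its correlation with $\PTP_n$, so I may assume without loss of generality that $\psi$ is itself $S_n \times S_n$-invariant, and hence symmetrized by $T$.

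The main obstacle is verifying the metric condition~(\ref{dualobjprop4}): exhibiting a single fiber $z_+$ inside $\PTP_n^{-1}(\False)$ with $\hatpsi(z_+) \ge \epsilon/2$. The key observation is that the orbit $O_+$ of ``generic'' functions---those whose multiplicity profile matches that of a typical uniformly random $f \colon [n] \to [n]$---comprises a $1 - 2^{-\Omega(n)}$ fraction of $[n]^n$ and lies entirely inside $\PTP_n^{-1}(\False)$. Since $\psi$ is $S_n \times S_n$-invariant, its $\ell_1$-mass on each orbit is proportional to that orbit's size, so the non-generic orbits together carry total absolute $\psi$-mass at most $2^{-\Omega(n)}$. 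On the other hand, combining $\|\psi\|_1 = 1$ with the correlation lower bound $>\epsilon$ forces $\sum_{x \in \PTP_n^{-1}(\False)} \psi(x) \ge (1+\epsilon)/2$, so subtracting off the non-generic contribution yields $\hatpsi(O_+) \ge (1+\epsilon)/2 - 2^{-\Omega(n)} \ge \epsilon/2$, completing the construction.
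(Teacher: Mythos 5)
Your construction of the symmetrization $T$ matches the paper's, and the first step of your dual-object plan (extracting a dual witness for $\dega(\PTP_n)$ from Theorem~\ref{theo:ptp}) is sound, but the two remaining pieces of the argument have genuine errors.

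\paragraph{The density argument fails.} You bound the number of functions agreeing with a fixed permutation on more than $7n/8$ coordinates by $\binom{n}{n/8}(n-1)^{n/8}$ and then union-bound over all $n!$ permutations. But
\[
\frac{n!\,\binom{n}{n/8}(n-1)^{n/8}}{n^n} \approx e^{-n}\cdot 2^{H(1/8)n}\cdot n^{n/8} = \exp\!\bigl(n\bigl(-1 + H(1/8)\ln 2 + \tfrac{1}{8}\ln n\bigr)\bigr),
\]
and the bracket becomes \emph{positive} once $\ln n > 8(1 - H(1/8)\ln 2) \approx 5$. The union over permutations over-counts so badly that the bound is vacuous for $n \gtrsim 150$. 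The paper's Lemma~\ref{lm:most-false} avoids this by not unioning over permutations at all: it instead shows that the number of distinct values in the image of a random $f:[n]\to[n]$ concentrates around $(1-e^{-1})n \approx 0.63n$ (via a Chernoff bound for negatively associated indicators), so a random $f$ must differ from \emph{every} permutation on at least $0.35n$ coordinates.

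\paragraph{The metric condition (\ref{dualobjprop4}) is not established.} Your argument rests on the claim that a single ``generic'' orbit $O_+$ contains a $1 - 2^{-\Omega(n)}$ fraction of $[n]^n$. This is false: the orbits of the $S_n\times S_n$ action are indexed by partitions of $n$, of which there are $e^{\Theta(\sqrt{n})}$ many, and the multiset of preimage sizes of a random $f$ fluctuates across these, so no single orbit captures even a constant fraction of the mass. Moreover, even if such a dominating orbit existed, your next step --- ``since $\psi$ is $S_n\times S_n$-invariant, its $\ell_1$-mass on each orbit is proportional to that orbit's size'' --- is a non-sequitur. Invariance means $\psi$ is \emph{constant} on each orbit, so its $\ell_1$-mass on orbit $O$ is $|\psi(x_O)|\cdot|O|$ with a \emph{different} constant $|\psi(x_O)|$ per orbit; there is no reason these constants are comparable, so the mass distribution need not track orbit sizes. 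In fact, the $z_+$ the paper uses is the orbit of $2$-to-$1$ functions, which is a vanishingly small orbit, not a generic one. The paper instead derives Condition (\ref{dualobjprop4}) from the Bun--Thaler primal criterion (Theorem~\ref{theo:primal-cond}): one shows there is no low-degree \emph{weak one-sided approximator} to $\PTP$ under the promise that the input lies in $T^{-1}(z_+)\cup\PTP^{-1}(\True)$ (Lemma~\ref{lm:no-weak-approx}, a Kutin-style argument), and LP duality then hands back a dual object whose symmetrized version has mass at least $\eta$ specifically on $z_+$. This concentration is a structural consequence of the restricted approximation problem, not of orbit sizes, and cannot be recovered by averaging a generic dual witness over the group action.
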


We defer the proof of Lemma \ref{lm:ptp-in-cd} to Appendix \ref{sec:ptpsec}, restricting ourselves here to a brief sketch as follows.
Bun and Thaler~\cite{bt16} gave a primal condition that implies the existence of a suitable dual object. The existence of dual object required by $\cd{d}{a}$ (namely, Condition~\eqref{newcdprop1}) can be easily proved by combining this primal condition with a simple modification of Kutin's proof for the approximate degree of $\Collision$. For Condition~\eqref{newcdprop2} of $\cd{d}{a}$, it suffices to use a Chernoff bound to show a random function from $[n] \to [n]$ is far from any permutation.
The details of the proof can be found in Appendix~\ref{sec:dual-obj-PTP}.

Now we are ready to prove the communication complexity class separation $\NISZKcc \not\subset \UPPcc$.

\medskip \noindent \textbf{Theorem.} (Restatement of Theorem \ref{THMCC})
	$\NISZKcc \not\subset \UPPcc$.
\begin{proof}
	 
	 Let $k$ be a sufficiently large integer, and $\epsilon = 1/10\log k$, then by Lemma~\ref{lm:ptp-in-cd}, $\PTP_k \in \cd{d}{40/\epsilon}$ for some $d = \Omega(\sqrt[3]{k}/\log k )$. Then we define $F := \GapMaj{\PTP_k}{d}{1-\epsilon}$, $n := d \cdot k$, and $M$ be the $(N,n,F)$-pattern matrix for $N = 2^{36 + 6 \log\epsilon^{-1}} \cdot n$. Invoking Theorem~\ref{theo:lifting-upp-cc}, we have $\UPPcc(M) = \Omega(d \cdot \epsilon) = \Omega(\sqrt[3]{k}/\log^{2} k) = \Omega(\sqrt[4]{n}/\log^2 n)$. Recall that in the communication problem (cf. Section~\ref{sec:pmat}), Alice gets an input of length $N = \polylog(n) \cdot n$, while Bob gets an input of length at most $n \cdot \ln N + n = \polylog(n) \cdot n$. Therefore, we conclude that $M$ is not in $\UPPcc$ by the definition of $\UPPcc$ in Section~\ref{sec:sign-rank-upp-cc}.

	\medskip
	\noindent\textbf{Showing it is sufficient to construct an $\NISZK$ protocol for $F$.}
	Next we show $M$ is in $\NISZKcc$. We claim that it is enough to show that $F$ admits an efficient $\NISZKdt$ protocol $P$. % (in the query complexity sense). 
	First we recall some notation from Section~\ref{sec:pmat}. In the communication problem corresponding to $M$, Alice gets a sequence of bits $x \in \{-1,1\}^N$ , while Bob gets a set of coordinates $S = \{s_1,\dotsc,s_n\}$ and a shift $w \in \{-1,1\}^n$. They together want to compute $F(u)$ for $u = x |_S \oplus w$. Alice and Bob can simulate $P$ to solve $M$ as follows: Alice interacts with the prover as if she were the verifier in $P$. Whenever she needs the value of $u_i$, she asks Bob to send her the value of $s_i$ and $w_i$. Then she knows $u_i = x_{s_i} \oplus w_i$. The correctness of this protocol follows directly from the correctness of $P$, and it only incurs a logarithmic overhead in the running time. Therefore, if $F$ admits an efficient $\NISZK$ protocol, it can be transformed to an efficient $\NISZKcc$ protocol for $M$.
	
	\medskip
	\noindent\textbf{Reduction to $\EA$.}
	Now we prove that $F$ has an efficient $\NISZKdt$ protocol by reducing it to $\EA$, which is $\NISZK$-complete (cf. Definition~\ref{defi:EA}). Given an input $x \in \{-1,1\}^{d \cdot k}$ to $F$, let $x=(x_1,\dotsc,x_{d})$, where $x_i$ denotes the input to the $i$-th copy of $\PTP_k$. By the definition of $\PTP_k$, we further interpret $x_i$ as a function $f_i : [k] \to [k]$. 
	
	We define the distribution $\distr(x)$ as follows: pick $i \in [d]$ and $x \in [k]$ at uniformly random, and then output the pair $(i,f_i(x))$. 
	For a function $f : [k] \to [k]$, let $\distr_{f}$ be the distribution obtained by outputting $f(x)$ for a uniformly randomly chosen $x \in [k]$. Then we can express $\distr(x)$ as $\distr(x) := \frac{1}{d} \cdot \sum_{i=1}^{d} \{i\} \times \distr_{f_i}$. Note when $f$ is a permutation, i.e. $\PTP(f) = \True$, we have $H(\distr_{f}) = \log(k)$. And when $\PTP(f) = \False$, by the definition of $\PTP_k$, the size of the support of the distribution $\distr_f$ is at most $7/8 \cdot k$, therefore $H(\distr_{f}) \le \log(k \cdot 7/8) \le \log k - 0.18$. 
	
	Also, let the output of $\distr(x)$ be the random variable pair $(X,Y)$, note $Y$ depends on $X$, we have
	
	\begin{equation}\label{eq:ent-sum-2} \notag
	H(\distr(x)) = H(X,Y) = H(X) + H(Y|X) = \log d + \frac{1}{d} \cdot \sum_{i=1}^{d} H(\distr_{f_i}).
	\end{equation}
	
	With the above observation, we can bound $H(\distr(x))$ easily: when $F(x) = \True$, 
	$$
	H(\distr(x)) \ge \log d + (1-\epsilon) \cdot \log k \ge \log d + \left(1 - \frac{1}{10\log k}\right) \cdot \log k  \ge \log n - \frac{1}{10},$$ 
	as there is at least a $1-\epsilon$ fraction of $f_i$'s satisfy $H(\distr_{f_i}) = \log k$ and $\log d + \log k = \log dk = \log n$; when $F(x) = \False$, 
	$$H(\distr(x)) \le \log d + \epsilon \cdot \log k + (1-\epsilon) \cdot (\log k - 0.18) \le \log n - 0.15,$$
	as there is at least a $1-\epsilon$ fraction of $f_i$'s satisfy $H(\distr_{f_i}) \le \log k - 0.18$, and $(1-\epsilon) \cdot 0.18 \ge 0.15$ when $k$ is sufficiently large.
	
	Finally, we take the reduction to be $A(x) = \distr(x)^{\otimes 50}$, i.e., a sample from $A(x)$ is a sequence of 50 i.i.d. samples from $\distr(x)$. Then we can see when $F(x) = \True$, $H(A(x)) \ge 50 \left(\log n - \frac{1}{10}\right) \ge 50 \log n - 5$, and when $F(x) = \False$, $H(A(x)) \le 50 (\log n - 0.15) \le 50 \log n - 7.5$. Therefore, the pair $(A(x),50 \log n - 6.25)$ is a valid reduction to $\EA$ and this completes the proof.
\end{proof}

\section{$\PTP$ is in $\cd{d}{a}$}\label{sec:dual-obj-PTP}
\label{sec:ptpsec}
In order to show $\PTP \in \cd{d}{a}$, we need to prove the existence of a suitable dual object, and show that nearly all inputs to $\PTP_n$ evaluate to $\False$.

\subsection{Nearly All Inputs to $\PTP$ Evaluate to \False}

We begin with the second condition, which is relatively easy. It is tantamount to verify that nearly every function from $[n] \to [n]$ is far from any permutation, which can in turn be proved by a simple application of a Chernoff bound.

\begin{lemma}\label{lm:most-false}
	With probability at least $1-2^{\Omega(n)}$, a random function $f$ from $[n] \to [n]$ satisfies $\PTP_n(f) = \False$.
\end{lemma}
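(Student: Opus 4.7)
The plan is a direct first-moment/Chernoff plus union bound argument. By the definition of $\PTP_n$, the condition $\PTP_n(f) = \False$ fails exactly when there exists some permutation $\pi \colon [n] \to [n]$ such that $f$ agrees with $\pi$ on more than $7n/8$ coordinates. So it suffices to upper bound, over a uniformly random $f\colon [n] \to [n]$, the probability that \emph{some} permutation $\pi$ achieves at least $7n/8$ agreements with $f$, then take a union bound over all $n!$ permutations.

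First I would fix an arbitrary permutation $\pi$ and let $X_\pi := \sum_{i=1}^n \indicator_{f(i) = \pi(i)}$. Since $f(i)$ is independently uniform on $[n]$, the indicators $\indicator_{f(i) = \pi(i)}$ are independent Bernoulli random variables with parameter $1/n$, so $\Ex[X_\pi] = 1$. I would then apply the multiplicative Chernoff bound in the form $\Pr[X_\pi \geq t] \leq (e\Ex[X_\pi]/t)^t = (e/t)^t$ with $t = 7n/8$, yielding
\[
\Pr\bigl[X_\pi \geq 7n/8\bigr] \;\leq\; \left(\frac{8e}{7n}\right)^{7n/8} \;=\; 2^{-\Omega(n \log n)}.
\]

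Finally I would take a union bound over all $n! \le n^n = 2^{n \log n}$ permutations $\pi$. The total failure probability is at most $2^{n \log n} \cdot 2^{-\Omega(n \log n)} = 2^{-\Omega(n \log n)}$, which is certainly $2^{-\Omega(n)}$ as claimed. Hence with probability at least $1 - 2^{-\Omega(n)}$, every permutation $\pi$ differs from $f$ on at least $n/8$ coordinates, i.e., $\PTP_n(f) = \False$.

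The argument is entirely routine; there is no real obstacle, since $\Ex[X_\pi] = 1$ is enormously smaller than the threshold $7n/8$, giving a super-exponential per-permutation bound that comfortably absorbs the $n!$ union bound. The only point deserving a moment's care is matching the $n/8$ disagreement threshold in Definition~\ref{defi:ptp} to the $7n/8$ agreement threshold used above, and verifying that the Chernoff exponent $\Omega(n \log n)$ dominates the $\log(n!) = O(n \log n)$ cost of the union bound with room to spare.
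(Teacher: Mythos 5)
Your argument has a genuine gap: the union bound over all $n!$ permutations does not work, and the step ``$2^{n\log n}\cdot 2^{-\Omega(n\log n)} = 2^{-\Omega(n\log n)}$'' is false because the constants hidden by the two $\Omega$'s do not match. Tracking them, your Chernoff bound gives $\log_2\Pr\bigl[X_\pi \ge 7n/8\bigr] \le \frac{7n}{8}\log_2\frac{8e}{7n} = -\frac{7}{8}\,n\log_2 n + O(n)$, whereas $\log_2(n!) = n\log_2 n - O(n)$. The coefficient of $n\log_2 n$ in the Chernoff exponent is $7/8 < 1$, so the union bound yields a total failure probability of at most $2^{\frac{1}{8}n\log_2 n + O(n)}$, which tends to infinity. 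There is no ``room to spare'' --- the naive union bound over permutations loses by a $\Theta(n\log n)$ margin in the exponent, and the argument as written does not establish the lemma.

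The paper's proof avoids any union bound by exploiting a structural observation: the maximum number of coordinates on which $f$ can agree with \emph{any} permutation is exactly the image size $|f([n])|$ (an agreement set is mapped injectively, and conversely any injective restriction of $f$ extends to a permutation). Thus $\PTP_n(f)\ne\False$ iff $|f([n])| > 7n/8$, and it suffices to concentrate a single quantity. The paper sets $x_i := \indicator_{f^{-1}(i)\ne\emptyset}$, so that $\frac{1}{n}\sum_i x_i = |f([n])|/n$ has expectation $\approx 1 - e^{-1} \approx 0.63$, and applies a Chernoff bound --- using negative association, since the $x_i$ are ball-in-bin occupancy indicators and hence not independent --- to conclude $\Pr\bigl[\frac{1}{n}\sum_i x_i > 0.65\bigr] \le e^{-\Omega(n)}$. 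If you insist on a union bound, you would need to union bound over subsets $S\subset[n]$ of size about $7n/8$ (recasting the failure event as ``$f|_S$ is injective for some such $S$''), of which there are only $\binom{n}{n/8} = 2^{\Theta(n)}$; a careful comparison of exponents ($\Pr[f|_S\text{ injective}]\approx e^{-0.62n}$ against $\binom{n}{n/8}\approx e^{0.38n}$) then does give an $e^{-\Omega(n)}$ bound, but the margin is delicate and the paper's image-size argument is cleaner.
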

\begin{proof}
	For each $i$ in $[n]$, we define the random variable $x_i$ to be the indicator of whether $f^{-1}(i) \ne \emptyset$, and we let $X := \frac{1}{n} \cdot \sum_{i=1}^n x_i$.
	Then we have $\Pr[x_i = 0] = \left(1-\frac{1}{n}\right)^n \approx e^{-1}$. S $x_i$ takes values in $\{0, 1\}$, we have $\Ex[X]=\Ex[x_i] \approx 1-e^{-1}$.
	
	Although the $x_i$'s are not independent, by~\cite{dubhashi1996balls}, they are negatively associated, which means that we can still apply a Chernoff bound to obtain the following concentration result:
	
	$$
	\Pr[X > 0.65] \le e^{-\Omega(n)},
	$$
	
	as $1-e^{-1} \sim 0.63 < 0.65$. Note that for a function $f$ with $X \le 0.65$, we have $\PTP_n(f) = \False$, as it must differ on at least $0.35 \cdot n > n/8$ coordinates with any permutation. This completes the proof.
\end{proof}

\subsection{A Primal Condition}

In order to show the existence of a suitable dual object for the $\PTP$ problem, we introduce a sufficient {\em primal} condition that was given in~\cite{bt16}. The original statement  from \cite{bt16} only considers total functions.
But it is easy to observe that the proof in \cite{bt16} makes no use of the fact that the function is total; hence the original proof works for partial functions as well. 

\begin{defi}
	Let $T : \{-1,1\}^k \to D$ be a symmetrization for a partial function $f : \{-1,1\}^k \to \{-1,0,1\}$. Let
	$V = T^{-1}(\tilde{V})$ for some $\tilde{V} \subseteq \tilde{f}^{-1}(1)$. We say that $p : \{-1,1\}^k \to \R$ is a weak $\epsilon$-error
	one-sided approximation to $f$ under the promise that the input $x$ is in $V \cup f^{-1}(-1)$ (with respect
	to $T$) if the following holds. Define $q: \{-1,1\}^k \to \R$ by $q(x) := \Ex_{y : T(y)=T(x)}[p(y)]$. Then $q$ satisfies the following three properties:
	\begin{itemize}
		\item $q(x) \le -1 + \epsilon$ for all $x \in f^{-1}(-1)$.
		\item $|q(x) - 1| \le \epsilon$ for all $x \in V$.
		\item $|q(x)| \le 1+\epsilon$ for all $x \in \{-1,1\}^k$ such that $f(x) \ne 0$ and $x \not\in (f^{-1}(-1) \cup V)$.
	\end{itemize}
\end{defi}

\begin{theo}[Essentially Theorem~B.1 in \cite{bt16}]\label{theo:primal-cond}
	Let $T$ be a symmetrization for $f$. Let $V = T^{-1}(z_+)$ for some $z_+ \in \tilde{f}^{-1}(1)$. If there
	does not exist a weak $2\eta$-error, degree-$d$ one-sided approximation to $f$ under the promise that
	the input is in $V \cup f^{-1}(-1)$, then there exists a function $\hatpsi \colon D \rightarrow \mathbb{R}$ with $\psi$ being the associated function on $\{-1,1\}^k$ induced by $T$, such that
	\fivepropsname{btdual}{$\langle \psi, f \rangle \ge \eps$ and $\psi$ only takes non-zero values on $D_f$.}{$\|\psi\|_1 = 1$}{$\psi \text{ has pure high degree at least }d$}{$f(x) = -1 \implies \psi(x) < 0$}{$\hatpsi(z_+) \ge \eta \text{ for some } z_+ \in D \text{ satisfying } \symf(z_+) = 1$}
\end{theo}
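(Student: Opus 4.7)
The plan is to follow the LP duality argument from Bun and Thaler~\cite{bt16}, since this theorem is billed as essentially their Theorem B.1. The proof amounts to writing down the primal feasibility LP for the weak one-sided approximation, observing that its infeasibility yields a dual certificate via Farkas' lemma, and then verifying that the resulting dual object has the five claimed properties in the partial-function setting used here.

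First, formulate the primal. The existence of a weak $2\eta$-error, degree-$d$ one-sided approximation to $f$ under the promise $V \cup f^{-1}(-1)$ is linear feasibility: the variables are the coefficients of a polynomial $p : \{-1,1\}^k \to \R$ of degree at most $d$, and setting $q(x) := \Ex_{y : T(y) = T(x)}[p(y)]$ (itself linear in the coefficients of $p$) we impose $q(x) \le -1 + 2\eta$ on $f^{-1}(-1)$; $1 - 2\eta \le q(x) \le 1 + 2\eta$ on $V$; and $-1 - 2\eta \le q(x) \le 1 + 2\eta$ on $D_f \setminus (V \cup f^{-1}(-1))$. By hypothesis this LP is infeasible, so by Farkas' lemma there exist nonnegative multipliers on the constraints giving a certificate of infeasibility.

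Next, unpack the certificate to define $\hat\psi$. Let $\lambda_x^+,\lambda_x^-$ denote the multipliers of the upper and lower constraints at $x$, and define $\psi(x) := \lambda_x^- - \lambda_x^+$ (with the sign chosen so that the $q(x) \le -1 + 2\eta$ constraint on $f^{-1}(-1)$ forces $\psi(x) < 0$ there, giving property~(\ref{btdualprop4})). The pure-high-degree property~(\ref{btdualprop3}) follows because the primal variables range over all coefficients of degree-$\le d$ polynomials, so infeasibility forces the aggregated dual multipliers (pushed through the symmetrization, which is degree non-increasing) to annihilate every degree-$\le d$ monomial. The support condition in~(\ref{btdualprop1}) holds because the primal constraints live only on $D_f$, so the dual lives there too; the correlation bound in~(\ref{btdualprop1}) and the norm condition~(\ref{btdualprop2}) hold after rescaling so that $\|\psi\|_1 = 1$. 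Property~(\ref{btdualprop5}), the metric lower bound $\hat\psi(z_+) \ge \eta$, comes from the dual multipliers on the one-sided constraint $q(x) \ge 1 - 2\eta$ over $V$: aggregated over $V = T^{-1}(z_+)$ and pushed through $T$, these multipliers place at least $\eta$ units of mass at $z_+$.

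The main obstacle is property~(\ref{btdualprop5}), i.e.\ the quantitative bookkeeping needed to produce the constant $\eta$ after normalization. The argument is the standard one: if $\hat\psi(z_+) < \eta$ once $\|\psi\|_1 = 1$, one can exhibit an explicit degree-$d$ polynomial satisfying all three bullets of weak $2\eta$-approximation (essentially by a small perturbation of a near-feasible candidate towards the unmet one-sided constraint at $z_+$), contradicting infeasibility of the primal LP. Bun and Thaler carry out this bookkeeping for total functions; the only adjustment in our setting is restricting the index set of the constraints to $D_f$, which automatically makes $\psi$ vanish outside $D_f$ as required. Beyond this rescaling step, the argument is purely mechanical LP duality.
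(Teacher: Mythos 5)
Your plan---formulate weak one-sided approximation as a linear feasibility system, invoke Farkas' lemma when it is infeasible, and read off the five properties from the certificate---is the right framework, and is presumably what Bun and Thaler's Theorem~B.1 does (the paper itself gives no proof, only the remark that the argument never uses totality of $f$). Properties~(\ref{btdualprop1})--(\ref{btdualprop4}) do fall out of the Farkas bookkeeping essentially as you describe, and your remark that restricting the constraint index set to $D_f$ automatically yields the support condition is correct. The gap is in property~(\ref{btdualprop5}), which is the only thing this theorem adds to the ordinary dual characterization of one-sided approximate degree. You assert that the multipliers on the $V$-constraints aggregate to at least $\eta$ mass at $z_+$, and then sketch a contrapositive: ``if $\hat\psi(z_+)<\eta$ once $\|\psi\|_1=1$, one can exhibit an explicit degree-$d$ polynomial \ldots by a small perturbation of a near-feasible candidate.'' But the primal is infeasible by hypothesis, so there is no near-feasible candidate to perturb; this argument is circular.

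What actually forces the $\eta$ lower bound is a direct calculation on the Farkas certificate. Letting $A$ be the total multiplier mass on the constraints $q(x)\le -1+2\eta$ over $f^{-1}(-1)$, $B$ the mass on $q(x)\ge 1-2\eta$ over $V$, $C$ the mass on $q(x)\le 1+2\eta$ over $V$, and $D$ the mass on the two-sided constraints over $D_f\setminus(V\cup f^{-1}(-1))$ (one may assume WLOG that at any point at most one of the two opposing multipliers is nonzero, since decreasing both only increases the Farkas constant), the infeasibility condition on the constant part reads $(1-2\eta)(A+B)>(1+2\eta)(C+D)$. Independently, pure high degree at least $0$ forces $\sum_x\psi(x)=0$, which lets you eliminate $A$ from that inequality. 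Combining the two relations gives both the correlation bound $\langle\psi,f\rangle>2\eta$ and, after dividing by $\|\psi\|_1$, the bound $\hat\psi(z_+)>\eta$. Without some such computation the constant $\eta$ in property~(\ref{btdualprop5}) is not established, so as written your proposal has a genuine gap at precisely the novel part of the theorem.
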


Observe that the conditions in the above theorem are indeed strictly stronger than our requirements for a $(d,\epsilon,\eta)$-dual object.\footnote{In our definition of a $(d,\epsilon,\eta)$-dual object, we don't require $\psi$ to be zero outside of $D_f$, and Condition~\eqref{btdualprop4} is not demanded as well.} So we have the following corollary.

\begin{cor}\label{cor:cond}
	Let $T$ be a symmetrization for $f$. Let $V = T^{-1}(z_+)$ for some $z_+ \in \tilde{f}^{-1}(1)$. If there
	does not exist a weak $2\eta$-error, degree-$d$ one-sided approximation to $f$ under the promise that
	the input is in $V \cup f^{-1}(-1)$, then $f$ has a $(d,2\eta,\eta)$-dual object.
\end{cor}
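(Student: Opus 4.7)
The plan is to derive Corollary~\ref{cor:cond} as a direct consequence of Theorem~\ref{theo:primal-cond} by verifying that the five-part conclusion of the theorem implies the four-part definition of a $(d,2\eta,\eta)$-dual object (Definition~\ref{defi:dual-obj}). Under the hypotheses of the corollary, namely that there is no weak $2\eta$-error degree-$d$ one-sided approximation to $f$ under the promise $V \cup f^{-1}(-1)$, I would invoke Theorem~\ref{theo:primal-cond} (with error parameter $2\eta$) to obtain a function $\hatpsi \colon D \to \mathbb{R}$ whose induced function $\psi \colon \{-1,1\}^k \to \mathbb{R}$ satisfies Conditions~\eqref{btdualprop1}--\eqref{btdualprop5}. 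So the $\eps$ appearing in Condition~\eqref{btdualprop1} is instantiated as $2\eta$.

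The remaining step is to match these properties against the four requirements of a $(d,2\eta,\eta)$-dual object. Condition~\eqref{dualobjprop2} (unit $\ell_1$-norm) is identical to Condition~\eqref{btdualprop2}, and Condition~\eqref{dualobjprop3} (pure high degree at least $d$) is identical to Condition~\eqref{btdualprop3}. Condition~\eqref{dualobjprop4} (the existence of $z_+$ with $\hatpsi(z_+) \ge \eta$ and $\symf(z_+) = 1$) is identical to Condition~\eqref{btdualprop5}. Finally, Condition~\eqref{dualobjprop1} requires
\[
\sum_{x \in D_f} \psi(x) f(x) \;-\; \sum_{x \in \{-1,1\}^k \setminus D_f} |\psi(x)| \;>\; 2\eta,
\]
and this follows from Condition~\eqref{btdualprop1}: since $\psi$ takes non-zero values only on $D_f$, the second sum vanishes, and the first sum equals $\langle \psi, f \rangle \ge 2\eta$. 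Condition~\eqref{btdualprop4} (that $\psi$ is strictly negative on $f^{-1}(-1)$) is simply discarded, as is the ``zero outside $D_f$'' clause from Condition~\eqref{btdualprop1}; neither is needed for the weaker dual-object definition.

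There is essentially no obstacle here, since this is a verification that Theorem~\ref{theo:primal-cond} outputs strictly more structure than the definition of a dual object demands; the only care needed is in checking the instantiation of the error parameter ($\eps = 2\eta$) and in noting that the support restriction in \eqref{btdualprop1} lets us replace the signed correlation appearing in Definition~\ref{defi:dual-obj} by the raw inner product $\langle \psi, f\rangle$. This is exactly the observation flagged in the footnote attached to the corollary's statement.
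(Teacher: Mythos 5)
Your proposal is correct and is exactly the paper's argument: the paper dispatches this corollary with the single observation that the conclusion of Theorem~\ref{theo:primal-cond} is strictly stronger than Definition~\ref{defi:dual-obj} (the footnote notes precisely that the support restriction and Condition~\eqref{btdualprop4} are simply not demanded of a dual object), and your condition-by-condition matching, including the instantiation $\eps = 2\eta$ and the vanishing of the off-domain sum, just spells this out. The only cosmetic blemish, inherited from the paper itself, is the $\ge$ versus $>$ mismatch between Condition~\eqref{btdualprop1} and Condition~\eqref{dualobjprop1}, which neither you nor the authors address.
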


\subsection{Existence of a Suitable Dual Object for $\PTP$}

We begin by defining a natural symmetrization $T :\{-1,1\}^{M} \to D$ for $\PTP_n$, which is also used in~\cite{bt16}.

Let $x \in \{-1,1\}^M$ be an input to $\PTP_n$, and let $f$ be the corresponding function from $[n] \to [n]$. We define $T_1(x) := (|f^{-1}(1)|,|f^{-1}(2)|,\dotsc,|f^{-1}(n)|) \in \R^{n}$. And for any vector $v \in \R^{n}$, we define $T_2(v)$ be the vector in $\R^n$ which sorts the coordinates in $v$ in increasing order. Our final symmetrization is defined as $T:= T_2 \circ T_1$, that is, first count the number of occurrences of each value in the image, and then sort them in ascending order. It is easy to see that $T$ symmetrizes $\PTP$, and by Lemma~C.2 in~\cite{bt16}, it is a degree non-increasing map.

Let $z_+$ be the point in the symmetrized domain $D$ representing all the $2$-to-$1$ inputs and $V = T^{-1}(z_+)$. We prove the following lemma in this subsection.

\begin{lemma}\label{lm:no-weak-approx}
	For all $\epsilon \in (0,1)$, any weak $\epsilon$-error, degree-$d$ one-sided approximation to $\PTP_n$ under the promise that
	the input is in $V \cup f^{-1}(-1)$, must have $d = \Omega\left( (1-\epsilon) \cdot n^{1/3} \right)$.
\end{lemma}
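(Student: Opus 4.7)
The plan is to prove Lemma~\ref{lm:no-weak-approx} via the Aaronson--Shi--Kutin polynomial method. Given a purported degree-$d$ weak $\epsilon$-error one-sided approximation $p$ to $\PTP_n$ on $V \cup f^{-1}(-1)$, I would first symmetrize by $T$ to obtain a polynomial $q$ of degree at most $d$ invariant under the natural $S_n \times S_n$ action on $\{-1,1\}^{n \log n}$. For each positive integer $g$ with $g \mid n$, the value of $q$ on a $g$-to-$1$ function $f : [n] \to [n]$ depends only on $g$ (by the invariance of $q$); denote this common value $P(g)$. By the standard Aaronson--Shi--Kutin symmetrization lemma, $P$ arises as the restriction to the hyperbola $gh = n$ of a bivariate polynomial $\widehat{P}(g, h)$ of total degree at most $d$.

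The three weak-approximation conditions translate into the following constraints along the lattice $\{(g, n/g) : g \mid n\}$. Since a permutation lies in $\PTP^{-1}(-1)$, the first condition gives $\widehat{P}(1, n) \leq -1 + \epsilon$. Since every $2$-to-$1$ function lies in $V$, the second condition gives $\widehat{P}(2, n/2) \in [1 - \epsilon, 1 + \epsilon]$. For each $g \geq 3$ with $g \mid n$, any $g$-to-$1$ function has image size at most $n/3$ and so differs from every permutation on at least $2n/3 \geq n/8$ coordinates; it therefore lies in the domain of $\PTP$, and the third condition yields $|\widehat{P}(g, n/g)| \leq 1 + \epsilon$. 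In particular, $\widehat{P}$ is forced to jump by at least $2(1-\epsilon)$ from $(1, n)$ to $(2, n/2)$ while remaining bounded by $1 + \epsilon$ in absolute value throughout the rest of the lattice.

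The proof concludes with a Markov--Bernstein-type analysis along the hyperbola $gh = n$, mirroring Aaronson and Shi's original treatment of the Collision problem. The bivariate structure of $\widehat{P}$ is essential: after an appropriate reparametrization of the hyperbola, the integer points with $g \mid n$ acquire a scaling under which a Bernstein-type inequality can be applied. The $n^{1/3}$ factor emerges from the effective density of these lattice points in the relevant regime, while the linear $(1-\epsilon)$ prefactor emerges from feeding the jump $\widehat{P}(2, n/2) - \widehat{P}(1, n) \geq 2(1-\epsilon)$ directly into the Markov step, avoiding any squaring that would introduce a cube-root dependence on $\epsilon$ like the one in Theorem~\ref{theo:kutin-lowb}.

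The main obstacle I anticipate is executing the Markov--Bernstein analysis so as to simultaneously obtain the $n^{1/3}$ factor \emph{and} the linear $(1-\epsilon)$ prefactor. A direct univariate Markov argument treating $g \mapsto \widehat{P}(g, n/g)$ as a Laurent polynomial in $g$ yields only a $\sqrt{n}$ bound, and a reduction to the standard $\epsilon$-error approximate degree of $\Collision_n^2$ would yield $(1-\epsilon)^{1/3} n^{1/3}$ rather than the stronger $(1-\epsilon)\cdot n^{1/3}$ appearing in the statement. Circumventing both requires exploiting the full bivariate structure of $\widehat{P}$ on the lattice $\{(g, n/g) : g \mid n\}$, essentially as in the specialized polynomial inequality of Aaronson--Shi for bivariate polynomials bounded on such hyperbolic lattices.
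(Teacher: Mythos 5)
There is a genuine gap, and it lies precisely where you anticipate ``obstacles'' but misdiagnose their severity. You propose to symmetrize to a bivariate polynomial $\widehat{P}(g,h)$ and work along the hyperbola $gh = n$, evaluating at globally $g$-to-$1$ functions for each divisor $g$ of $n$. But the set $\{(g, n/g) : g \mid n\}$ contains only $d(n) = n^{o(1)}$ integer points. Any Markov/Bernstein or Paturi-type argument bounds degree by (at best) the square root of the length of a dense interval of integers on which the polynomial is bounded; with only $n^{o(1)}$ lattice points at your disposal, no such argument can produce a lower bound of $n^{\Omega(1)}$, let alone $n^{1/3}$. Your claimed intermediate $\sqrt{n}$ bound from a ``direct univariate Markov argument'' along the hyperbola is therefore not actually achievable, and the appeal to ``effective density of these lattice points'' is where the argument collapses: these points are nowhere near dense enough.

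The paper's proof instead uses Kutin's trivariate symmetrization (Lemma~\ref{lm:kutin-sym}): a degree-$d$ polynomial on inputs yields a degree-$d$ polynomial $Q(m,a,b)$ whose value on the function $g_{m,a,b}$ (which is $a$-to-$1$ on $[m]$ and $b$-to-$1$ on $(m,n]$) depends only on $(m,a,b)$. The crucial point is that this gives access to \emph{dense one-parameter families} of valid integer triples. Concretely, the paper fixes $M = n/2$ and considers univariate restrictions like $x \mapsto Q(M,1,2x)$ (where $x$ ranges over all positive integers up to roughly $n/2M$) and $i \mapsto Q(n-ci,1,c)$ (where $i$ ranges over a dense interval of length $\Theta(n/c)$). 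Each restriction has $\Omega(\mathrm{poly}(n))$ consecutive integer points where the polynomial is bounded, so Paturi's lemma (Lemma~\ref{theo:paturi}) applies meaningfully; multiplying the two resulting bounds and taking a cube root gives $\Omega((1-\epsilon) n^{1/3})$. Without the ``mixed'' functions $g_{m,a,b}$ — i.e., without the third symmetrization parameter — there is no such dense family, and the argument does not get off the ground. To repair your proposal you would need to replace the bivariate $\widehat{P}(g, n/g)$ with Kutin's $Q(m,a,b)$ and reconstruct the two-stage Paturi argument; at that point you would have reproduced the paper's proof rather than simplified it.
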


Our proof is a simple modification of Kutin's lower bound for approximate degree of $\Collision$~\cite{kutin2005quantum}. We use a more sophisticated version of a lemma from Paturi~\cite{paturi1992degree}, as we want an explicit dependence on the approximate error, rather than treating it as a small constant. 

Following Kutin~\cite{kutin2005quantum}, we define a special collection of functions which are $a$-to-1 on one part of the domain and $b$-to-1 on the other part. We call a triple of numbers $(m, a, b)$ \emph{valid} if $a | m$ and $b | (n-m)$. For each valid triple $(m, a, b)$, we define
$$
g_{m, a, b}(i) = \begin{cases}
\lceil i / a \rceil \quad \text{if } 1 \le i \le m \\
n - \lfloor (n - i) / b \rfloor \quad \text{if } m < i \le n.
\end{cases}
$$
and $R_{m,a,b} := T^{-1}(T(g_{m,a,b}))$.

We have the following important lemma from~\cite{kutin2005quantum}.

\begin{lemma}[Lemma 2.2 in~\cite{kutin2005quantum}]\label{lm:kutin-sym}
	Let $P(x)$ be a degree-$d$ polynomial in $\{-1,1\}^M$. For a valid triple $(m,a,b)$, define $Q(m,a,b)$ by
	$$
	Q(m,a,b) = \Ex_{y:y \in R_{m,a,b}}[P(y)].
	$$
	
	Then $Q(m,a,b)$ is a degree-$d$ polynomial in $m,a,b$.
\end{lemma}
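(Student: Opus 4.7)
The plan is to prove Lemma~\ref{lm:kutin-sym} by the standard Minsky–Papert / Paturi-style symmetrization argument, specialized to the collision setting as in \cite{kutin2005quantum}. The set $R_{m,a,b}$ is highly symmetric: it is closed under permutations of $[n]$ acting on the domain, and also under permutations of the codomain that preserve the partition of image values into ``$a$-to-1 points'', ``$b$-to-1 points'', and ``unused points''. Because $P$ is a polynomial of degree $d$, its expectation over such an orbit structure will only ``see'' combinatorial statistics of the orbit, and those statistics will turn out to be polynomial in $(m,a,b)$ of degree at most $d$.

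First I would reduce to monomials. By linearity of expectation, it suffices to prove the lemma when $P$ is a monomial of degree at most $d$ in $\{-1,1\}^M$. After re-expressing the bit representation of $f$ in terms of the indicator variables $y_{i,j} := \indicator_{f(i)=j}$ (subject to the constraint $\sum_j y_{i,j} = 1$), any such monomial expands as a polynomial of degree at most $d$ in the $y_{i,j}$'s. Hence I may assume $P = \prod_{(i,j) \in S} y_{i,j}$ for some set $S$ of $k \le d$ pairs. If $S$ contains two pairs with the same first coordinate but different second coordinates, then $Q \equiv 0$ and the lemma holds trivially; otherwise $S$ encodes a partial function $\pi$ from some $I \subseteq [n]$ with $|I| = k \le d$ into $[n]$.

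Second, I would evaluate the expectation combinatorially. We have
\[
Q(m,a,b) \;=\; \Pr_{f \in R_{m,a,b}}\!\big[\,f \text{ extends } \pi\,\big],
\]
which by symmetry depends only on $(m,a,b,k)$ together with the combinatorial ``type'' of $\pi$ (how the $k$ target image values partition into points that will end up $a$-to-1, $b$-to-1, or unused). Splitting according to this choice, the probability becomes a sum over compositions $k = k_a + k_b + k_0$ of terms that count extensions explicitly: the $k_a$ specified preimages land in one of the $m/a$ $a$-to-1 blocks, the $k_b$ specified preimages land in one of the $(n-m)/b$ $b$-to-1 blocks, and the remaining $k_0$ constraints are incompatible and contribute zero (since the image must actually be attained). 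Normalizing by $|R_{m,a,b}|$ turns each summand into a ratio of falling-factorial expressions in $m$, $n-m$, $a$, and $b$.

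The main obstacle — and the content of the lemma — is showing that after these cancellations the resulting expression is genuinely a polynomial in $(m,a,b)$ of total degree at most $d$, rather than merely a rational function that looks of higher degree. The clean way to handle this is to process the $k$ constraints one at a time and maintain the invariant that the partial expected value, as a function of $(m,a,b)$ with the current block-assignments fixed, is a polynomial of degree at most (constraints processed so far). Each added constraint contributes one of a small set of factors, each of which is linear in $(m,a,b)$ after the appropriate cancellation: e.g. choosing a new $a$-to-1 image value contributes a factor proportional to $m$, reusing an already-chosen $a$-to-1 image value contributes a factor proportional to $a$ (from one of the remaining $a-\text{(slots used)}$ preimages), and similarly for the $b$-side with $n-m$ and $b$. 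Since each added query raises the degree by at most one, and there are $k \le d$ queries, the final expression has degree at most $d$ in $(m,a,b)$. Summing the bounded-degree contributions over all block-assignment patterns preserves the degree bound and yields the claimed polynomial $Q(m,a,b)$.
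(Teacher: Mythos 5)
Your proposal is correct and follows essentially the same route as the original source: the paper itself imports this lemma from Kutin~\cite{kutin2005quantum} without proof, and your sketch reconstructs the standard Aaronson--Shi/Kutin symmetrization argument (reduce to monomials in the indicators $y_{i,j}$, interpret the expectation as the probability that a uniform element of $R_{m,a,b}$ extends a partial function on $k\le d$ points, and telescope that probability into $k$ conditional factors, each linear in $(m,a,b)$ with denominators depending only on $n$ and the step index). The only step you leave at the level of a claim is the cancellation making each factor genuinely linear, but the mechanism you name (a fresh $a$-type image value contributes $\alpha\cdot a/(\cdots)=m/(\cdots)$, a reused one contributes $(a-r)/(\cdots)$, and symmetrically with $n-m$ and $b$) is exactly the right one, so no gap remains.
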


Now, suppose there is a weak $\epsilon$-error, $d$-degree one-sided approximation $p$ to $f$ under the promise that the input is in $V \cup f^{-1}(-1)$. We are going to show that $d = \Omega\left( (1-\epsilon) \cdot n^{1/3} \right)$.

\newcommand{\cst}{c_1}
By Lemma~\ref{lm:kutin-sym}, let $\cst = 1 - \epsilon$, then there is a degree-$d$ polynomial $Q(m,a,b)$ such that

\begin{itemize}
	\item $Q(m,1,1) \le -1 + \epsilon = -\cst$ for any $m$.
	\item $Q(m,2,2) \in [\cst,2-\cst]$ for any $2 | m$.
	\item $Q(m,a,b) \in [-2+\cst,2-\cst]$ for any valid $(m,a,b)$ such that $\PTP(g_{m,a,b}) = \False$.
\end{itemize}

We need the following lemma by Paturi~\cite{paturi1992degree}.

\begin{lemma}
	[Paturi~\cite{paturi1992degree}]\label{theo:paturi}Let $a,b$ be two reals and $q:\mathbb{R}\rightarrow\mathbb{R}%
	$\ be a univariate polynomial such that $|q\left(  j\right)|  \leq\delta
	$\ for all integers $j\in\left[  a,b\right]  $, and suppose that $\left\vert
	q\left(  \left\lceil x\right\rceil \right)  -q\left(  x\right)  \right\vert
	\ge c \cdot  \delta $\ for some $x\in\left[  a,b\right]  $ and a real $c \in (0,1)$.\ \ Then
	$\deg\left(  q\right)  =\Omega\left(  c \cdot \sqrt{\left(  x-a+1\right)  \left(
		b-x+1\right)  }\right)  $.
\end{lemma}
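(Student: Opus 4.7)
The plan is to deduce Paturi's lemma from two classical polynomial extremal inequalities: an Ehlich--Zeller/Coppersmith--Rivlin type bound (to upgrade a bound at integer points to a bound on the whole interval), and Bernstein's inequality (to convert a pointwise bound on $|q|$ into a pointwise bound on $|q'|$). The mechanism that glues them together is simply the mean value theorem.

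First, by the mean value theorem applied to $q$ on the interval with endpoints $x$ and $\lceil x\rceil$, the hypothesis $|q(\lceil x\rceil)-q(x)|\ge c\delta$ produces a point $\xi$ with $|\xi-x|\le 1$ at which
\[
|q'(\xi)|\ \ge\ \frac{c\delta}{|\lceil x\rceil-x|}\ \ge\ c\delta.
\]
So we only need to show that no polynomial of degree much smaller than $c\sqrt{(x-a+1)(b-x+1)}$ can achieve $|q'(\xi)|\ge c\delta$ while being bounded by $\delta$ at all integers of $[a,b]$.

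Next, I would establish that $|q(y)|\le C\delta$ throughout $[a,b]$, for an absolute constant $C$. This is the Coppersmith--Rivlin inequality: a degree-$d$ polynomial bounded by $\delta$ at every integer in $[a,b]$ is bounded by $e^{O(d^{2}/(b-a))}\delta$ on the whole segment, so whenever $d^{2}\le c_{0}(b-a)$ for a small enough $c_{0}$, we get the clean bound $|q|\le C\delta$ on $[a,b]$. The complementary regime $d^{2}>c_{0}(b-a)$ is not actually a problem, because $\sqrt{(x-a+1)(b-x+1)}\le (b-a+2)/2$, so in that regime the target lower bound $d=\Omega(c\sqrt{(x-a+1)(b-x+1)})$ is already implied (after absorbing $c\le 1$ into the constant). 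Hence it suffices to proceed under the hypothesis $|q|_{[a,b]}\le C\delta$.

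Now I would invoke Bernstein's inequality on $[a,b]$: for any polynomial $p$ of degree $d$ with $\|p\|_{\infty,[a,b]}\le M$ and any $\xi\in(a,b)$,
\[
|p'(\xi)|\ \le\ \frac{d\,M}{\sqrt{(\xi-a)(b-\xi)}},
\]
obtained from the standard $[-1,1]$ form by the affine change of variables $u=(2\xi-(a+b))/(b-a)$, which yields $\sqrt{1-u^{2}}=2\sqrt{(\xi-a)(b-\xi)}/(b-a)$ and cancels the Jacobian. Combining the lower bound $|q'(\xi)|\ge c\delta$ with this upper bound $|q'(\xi)|\le dC\delta/\sqrt{(\xi-a)(b-\xi)}$ and cancelling $\delta$ gives $d\ge (c/C)\sqrt{(\xi-a)(b-\xi)}$. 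Since $|\xi-x|\le 1$, we have $\xi-a\ge \tfrac{1}{2}(x-a+1)$ and $b-\xi\ge \tfrac{1}{2}(b-x+1)$ up to constants in the generic case $x\in[a+2,b-2]$, which delivers the desired $d=\Omega(c\sqrt{(x-a+1)(b-x+1)})$.

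The main obstacle I anticipate is Step~2 (Coppersmith--Rivlin) together with its boundary interaction with Step~4. The Coppersmith--Rivlin bound is a well-known but nontrivial input, and one must be careful to split into the two regimes for $d$ vs.\ $\sqrt{b-a}$ so that Bernstein's inequality is actually applicable with a clean constant; the easy large-$d$ regime essentially asks one to check that the target conclusion is trivially true when $d\ge \Omega(b-a)$. In addition, when $x$ is within distance $O(1)$ of an endpoint of $[a,b]$ the point $\xi$ may land outside $[a,b]$ or make $(\xi-a)(b-\xi)$ vanish, but these are pure corner cases handled by observing that $\sqrt{(x-a+1)(b-x+1)}=O(1)$ in such regions and that Bernstein can be applied at $\xi'=\xi\pm 1$ inside the interval at the cost of only constant factors.
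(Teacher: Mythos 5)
First, note that the paper does not prove this lemma at all: it is imported verbatim from Paturi's 1992 paper, so there is no in-paper argument to compare against. Your high-level skeleton (mean value theorem, plus ``bounded at integers $\Rightarrow$ bounded on the interval,'' plus Bernstein) is indeed the classical route, and your rescaling of Bernstein to $[a,b]$ is correct. The problem is the case split. In the regime $d^{2}>c_{0}(b-a)$ you claim the conclusion is ``already implied,'' but all you know there is $d>\sqrt{c_{0}(b-a)}$, whereas the target $\Omega\bigl(c\sqrt{(x-a+1)(b-x+1)}\bigr)$ can be as large as $\Theta\bigl(c(b-a)\bigr)$ (take $x$ near the midpoint); the inequality $\sqrt{(x-a+1)(b-x+1)}\le (b-a+2)/2$ bounds the target by something \emph{linear} in $b-a$, not by $O(\sqrt{b-a})$, so the implication fails whenever $c\gg 1/\sqrt{b-a}$ --- which is exactly the regime in which the lemma is used in this paper (constant or $1/\log n$ gap, $x$ in the middle of the range, conclusion linear in the range length). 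No tuning of $c_{0}$ repairs this: the global Coppersmith--Rivlin bound becomes vacuous ($e^{\Theta(d^{2}/(b-a))}$ is huge) precisely when $d\gg\sqrt{b-a}$, yet the lemma must still certify degrees up to $\Theta(c(b-a))$. The standard fix is a \emph{local} Ehlich--Zeller/Coppersmith--Rivlin statement: if $d\le c_{1}\sqrt{(y-a+1)(b-y+1)}$ then $|q(y)|\le C\delta$ at that particular $y$ (intuitively, the integers are denser than the degree-$d$ Chebyshev nodes near such $y$). Assuming for contradiction $d\le c_{2}\,c\sqrt{(x-a+1)(b-x+1)}$, this gives $|q|\le C\delta$ on a window around $x$ of radius $\Theta\bigl(\sqrt{(x-a+1)(b-x+1)}\bigr)$, and Bernstein applied on that \emph{window} (not on all of $[a,b]$) yields the contradiction.

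A secondary, also incorrect, point is your treatment of $x$ near an endpoint: if $x-a=O(1)$ then $\sqrt{(x-a+1)(b-x+1)}=\Theta(\sqrt{b-a})$, not $O(1)$, so this case is not vacuous; and ``apply Bernstein at $\xi\pm1$'' does not work because $q'(\xi\pm1)$ carries no information about $q'(\xi)$. The right tool there is the Markov inequality $|q'|\le 2d^{2}M/(b-a)$, which gives $d\ge\Omega(\sqrt{c(b-a)})\ge\Omega(c\sqrt{b-a})$ since $c\le 1$. So the architecture is salvageable, but as written the proof does not establish the lemma in the parameter range where it is actually applied.
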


Now we prove Lemma~\ref{lm:no-weak-approx} by showing the polynomial $Q$ must have degree at least $d=\Omega\left((1-\epsilon) \cdot n^{1/3} \right) \newline= \Omega\left(\cst \cdot n^{1/3} \right)$. The following proof basically mimics the original proof in~\cite{kutin2005quantum}.

\begin{proofof}{Lemma~\ref{lm:no-weak-approx}}
	Let $M = n/2$, depending on the value of $Q(M,1,2)$, there are two cases.
	
	\begin{itemize}
		\item $Q(M,1,2) \ge 0:$ Let $g(x) = Q(M,1,2x)$ and $k$ be the least positive integer such that $|g(k)| \ge 2$. Then we have $|g(x)| \le \cst$  for all positive integers $ < k$, and $g(1) - g(1/2) \ge c$ by assumption. Hence by Theorem~\ref{theo:paturi}, we have 
		$$
		d = \Omega(\cst \cdot \sqrt{k}).
		$$
		
		Now, let $c = 2k$ and consider the polynomial $h(i) = Q(n - ci,1,c)$. For any integer i with $\lceil n/4c \rceil \le i \le \lfloor n/c \rfloor$, the triple $(n-ci,1,c)$ is valid, and it is easy to see $\PTP$ evaluates to \False\ on $g_{n-c_i,1,c}$ for all those $i$'s, as at least $n/4$ inputs belong to the $c$-to-$1$ part and $c \ge 2$. Hence we have $|h(i)| \le 2-\cst$ for $i$ in that range. But $\left|h\left(\frac{n}{2c}\right)\right| = |Q(M,1,c)| = |g(k)| \ge 2$. Therefore, by Theorem~\ref{theo:paturi}, we have
		$$
		d = \Omega(c_1 \cdot n/c) = \Omega(\cst \cdot n/k).
		$$
		
		Putting them together, we have $d^3 =\Omega(\cst^3 \cdot n/k \cdot k) = \Omega( \cst^3 \cdot n)$, which means $d = \Omega( \cst \cdot n^{1/3})$.
		
		\item $Q(M,1,2) < 0:$ We let $g(x) = Q(M,2x,2)$ and $k$ be the least positive integer such that $|g(k)| \ge 2$. Then we have $g(1)-g(1/2) \ge \cst$. So again by Theorem~\ref{theo:paturi}, we have $d = \Omega(\cst \cdot \sqrt{k})$.
		
		Then, let $c=2k$ and $h(i) = Q(ci,c,2)$. For any integer $i$ with $0 \le i \le \lfloor n/c \rfloor$, the triple $(ci,c,2)$ is valid (both $n$ and $c$ is even), and clearly $\PTP(g_{ci,c,2}) = \False$ for those $i$'s. Hence we have $|h(i)| \le 2-\cst$. But $\left|h\left(\frac{n}{2c}\right)\right|=|g(k)| \ge 2$. Again by Theorem~\ref{theo:paturi}, we have $d = \Omega( c_1 \cdot n/k)$.
		
		Similarly, we also have $d = \Omega(\cst \cdot n^{1/3})$ in this case. This competes the proof.
	\end{itemize}
\end{proofof}

\subsection{Proof for Lemma~\ref{lm:ptp-in-cd}}

Finally, we prove Lemma~\ref{lm:ptp-in-cd}.
\begin{proofof}{Lemma~\ref{lm:ptp-in-cd}}
	Let $\eta = \frac{1-1/2a}{2}$, by Lemma~\ref{lm:no-weak-approx} and Corollary~\ref{cor:cond}, there exists a $(d,2\eta,\eta)$-dual object for $\PTP_n$ with respect to symmetrization $T$, for some $d = \Omega( (1-2\eta) \cdot n^{1/3}) = \Omega(n^{1/3}/a)$. Note $\frac{2\eta}{1-2\eta} > a$, hence this dual object satisfies Condition~\eqref{newcdprop1} of $\cd{d}{a}$.
	
	And by Lemma~\ref{lm:most-false}, the Condition~\eqref{newcdprop2} of $\cd{d}{a}$ follows immediately.
\end{proofof}

%%% Local Variables: 
%%% mode: latex
%%% TeX-master: "main"
%%% End: 
\section{A Weaker Polarization Lower Bound Using Fourier Analysis}
\label{app:fourier}
Here we show that, if one only cares about black box polarization in the restricted form proposed by Holenstein and Renner \cite{HR05}, then one can prove a lower bound against polarization directly using Fourier analysis alone. This may help the readers understand what's going on in the proof. But please note this result is subsumed by our oracle separation between $\SZK$ and $\PP$. 

\begin{defi}
    \label{defi:special-polarization}
    An \emph{$(n,\ell,m)$-special polarizer} is a pair of joint disributions over pairs of strings, $(S^0, R^0)$ and $(S^1,R^1)$, where $S^0$ and $S^1$ are over $\bset^n$, and $R^0$ and $R^1$ are over $\bset^\ell$.

    For any distributions $D_0$ and $D_1$, we define the polarized distributions $\widehat{D}_0$ and $\widehat{D}_1$ resulting from this polarizer as:
    \begin{align*}
      \widehat{D}_b = (D_{S^b_1}, \dots, D_{S^b_n}, R^b)
    \end{align*}
    The polarizer then provides the following guarantees:
    \begin{align*}
      ||D_0 - D_1|| > 2/3 &\implies ||\widehat{D}_0 - \widehat{D}_1|| > 1-2^{-m}\\
      ||D_0 - D_1|| < 1/3 &\implies ||\widehat{D}_0 - \widehat{D}_1|| < 2^{-m}
    \end{align*}

    An \emph{$(n,\ell)$-pseudo polarizer} is the same, except it doesn't provide the above guarantees. 
\end{defi}

It is to be noted that the technique for polarizing distance between distributions from \cite{sahai2003complete} is a special polarizer. Note also that any $(n,\ell,m)$-special polarizer is an $(n,\ell)$-pseudo polarizer.

Consider distributions over $\bset^k$. If there existed a polynomial-time computable $(n,\ell,m)$-special polarizer such that $nk+\ell < 2m$, then Theorem~\ref{thm:szkppandpolarization} implies that deciding whether pairs of such distributions are close or far can be done in $\PP$. If such a polarizer existed for every $k$, then this would imply that $\SZK$ is contained in $\PP$ because of the completeness of the Statistical Distance problem~\cite{sahai2003complete}. We rule out this approach of showing such a containment with the following theorem.

\begin{theo}
    \label{thm:special-polarization}
    For any $(n,\ell,m)$-special polarizer, $n = \Omega(m)$.
\end{theo}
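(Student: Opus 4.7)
The plan is to use the polynomial method. I will consider a one-parameter family of input distributions $D_0, D_1$ and analyze the polarizer's output distribution as a polynomial in the parameter; the classical extremal bounds for polynomials will then force $n$ to be large. Specifically, take $D_0, D_1$ to be Bernoulli distributions on $\bset$ with parameters $\theta_0 := 1/2 + c/2$ and $\theta_1 := 1/2 - c/2$, so that $\|D_0 - D_1\| = c$ for every $c \in [0,1]$.

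The key observation is that for any fixed test set $A \subseteq \bset^{n+\ell}$, the function $p_A(c) := \Pr[\widehat{D}_0 \in A] - \Pr[\widehat{D}_1 \in A]$ is a polynomial in $c$ of degree at most $n$. Indeed, $\widehat{D}_b(x,r) = \sum_{s \in \bset^n} \Pr[(S^b, R^b) = (s,r)] \prod_{i=1}^n \theta_{s_i}^{x_i}(1 - \theta_{s_i})^{1-x_i}$; each of the $n$ inner factors is linear in $c$ under the parametrization above, and summing over $(x,r) \in A$ preserves the polynomial structure. Note also that $p_A$ is bounded by $1$ in absolute value on $[0,1]$, and crucially the auxiliary string $R^b$ contributes no $c$-dependence, which already signals why the theorem involves only $n$ and not $\ell$.

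Next I would extract the constraints implied by a valid polarizer. At $c = 1$ we have $\|D_0 - D_1\| = 1 > 2/3$, so the polarization guarantee yields $\|\widehat{D}_0 - \widehat{D}_1\| > 1 - 2^{-m}$; hence there exists a distinguishing set $A^*$ with $|p_{A^*}(1)| > 1 - 2^{-m}$. On the other hand, for $c \in [0, 1/3)$ the polarization guarantee gives $\|\widehat{D}_0 - \widehat{D}_1\| < 2^{-m}$, forcing $|p_{A^*}(c)| < 2^{-m}$; by continuity of the polynomial $p_{A^*}$ this extends to $|p_{A^*}(c)| \leq 2^{-m}$ on the closed interval $[0, 1/3]$.

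Finally I will invoke the classical Chebyshev extremal bound: any degree-$n$ polynomial bounded by $\epsilon$ on an interval $[a,b]$ satisfies $|p(y)| \leq \epsilon \cdot T_n\!\bigl(|2y - a - b|/(b - a)\bigr)$ for every $y \notin [a,b]$, where $T_n$ is the $n$-th Chebyshev polynomial of the first kind. Taking $[a,b] = [0, 1/3]$ and $y = 1$, the argument of $T_n$ becomes $5$, so $1 - 2^{-m} \leq |p_{A^*}(1)| \leq 2^{-m} \cdot T_n(5)$, giving $T_n(5) \geq 2^{m-1}$. Since $T_n(5) \leq (5 + 2\sqrt{6})^n$, we conclude $n \geq (m - O(1))/\log_2(5 + 2\sqrt{6}) = \Omega(m)$. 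The only real insight is recognizing the correct parametrization so that the polarizer's output collapses to a polynomial of degree exactly $n$ in a single scalar controlling $\|D_0 - D_1\|$; once this polynomial representation is in hand, the lower bound is an entirely standard application of approximation theory, of the same flavor used throughout the sign-rank and approximate degree literature cited in the main body.
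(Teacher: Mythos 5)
Your proof is correct, but it takes a genuinely different route from the one in the paper. The paper argues via Fourier analysis on $\mathbb{F}_2^{n+\ell}$: writing the polarizer's action on the difference vector $p_1 - p_0$ as the operator $C_\alpha = B_\alpha^{\otimes n} \otimes I^{\otimes \ell}$ (whose eigenvectors are the characters $\chi_{T_1} \otimes \chi_{T_2}$ with eigenvalues $\alpha^{|T_1|}$), it bounds the ratio $\|C_\alpha (p_1 - p_0)\|_1 / \|C_\beta (p_1 - p_0)\|_1$ by $2^{(n+\ell)/2}(\alpha/\beta)^n$ using $\ell_1$-to-$\ell_2$ comparison (Lemma~\ref{lem:special-polarization-1}), then needs a separate reduction (Lemma~\ref{lem:special-polarization-2}) to kill the unwanted $2^{\ell/2}$ factor by conditioning on $R^b$, before plugging in $\alpha = 2/3$, $\beta = 1/3$. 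You instead parametrize the input Bernoulli distributions by a single scalar $c = \|D_0 - D_1\|$, observe that for any test set $A$ the advantage $p_A(c)$ is a univariate polynomial of degree at most $n$ (the key point: each of the $n$ sampled coordinates contributes a linear-in-$c$ factor, and the auxiliary $R^b$ contributes nothing, so $\ell$ never enters), note that $p_{A^*}$ is tiny on $[0,1/3]$ yet near $1$ at $c=1$, and invoke the Chebyshev growth bound. Your route is cleaner in two respects: it avoids the two-lemma structure entirely (no need to reduce to $\ell = 1$, since $\ell$-independence is manifest from the polynomial representation), and it reduces the problem to a classical one-variable extremal inequality rather than eigenstructure of a tensor-product operator. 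It is also thematically closer to the approximate-degree machinery used elsewhere in the paper. The paper's Fourier argument, on the other hand, gives explicit information about which Fourier levels of $p_1 - p_0$ are responsible for the contraction, which could be useful if one wanted to study more refined variants of polarizers. Both approaches hinge on the same structural observation about the $n$ ``useful'' samples versus the $\ell$ inert auxiliary bits, so the constants in the final $n = \Omega(m)$ bound differ only slightly.
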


Theorem~\ref{thm:special-polarization} follows immediately from the following two lemmas. For any $\alpha \in [0,1]$ and bit $b$, denote by $D^\alpha_b$ the distribution over $\bset$ that is equal to $b$ with probability $(1+\alpha)/2$. It is easy to see that $||D^\alpha_0 - D^\alpha_1||= \alpha$. We denote by $(\widehat{D}^\alpha_0,\widehat{D}^\alpha_1)$ the distributions that result from applying the special polarizer in the relevant context to $(D^\alpha_0, D^\alpha_1)$ and by $(\widetilde{D}^\alpha_0, \widetilde{D}^\alpha_1)$ the distributions resulting from the pseudo-polarizer.

\begin{lemma}
    \label{lem:special-polarization-1}
    For any $(n,\ell)$-pseudo polarizer and any $\alpha, \beta \in (0,1)$ such that $\alpha > \beta$,
    \begin{align*}
      \frac{||\widetilde{D}^\alpha_0 - \widetilde{D}^\alpha_1||}{||\widetilde{D}^\beta_0 - \widetilde{D}^\beta_1||} &\leq 2^{(n+\ell)/2} \left( \frac{\alpha}{\beta} \right)^{n}
    \end{align*}
\end{lemma}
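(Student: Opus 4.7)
The plan is to prove the bound via Fourier analysis on the Boolean hypercube: compute the $L_2$ norm of the relevant signed measure with Parseval's identity, and then use the standard comparison inequalities between $L_1$ and $L_2$ norms to translate this into a bound on total variation distance. Write $p^\alpha(x, r) := \widetilde{D}^\alpha_0(x, r) - \widetilde{D}^\alpha_1(x, r)$ for $(x, r) \in \{0,1\}^n \times \{0, 1\}^\ell$, so that $\|\widetilde{D}^\alpha_0 - \widetilde{D}^\alpha_1\| = \tfrac{1}{2}\|p^\alpha\|_1$. Writing $\tilde{z} := (-1)^z$, the elementary distributions have the expansion $D^\alpha_s(z) = (1 + \alpha \tilde{s}\tilde{z})/2$, so expanding the product over the $n$ coordinates of $x$ yields
\[
\widetilde{D}^\alpha_b(x, r) = 2^{-n}\sum_{T \subseteq [n]} \alpha^{|T|} \tilde{x}_T \cdot \Ex_{(s, r') \sim (S^b, R^b)}\!\left[\tilde{s}_T \indicator_{r' = r}\right],
\]
where $\tilde{x}_T := \prod_{i \in T} \tilde{x}_i$. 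The key observation is that subtracting the $b=0$ and $b=1$ expressions isolates the $\alpha$-dependence into the scalar factors $\alpha^{|T|}$, giving $p^\alpha(x, r) = 2^{-n}\sum_{T \subseteq [n]} \alpha^{|T|} \tilde{x}_T \, C_T(r)$ for coefficients $C_T(r)$ that depend on the polarizer but not on $\alpha$.

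By orthogonality of the characters $\{\tilde{x}_T\}_T$, Parseval's identity gives $\|p^\alpha\|_2^2 = 2^{-n} \sum_{T \subseteq [n]} \alpha^{2|T|} B_T$, where $B_T := \sum_r C_T(r)^2 \ge 0$. Since $|T| \le n$ and $\alpha > \beta$, we have $\alpha^{2|T|}/\beta^{2|T|} \le (\alpha/\beta)^{2n}$ for every $T$, so summing this term-by-term bound against the nonnegative weights $B_T$ produces $\|p^\alpha\|_2 \le (\alpha/\beta)^n \|p^\beta\|_2$.

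Finally, I would convert between $L_1$ and $L_2$ on the underlying space of size $N = 2^{n+\ell}$. For the numerator, Cauchy--Schwarz yields $\|p^\alpha\|_1 \le \sqrt{N}\,\|p^\alpha\|_2$. For the denominator, the elementary expansion $\bigl(\sum_x |f(x)|\bigr)^2 = \sum_x f(x)^2 + 2\sum_{x<y} |f(x)||f(y)| \ge \sum_x f(x)^2$ gives $\|f\|_1 \ge \|f\|_2$ for any real-valued function $f$. Chaining these two inequalities with the Fourier ratio bound gives
\[
\frac{\|\widetilde{D}^\alpha_0 - \widetilde{D}^\alpha_1\|}{\|\widetilde{D}^\beta_0 - \widetilde{D}^\beta_1\|} = \frac{\|p^\alpha\|_1}{\|p^\beta\|_1} \le \frac{\sqrt{N}\,\|p^\alpha\|_2}{\|p^\beta\|_2} \le 2^{(n+\ell)/2} \left(\frac{\alpha}{\beta}\right)^n,
\]
as required. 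The only mildly subtle step is arranging the Fourier expansion so that the $\alpha$-dependence factors cleanly from the polarizer data $C_T(r)$; once that is in place, everything else is routine norm manipulation.
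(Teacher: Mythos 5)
Your proof is correct and follows essentially the same route as the paper's: the paper encodes the channel $D^\alpha_\cdot$ as a tensor power of the $2\times 2$ matrix $B_\alpha$, observes that the parity characters $\chi_{T_1}\otimes\chi_{T_2}$ are eigenvectors with eigenvalue $\alpha^{|T_1|}$, and then chains $\ell_1 \le 2^{(n+\ell)/2}\ell_2$ (numerator), $\ell_1 \ge \ell_2$ (denominator), and Parseval with the ratio bound $\sum_i c_i a_i / \sum_i c_i b_i \le \max_i a_i/b_i$. Your derivation is the same argument phrased directly in Fourier-expansion language rather than matrix-eigenvalue language, with the final step done by the equivalent term-by-term bound $\alpha^{2|T|} \le (\alpha/\beta)^{2n}\beta^{2|T|}$.
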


\begin{proof}

    Throughout the proof, we use the symbols for distributions interchangeably with the symbols for vectors representing their mass functions. For each $\alpha \in (0,1)$, we define the following matrix:
    \begin{align*}
      B_\alpha = \left( \begin{array}{cc} \frac{1+\alpha}{2} & \frac{1-\alpha}{2} \\ \frac{1-\alpha}{2} & \frac{1+\alpha}{2}  \end{array}  \right)
    \end{align*}
    Consider any distribution $p$ over $\bset$. The distribution obtained by selecting a bit $b$ according to $p$ and then sampling $D^\alpha_b$ is given by $B_\alpha p$. This can be extended to the case when $p$ is over $\bset^n$ -- if $x$ is drawn according to $p$, the distribution of $(D^\alpha_{x_1}, \dots, D^\alpha_{x_n})$ is given by $B_\alpha^{\otimes n} p$.

    Further, if $p_0$ happens to be the distribution of $(S^0, R^0)$ from an $(n,\ell,m)$ special polarizer, then $\widetilde{D}^\alpha_0$, when the polarizer is applied to $(D^\alpha_0, D^\alpha_1)$, is given by $(B_\alpha^{\otimes n} \otimes I^{\otimes \ell}) p_0$, where $I$ is the $2\times 2$ identity matrix. Similarly, $\widetilde{D}^\alpha_1$ would be $(B_\alpha^{\otimes n} \otimes I^{\otimes \ell}) p_1$. Let $C_\alpha = (B_\alpha^{\otimes n} \otimes I^{\otimes \ell})$. We then have:
    \begin{align*}
      ||\widetilde{D}^\alpha_0 - \widetilde{D}^\alpha_1|| = \frac{1}{2} \lone{C_\alpha (p_1 - p_0)}
    \end{align*}

    Both $B_\alpha$ and $I$ have the vectors $\left( \begin{array}{c} 1 \\ 1 \end{array} \right)$ and $\left( \begin{array}{c} 1 \\ -1 \end{array} \right)$ as eigenvectors. The corresponding eigenvalues are $1$ and $\alpha$ for $B_\alpha$, and both $1$ for $I$. This implies that the eigenvectors of $B$ are all possible tensor products of these eigenvectors, and the eigenvalue of such a resulting vector is simply the products of the eigenvalues of the vectors that were tensored.

    In different terms, the eigenvectors are $(\chi_{T_1}\otimes\chi_{T_2})$ for any $T_1\subseteq [n]$ and $T_2\subseteq [\ell]$, which are the characters of $\mathbb{F}_2^{n+\ell}$, and the eigenvalue of this vector would be $\alpha^{|T_1|}$. Since these vectors form a basis, we can write $p_0 = \sum_{T_1,T_2} \widehat{p}_{0,(T_1,T_2)} (\chi_{T_1}\otimes\chi_{T_2})$. 

    Using the standard relationships between $L_1$ and $L_2$ norms, we have the following inequalities for any $\alpha, \beta \in (0,1)$ such that $\alpha > \beta$:
    \begin{align*}
      \frac{||\widehat{D}^\alpha_0 - \widehat{D}^\alpha_1||}{||\widehat{D}^\beta_0 - \widehat{D}^\beta_1||} &= \frac{\lone{C_\alpha (p_1 - p_0)}}{\lone{C_\beta (p_1 - p_0)}}\\
                                                                                                              &\leq 2^{(n+\ell)/2} \frac{\ltwo{C_\alpha (p_1 - p_0)}}{\ltwo{C_\beta (p_1 - p_0)}}\\
                                                                                                              &= 2^{(n+\ell)/2} \frac{\ltwo{ C_\alpha \sum_{T_1\subseteq [n], T_2\subseteq [\ell]} (\widehat{p}_{1,(T_1,T_2)} - \widehat{p}_{0,(T_1,T_2)}) (\chi_{T_1}\otimes\chi_{T_2}) }}{\ltwo{ C_\beta \sum_{T_1\subseteq [n], T_2\subseteq [\ell]} (\widehat{p}_{1,(T_1,T_2)} - \widehat{p}_{0,(T_1,T_2)}) (\chi_{T_1}\otimes\chi_{T_2}) }}\\
                                                                                                              &= 2^{(n+\ell)/2} \frac{\ltwo{\sum_{T_1\subseteq [n], T_2\subseteq [\ell]} \alpha^{|T_1|} (\widehat{p}_{1,(T_1,T_2)} - \widehat{p}_{0,(T_1,T_2)}) (\chi_{T_1}\otimes\chi_{T_2}) }}{\ltwo{\sum_{T_1\subseteq [n], T_2\subseteq [\ell]} \beta^{|T_1|} (\widehat{p}_{1,(T_1,T_2)} - \widehat{p}_{0,(T_1,T_2)}) (\chi_{T_1}\otimes\chi_{T_2}) }}\\
                                                                                                              &= 2^{(n+\ell)/2} \left(\frac{\sum_{T_1\subseteq [n], T_2\subseteq [\ell]} \alpha^{2|T_1|} (\widehat{p}_{1,(T_1,T_2)} - \widehat{p}_{0,(T_1,T_2)})^2 }{\sum_{T_1\subseteq [n], T_2\subseteq [\ell]} \beta^{2|T_1|} (\widehat{p}_{1,(T_1,T_2)} - \widehat{p}_{0,(T_1,T_2)})^2 } \right)^{1/2}\\
                                                                                                              &\leq 2^{(n+\ell)/2} \left( \frac{\alpha}{\beta} \right)^{n}
    \end{align*}
    where the last inequality follows from the readily verified fact that for any sequences of positive real numbers $\{a_i\}$, $\{b_i\}$, and $\{c_i\}$, $\frac{\sum_i c_i a_i}{\sum_i c_i b_i}$ is at most $\max_i \frac{a_i}{b_i}$.
\end{proof}

\begin{lemma}
    \label{lem:special-polarization-2}
    For any $(n,\ell)$-pseudo polarizer and any $\alpha, \beta \in (0,1)$ such that $\alpha > \beta$, there is an $(n,1)$-pseudo polarizer such that:
    \begin{align*}
      \frac{||\widetilde{D}^\alpha_0 - \widetilde{D}^\alpha_1||}{||\widetilde{D}^\beta_0 - \widetilde{D}^\beta_1||} &\geq \frac{||\widehat{D}^\alpha_0 - \widehat{D}^\alpha_1||}{||\widehat{D}^\beta_0 - \widehat{D}^\beta_1||}
    \end{align*}
\end{lemma}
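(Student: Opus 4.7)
My proof would proceed by a slicing-and-averaging argument. Write $\Delta := \Pi_0 - \Pi_1$ for the signed measure on $\bset^n \times \bset^\ell$ underlying the $(n,\ell)$-polarizer $(S^b,R^b)$, and let $L^\alpha$ denote the kernel $B_\alpha^{\otimes n}\otimes I^{\otimes\ell}$ from the proof of Lemma~\ref{lem:special-polarization-1}, so that the polarized TV distance equals $\tfrac12\|L^\alpha\Delta\|_1$. Because $L^\alpha$ acts only on the $S$-coordinates, the $L^1$ norm splits slice-by-slice:
\[
\|L^\alpha\Delta\|_1 \;=\; \sum_{r\in\bset^\ell} a_\alpha(r), \qquad a_\alpha(r):=\|L^\alpha\Delta(\cdot,r)\|_1,
\]
and analogously for $a_\beta(r)$. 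The elementary inequality $\bigl(\sum_r a_\alpha(r)\bigr)/\bigl(\sum_r a_\beta(r)\bigr)\le \max_r\, a_\alpha(r)/a_\beta(r)$ then produces a slice $r^\star \in \bset^\ell$ whose ratio $a_\alpha(r^\star)/a_\beta(r^\star)$ dominates the original TV ratio.

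I would then realize this slice as an $(n,1)$-pseudo polarizer by isolating $r^\star$ on one side of the single appendage bit and absorbing the leftover mass with a distribution that is a common fixed point of every $L^\gamma$. Concretely, set $\Pi^{(1)}_b(s,0):=\Pi_b(s,r^\star)$ and $\Pi^{(1)}_b(s,1):=(1-p_b(r^\star))\,u(s)$, where $p_b(r^\star):=\Pr[R^b=r^\star]$ and $u$ is the uniform distribution on $\bset^n$. Double-stochasticity of $B_\gamma$ gives $L^\gamma u = u$ for every $\gamma\in(0,1)$, so the padding contributes the common term $(p_1(r^\star)-p_0(r^\star))\,u$ on the $r'=1$ side of both $L^\alpha(\Pi^{(1)}_0-\Pi^{(1)}_1)$ and $L^\beta(\Pi^{(1)}_0-\Pi^{(1)}_1)$. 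When $p_0(r^\star)=p_1(r^\star)$ the padding contributes zero, the polarized TV ratio equals $a_\alpha(r^\star)/a_\beta(r^\star)$, and we are done.

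\textbf{Main obstacle.} The delicate case is $p_0(r^\star)\ne p_1(r^\star)$: the padding then adds a common positive quantity $|p_0(r^\star)-p_1(r^\star)|$ to both numerator and denominator of the ratio, which strictly decreases it whenever the original ratio exceeds $1$ (the regime of interest). To circumvent this I would strengthen the slicing step by selecting a \emph{balanced bipartition} $A\sqcup A^c$ of $\bset^\ell$ with $p_0(A)=p_1(A)$---whose existence follows from an intermediate-value argument on an appropriate linear ordering of $\bset^\ell$---and taking the new appendage $R'^b := \mathds{1}_{R^b\in A}$, which removes the need for any padding. The price is that merging several $r$-slices can introduce signed cancellation inside $\bigl|\sum_{r\in A} L^\alpha\Delta(y,r)\bigr|$, so the bipartition must be chosen to simultaneously preserve the balance, dominate $\rho_\text{orig}$, and control the cancellation. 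I would formalize this via an inductive reduction (shrinking $\ell$ by one at a time using a balanced partition of $\bset^\ell$ into two halves whose slice-ratios bracket $\rho_\text{orig}$), exploiting the mass-preservation identity $a_\gamma(r)\ge |p_0(r)-p_1(r)|$ for all $\gamma$ (a consequence of $L^\gamma$ being a stochastic kernel) to uniformly bound the imbalance term. Executing this balanced bipartition argument cleanly is, I expect, the main technical step of the proof.
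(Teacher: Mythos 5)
You correctly reduce to a slicing argument over the values of the appendage $R$ and correctly diagnose the obstacle: in the maximizing slice $r^\star$, if $\Pr[R^0=r^\star]\ne\Pr[R^1=r^\star]$, then padding the complementary side of the new 1-bit appendage with any common distribution (such as $L^\gamma$-invariant $u$) injects the same positive quantity $|p_0(r^\star)-p_1(r^\star)|$ into both $\|\cdot\|_1$ terms, which strictly lowers the ratio when it exceeds $1$. The balanced-bipartition fix you sketch, however, has real unresolved problems and is not executed: on a discrete space $\bset^\ell$ an exactly balanced set $A$ with $p_0(A)=p_1(A)$ need not exist (there is no continuum for an intermediate-value argument), and even when it does, merging slices into $A$ and $A^c$ can create signed cancellation inside $\bigl|\sum_{r\in A}L^\gamma\Delta(\cdot,r)\bigr|$ that you would have to separately control to preserve domination of the ratio. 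You acknowledge these difficulties but do not resolve them, so there is a genuine gap.

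The paper avoids the whole imbalance issue by a short pre-processing step (its Fact~\ref{ft:tv2}): for a uniformly random bit $B$, $\|X_0-X_1\| = \|(B,X_B)-(\overline{B},X_B)\|$. Applying this first rewrites $\|\widehat{D}^\gamma_0-\widehat{D}^\gamma_1\|$ as $\|(B,D^\gamma_{S^B},R^B)-(\overline{B},D^\gamma_{S^B},R^B)\|$, in which $R^B$ is literally the \emph{same} random variable on both sides, so its marginal is automatically shared and the conditional-TV identity (the paper's Fact~\ref{ft:tv1}) applies with no imbalance correction; conditioning on $R^B=r$ and maximizing over $r$ then directly gives an $(n,1)$-pseudo polarizer in which the single fresh appendage bit is $B$ versus $\overline{B}$, and no uniform padding is needed. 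In short: the missing idea in your proposal is to symmetrize with a fresh random bit $B$ \emph{before} slicing, so that the $\ell$ appendage bits become common side information rather than two different marginals.
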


\begin{proof}
    The lemma follows from the following two easily verified facts about Total Variation distance of joint distributions.

    \begin{ft}
        \label{ft:tv1}
        For random variables $X$, $Y$ and $Y'$,
        \begin{align*}
          ||(X,Y) - (X,Y')|| = \sum_x \Pr[X=x] ||Y_{|X=x} - Y'_{|X=x}||
        \end{align*}
    \end{ft}

    \begin{ft}
        \label{ft:tv2}
        For random variables $X_0$ and $X_1$ and a uniformly distributed bit $B$,
        \begin{align*}
          ||(B,X_B) - (\overline{B},X_B)|| = ||X_0 - X_1||
        \end{align*}
    \end{ft}

    For convenience, we write the resulting distributions from a polarizer as $\widehat{D}^\alpha_0 = (D^\alpha_{S^0},R^0)$, etc., which is indeed the structure that these distributions have. From the above two facts, we have the following for a uniformly distributed bit $B$:
    \begin{align*}
      \frac{||\widehat{D}^\alpha_0 - \widehat{D}^\alpha_1||}{||\widehat{D}^\beta_0 - \widehat{D}^\beta_1||} &= \frac{||({D}^\alpha_{S^0},R^0) - ({D}^\alpha_{S^1},R^1)||}{||({D}^\beta_{S^0},R^0) - ({D}^\beta_{S^1},R^1)||}\\
                                                                                                              &= \frac{||(B, {D}^\alpha_{S^B}, R^B) - (\overline{B}, {D}^\alpha_{S^B},R^B)||}{||(B, {D}^\beta_{S^B},R^B) - (\overline{B}, {D}^\beta_{S^B},R^B)||}\\
                                                                                                              &= \frac{\sum_r \Pr[R_B = r] || (B, {D}^\alpha_{S^B})_{|R^B=r} - (\overline{B}, {D}^\alpha_{S^B})_{|R^B=r}||}{\sum_r \Pr[R_B = r] ||(B, {D}^\beta_{S^B})_{|R^B=r} - (\overline{B}, {D}^\beta_{S^B})_{|R^B=r}||}\\
                                                                                                              &\leq \max_r \frac{||(B, {D}^\alpha_{S^B})_{|R^B=r} - (\overline{B}, {D}^\alpha_{S^B})_{|R^B=r}||}{||(B, {D}^\beta_{S^B})_{|R^B=r} - (\overline{B}, {D}^\beta_{S^B})_{|R^B=r}||}
    \end{align*}
    where the last inequality is from the same argument about sequences of positive numbers as the one at the end of the proof of Lemma~\ref{lem:special-polarization-1}.

    This proves what we need, as for any $r$, $((B,D_{S^B})_{|R_B=r}, (\overline{B},D_{S^B})_{|R_B=r})$ is an $(n,1)$-pseudo polarizer.
\end{proof}

\begin{proofof}{Theorem~\ref{thm:special-polarization}}

    For any $(n,\ell,m)$-special polarizer we have the following when $\alpha = 2/3$ and $\beta = 1/3$:
    \begin{align*}
      \frac{||\widehat{D}^\alpha_0 - \widehat{D}^\alpha_1||}{||\widehat{D}^\beta_0 - \widehat{D}^\beta_1||} &\geq \frac{1 - 2^{-m}}{2^{-m}} = 2^m - 1
    \end{align*}

    Lemmas~\ref{lem:special-polarization-1} and \ref{lem:special-polarization-2} imply that there is an $(n,1)$-pseudo polarizer such that:
    \begin{align*}
 \frac{||\widehat{D}^\alpha_0 - \widehat{D}^\alpha_1||}{||\widehat{D}^\beta_0 - \widehat{D}^\beta_1||} \leq \frac{||\widetilde{D}^\alpha_0 - \widetilde{D}^\alpha_1||}{||\widetilde{D}^\beta_0 - \widetilde{D}^\beta_1||} &\leq 2^{(n+1)/2} \left( \frac{\alpha}{\beta} \right)^{n} = 2^{(3n+1)/2}
    \end{align*}

    The above two inequalities tell us that $n = \Omega(m)$.
\end{proofof}

%%% Local Variables: 
%%% mode: latex
%%% TeX-master: "main"
%%% End: 
\section{Improved Polarization Places $\SZK$ in $\BPPPATH$}
\label{polarizationbpppath}

Here we show that if the Polarization Lemma of Sahai and Vadhan were strengthened in a black box manner, it would imply $\SZK \subseteq \BPPPATH$. This immediately gives that the Polarization Lemma cannot be strenghtened in this manner.

\subsection{Proof of Theorem \ref{thm:szkbpppathandpolarization}}

To prove the theorem, suppose that the statistical difference problem is $\SZK$-hard for distributions on $N$ bits which are either $\epsilon$-close or $(1-\epsilon)$-far, where $\epsilon=o(2^{-2N/3})$. We will give a $\BPPPATH$ algorithm to solve this problem, using the characterization that $\BPPPATH=\mathsf{postBPP}$. The algorithm is inspired by Aaronson, Bouland, Fitzsimon, and Lee's proof that $\SZK\subseteq \naCQP$ given in \cite{ABFL16}. We thank Tomoyuki Morimae and Harumichi Nishimura for helpful discussions on this topic.

The algorithm is as follows: flip three coins $b_1,b_2,b_3$, and draw independent samples $y_1,y_2,y_3$ from the distributions $D_{b_1},D_{b_2},D_{b_3}$, respectively. Postselect on the condition that $y_1=y_2=y_3$. Output that the distributions are far apart if $b_1=b_2=b_3$, and otherwise output that the distributions are close.

If $\epsilon=0$, then clearly this algorithm is correct. In the case the distributions are far apart, they have disjoint support, which implies the values $b_i$ must be identical, so in this case the algorithm has zero probability of error. In the case the distributions are close, they are identical, so the string $b_1b_2b_3$ is uniformly random after postselection, so the algorithm errs with probability $1/4$. Note that the correctness of this algorithm in the case $\epsilon=0$ doesn't tell us anything new in structural complexity, because in the $\epsilon=0$ case, the problem is in $\NP$ (as a witness to the fact the distributions are identical, simply provide $x_0,x_1$ such that $P_0(x_0)=P_1(x_1)$), and hence is obviously in $\BPPPATH$ and in $\PP$ as well.

We now claim that if $\epsilon=o(2^{-2N/3})$, then this algorithm still works. Note that our choice of $\epsilon$ is asymptotically tight for our algorithm; if $\epsilon=\Omega(2^{-2N/3})$, then there is a simple counterexample which foils the algorithm \footnote{Let $D_0$ be a uniform distribution, and let $D_1$ be the distribution which places an $\epsilon$ amount of weight on a single item $x$, while the remaining weight is spread uniformly on the remaining elements. These distributions are $\epsilon$-close in total variation distance, but one can easily show that this algorithm will yield the string $\hat{b}=111$ with high probability, and hence the algorithm will incorrectly identify them as being far apart. The reason this counterexample works is that postselecting the distributions on seeing the same outcome $y_1=y_2=y_3$ heavily skews the distributions towards more likely $y_i$ outputs, and in this example we will almost always have $y=x$, and hence will almost always output $\hat{b}=111$.}. To show that the algorithm works, we'll show two things. First, if the distributions are $\epsilon$-close for this small $\epsilon$, then we'll show that as $n\rightarrow \infty$, then $\hat{b}$'s value approaches the uniform distribution over all 8 possible output strings. Therefore for sufficiently large $n$, the algorithm is correct. On the other hand, if the distributions are $1-\epsilon$-far, we'll show the algorithm is correct with high probability.

Let's first handle the case in which the distributions are $\epsilon$-close. Let $\hat{b}\in \{0,1\}^n$ be the random variable corresponding to the output of $b_1b_2b_3$. Let $D_b(y)$ denote the probability that distribution $D_b$ outputs $y$. Let $S$ be the event that $y_1=y_2=y_3$, and let $S(y)$ be the event that $y_1=y_2=y_3=y$. By Bayes' rule, we have that
\begin{align*}\Pr[\hat{b}=b_1b_2b_3 | S] &=\frac{ \Pr[ S| \hat{b}=b_1b_2b_3] \Pr[ \hat{b}=b_1b_2b_3]}{\Pr[S]} \\
&= \frac{\sum_{y\in\{0,1\}^n} \Pr[S(y)|  \hat{b}=b_1b_2b_3]  \frac{1}{8}  }{\sum_{y\in\{0,1\}^n}\Pr[S(y)]} \\
&= \frac{\sum_{y\in\{0,1\}^n} D_{1}(y)^{w(\hat{b})}D_{0}(y)^{3-w(\hat{b})}  \frac{1}{8}  }{\sum_{y\in\{0,1\}^n}\Pr[S(y)]}
\end{align*}
where $w(\hat{b})$ is the Hamming weight of $\hat{b}$. 

Hence we have that 
\[\frac{\Pr[\hat{b}=b_1b_2b_3 | S]}{\Pr[\hat{b'}=b_1'b_2'b_3' | S]}=\frac{\sum_{y\in\{0,1\}^n} D_{1}(y)^{w(\hat{b})}D_{0}(y)^{3-w(\hat{b})}}{\sum_{y\in\{0,1\}^n} D_{1}(y)^{w(\hat{b'})}D_{0}(y)^{3-w(\hat{b'})}}\]

We'll now show that as $n\rightarrow \infty$, the ratio of the probabilities between each string tends to $1$. Therefore for sufficiently large $n$, the strings $\hat{b}$ can be make arbitrarily close to equiprobable, so the algorithm works. We'll break into three cases, showing that the strings $\hat{b}=111$ and $000$, $100$, and $110$ become equiprobable as $n\rightarrow \infty$. Since the probability of obtaining a string $\hat{b}$ is only a function of its hamming weight, this will imply all eight possible outcomes for $\hat{b}$ become equiprobable for large $n$, and hence the error probabilty of the algorithm approaches $1/4$ as $n\rightarrow \infty$.

\textbf{Case 1: 111 and 000}

Let's consider the extremal case, where $\hat{b}=111$ or $\hat{b}=000$. Let $\delta_y = |D_1(y)-D_0(y)|$, so $\sum_y \delta_y \leq \epsilon $, and furthermore that $D_0(y) \leq D_1(y) + \delta_y$ and $D_1(y) \leq D_0(y) + \delta_y$. Therefore we have that
\begin{align}
\frac{\Pr[\hat{b}=111 | S]}{\Pr[\hat{b}=000 | S]} &=\frac{\sum_{y\in\{0,1\}^n} D_{1}(y)^3}{\sum_{y\in\{0,1\}^n} D_{0}(y)^{3}} \\
&\leq \frac{\sum_{y\in\{0,1\}^n} (D_{0}(y) + \delta_y)^3}{\sum_{y\in\{0,1\}^n} D_{0}(y)^{3}} \\
&= \frac{\sum_{y\in\{0,1\}^n} D_{0}(y)^3 + 3D_{0}(y)^2 \delta_y+3D_{0}(y) \delta_y^2 + \delta_y^3}{\sum_{y\in\{0,1\}^n} D_{0}(y)^{3}} \\
&=1+ 3\frac{\langle \delta,D_0^2\rangle }{\langle D_0,D_0^2\rangle }+3\frac{\langle \delta^2,D_0\rangle }{\langle D_0^2,D_0\rangle }+\frac{|\delta^3|_1}{|D_0^3|_1} \label{innerprod}\\
&\leq 1 + 3\frac{\epsilon \max_y D_0(y)^2}{\langle D_0,D_0^2\rangle } + 3\frac{\epsilon^2 \max_y D_0(y)}{\langle D_0^2,D_0\rangle } + \frac{\epsilon^3}{|D_0^3|_1} \label{eq:max} \\
&\leq 1 + 3\frac{\epsilon \max_y D_0(y)^2}{\langle D_0,D_0^2\rangle } + 3\frac{\epsilon^2 \max_y D_0(y)}{\langle D_0^2,D_0\rangle } + \frac{2^{-3cn}}{2^{-2n}} \label{eq:denom}
\end{align}
where on line \ref{innerprod} we expressed these sums as inner products, on line \ref{eq:max} we used the fact the sums in the denominators are maximized when the weight of $\delta$ is placed on a single item, line \ref{eq:denom} follows from the fact the denominator is minimized by the unform distribution. We now need to bound the terms $\frac{\max_y D_0(y)^2}{\langle D_0,D_0^2\rangle }$ and $\frac{\max_y D_0(y)}{\langle D_0,D_0^2\rangle }$ as a function of the universe size $N=2^n$. One can easily show that the first is upper bounded by $\Theta(N^{2/3})$, and the second is upper bounded by $\Theta(N^{4/3})$.

To see this, let $k=\max_y D_0(y)$, so $2^{-n}\leq k \leq 1$. Then we have that 
\[\frac{\max_y D_0(y)^2}{\langle D_0,D_0^2\rangle } \leq \frac{k^2}{k^3 + \frac{(1-k)^3}{(N-1)^2}}\]
because given $k$, the denominator is minimized by spreading the remaining probability mass evenly over the remaining $N-1$ elements. By taking the derivative of this as a function of $k$ and setting it equal to zero, we see that the maximum occurs at a solution to the equation $k\left((-5-(N-1)^2)k^3+12k^2-9k+2\right)=0$. As $N\rightarrow\infty$ the real roots of this equation are $0$ and $\Theta(N^{-2/3})$ (plus two complex roots), and one can easily show the first is a minimum while the second is the maximum. Hence this quantity is maximized when $k=\Theta(N^{-2/3})$, which implies the quantity is upper bounded by
\[\frac{\max_y D_0(y)^2}{\langle D_0,D_0^2\rangle } \leq \frac{N^{-4/3}}{N^{-2} + \frac{(1-N^{-2/3})^3}{(N-1)^2}} = \frac{N^{-2/3}}{\Theta(N^-2)} = \Theta(N^{2/3})\]
A similar proof shows that the second quantity is upper bounded by $\Theta(N^{4/3})$. 

Therefore we have that
\begin{align}
\frac{\Pr[\hat{b}=111 | S]}{\Pr[\hat{b}=000 | S]} &\leq 1 + 3*2^{-cn}2^{2n/3} + 3*2^{-2cn}2^{4n/3} + \frac{2^{-3cn}}{2^{-2n}} \\
&\leq 1 +o(1)
\end{align}
since we have $c>2/3$. Note that the identical proof holds for the case where $D_0$ and $D_1$ are switched, therefore we have that $\frac{\Pr[\hat{b}=000 | S]}{\Pr[\hat{b}=111 | S]} \leq 1+o(1)$ as well. Hence we have
\[1-o(1) \leq \frac{\Pr[\hat{b}=111 | S]}{\Pr[\hat{b}=000 | S]} \leq 1+o(1)\]
So as $n\rightarrow \infty$, these strings become equiprobable.

\textbf{Case 2: 111 and 100}
We have that
\begin{align}
\frac{\Pr[\hat{b}=111 | S]}{\Pr[\hat{b}=100 | S]} &=\frac{\sum_{y\in\{0,1\}^n} D_{1}(y)^3}{\sum_{y\in\{0,1\}^n} D_{0}(y)^{2} D_{1}(y)} \\
&\leq \frac{\sum_{y\in\{0,1\}^n} D_{1}(y)(D_0(y)^2+2D_0(y)\delta_y + \delta(y)^2)}{\sum_{y\in\{0,1\}^n} D_{0}(y)^{2} D_{1}(y)} \\
&= 1 + 2\frac{\sum_{y\in\{0,1\}^n} D_{1}(y)D_0(y)\delta_y}{\sum_{y\in\{0,1\}^n} D_{0}(y)^{2} D_{1}(y)} + \frac{\sum_{y\in\{0,1\}^n} D_{1}(y)\delta_y^2}{\sum_{y\in\{0,1\}^n} D_{0}(y)^{2} D_{1}(y)} \\
&=1+2\frac{\langle\delta_y, D_0D_1\rangle }{\langle D_0, D_0D_1\rangle } + \frac{\langle \delta_y^2,D_1\rangle }{\langle D_0^2,D_1\rangle } \\
&\leq 1+2\frac{\epsilon \max_y  D_0(y)D_1(y)}{\langle D_0, D_0D_1\rangle } + \frac{\epsilon^2 \max_y D_1(y)}{\langle D_0^2,D_1\rangle } \\
&\leq 1+2\epsilon\frac{ \max_y  D_0(y)^2 + \delta_y D_0(y)}{\langle D_0, D_0D_1\rangle }+ \epsilon^2\frac{ \max_y D_1(y)}{\langle D_0^2,D_1\rangle } \\
&\leq 1+2\epsilon\frac{ \max_y  D_0(y)^2 }{\langle D_0, D_0^2\rangle  - \epsilon \max_y D_0(y)^2}+ 2\epsilon^2 \frac{\max_y D_0(y)}{\langle D_0, D_0D_1\rangle -\epsilon \max_y D_0(y)^2}  \\&+\epsilon^2\frac{ \max_y D_1(y)}{\langle D_0, D_0^2\rangle  - \epsilon \max_y D_0(y)^2} \label{eq:denomminus}
\end{align}
Where line \ref{eq:denomminus} comes from the fact that $D_1(y) \geq D_0(y)-\delta_y$ for all $y$. We now show that this is upper bounded by $1+o(1)$, by showing that the term $\frac{ \max_y  D_0(y)^2 }{\langle D_0, D_0^2\rangle  - \epsilon \max_y D_0(y)^2} $, the term  $\frac{\max_y D_0(y)}{\langle D_0, D_0D_1\rangle -\epsilon \max_y D_0(y)^2}$ and the term $\frac{ \max_y D_1(y)}{\langle D_0, D_0^2\rangle  - \epsilon \max_y D_0(y)^2}$ are upper bounded by $O(2^{2n/3})$, $O(2^{4n/3})$ and $O(2^{4n/3})$, respectively. This, combined with the fact that $\epsilon = O(2^{-cn})$ for $c>2/3$, implies that $\frac{\Pr[\hat{b}=111 | S]}{\Pr[\hat{b}=100 | S]}\leq 1+o(1)$ as desired.

For the first term, let $k=\max_y D_0(y)$. The this term is upper bounded by
\[\frac{ k^2 }{k^3 - \frac{k^3}{(N-1)^2} - \epsilon k^2} \]
because the denominator is minimized by spreading the remaining probability mass evenly over the remaining $N-1$ elements. One can easily show this function is maximized by setting $k=\Theta(N^{-2/3})$. Indeed, taking the derivative of this equation and setting it equal to 0, one can see that the extreme values of $k$ satisfy $k(-2(N-1)^2k^3 -3k+2)=0$. Hence the optimal value of $k$ satisfies $k=\Theta(N^{-2/3})$. For this value of $k$, the term evalues to $O(N^{2/3})$.

For the second term, if we let $k$ be defined as above, then by the same reasoning we have that the term is uppor bounded by 
\[\frac{ k }{k^3 - \frac{k^3}{(N-1)^2} - \epsilon k^2} \]
Again by a similar proof, one can easily show the function is maximized by setting $k=O(N^{-2/3})$, which implies the term is upper bounded by $O(2^{4n/3})$.

For the third term, let $k = \max_y D_1(y)$. By a similar argument as above, we have that
\[\frac{ \max_y D_1(y)}{\langle D_0^2,D_1\rangle } \leq \frac{ \max_y D_1(y)}{\langle D_1^2,D_1\rangle  - 2\langle \delta, D_1^2\rangle  + \langle \delta^2,D_1\rangle } \leq \frac{k}{k^3 - \frac{(1-k)^3}{(N-1)^2} -2\epsilon k^2 +\epsilon^2 k}\]
One can show that this term is maximized be setting $k=\Theta(N^{-2/3})$, and therefore this term is upper bounded by $O(N^{4/3})$. Indeed, taking the derivative of this quantity with respect to $k$ and setting it equal to zero, one can see that the maximum value of $k$ satisfies $(-2(N-1)^2+2)k^3+(2\epsilon -3)k^2 +1=0$, which implies the maximum value satisfies $k=\Theta(N^{-2/3})$.

We've now shown that $\frac{\Pr[\hat{b}=111 | S]}{\Pr[\hat{b}=100 | S]} \leq 1+ o(1)$. Now consider the opposite ratio. By the same reasoning as before, we have that
\begin{align}\frac{\Pr[\hat{b}=100 | S]}{\Pr[\hat{b}=111 | S]}  &= \frac{\sum_{y\in\{0,1\}^n} D_{0}(y)^2 D_1(y)}{\sum_{y\in\{0,1\}^n} D_{1}(y)^{3} } \\
&\leq 1+ 2\frac{\langle \delta,D_1^2\rangle }{\sum_{y\in\{0,1\}^n} D_{1}(y)^{3} } + \frac{\langle \delta^2,D_1\rangle }{\sum_{y\in\{0,1\}^n} D_{1}(y)^{3} } \\
&\leq 1 + 2\epsilon \frac{\max_y D_1(y)}{\sum_{y\in\{0,1\}^n} D_{1}(y)^{3} } + \epsilon^2 \frac{\max_y D_1(y)^2}{\sum_{y\in\{0,1\}^n} D_{1}(y)^{3} }\\
&\leq 1+o(1) \label{eq:appealtocase1}
\end{align}
Where on line \ref{eq:appealtocase1} we used the fact that we previously upper bounded these terms when handling Case 1. Hence as $n\rightarrow\infty$ the strings $\hat{b}=111$ and $\hat{b}=100$ become equiprobable.

\textbf{Case 3: 111 and 110}
We have that
\begin{align}
\frac{\Pr[\hat{b}=111 | S]}{\Pr[\hat{b}=110 | S]} &=\frac{\sum_{y\in\{0,1\}^n} D_{1}(y)^3}{\sum_{y\in\{0,1\}^n} D_{0}(y) D_{1}(y)^{2}} \\
&\leq 1+\frac{\sum_{y\in\{0,1\}^n} \delta_y D_{1}(y)^2}{\sum_{y\in\{0,1\}^n} D_{0}(y) D_{1}(y)^{2}} \\
&= 1+\frac{\langle \delta_y,D_1^2\rangle }{\langle D_0,D_1^2\rangle } \\
&\leq 1+ \frac{\epsilon \max_y D_1(y)^2}{\langle D_0,D_1^2\rangle } \\
&\leq 1+ \frac{\epsilon \max_y D_1(y)^2}{\langle D_1,D_1^2\rangle  - \langle \delta_y, D_1^2\rangle } \label{eq:minusepsilon} \\
&\leq 1+ \frac{\epsilon \max_y D_1(y)^2}{\langle D_1,D_1^2\rangle  - \epsilon \max_y D_1(y)^2} \label{eq:epsilonk2} \\
&\leq 1+ o(1) \label{eq:appealtocase2}
\end{align}
Where on line \ref{eq:minusepsilon} we used the fact that $D_0(y)\geq D_1(y) - \delta(y)$, on line \ref{eq:epsilonk2} we used that fact that the numerator is minimized if all the mass of $\delta_y$ is placed on the maximum likelihood event of $D_1$, and on line \ref{eq:appealtocase2} we used the fact that this is the same as the first term we bounded in Case 2. 

Now consider the opposite ratio. We have that 
\begin{align}
\frac{\Pr[\hat{b}=110 | S]}{\Pr[\hat{b}=111 | S]} &=\frac{\sum_{y\in\{0,1\}^n} D_{0}(y) D_{1}(y)^2}{\sum_{y\in\{0,1\}^n} D_{1}(y)^3} \\
&\leq \frac{\sum_{y\in\{0,1\}^n} (D_{1}(y)+\delta(y)) D_{1}(y)^2}{\sum_{y\in\{0,1\}^n} D_{1}(y)^3} \\
&= 1+ \frac{\sum_{y\in\{0,1\}^n} \delta(y) D_{1}(y)^2 }{\sum_{y\in\{0,1\}^n} D_{1}(y)^3} \\
&\leq 1+o(1)
\end{align}
Where the last line follows from our previous arguments in Case 1. Hence we have that $1-o(1)\leq \frac{\Pr[\hat{b}=111 | S]}{\Pr[\hat{b}=110 | S]} \leq 1+o(1)$ as desired.

Hence we have shown $1-o(1) \leq \frac{\Pr[\hat{b}=111 | S]}{\Pr[\hat{b}=x | S]} \leq 1+o(1)$ for any three-bit string $x$. Hence all strings are equiprobable, so in the case the distributions are $\epsilon$-close, the algorithm's error probability tends to $1/4$ as $n\rightarrow\infty$, and hence the algorithm is correct in this case.

To complete the proof, we now show that the probability of error is low then the distributions are $1-\epsilon$ far apart in total variation distance. 

Suppose the distributions are $1-\epsilon$ far apart in total variation distance. By the definition of total variation distance, there must exist some event $T\subseteq \{0,1\}^n$ for which $|D_0(T)-D_1(T)|\geq 1-\epsilon$, where the notation $D_0(T)$ indicates the probability that $D_0$ outputs an element of the set $T$, i.e. $D_O(T)=\sum_{y\in T} D_0(y)$. Without loss of generality we have that $D_0(T)-D_1(T)\geq 1-\epsilon$, which implies $D_1(\bar{T}) - D_0(\bar{T})\geq 1-\epsilon$. Since $D_0$ and $D_1$ are probability distributions, this implies $D_0(T)\geq 1-\epsilon$ and $D_1(T)\leq \epsilon$, and likewise $D_1(\bar{T})\geq 1-\epsilon$ and $D_0(\bar{T})\leq \epsilon$. In other words $D_0$ has almost all its probability mass in $T$ and $D_1$ has almost all its probability mass in $\bar{T}$.

We'll now show that under these distributions, one will almost certainly see the output $\hat{b}=000$ or $\hat{b}=111$. As before, we'll show this by proving that for large $n$, the strings $\hat{b}=000$ or $111$ are far more likely than $\hat{b}=001$ or $\hat{b}=011$, which implies the algorithm almost always outputs the correct answer.

Let $k_0 =\max_{y\in T} D_0(y)$ and let $k_1 = \max_{y\in\bar{T}} D_1(y)$. Suppose without loss of generality that $k_0\geq k_1$ (otherwise exchange $D_0$ and $D_1$ in the argument). Then we have that
\begin{align}
\frac{\Pr[\hat{b}=000 | S]}{\Pr[\hat{b}=100 | S]} &= \frac{\sum_{y\in\{0,1\}^n} D_{0}(y)^3}{\sum_{y\in\{0,1\}^n} D_{0}(y)^2 D_{1}(y)} \label{eq:usualargs} \\
&= \frac{\langle D_0^2,D_0\rangle }{\langle D_0^2,D_1\rangle } \\
&\geq \frac{k_0^3 + \frac{(1-k_0)^3}{(N-1)^2}}{\epsilon k_0^2 + \epsilon^2 k_1} \label{eq:k0}  \\
&\geq \frac{k_0^3 + \frac{(1-k_0)^3}{(N-1)^2}}{\epsilon k_0^2 + \epsilon^2 k_0} \label{eq:k02}
\end{align}
where line \ref{eq:usualargs} follows from the same arguments as the previous section, and line \ref{eq:k0} follows because the numerator is minimized by placing the uniform distribution on all elements other than the element responsible for $k_0$, and the denominator is maximized if all the weight that $D_0$ has on $\bar{T}$ is placed on the element of maximal weight under $D_1$, and vice versa. Line \ref{eq:k02} follows from the fact that $k_0\geq k_1$

Now we show that this quantity is $\omega(1)$, i.e. it approaches infinity as $n\rightarrow \infty$. Suppose by contradiction that there exists a constant $c>1$ which is an upper bound for this quantity. Since $\epsilon = o(N^{-2/3})$, there exists an $n_0$ such that for all $n>n_0$, $\epsilon < \frac{1}{2c}N^{-2/3}$. We claim that for all $n>n_0$, this quantity is greater than $c$, which is a contradiction. 

To see this, we break into three cases. 

\textbf{Case 1:} $k_0 \geq N^{-2/3}$

In this case, the numerator is at least $k_0^3$, while the denominator is at most $\frac{1}{10c} N^{-2/3}k_0^2 + \frac{1}{100c^2}N^{-4/3}k_0  \leq \frac{1}{2c} k_0^3 + \frac{1}{4c^2}k_0^3 < \frac{1}{c}k_0^3$, where the last step follows from the fact that $\frac{1}{2c} \frac{1}{4c^2} < \frac{1}{c}$ for any $c>1$. Therefore the quantity on line \ref{eq:k02} is strictly greater than $\frac{k_0^3}{\frac{1}{c}k_0^3}=c$ as desired.

\textbf{Case 2:} $k_0 \leq N^{-2/3}$

In this case the numerator is at least $\frac{(1-k_0)^3}{(N-1)^2}$ which is $\geq 0.75N^{-2}$ for sufficiently large $n$, while the denominator is $\epsilon k_0^2 + \epsilon^2 k_0 \leq \frac{1}{2c} N^{-2} +\frac{1}{4c^2}N^{-2} < \frac{3}{4c}N^{-2}$ for $c>1$, which follows from our upper bounds on $\epsilon$ and $k_0$. Hence the quantity on line \ref{eq:k02} is strictly greater than $c$ as desired. 

Therefore we have shown that and $n\rightarrow\infty$, the string $\hat{b}=000$ (or $\hat{b}=111$, if $k_0<k_1$) is much more likely to occur than $\hat{b}=001$.

A similar proof holds to show that the string $\hat{b}=000$ (or $\hat{b}=111$) is more likely to occur than $\hat{b} =110$; indeed by the same arguments as above, assuming $k_0\geq k_1$, we have
\[\frac{\Pr[\hat{b}=000 | S]}{\Pr[\hat{b}=110 | S]} \geq \frac{k_0^3 + \frac{(1-k_0)^3}{(N-1)^2}}{\epsilon^2 k_0 + \epsilon k_1} \geq \frac{k_0^3 + \frac{(1-k_0)^3}{(N-1)^2}}{\epsilon^2 k_0 + \epsilon k_0} \geq \omega(1)
\]

Hence the string $\hat{b}=000$ is far more likely to occur than the strings $\hat{b}=001$ or $\hat{b}=011$ (assuming $k_0>k_1$, otherwise the string $\hat{b}=111$ is more likely to occur than $001$ or $011$), and hence the algorithm errs with probability $o(1)$ when the distributions are $1-\epsilon$ far apart. This completes the proof.

%\bibliographystyle{alpha}
%\bibliography{team} 

\end{document}